\let\qualified\:
\let\evil\cancel \undef\cancel
\crefname{thm}{Theorem}{Theorems}
\Crefname{thm}{Theorem}{Theorems}
\crefname{lem}{Lemma}{Lemmas}
\Crefname{lem}{Lemma}{Lemmas}
\crefname{defi}{Definition}{Definitions}
\Crefname{defi}{Definition}{Definitions}
\crefname{prop}{Proposition}{Propositions}
\Crefname{prop}{Proposition}{Propositions}
\xpatchcmd{\@todo}{\setkeys{todonotes}{#1}}{\setkeys{todonotes}{inline,#1}}{}{}
\definecolor{ao}{rgb}{0.0,0.5,0.0}
\definecolor{amethyst}{rgb}{0.6,0.4,0.8}
\definecolor{arsenic}{rgb}{0.23,0.27,0.29}
\newcommand{\deprecated}[1]{\todo[textcolor=black,backgroundcolor=lipicsGray]{\textbf{Deprecated}: #1}}
\newcommand{\secref}[1]{\S\ref{#1}}
\newcommand{\button}[1]{\raisebox{.5pt}{\textcircled{\raisebox{-.9pt} {#1}}}}
\newcommand\fgHGV[0]{$\text{HGV}*$\xspace}
\newcommand{\tmcolor}[0]{\color[HTML]{a40038}}
\newcommand{\tycolor}[0]{\color[HTML]{00007a}}
\newcommand{\tm}[1]{\ensuremath{{\tmcolor#1}}}
\newcommand{\ty}[1]{\ensuremath{{\tycolor#1}}}
\newcommand{\tmty}[2]{\ensuremath{\tm{#1}:\ty{#2}}}
\newcommand\bnfdef{\Coloneqq}
\newcommand\sep{\;\mid\;}
\newcommand\subst[4][]{\ifstrempty{#1}{\ensuremath{#2\{#3/#4\}}}{\ensuremath{#2(\{#3/#4\}\cup#1)}}}
\newcommand\plug[2]{\ensuremath{#1[#2]}}
\newcommand\defeq[0]{\triangleq}
\newcommand\hole[0]{\ensuremath{\square}}
\DeclareMathOperator{\fv}{fv}
\DeclareMathOperator{\fn}{fn}
\DeclareMathOperator{\bn}{bn}
\DeclareMathOperator{\cn}{cn}
\newcommand\conf[1]{\mathcal{#1}}
\newcommand\rel[1]{\mathcal{#1}}
\newcommand\config[1]{\mathcal{#1}}
\newcommand\taux{\mathcal{A}}
\newcommand\tany{\mathcal{T}}
\newcommand\hyper[1]{\mathcal{#1}}
\newcommand\variable[1]{\mathit{#1}}
\newcommand\deriv[1]{\mathbf{#1}}
\newcommand\mkwd[1]{\ensuremath{\mathsf{#1}}}
\newcommand\envs[1]{\mkwd{envs}(#1)} \newcommand\size{\abs}
\newcommand\graphvar{\hyper}
\newenvironment{case}[1]{\flushleft{\bfseries Case}~(#1).}{\endflushleft}
\newenvironment{subcase}[1]{\flushleft{\bfseries Subcase}~(#1).}{\endflushleft}
\newcommand{\rulename}[1]{\LabTirName{#1}}
\newcommand\emptyenv{\cdot}
\newcommand\hypersep{\parallel}
\newcommand\emptyhyperenv{\varnothing}
\newcommand\fgcbv[1]{\llparenthesis{#1}\rrparenthesis}
\newcommand\gvcp[2]{\llbracket{#1}\rrbracket_{\tm{#2}}}
\newcommand\gvcplab{\mathsf}
\newcommand\gvcpval[2]{\gvcp{#1}{#2}^{\!\:{\gvcplab{v}}}} \newcommand\gvcpcom[2]{\gvcp{#1}{#2}^{\!\:{\gvcplab{m}}}} \newcommand\gvcpcnf[2]{\gvcp{#1}{#2}^{\!\:{\gvcplab{c}}}} \newcommand\gvcpevc[2]{\gvcp{#1}{#2}^{\!\:{\gvcplab{f}}}} \DeclarePairedDelimiter\gvcpup{\llceil}{\rrceil}
\DeclarePairedDelimiter\gvcpdown{\llfloor}{\rrfloor}
\newcommand{\flatten}[1]{\mathop{\downarrow}{#1}}
\newcommand{\ta}[0]{\alpha}
\newcommand{\tb}[0]{\beta}
\newcommand\bl{\begin{array}[t]{@{}l@{}}}
\newcommand\blx{\begin{array}{@{}l@{}}}
\newcommand\el{\end{array}}
\newcommand\header[2][]{\begin{flushleft}\textbf{#2}\hfill{#1}\end{flushleft}}
\newcommand\headerarg[2]{\header[#2]{#1}}
\newcommand\headersig[2]{\headerarg{#1}{\framebox{#2}}}
\newcommand{\shade}[1]{\setlength{\fboxsep}{0pt}\colorbox{lightgray}{\ensuremath{#1}}}
\providecommand{\tyone}{\ensuremath{\mathbf{1}}}
  \providecommand{\tynil}{\ensuremath{\mathbf{0}}}
  \providecommand{\tytop}{\ensuremath{\top}}
  \providecommand{\tybot}{\ensuremath{\bot}}
  \providecommand{\typlusop}{\ensuremath{\oplus}}
  \providecommand{\tywithop}{\ensuremath{\with}}
  \providecommand{\tytensop}{\ensuremath{\otimes}}
  \providecommand{\typarrop}{\ensuremath{\parr}}
  \providecommand{\typlus}[2]{\ensuremath{{#1}\mathbin{\typlusop}{#2}}}
  \providecommand{\tywith}[2]{\ensuremath{{#1}\mathbin{\tywithop}{#2}}}
  \providecommand{\tytens}[2]{\ensuremath{{#1}\mathbin{\tytensop}{#2}}}
  \providecommand{\typarr}[2]{\ensuremath{{#1}\mathbin{\typarrop}{#2}}}
  \providecommand{\co}[1]{\ensuremath{#1^\bot}}
\providecommand{\res}[4][]{\ensuremath{(\nu{#2}^{#1}{#3}){#4}}}
  \providecommand{\ppar}[2]{\ensuremath{#1\parallel#2}}
  \providecommand{\halt}[0]{\ensuremath{\mathbf{0}}}
  \providecommand{\link}[3][]{\lablink[#1]{#2}{#3}}
  \providecommand{\send}[3]{\ensuremath{\labsend{#1}{#2}.#3}}
  \providecommand{\usend}[3]{\ensuremath{#1\langle{#2}\rangle.#3}}
  \providecommand{\recv}[3]{\ensuremath{\labrecv{#1}{#2}.#3}}
  \providecommand{\close}[2]{\ensuremath{\labclose{#1}.#2}}
  \providecommand{\wait}[2]{\ensuremath{\labwait{#1}.#2}}
  \providecommand{\inl}[2]{\ensuremath{\labselinl{#1}.{#2}}}
  \providecommand{\inr}[2]{\ensuremath{\labselinr{#1}.{#2}}}
  \providecommand{\offer}[3]{\ensuremath{{#1}\triangleright\{\text{inl}:#2;\text{inr}:#3\}}}
  \providecommand{\absurd}[1]{\ensuremath{{#1}\triangleright\{\}}}
  \providecommand{\ping}[2]{\ensuremath{\labping{#1}.#2}}
  \providecommand{\pong}[2]{\ensuremath{\labpong{#1}.#2}}
\providecommand{\seq}[2]{\ensuremath{\tm{#1}\vdash{#2}}}
  \providecommand{\sbis}[0]{\sim}
  \providecommand{\bis}[0]{\approx}
  \providecommand{\slto}[2][]{\ensuremath{\overset{#2}\Longrightarrow_{#1}}}
  \providecommand{\stto}[1][]{\slto[#1]{\tau}}
  \providecommand{\lto}[1]{\ensuremath{\overset{\tm{#1}}\longrightarrow}}
  \providecommand{\ato}[0]{\lto{\ta}}
  \providecommand{\bto}[0]{\lto{\tb}}
  \providecommand{\tto}[0]{\lto{\tau}}
  \providecommand{\lablink}[3][]{\ensuremath{{#2}{\leftrightarrow}^{#1}{#3}}}
  \providecommand{\labsend}[2]{\ensuremath{#1[#2]}}
  \providecommand{\labrecv}[2]{\ensuremath{#1(#2)}}
  \providecommand{\labclose}[1]{\labsend{#1}{}}
  \providecommand{\labwait}[1]{\labrecv{#1}{}}
  \providecommand{\labinl}[0]{\text{inl}}
  \providecommand{\labinr}[0]{\text{inr}}
  \providecommand{\labselinl}[1]{\ensuremath{{#1}\triangleleft\labinl}}
  \providecommand{\labselinr}[1]{\ensuremath{{#1}\triangleleft\labinr}}
  \providecommand{\laboffinl}[1]{\ensuremath{{#1}\triangleright\labinl}}
  \providecommand{\laboffinr}[1]{\ensuremath{{#1}\triangleright\labinr}}
  \providecommand{\labping}[1]{\ensuremath{\bar{#1}}}
  \providecommand{\labpong}[1]{\ensuremath{#1}}
\newcommand{\hcp}[1]{\namespace*{hcp}{}{#1}}
  \providecommand{\app}{\;}
  \providecommand{\seq}{\overrightarrow}
    \providecommand{\tyunit}[0]{\ensuremath{\mathbf{1}}}
  \providecommand{\tyvoid}[0]{\ensuremath{\mathbf{0}}}
  \providecommand{\typrod}[2]{\ensuremath{{#1}\mathbin{\times}{#2}}}
  \providecommand{\tysum}[2]{\ensuremath{{#1}\mathbin{+}{#2}}}
  \providecommand{\tylolli}[2]{\ensuremath{{#1}\mathbin{\multimap}{#2}}}
  \providecommand{\co}[1]{\ensuremath{\overline{#1}}}
\providecommand{\tysend}[2]{\ensuremath{!{#1}.{#2}}}
  \providecommand{\tyrecv}[2]{\ensuremath{?{#1}.{#2}}}
  \providecommand{\tyend}[0]{\ensuremath{\mathbf{end}}}
  \providecommand{\tyends}[0]{\ensuremath{\mathbf{end}_{!}}}
  \providecommand{\tyendr}[0]{\ensuremath{\mathbf{end}_{?}}}
  \providecommand{\tyselect}[2]{\ensuremath{{#1}\mathbin{\oplus}{#2}}}
  \providecommand{\tyoffer}[2]{\ensuremath{{#1}\mathbin{\with}{#2}}}
  \providecommand{\tyselectemp}[0]{\ensuremath{{\oplus}\{\}}}
  \providecommand{\tyofferemp}[0]{\ensuremath{{\with}\{\}}}
\providecommand{\ret}[1]{#1} \providecommand{\lablet}[0]{\ensuremath{\mathbf{let}}}
  \providecommand{\labin}[0]{\ensuremath{\mathbf{in}}}
  \providecommand{\letbind}[3]{\ensuremath{\lablet\;#1=#2\;\labin\;#3}}
  \providecommand{\letbindtwo}[2]{\letbind{#1}{#2}{}} \providecommand{\pair}[2]{\ensuremath{(#1,#2)}}
  \providecommand{\letpair}[4]{\ensuremath{\letbind{\pair{#1}{#2}}{#3}{#4}}}
  \providecommand{\labinl}[0]{\ensuremath{\mathbf{inl}}}
  \providecommand{\labinr}[0]{\ensuremath{\mathbf{inr}}}
  \providecommand{\inl}[1]{\ensuremath{\labinl\;#1}}
  \providecommand{\inr}[1]{\ensuremath{\labinr\;#1}}
  \providecommand{\casesum}[5]{\ensuremath{\mathbf{case}\;#1\left\{\inl{#2}\mapsto{#3};\;\inr{#4}\mapsto{#5}\right\}}}
  \providecommand{\unit}[0]{\ensuremath{()}}
  \providecommand{\andthen}[2]{\ensuremath{#1;#2}}
  \providecommand{\letunit}[2]{\letbind{\unit}{#1}{#2}}
  \providecommand{\absurd}[1]{\ensuremath{\mathbf{absurd}\;#1}}
  \providecommand{\link}[0]{\ensuremath{\mathbf{link}}}
  \providecommand{\gvlink}[2]{\link\;(#1, #2)}
  \providecommand{\calcwd}[1]{\ensuremath{\mathbf{#1}}}
  \providecommand{\new}[0]{\ensuremath{\mathbf{new}}}
  \providecommand{\halt}[0]{\ensuremath{\mathbf{halt}}}
  \providecommand{\spawn}[0]{\ensuremath{\mathbf{spawn}}}
  \providecommand{\send}[0]{\ensuremath{\mathbf{send}}}
  \providecommand{\gvsend}[2]{\send\;(#1, #2)}
  \providecommand{\recv}[0]{\ensuremath{\mathbf{recv}}}
  \providecommand{\fork}[0]{\ensuremath{\mathbf{fork}}}
  \providecommand{\halt}[0]{\ensuremath{\mathbf{halt}}}
  \providecommand{\cancel}[0]{\ensuremath{\mathbf{cancel}}}
  \providecommand{\zap}[1]{\ensuremath{\lightning #1}}
  \providecommand{\raiseexn}[0]{\ensuremath{\mathbf{raise}}}
  \providecommand{\tryasinotherwise}[4]{\ensuremath{\mathbf{try}\;#1\;\mathbf{as}\;#2\;\mathbf{in}\;#3\;\mathbf{otherwise}\;#4}}
\providecommand{\wait}[0]{\ensuremath{\mathbf{wait}}}
  \providecommand{\close}[0]{\ensuremath{\mathbf{close}}}
  \providecommand{\labselect}[0]{\ensuremath{\mathbf{select}}}
  \providecommand{\select}[1]{\ensuremath{\labselect\;#1}}
  \providecommand{\laboffer}[0]{\ensuremath{\mathbf{offer}}}
  \providecommand{\offer}[5]{\ensuremath{\laboffer\;#1\;\{\inl{#2}\mapsto{#3};\inr{#4}\mapsto{#5}\}}}
  \providecommand{\offeremp}[1]{\ensuremath{\mathbf{offer}\;#1\;\{\}}}
  \providecommand{\isect}[0]{\sqcap}
\providecommand{\ppar}[2]{\ensuremath{#1\parallel#2}}
  \providecommand{\res}[3]{\ensuremath{(\nu#1#2)#3}}
\providecommand{\main}[0]{\ensuremath{\bullet}}
  \providecommand{\child}[0]{\ensuremath{\circ}}
  \providecommand{\tychild}[0]{\ensuremath{\child}}
  \providecommand{\tymain}[1]{\ensuremath{\main\;#1}}
\providecommand{\tseq}[3]{\ensuremath{#1\vdash\tmty{#2}{#3}}}
  \providecommand{\cseq}[4][]{\ensuremath{#2\vdash^{#1}\tmty{#3}{#4}}}
\providecommand{\tred}[0]{\ensuremath{\longrightarrow_{\mathsf{M}}}}
  \providecommand{\cred}[0]{\ensuremath{\longrightarrow}}
  \providecommand{\blocked}[2]{\ensuremath{\mathsf{blocked}(\tm{#1},\tm{#2})}}
  \providecommand{\lock}[2]{\ensuremath{\langle{#1,#2}\rangle}}
  \providecommand{\lockty}[1]{\ensuremath{#1^\sharp}}
  \providecommand{\linkconfig}[3]{\ensuremath{#2{\overset{#1}{\leftrightarrow}}#3}}
\providecommand{\mix}[0]{\textsc{Mix}\xspace}
  \providecommand{\hyperg}[0]{\hyper{G}}
  \providecommand{\mixconf}[1][\phi]{\ensuremath{\mathcal{M}^\phi}\xspace}
  \providecommand{\mixrule}{\LabTirName{TC-Mix}\xspace}
  \providecommand{\eat}[1]{\ensuremath{\mathbf{eat}\;#1}}
  \providecommand{\hungry}[0]{\ensuremath{\mathbf{hungry}}}
  \providecommand{\letnew}[3]{\letbind{\langle#1,#2\rangle}{\new}{#3}}
  \providecommand{\letspawn}[2]{\letbind{\langle\rangle}{\spawn\;#1}{#2}}
\newcommand{\hgv}[1]{\namespace*{hgv}{}{#1}}
  \providecommand{\cut}[3]{\ensuremath{(\nu#1)(#2\parallel#3)}}
  \providecommand{\send}[4]{\ensuremath{#1[#2].(#3\parallel#4)}}
\let\cp\undefined \newcommand{\cp}[1]{\namespace*{cp}{}{#1}}
  \providecommand{\tyvoid}[0]{\ensuremath{\mathbf{0}}}
  \providecommand{\typrod}[2]{\ensuremath{{#1}\mathbin{\times}{#2}}}
  \providecommand{\tysum}[2]{\ensuremath{{#1}\mathbin{+}{#2}}}
  \providecommand{\tylolli}[2]{\ensuremath{{#1}\mathbin{\multimap}{#2}}}
  \providecommand{\co}[1]{\ensuremath{\overline{#1}}}
  \providecommand{\tysend}[2]{\ensuremath{!{#1}.{#2}}}
  \providecommand{\tyrecv}[2]{\ensuremath{?{#1}.{#2}}}
  \providecommand{\tyends}[0]{\ensuremath{\mathbf{end}_{!}}}
  \providecommand{\tyendr}[0]{\ensuremath{\mathbf{end}_{?}}}
  \providecommand{\andthen}[2]{\ensuremath{#1;#2}}
  \providecommand{\letbind}[3]{\ensuremath{\mathbf{let}\;#1\mathbin{=}#2\;\mathbf{in}\;#3}}
  \providecommand{\pair}[2]{\ensuremath{(#1,#2)}}
  \providecommand{\letpair}[4]{\ensuremath{\letbind{\pair{#1}{#2}}{#3}{#4}}}
  \providecommand{\labinl}[0]{\ensuremath{\mathbf{inl}}}
  \providecommand{\labinr}[0]{\ensuremath{\mathbf{inr}}}
  \providecommand{\unit}[0]{\ensuremath{()}}
  \providecommand{\letunit}[2]{\ensuremath{\letbind{\unit}{#1}{#2}}}
  \providecommand{\absurd}[1]{\ensuremath{\mathbf{absurd}\;#1}}
  \providecommand{\new}[0]{\ensuremath{\mathbf{new}}}
  \providecommand{\spawn}[0]{\ensuremath{\mathbf{spawn}}}
  \providecommand{\send}[0]{\ensuremath{\mathbf{send}}}
  \providecommand{\recv}[0]{\ensuremath{\mathbf{recv}}}
  \providecommand{\fork}[0]{\ensuremath{\mathbf{fork}}}
  \providecommand{\wait}[0]{\ensuremath{\mathbf{wait}}}
  \providecommand{\close}[0]{\ensuremath{\mathbf{close}}}
  \providecommand{\main}[0]{\ensuremath{\bullet}}
  \providecommand{\child}[0]{\ensuremath{\circ}}
  \providecommand{\tychild}[0]{\ensuremath{\child}}
  \providecommand{\tymain}[1]{\ensuremath{\main\;#1}}
  \providecommand{\ppar}[3][]{\ensuremath{#2\mathbin{\parallel_{#1}}#3}}
  \providecommand{\parannot}[2]{\parallel_{\langle {#1}, {#2} \rangle}}
  \providecommand{\resd}[3]{\ensuremath{(\nu#1#2)#3}}
  \providecommand{\res}[2]{\ensuremath{(\nu#1)#2}}
  \providecommand{\lock}[2]{\ensuremath{\langle{#1,#2}\rangle}}
  \providecommand{\lockty}[1]{\ensuremath{#1^\sharp}}
  \providecommand{\cseqold}[3][]{\ensuremath{#2\vdash^{#1}\tm{#3}}}
  \providecommand{\gvseq}[4][]{\ensuremath{#2\vdash_{\text{\textsf{GV}}}^{#1}\tmty{#3}{#4}}}
  \providecommand{\nseq}[4][]{\ensuremath{#2\not\vdash_{\textsf{GV}}^{#1}\tmty{#3}{#4}}}
  \providecommand{\linkconfig}[3]{\ensuremath{#2{\overset{#1}{\leftrightarrow}}#3}}
  \providecommand{\isect}[0]{\sqcap}
\newcommand{\gv}[1]{\namespace*{gv}{}{#1}}
\newcommand{\kleeneiff}{\stackrel{\bumpeq}{\Longleftrightarrow}}
\newcommand{\treeprop}[2]{\mkwd{Tree}(#1, #2)}
\newcommand{\nameset}{\mathcal{N}}
\newcommand{\lblset}{\mathcal{L}}
\newcommand{\lbl}{l}
\newcommand{\translbl}{\ell}
\newcommand{\tcf}{\mathcal{F}}
\newcommand{\hr}{\mathsf{hr}} 
\let\origcase\case
\def\case#1{\origcase{#1}\justifying~}
\let\origsubcase\subcase
\def\subcase#1{\origsubcase{#1}\justifying~}
\begin{document}
\title{Separating Sessions Smoothly}

\author[S.~Fowler]{Simon Fowler\lmcsorcid{0000-0001-5143-5475}}[a]

\author[W.~Kokke]{Wen Kokke\lmcsorcid{0000-0002-1360-4714}}[b]

\author[O.~Dardha]{Ornela Dardha\lmcsorcid{0000-0001-9927-7875}}[a]

\author[S.~Lindley]{Sam Lindley\lmcsorcid{0000-0002-1360-4714}}[c]

\author[J.~G.~Morris]{J.\ Garrett Morris\lmcsorcid{0000-0002-3992-1080}}[d]

\address{University of Glasgow, UK}\email{simon.fowler@glasgow.ac.uk, ornela.dardha@glasgow.ac.uk}

\address{University of Strathclyde, UK}\email{wen.kokke@strath.ac.uk}

\address{The University of Edinburgh, UK}\email{sam.lindley@ed.ac.uk}

\address{The University of Iowa, USA}\email{garrett-morris@uiowa.edu}

\etocdepthtag.toc{mtchapter}

\begin{abstract}
This paper introduces Hypersequent GV (HGV), a modular and extensible
core calculus for functional programming with session types that
enjoys deadlock freedom, confluence, and strong normalisation.
HGV exploits hyper-environments, which are collections of type environments, to
ensure that structural congruence is type preserving.
As a consequence we obtain an operational correspondence between HGV
and HCP---a process calculus based on hypersequents and in a
propositions-as-types correspondence with classical linear logic
(CLL).  Our translations from HGV to HCP and vice-versa both preserve
and reflect reduction.
HGV scales smoothly to support Girard's Mix rule, a crucial ingredient
for channel forwarding and exceptions.
\end{abstract}

\maketitle

{\section{Introduction}
\label{sec:introduction}

Session types~\cite{honda93,takeuchihk94,hondavk98} are types used to
model and verify communication protocols in concurrent and distributed systems:
just as data types rule out dividing an integer by a string, session
types rule out sending an unexpected message.  Session types
originated in process calculi, but there is a gap between process
calculi, which model the evolving state of concurrent systems, and the
descriptions of these systems in mainstream programming languages.  This
paper addresses two foundations for session types: (1) a session-typed
concurrent lambda calculus called GV~\cite{lindleym15:semantics},
intended to be a modular and extensible basis for functional
programming languages with session types; and, (2) a session-typed
process calculus called CP~\cite{wadler14:sessions}, with a
propositions-as-types correspondence to classical linear logic
(CLL)~\cite{girard87:ll}.

Processes in CP correspond exactly to proofs in CLL and deadlock
freedom follows from cut-elimination for CLL.  However, while CP is
strongly tied to CLL, at the same time it departs from the
$\pi$-calculus. Independent $\pi$-calculus features can only appear in
combination in CP: CP combines name restriction with parallel
composition ($\cp{\tm{\cut{x}{P}{Q}}}$), corresponding to CLL's cut
rule, and combines sending (of bound names only) with parallel
composition ($\cp{\tm{\send{x}{y}{P}{Q}}}$), corresponding to CLL's
tensor rule.  This results in a proliferation of process constructors
and prevents the use of standard techniques from concurrency theory,
such as labelled-transition semantics and bisimulation, since the expected
transitions give rise to ill-typed terms. For example, we cannot write the expected transition
rule for output,
$\cp{\tm{\send{x}{y}{P}{Q}}} \xrightarrow{\hcp{\tm{\labsend{x}{y}}}} \tm{P \parallel
Q}$,
since $\tm{P \parallel Q}$ is not a valid CP process. A similar issue
arises when attempting to design a synchronisation transition rule for bound
output (see~\cite{kokkemp19:tlla} for a detailed discussion).
Inspired by Carbone~\emph{et al.}~\cite{carbonems18} who use
hypersequents~\cite{avron91:hypersequents} to give a logical grounding to
choreographic programming languages~\cite{montesi13},
Hypersequent CP
(HCP)~\cite{kokkemp19:popl,kokkemp19:tlla,montesip18:ct} restores the
independence of these features by factoring out parallel composition
into a standalone construct while retaining the close correspondence
with CLL proofs. HCP typing reasons about collections of processes
using collections of type environments (or \emph{hyper-environments}).

GV extends linear $\lambda$-calculus with constants for session-typed
communication. Following Gay and Vasconcelos~\cite{gayv10:last},
Lindley and Morris~\cite{lindleym15:semantics} describe GV's semantics
by combining a reduction relation on single terms, following standard
$\lambda$-calculus rules, and a reduction relation on concurrent
configurations of terms, following standard $\pi$-calculus rules.
They give a semantic characterisation of deadlocked processes, an
extrinsic~\cite{Reynolds00} type system for configurations, and show
that well-typed configurations are deadlock-free.  There is, however,
a large fly in this otherwise smooth ointment: GV's process equivalence
does not preserve typing.  As a result, it is not enough for Lindley
and Morris to show progress and preservation for well-typed
configurations; instead, they must show progress and preservation for
\emph{all} configurations \emph{equivalent to} well-typed
configurations. This not only complicates the metatheory of GV, but
the burden is inherited by any effort to build on GV's account of
concurrency~\cite{fowlerlmd19:stwt}.

In this paper, we show that using hyper-environments in the typing of
configurations enables a metatheory for GV that, compared to that of
Lindley and Morris, is simpler, is more general, and as a result is
easier to use and easier to extend.  Hypersequent GV (HGV) repairs the
treatment of process equivalence---equivalent configurations are
equivalently typeable---and avoids the need for formal gimmickry
connecting name restriction and parallel composition.  HGV admits
standard semantic techniques for concurrent programs: we use
bisimulation to show that our translations both preserve \emph{and} reflect
reduction, whereas Lindley and Morris resort to weak explicit
substitutions~\cite{LevyM99} and only show that their translations between GV
and CP preserve reduction.
HGV is also more easily
extensible: we outline three examples, including showing that HGV
naturally extends to disconnected sets of communication processes,
without any change to the proof of deadlock freedom, and that it
serves as a simpler foundation for existing work on exceptions in
GV~\cite{fowlerlmd19:stwt}.

\paragraph{Contributions}

The paper contributes the following:
\begin{itemize}
\item \Cref{sec:hgv} introduces Hypersequent GV (HGV), a modular and
  extensible core calculus for functional programming with session
  types which uses hyper-environments to ensure that structural
  congruence is type preserving.
\item \Cref{sec:relation-to-gv} shows that every well-typed GV
  configuration is also a well-typed HGV configuration, and every
  tree-structured HGV configuration is equivalent to a well-typed GV
  configuration.
\item \Cref{sec:relation-to-cp} gives an operational
  correspondences between HGV and HCP via translations in both
  directions that preserve and reflect reduction.
\item \Cref{sec:extensions} demonstrates the extensibility of HGV
  through: (1) unconnected processes, (2) a simplified treatment of
  forwarding, and (3) an improved foundation for exceptions.
\end{itemize}
\Cref{sec:problem-with-gv} reviews GV and its metatheory,
\Cref{sec:hgv-plus} discusses why it is difficult to apply hyper-environments to
term typing,
\Cref{sec:related-work} discusses related work, and
\Cref{sec:conclusion} concludes and discusses future work.

This paper is an improved and extended version of a paper published at
CONCUR 2021~\cite{FKDLM21}. Additional highlights include:
\begin{itemize}
\item
  a more detailed account of process structures;
\item
  a more detailed account of extensions;
\item
  a more detailed account of the metatheory for HCP; and
\item
  a modified formulation of HCP's labelled transition system and the
  translation of $\tm{\hgv\fork}$ in Section~\ref{sec:relation-to-cp}
  fixing errors in the operational correspondence result from the
  CONCUR 2021 paper.
\end{itemize}
Proofs of all of the technical results are included in the paper.

 }    
{\section{The Equivalence Embroglio}
\label{sec:problem-with-gv}
GV programs are deadlock free, which GV ensures by restricting process
structures to trees. A \emph{process structure} is an undirected
graph where nodes represent processes and edges represent channels
shared between the connected nodes.  Session-typed programs with an
acyclic process structure are deadlock-free by construction.
We illustrate this with a session-typed vending machine example
written in GV.
\begin{exa}
Consider the session type of a vending machine below, which sells
  chocolate bars and lollipops. If the vending machine is free, the
  customer can press \button1 to receive a chocolate bar or \button2 to
  receive a lollipop. If the vending machine is busy, the session
  ends.

\begingroup
\usingnamespace{hgv}
\setlength{\arraycolsep}{.2777em}
{\small
\begin{align*}
  &\ty{\mkwd{VendingMachine}}
  &&\defeq\ty{{\oplus}
     \left\{
     \begin{array}{lcl}
       \mkwd{Free} &:& {\&}
                       \left\{
                       \button1\;{:}\;\tysend{\mkwd{ChocolateBar}}{\tyends},
                       \button2\;{:}\;\tysend{\mkwd{Lollipop}}{\tyends}
                       \right\}
       \\
       \mkwd{Busy} &:& {\tyends}
     \end{array}
     \right\}}
\end{align*}
}
\endgroup

\noindent
The customer's session type is \emph{dual}: where the vending machine sends a $\mkwd{ChocolateBar}$, the customer receives a $\mkwd{ChocolateBar}$, and so forth. \Cref{fig:example1} shows the vending machine and customer as a GV program with its process structure.
\begin{figure}[h!]
  \centering
  \small
  \usingnamespace{hgv}
  \def\vendingMachine{\ensuremath{\tm{\begin{array}{l}
        \letbind{s}{\select{\mkwd{Free}}\;{s}}{}
        \\
        \lablet\;{s} = \laboffer\;{s}\;\left\{
        \begin{array}{lcl}
          \button1 & \mapsto & \send\;\mkwd{chocolateBar}
          \\
          \button2 & \mapsto & \send\;\mkwd{lollipop}
        \end{array}
        \right\}
        \\
        \close\;{s}
      \end{array}
    }}}
  \def\customer{\ensuremath{\tm{\begin{array}{l}
        \laboffer\;{s}\;\left\{
        \begin{array}{lcl}
          \mkwd{Free} & \mapsto & \letbind{s}{\select{\button1}\;{s}}{}
          \\
                      &         & \letpair{\mkwd{cb}}{s}{\recv\;{s}}{}
          \\
                      &         & \wait\;{s}; \eat{cb}
          \\
          \mkwd{Busy} & \mapsto & \wait\;{s}; \hungry
        \end{array}
        \right\}
      \end{array}
    }}}
  \begin{minipage}[b]{0.6\linewidth}
      \small
\centering
    \(\tm{\setlength{\arraycolsep}{2pt}
    \begin{array}{l}
      \lablet\;\mkwd{vendingMachine} = \lambda{s}.
      \\
      \quad\vendingMachine
      \\
      \labin\;\lablet\;\mkwd{customer} = \lambda{s}.
      \\
      \quad\customer
      \\
      \labin\;\lablet\;{s} = {\fork\;(\lambda{s}.\mkwd{vendingMachine}\;{s})}
      \\
      \labin\;\mkwd{customer}\;{s}
    \end{array}
    }\)
    \subcaption{Vending machine and customer as a GV program.}
    \label{fig:example1a}
  \end{minipage}\begin{minipage}[b]{0.4\linewidth}
    \centering
    \begin{tikzpicture}
      \node[draw] (vendingMachine) {$\tm{\mkwd{vendingMachine}}$};
      \node[draw, below=2cm of vendingMachine] (customer) {$\tm{\mkwd{customer}}$};
      \path (vendingMachine) edge node[pos=0.2, right] {$\tm{s}$} node[pos=0.8, right] {$\tm{s}$} (customer);
    \end{tikzpicture}
    \vspace*{1cm}
    \subcaption{Process structure of \Cref{fig:example1a}.}
    \label{fig:example1b}
  \end{minipage}
  \caption{Example program with acyclic process structure.}
  \label{fig:example1}
\end{figure}
\end{exa}

GV establishes the restriction to tree-structured processes by restricting the primitive for spawning processes. In GV, $\gv{\tm{\fork}}$ has type $\gv{\ty{\tylolli{(\tylolli{S}{\tyends})}{\co{S}}}}$. It takes a closure of type $\gv{\ty{\tylolli{S}{\tyends}}}$ as an argument, creates a channel with endpoints of dual types $\gv{\ty{S}}$ and $\gv{\ty{\co{S}}}$, spawns the closure as a new process by supplying one of the endpoints as an argument, and then returns the other endpoint. In essence, $\gv{\tm{\fork}}$ is a branching operation on the process structure: it creates a new node connected to the current node by a single edge. Linearity guarantees that the tree structure is preserved, even in the presence of higher-order channels.

Lindley and Morris~\cite{lindleym15:semantics} introduce a semantics for GV, which evaluates programs embedded in process configurations, consisting of embedded programs, flagged as main ($\gv{\tm{\main}}$) or child ($\gv{\tm{\child}}$) threads, $\nu$-binders to create new channels, and parallel compositions:
\[
  \usingnamespace{gv}
  \tm{\conf{C}}, \tm{\conf{D}} \Coloneqq
  \tm{\main\;M} \sep \tm{\child\;M} \sep \tm{\res{x}{\conf{C}}} \sep \tm{(\ppar{\conf{C}}{\conf{D}})}
\]
They introduce these process configurations together with a standard structural congruence, which allows, amongst other things, the reordering of processes using commutativity ($\gv{\tm{{\ppar{\conf{C}}{\conf{C'}}}\equiv{\ppar{\conf{C'}}{\conf{C}}}}}$), associativity ($\gv{\tm{{\ppar{\conf{C}}{(\ppar{\conf{C'}}{\conf{C''}})}}\equiv{\ppar{(\ppar{\conf{C}}{\conf{C'}})}{\conf{C''}}}}}$), and scope extrusion
($\gv{\tm{{\ppar{\conf{C}}{\res{x}{\conf{C'}}}}\equiv{\res{x}{(\ppar{\conf{C}}{\conf{C'}})}}}}$ if $\gv{\tm{x}\notin\fv(\tm{\conf{C}})}$).  They guarantee acyclicity by defining an extrinsic type system for configurations.  In particular, the type system requires that in every parallel composition $\gv{\tm{\ppar{\conf C}{\conf D}}}$, configurations
$\gv{\tm{\conf C}}$ and $\gv{\tm{\conf D}}$ must have exactly one channel in common, and that in a name restriction $\gv{\tm{\res x {\conf C}}}$, channel $\tm{x}$ cannot be used until it is shared across a parallel composition.

These restrictions are sufficient to guarantee deadlock freedom.  Unfortunately, they are \emph{not} preserved by process equivalence.  As Lindley and Morris write,
(noting that their name restrictions bind \emph{channels} rather than endpoint pairs, and their $(\nu x y)$ abbreviates $(\nu x)(\nu y)$):

\begin{quotation}
  \usingnamespace{hgv}
  Alas, our notion of typing is not preserved by configuration equivalence. For example, assume that $\Gamma \vdash (\nu x y)(C_1 \parallel (C_2 \parallel C_3))$, where $x \in \fv(C_1), y \in \fv(C_2), \text{ and } x,y \in \fv(C_3)$.  We have that $C_1 \parallel (C_2 \parallel C_3) \equiv (C_1 \parallel C_2) \parallel C_3$, but $\Gamma \nvdash (\nu x y)((C_1 \parallel C_2) \parallel C_3)$, as both $x$ and $y$ must be shared between the processes $C_1 \parallel C_2$ and $C_3$.
\end{quotation}

As a result, standard notions of progress and preservation are not enough to guarantee deadlock freedom, as reduction sequences could include equivalence steps from well-typed to non-well-typed terms. Instead, they must prove a stronger result:
\newtheorem*{LM}{Theorem 3}
\begin{LM}[Lindley and Morris~\cite{lindleym15:semantics}]
If $\gv{\cseqold{\ty{\Gamma}}{\conf{C}}}$, $\gv{\tm{\conf{C}}\equiv\tm{\conf{C'}}}$, and $\gv{\tm{\conf{C'}}\longrightarrow\tm{\conf{D'}}}$, then there exists $\gv{\tm{\conf{D}}}$ such that $\gv{\tm{\conf{D}}\equiv\tm{\conf{D'}}}$ and $\gv{\cseqold{\ty{\Gamma}}{\conf{D}}}$.
\end{LM}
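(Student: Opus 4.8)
The plan is to route all typing reasoning through the well-typed configuration $\gv{\tm{\conf{C}}}$ and never through the possibly ill-typed $\gv{\tm{\conf{C'}}}$. Concretely, I would isolate two auxiliary facts. First, \emph{base subject reduction}: if $\gv{\cseqold{\ty{\Gamma}}{\conf{C}}}$ and $\gv{\tm{\conf{C}}\longrightarrow\tm{\conf{D}}}$, then $\gv{\cseqold{\ty{\Gamma}}{\conf{D}}}$ --- this is the form of preservation that \emph{does} hold, since the failure highlighted by Lindley and Morris concerns only the closure of reduction under $\equiv$, not a reduction fired directly from a well-typed configuration. Second, a \emph{redex-relocation lemma}: from $\gv{\cseqold{\ty{\Gamma}}{\conf{C}}}$, $\gv{\tm{\conf{C}}\equiv\tm{\conf{C'}}}$, and $\gv{\tm{\conf{C'}}\longrightarrow\tm{\conf{D'}}}$, one can produce a configuration $\gv{\tm{\conf{C''}}}$ with $\gv{\cseqold{\ty{\Gamma}}{\conf{C''}}}$ and $\gv{\tm{\conf{C}}\equiv\tm{\conf{C''}}}$, together with a base reduction $\gv{\tm{\conf{C''}}\longrightarrow\tm{\conf{D''}}}$ satisfying $\gv{\tm{\conf{D''}}\equiv\tm{\conf{D'}}}$. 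Granting these, the theorem is immediate: take $\gv{\tm{\conf{D}}}\defeq\gv{\tm{\conf{D''}}}$; base subject reduction applied to the well-typed $\gv{\tm{\conf{C''}}}$ gives $\gv{\cseqold{\ty{\Gamma}}{\conf{D}}}$, and $\gv{\tm{\conf{D}}\equiv\tm{\conf{D'}}}$ holds by construction.

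Base subject reduction I would establish by induction on the derivation of $\gv{\tm{\conf{C}}\longrightarrow\tm{\conf{D}}}$. Functional redexes are discharged by the standard substitution lemma for the linear $\lambda$-calculus core; each communication redex --- $\gv{\tm{\send}}$/$\gv{\tm{\recv}}$, $\gv{\tm{\labselect}}$/$\gv{\tm{\laboffer}}$, $\gv{\tm{\close}}$/$\gv{\tm{\wait}}$, and $\gv{\tm{\fork}}$ --- is discharged by inverting the relevant configuration typing rule, observing that firing the redex merely rebalances the two halves of the environment split along the shared endpoint while leaving $\ty{\Gamma}$ fixed; and closure under evaluation and configuration contexts is routine.

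For the redex-relocation lemma, the key observation is that $\equiv$ never alters the underlying multiset of threads nor the set of channels --- it only re-brackets $\parallel$ and re-scopes $\nu$ --- so the redex fired in $\gv{\tm{\conf{C'}}}$ designates either a single thread (for a functional step) or a pair of threads sharing one endpoint (for a communication step) that is equally present in $\gv{\tm{\conf{C}}}$. A functional redex can be fired in place, so $\gv{\tm{\conf{C''}}}\defeq\gv{\tm{\conf{C}}}$ suffices. For a communication redex on a channel $\gv{\tm{x}}$, I would prove a \emph{canonical-form-along-an-edge} lemma: in any well-typed configuration the two endpoints of $\gv{\tm{x}}$ lie in the two subtrees joined by the edge $\gv{\tm{x}}$, so scope extrusion together with commutativity and associativity of $\parallel$ can bring $\gv{\tm{\conf{C}}}$ into the shape demanded by the reduction rule --- the two endpoints composed directly under $\gv{\tm{\res{x}{}}}$ --- \emph{and} this particular rearrangement respects the tree structure and hence preserves typing. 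Firing the rule there yields $\gv{\tm{\conf{D''}}}$, and since both $\gv{\tm{\conf{D''}}}$ and $\gv{\tm{\conf{D'}}}$ arise by firing the same redex in the same pool of threads and channels, they differ only by a further re-bracketing, giving $\gv{\tm{\conf{D''}}\equiv\tm{\conf{D'}}}$.

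The main obstacle is exactly the phenomenon in the displayed counterexample: not every $\equiv$-rearrangement preserves typing, so one cannot re-expose the redex by an arbitrary sequence of congruence steps. The crux is therefore the canonical-form-along-an-edge lemma --- showing that among the many $\equiv$-equivalent bracketings there is always a \emph{well-typed} one exposing the chosen channel's two endpoints adjacently. This is precisely where the acyclicity enforced by the configuration type system is indispensable: re-bracketing along a genuine tree edge keeps every parallel composition sharing exactly one channel, whereas the bad bracketing in the counterexample severs a subtree so that two channels must cross a single composition. Establishing this lemma, and checking that the residual $\nu$-binders and parallel compositions re-associate to recover $\gv{\tm{\conf{D'}}}$ up to $\equiv$, is the real work of the proof.
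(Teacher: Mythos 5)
This statement is not proved in the paper: it is quoted verbatim as Theorem~3 of Lindley and Morris~\cite{lindleym15:semantics}, cited precisely to exhibit the defect in GV's metatheory (typing not preserved by $\equiv$) that HGV is designed to repair. There is therefore no in-paper proof to compare against; the closest analogue is the tree-canonical-form machinery the paper develops for HGV (the two-leaves lemma and the theorem that well-typed configurations can be rewritten into tree canonical form), which is essentially a cleaner, hypersequent-based reincarnation of the argument you sketch.

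On its own terms your outline is the right shape and matches how Lindley and Morris actually proceed: the whole difficulty is concentrated in your ``canonical-form-along-an-edge'' lemma, i.e.\ showing that among the $\equiv$-equivalent bracketings there is a \emph{well-typed} one exposing the redex, and you correctly identify acyclicity of the process structure as what makes this possible. Two points need more care than you give them. First, GV's configuration reduction is itself closed under $\equiv$ (the analogue of \rulename{E-Equiv}), so the hypothesis $\gv{\tm{\conf{C'}}\longrightarrow\tm{\conf{D'}}}$ is not automatically a ``base'' reduction; you must induct on the reduction derivation and, in the equivalence case, compose the intervening congruences into your $\gv{\tm{\conf{C}}\equiv\tm{\conf{C''}}}$ chain. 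Second, the closing step $\gv{\tm{\conf{D''}}\equiv\tm{\conf{D'}}}$ --- ``firing the same redex in the same pool of threads yields results differing only by re-bracketing'' --- is plausible but is itself a lemma requiring induction on the derivation of $\gv{\tm{\conf{C''}}\equiv\tm{\conf{C'}}}$, showing that each congruence axiom commutes with the contraction of a fixed redex occurrence. Neither point is a fatal gap, but both are genuine obligations rather than bookkeeping.
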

This is not a one-time cost: languages based on GV must either also give up on type preservation for structural congruence~\cite{fowlerlmd19:stwt} or admit deadlocks~\cite{igarashittvw19,thiemannv20:ldst}.

Note that CP only avoids the same issue through its combined $\tm{(\nu x)(P \parallel Q)}$ term; attempts to split the term into a separate name restriction and parallel composition would also lose typability of equivalence.

 } 
{\section{Hypersequent GV}\label{sec:hgv}
\usingnamespace{hgv}

\begin{figure}[t]
\small
  \usingnamespace{hgv}

  \headersig{Typing rules for terms}{$\tseq{\ty\Gamma}{M}{T}$}
  \begin{mathpar}
    \inferrule*[lab=TM-Var]{
    }{\tseq{\tmty{x}{T}}{x}{T}}

    \inferrule*[lab=TM-Const]{
    }{\tseq{\emptyenv}{K}{T}}

    \inferrule*[lab=TM-Lam]{
      \tseq{\ty{\Gamma},\tmty{x}{T}}{M}{U}
    }{\tseq{\ty{\Gamma}}{\lambda x.M}{\tylolli{T}{U}}}

    \inferrule*[lab=TM-App]{
      \tseq{\ty{\Gamma}}{M}{\tylolli{T}{U}}
      \\
      \tseq{\ty{\Delta}}{N}{T}
    }{\tseq{\ty{\Gamma},\ty{\Delta}}{M\;N}{U}}

    \inferrule*[lab=TM-Unit]{
    }{\tseq{\emptyenv}{\unit}{\tyunit}}

    \inferrule*[lab=TM-LetUnit]{
      \tseq{\ty{\Gamma}}{M}{\tyunit}
      \\
      \tseq{\ty{\Delta}}{N}{T}
    }{\tseq{\ty{\Gamma},\ty{\Delta}}{\letunit{M}{N}}{T}}

    \inferrule*[lab=TM-Pair]{
      \tseq{\ty{\Gamma}}{M}{T}
      \\
      \tseq{\ty{\Delta}}{N}{U}
    }{\tseq{\ty{\Gamma},\ty{\Delta}}{\pair{M}{N}}{\typrod{T}{U}}}

    \inferrule*[lab=TM-LetPair]{
      \tseq{\ty{\Gamma}}{M}{\typrod{T}{T'}}
      \\
      \tseq{\ty{\Delta},\tmty{x}{T},\tmty{y}{T'}}{N}{U}
    }{\tseq{\ty{\Gamma},\ty{\Delta}}{\letpair{x}{y}{M}{N}}{U}}

    \inferrule*[lab=TM-Absurd]{
      \tseq{\ty{\Gamma}}{M}{\tyvoid}
    }{\tseq{\ty{\Gamma}}{\absurd{M}}{T}}

    \inferrule*[lab=TM-Inl]{
      \tseq{\ty{\Gamma}}{M}{T}
    }{\tseq{\ty{\Gamma}}{\inl{M}}{\tysum{T}{U}}}

    \inferrule*[lab=TM-Inr]{
      \tseq{\ty{\Gamma}}{M}{U}
    }{\tseq{\ty{\Gamma}}{\inr{M}}{\tysum{T}{U}}}

    \inferrule*[lab=TM-CaseSum]{
      \tseq{\ty{\Gamma}}{L}{\tysum{T}{T'}}
      \\
      \tseq{\ty{\Delta},\tmty{x}{T}}{M}{U}
      \\
      \tseq{\ty{\Delta},\tmty{y}{T'}}{N}{U}
    }{\tseq{\ty{\Gamma},\ty{\Delta}}{\casesum{L}{x}{M}{y}{N}}{U}}
  \end{mathpar}

  \headersig{Type schemas for communication primitives}{$\tmty{K}{T}$}
  \begin{mathpar}
    \begin{array}[t]{@{}l@{}}
    \tmty{\link}{\tylolli{\typrod{S}{\co{S}}}{\tyends}} \\
    \tmty{\fork}{\tylolli{(\tylolli{S}{\tyends})}{\co{S}}} \\
    \end{array}

    \begin{array}[t]{@{}l@{}}
    \tmty{\send}{\tylolli{\typrod{T}{\tysend{T}{S}}}{S}} \\
    \tmty{\recv}{\tylolli{\tyrecv{T}{S}}{\typrod{T}{S}}} \\
    \end{array}

    \begin{array}[t]{@{}l@{}}
    \tmty{\wait}{\tylolli{\tyendr}{\tyunit}} \\
    \end{array}
  \end{mathpar}

  \headersig{Duality}{\ty{\co{S}}}
  \begin{mathpar}
    \ty{\co{\tysend{T}{S}}} = \ty{\tyrecv{T}{\co{S}}}

    \ty{\co{\tyrecv{T}{S}}} = \ty{\tysend{T}{\co{S}}}

    \ty{\co{\tyends}} = \ty{\tyendr}

    \ty{\co{\tyendr}} = \ty{\tyends}
  \end{mathpar}

  \caption{HGV, duality and typing rules for terms.}
  \label{fig:hgv-typing-static}
\end{figure}

We present Hypersequent GV (HGV), a linear $\lambda$-calculus extended
with session types and primitives for session-typed communication. HGV
shares its syntax and static typing with GV, but uses
hyper-environments for runtime typing to simplify and generalise its
semantics.

\paragraph{Types, terms, and static typing}
Types ($\ty{T}$, $\ty{U}$) comprise a unit type ($\ty{\tyunit}$), an
empty type ($\ty{\tyvoid}$), product types ($\ty{\typrod{T}{U}}$), sum
types ($\ty{\tysum{T}{U}}$), linear function types
($\ty{\tylolli{T}{U}}$), and session types ($\ty{S}$).
\begin{mathpar}
  \ty{T}, \ty{U}
  \Coloneqq \ty{\tyunit}
  \sep      \ty{\tyvoid}
  \sep      \ty{\typrod{T}{U}}
  \sep      \ty{\tysum{T}{U}}
  \sep      \ty{\tylolli{T}{U}}
  \sep      \ty{S}

  \ty{S}
  \Coloneqq \; \ty{\tysend{T}{S}}
  \sep      \ty{\tyrecv{T}{S}}
  \sep      \ty{\tyends}
  \sep      \ty{\tyendr}
\end{mathpar}
Session types ($\ty{S}$) comprise output ($\ty{\tysend{T}{S}}$: send a value of type
$\ty{T}$, then behave like $\ty{S}$), input ($\ty{\tyrecv{T}{S}}$: receive a value of type
$\ty{T}$, then behave like $\ty{S}$), and dual end types ($\ty{\tyends}$ and
$\ty{\tyendr}$).
The dual endpoints restrict process structure to
\emph{trees}~\cite{wadler14:sessions}; conflating them loosens this
restriction to \emph{forests}~\cite{atkeylm16}.
We let $\ty{\Gamma}, \ty{\Delta}$ range over type environments.

The terms and typing rules are given
in~\Cref{fig:hgv-typing-static}. The linear $\lambda$-calculus rules
are standard; communication primitives $\tm{K}$ are given as constants.
Each communication primitive $\tm{K}$ has a type schema: $\tm{\link}$ takes a
pair of compatible endpoints and forwards all messages between them;
$\tm{\fork}$ takes a function, which is passed one endpoint (of type
$\ty{S}$) of a fresh channel yielding a new child thread, and returns
the other endpoint (of type $\ty{\co{S}}$); $\tm{\send}$ takes a pair
of a value and an endpoint, sends the value over the endpoint, and
returns an updated endpoint; $\tm{\recv}$ takes an endpoint, receives
a value over the endpoint, and returns the pair of the received value
and an updated endpoint; and $\tm{\wait}$ synchronises on a terminated
endpoint of type $\ty{\tyendr}$.
Output is dual to input, and $\ty{\tyends}$ is dual to $\ty{\tyendr}$.
Duality is involutive, \ie, $\ty{\co{\co{S}}}=\ty{S}$.

We write $\tm{\andthen{M}{N}}$ for $\tm{\letunit{M}N}$,
$\tm{\letbind{x}{M}{N}}$ for $\tm{(\lambda x.N)\;M}$,
$\tm{\lambda\unit.M}$ for $\tm{\lambda z.\andthen{z}{M}}$, and
$\tm{\lambda\pair{x}{y}.M}$ for $\tm{\lambda z.\letpair{x}{y}{z}{M}}$.
We write $\tmty{K}{T}$ for $\tseq{\emptyenv}{K}{T}$ in typing
derivations.

\begin{rem}
  We include $\tm{\link}$ because it is convenient for the
  correspondence with CP, which interprets CLL's axiom as
  forwarding. We \emph{can} encode $\tm{\link}$ in GV via a type
  directed translation akin to CLL's \emph{identity expansion}.
\end{rem}

\begin{figure}
  \small
  \usingnamespace{hgv}

  \headersig{Typing rules for configurations}{$\cseq{\ty{\hyper{G}}}{\config{C}}{R}$}
  \begin{mathpar}
    \inferrule*[lab=TC-New]{
      \cseq{\ty{\hyper{G}}\hypersep\ty{\Gamma},\tmty{x}{S}\hypersep\ty{\Delta},\tmty{y}{\co{S}}}{\conf{C}}{R}
    }{\cseq{\ty{\hyper{G}}\hypersep\ty{\Gamma},\ty{\Delta}}{\res{x}{y}{\conf{C}}}{R}}

    \inferrule*[lab=TC-Par]{
      \cseq{\ty{\hyper{G}}}{\conf{C}}{R}
      \and
      \cseq{\ty{\hyper{H}}}{\conf{D}}{R'}
    }{\cseq{\ty{\hyper{G}}\hypersep\ty{\hyper{H}}}{\ppar{\conf{C}}{\conf{D}}}{{R}\isect{R'}}}
\\
    \inferrule*[lab=TC-Main]{
      \tseq{\ty{\Gamma}}{M}{T}
    }{\cseq{\ty{\Gamma}}{\main\;M}{\tymain{T}}}

    \inferrule*[lab=TC-Child]{
      \tseq{\ty{\Gamma}}{M}{\tyends}
    }{\cseq{\ty{\Gamma}}{\child\;M}{\tychild}}

    \inferrule*[lab=TC-Link]{
    }{\cseq{\tmty{x}{S},\tmty{y}{\co{S}},\tmty{z}{\tyendr}}{\linkconfig{z}{x}{y}}{\tychild}}
  \end{mathpar}

  \begin{minipage}[t]{0.25\linewidth}
    \header{Configuration types}
    \[
      \ty{R} \Coloneqq \ty{\tychild} \sep \ty{\tymain{T}}
    \]
  \end{minipage}\hfill
  \begin{minipage}[t]{0.7\linewidth}
    \headersig{Configuration type combination}{$\ty{R}\isect\ty{R'}$}
    \[
      \ty{\tymain{T}}\isect\ty{\tychild} = \ty{\tymain{T}}
      \qquad
      \ty{\tychild}\isect\,\ty{\tymain{T}} = \ty{\tymain{T}}
      \qquad
      \ty{\tychild}\isect\,\ty{\tychild} = \ty{\tychild}
\]
  \end{minipage}
  \caption{HGV, typing rules for configurations.}
  \label{fig:hgv-typing-runtime}
\end{figure}

\paragraph{Configurations and runtime typing}
Process configurations ($\tm{\conf{C}},\tm{\conf{D}},\tm{\conf{E}}$)
comprise child threads ($\tm{\child\;M}$), the main thread
($\tm{\main\;M}$), link threads ($\tm{\linkconfig{z}{x}{y}}$), name
restrictions ($\tm{\res{x}{y}{\conf{C}}}$), and parallel compositions
($\tm{\ppar{\conf{C}}{\conf{D}}}$).
We refer to a configuration of the
form $\tm{\child M}$ or $\tm{\linkconfig{z}{x}{y}}$ as an
\emph{auxiliary thread}, and a configuration of the form $\tm{\main M}$
as a \emph{main thread}. We let $\tm{\taux}$ range over auxiliary
threads and $\tm{\tany}$ range over all threads (auxiliary or main).
\[
  \tm{\phi}
  \Coloneqq \tm{\main}
  \sep      \tm{\child}
  \qquad\qquad
  \tm{\conf{C}}, \tm{\conf{D}}, \tm{\conf{E}}
  \Coloneqq \tm{\phi\;M}
  \sep      \tm{\linkconfig{z}{x}{y}}
  \sep      \tm{\ppar{\conf{C}}{\conf{D}}}
  \sep      \tm{\res{x}{y}{\conf{C}}}
\]
The configuration language is reminiscent of $\pi$-calculus processes,
but has some non-standard features. Name restriction uses double
binders~\cite{vasconcelos12:fundamentals} in which one name is bound
to each endpoint of the channel. Link
threads~\cite{lindleym16:bananas} handle forwarding. A link thread
$\tm{\linkconfig{z}{x}{y}}$ waits for the thread connected to $\tm{z}$
to terminate before forwarding all messages between $\tm{x}$ and
$\tm{y}$.

Configuration typing departs from GV~\cite{lindleym15:semantics},
exploiting \emph{hypersequents}~\cite{avron91:hypersequents} to
recover modularity and extensibility. Inspired by
HCP~\cite{montesip18:ct,kokkemp19:tlla,kokkemp19:popl}, configurations
are typed under a \emph{hyper-environment}, an unordered collection of disjoint
type environments.
We let $\ty{\hyper{G}, \hyper{H}}$ range over hyper-environments,
writing $\emptyhyperenv$ for the empty hyper-environment,
$\ty{\hyper{G}} \hypersep \ty{\Gamma}$ for disjoint extension of
$\ty{\hyper{G}}$ with type environment $\ty{\Gamma}$, and
$\ty{\hyper{G}}\hypersep\ty{\hyper{H}}$ for disjoint concatenation of
$\ty{\hyper{G}}$ and $\ty{\hyper{H}}$.

The typing rules for configurations are given
in~\Cref{fig:hgv-typing-runtime}.
Rules \rulename{TC-New} and \rulename{TC-Par} are key to deadlock
freedom: \rulename{TC-New} joins two disjoint configurations with a
new channel, and merges their type environments;
\rulename{TC-Par} combines two disjoint configurations, and registers
their disjointness by separating their type environments in the
hyper-environment.
Rules \rulename{TC-Main}, \rulename{TC-Child}, and \rulename{TC-Link}
type main, child, and link threads, respectively; all three require a
singleton hyper-environment.
A configuration has type $\ty{\tychild}$ if it has no main thread, and
$\ty{\tymain\;T}$ if it has a main thread of type $\ty{T}$.
The configuration type combination operator ensures that a well-typed
configuration has at most one main thread.

\begin{figure}
  \small
  \usingnamespace{hgv}
\header{Values and evaluation contexts}
  \[
    \begin{array}{lrcl}
      \text{Values}
      & \tm{U}, \tm{V}, \tm{W}
      & \Coloneqq &   \tm{K}
                    \sep        \tm{\lambda x.M}
                    \sep        \tm{\unit}
                    \sep        \tm{\pair{V}{W}}
                    \sep        \tm{\inl{V}}
                    \sep        \tm{\inr{V}}
      \\
      \text{Evaluation contexts}
      & \tm{E}
      & \Coloneqq & \tm{\hole} \\
      & & \sep  & \tm{E\;M}
                  \sep    \tm{V\;E} \\
      & & \sep  & \tm{\letunit{E}{M}} \\
      & & \sep  & \tm{\pair{E}{M}}
                  \sep    \tm{\pair{V}{E}}
                  \sep    \tm{\letpair{x}{y}{E}{M}} \\
      & & \sep  & \tm{\inl{E}}
                  \sep      \tm{\inr{E}}
                  \sep      \tm{\casesum{E}{x}{M}{y}{N}} \\
      \text{Thread contexts} &
                               \tm{F}
      & \Coloneqq & \tm{\phi\;E}
    \end{array}
  \]

  \headersig{Term reduction}{$M \tred N$}
  \[
    \begin{array}{llcl}
      \rulename{E-Lam}   & \tm{(\lambda x.M) \; V}
                           & \tred & \tm{\subst{M}{V}{x}}
      \\
      \rulename{E-Unit}  & \tm{\letunit{\unit}{M}}
                           & \tred & \tm{M}
      \\
      \rulename{E-Pair}  & \tm{\letpair{x}{y}{\pair{V}{W}}{M}}
                           & \tred & \tm{M\{V / x, W / y\}}
      \\
      \rulename{E-Inl}   & \tm{\casesum{\inl{V}}{x}{M}{y}{N}}
                           & \tred & \tm{\subst{M}{V}{x}}
      \\
      \rulename{E-Inr}   & \tm{\casesum{\inr{V}}{x}{M}{y}{N}}
                           & \tred & \tm{\subst{N}{V}{y}}
      \\
      \rulename{E-Lift}  & \tm{E[M]}
                           & \tred & \tm{E[N]},
                             \text{ if }\tm{M}\tred\tm{N}
    \end{array}
  \]

  \headersig{Structural congruence}{$\conf{C} \equiv \conf{D}$}
  \[
    \begin{array}[t]{@{}c@{\quad}c@{}}
      \begin{array}[t]{@{}l@{\quad}r@{\;}c@{\;}l@{}}
        \rulename{SC-ParAssoc}
        & \tm{\ppar{\conf{C}}{(\ppar{\conf{D}}{\conf{E}})}}
        & \equiv & \tm{\ppar{(\ppar{\conf{C}}{\conf{D}})}{\conf{E}}}
        \\
        \rulename{SC-NewComm}
        & \tm{\res{x}{y}{\res{z}{w}{\conf{C}}}}
        & \equiv & \tm{\res{z}{w}{\res{x}{y}{\conf{C}}}}
        \\
        \rulename{SC-ScopeExt}
        & \tm{\res{x}{y}{(\ppar{\conf{C}}{\conf{D}}})}
        & \equiv & \tm{\ppar{\conf{C}}{\res{x}{y}{\conf{D}}}}, \text{ if }{\tm{x},\tm{y}\notin\fv(\tm{\conf{C}})}
      \end{array}
      &
      \begin{array}[t]{@{}l@{\quad}r@{\;}c@{\;}l@{}}
        \rulename{SC-ParComm}
        & \tm{\ppar{\conf{C}}{\conf{D}}}
        & \equiv & \tm{\ppar{\conf{D}}{\conf{C}}}
        \\
        \rulename{SC-NewSwap}
        & \tm{\res{x}{y}{\conf{C}}}
        & \equiv & \tm{\res{y}{x}{\conf{C}}}
        \\
        \rulename{SC-LinkComm}
        & \tm{\linkconfig{z}{x}{y}}
        & \equiv & \tm{\linkconfig{z}{y}{x}}
      \end{array}
    \end{array}
  \]

  \headersig{Configuration reduction}{$\conf{C} \cred \conf{D}$}
  \[
    \begin{array}{l}
      \begin{array}{llcl}
        \rulename{E-Reify-Fork}
        & \tm{\plug{F}{\fork\;{V}}}
        & \cred & \tm{\res{x}{x'}{(\ppar{\plug{F}{x}}{\child\;{(V\;x')}})}},
                  \text{ where }\tm{x},\tm{x'}\text{ fresh}
        \\
        \rulename{E-Reify-Link}
        & \tm{\plug{F}{\link \; \pair{x}{y}}}
        & \cred & \tm{\res{z}{z'}{(\ppar{\linkconfig{z}{x}{y}}{\plug{F}{z'}})}},
                  \text{ where }\tm{z},\tm{z'}\text{ fresh}
      \end{array}
      \\ \\
      \begin{array}{llcl}
        \rulename{E-Comm-Link}
        & \tm{\res{z}{z'}{\res{x}{x'}{(\ppar{\ppar{\linkconfig{z}{x}{y}}{\child\; z'}}{\phi\;M})}}}
        & \cred & \tm{\phi\;(M \{ y / x' \})}
        \\
        \rulename{E-Comm-Send}
        & \tm{\res{x}{y}{(\ppar{\plug{F}{\send\;{\pair{V}{x}}}}{\plug{F'}{\recv\;{y}}})}}
        & \cred & \tm{\res{x}{y}{(\ppar{\plug{F}{x}}{\plug{F'}{\pair{V}{y}}})}}
        \\
        \rulename{E-Comm-Close}
        & \tm{\res{x}{y}{(\ppar{\child\;y}{\plug{F}{\wait\;{x}}})}}
        & \cred & \tm{{\plug{F}{\unit}}}
      \end{array}
      \\ \\
      \inferrule*[lab=E-Res]{
        \tm{\conf{C}}\cred\tm{\conf{C'}}
      }{\tm{\res{x}{y}{\conf{C}}}\cred\tm{\res{x}{y}{\conf{C'}}}}
      \quad
      \inferrule*[lab=E-Par]{
        \tm{\conf{C}}\cred\tm{\conf{C}'}
      }{\tm{\conf{C}\parallel\conf{D}}\cred\tm{\conf{C}'\parallel\conf{D}}}
      \quad
      \inferrule*[lab=E-Equiv]{
        \tm{\conf{C}}\equiv\tm{\conf{C'}}
        \quad
        \tm{\conf{C'}}\cred\tm{\conf{D'}}
        \quad
        \tm{\conf{D'}}\equiv\tm{\conf{D}}
      }{\tm{\conf{C}}\cred\tm{\conf{D}}}
      \quad
      \inferrule*[lab=E-Lift-M]{
      \tm{M}\tred\tm{M'}
      }{\tm{\plug{F}{M}}\cred\tm{\plug{F}{M'}}}
    \end{array}
  \]

  \caption{HGV, operational semantics.}
  \label{fig:hgv-reduction}
\end{figure}

\paragraph{Operational semantics}
\Cref{fig:hgv-reduction} gives the operational semantics for HGV, presented as a
deterministic reduction relation on terms and a nondeterministic
reduction relation on configurations.
HGV values ($\tm{U}$, $\tm{V}$, $\tm{W}$), evaluation contexts
($\tm{E}$), and term reduction rules ($\tred$) define a standard
call-by-value, left-to-right evaluation strategy.
A closed term either reduces to a value or is blocked on a
communication action.

Thread contexts ($\tm{F}$) extend evaluation contexts to threads.
The structural congruence rules are standard apart from
\rulename{SC-LinkComm}, which ensures links are undirected, and
\rulename{SC-NewSwap}, which swaps names in double binders.

The configuration reduction relation gives a semantics for HGV's communication and
concurrency constructs.
The first two rules, \rulename{E-Reify-Fork} and
\rulename{E-Reify-Link}, create child and link threads, respectively.
The next three rules, \rulename{E-Comm-Link}, \rulename{E-Comm-Send},
and \rulename{E-Comm-Close} perform communication actions.
The final four rules enable reduction under name restriction and
parallel composition, rewriting by structural congruence, and term
reduction in threads.
Two rules handle links: \rulename{E-Reify-Link} creates a new
\emph{link thread} $\tm{\linkconfig{z}{x}{y}}$ which blocks on $\tm{z}$ of
type $\ty{\tyendr}$, one endpoint of a fresh channel. The other
endpoint, $\tm{z'}$ of type $\ty{\tyends}$, is placed in the evaluation
context of the parent thread. When $\tm{z'}$ terminates a child thread,
\rulename{E-Comm-Link} performs forwarding by substitution.

\begin{rem}
Note that \rulename{E-Comm-Link} does not fire if $z'$ is returned by a main
thread.
In closed configurations, typing ensures that such a configuration
cannot arise: intuitively, a main thread can only obtain endpoints by
$\tm\fork$ or by receiving an endpoint.

Endpoints generated to communicate with forked
threads (i.e., those passed to a child thread) will always have a
session type terminating with $\ty\tyendr$, and a child thread cannot
transmit an endpoint ending in $\ty\tyends$, since the endpoint must be
returned.
Consequently, there is no way for a main thread to obtain endpoints
with dual session types as required by the type of $\tm\link$. The case
for open configurations is accounted for by our open progress result
(see~\Cref{sec:hgv-metatheory}).
\end{rem}

\paragraph{Choice}\label{sec:hgv-choice}
HGV does not include constructs for internal and external choice (for example,
as shown in the vending machine example in Section~\ref{sec:introduction}).
Internal and external choice are instead encoded with sum types and session
delegation~\cite{kobayashi02:type-systems,dardhags17:revisited}.
Prior encodings of choice in GV~\cite{lindleym15:semantics} are
asynchronous. Instead, to encode synchronous choice we add a `dummy'
synchronisation before exchanging the value of sum type, as follows:
\begin{center}
  \small \hfill \begin{minipage}{0.35\linewidth}
    \(
    \setlength{\arraycolsep}{2pt}
    \begin{array}{lcl}
      \ty{\tyselect{S}{S'}}
      & \defeq
      & \ty{\tysend\tyunit{\tysend{(\tysum{\co{S_1}}{\co{S_2}})}\tyends}}
      \\
      \ty{\tyoffer{S}{S'}}
      & \defeq
      & \ty{\tyrecv\tyunit{\tyrecv{(\tysum{S_1}{S_2})}\tyendr}}
      \\ \\
      \ty{\tyselectemp}
      & \defeq
      & \ty{\tysend\tyunit{\tysend\tyvoid\tyends}}
      \\
      \ty{\tyofferemp}
      & \defeq
      & \ty{\tyrecv\tyunit{\tyrecv\tyvoid\tyendr}}
    \end{array}
    \)
  \end{minipage}\hfill
  \begin{minipage}{0.65\linewidth}
    \(
    \setlength{\arraycolsep}{2pt}
    \begin{array}{lcl}
      \tm{\select \ell}
      & \defeq
      & \tm{\lambda{x}.\left(
        \begin{array}{l}
          \letbind{x}{\send\;\pair{\unit}{x}}{} \\
          \fork\;(\lambda y.\send\;\pair{\ell\;y}{x})
        \end{array}
      \right)}
      \vspace{0.5em}
      \\
      \multicolumn{3}{l}{\tm{\offer{L}{x}{M}{y}{N}}}
      \\
      & \defeq
      & \!\tm{\begin{array}{l}
          \letpair{\unit}{z}{\recv\;L}{\lablet\;\pair{w}{z}={\recv\;z}} \\
          \labin\;\andthen{\wait\;z}{\casesum{w}{x}{M}{y}{N}}
        \end{array}}
      \vspace{0.5em}
      \\
      \tm{\offeremp{L}}
      & \defeq
      & \!\tm{\begin{array}{l}
          \letpair{\unit}{c}{\recv\;L}{\lablet\;\pair{z}{c}={\recv\;c}}
          \\
          \labin\;\andthen{\wait\,c}{\absurd\,z}
        \end{array}}
    \end{array}
    \)
  \end{minipage}
\end{center}

\subsection{Metatheory}\label{sec:hgv-metatheory}
\medskip
HGV enjoys type preservation, deadlock freedom, confluence, and strong
normalisation.

\paragraph{Preservation}
Hyper-environments enable type preservation under structural
congruence, which significantly simplifies the metatheory compared to
GV.
\begin{restatable}[Preservation]{thm}{hgvpres}\hfill
  \label{thm:hgv-pres}
  \begin{enumerate}
      \item If $\cseq{\ty{\hyper{G}}}{\conf{C}}{R}$ and $\tm{\conf{C}}\equiv\tm{\conf{D}}$,
          then $\cseq{\ty{\hyper{G}}}{\conf{D}}{R}$.
      \item If $\cseq{\ty{\hyper{G}}}{\conf{C}}{R}$ and $\tm{\conf{C}}\cred\tm{\conf{D}}$, then
          $\cseq{\ty{\hyper{G}}}{\conf{D}}{R}$.
  \end{enumerate}
\end{restatable}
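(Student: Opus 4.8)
\emph{Proof proposal.} The plan is to prove both statements by rule induction: part~(1) on the derivation of $\tm{\conf{C}}\equiv\tm{\conf{D}}$, and part~(2) on the derivation of $\tm{\conf{C}}\cred\tm{\conf{D}}$. Before starting, I would set up the standard supporting lemmas: a linear substitution lemma and subject reduction for term reduction $\tred$ (the latter via a decomposition lemma for evaluation contexts $\tm{E}$), their lifting to thread contexts $\tm{F}$ at the configuration level, and a linearity lemma stating that if $\cseq{\ty{\hyper{G}}}{\conf{C}}{R}$ then every variable typed in $\ty{\hyper{G}}$ occurs free in $\tm{\conf{C}}$. I would also record the algebraic facts that $\hypersep$ forms a commutative monoid with unit $\emptyhyperenv$ and that the partial operation $\isect$ is commutative and associative with unit $\tychild$. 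Since configuration typing is syntax-directed---each configuration former has exactly one rule---inversion of typing derivations is immediate throughout.

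For part~(1), the reflexivity, symmetry, transitivity, and contextual-closure cases fall out of the induction hypothesis; as $\equiv$ is symmetric it suffices to show each axiom preserves typing in both directions. \rulename{SC-ParComm} and \rulename{SC-ParAssoc} reduce to commutativity and associativity of $\hypersep$ paired with the corresponding laws for $\isect$; \rulename{SC-NewComm} and \rulename{SC-NewSwap} are routine reorderings of \rulename{TC-New}; and \rulename{SC-LinkComm} follows from the symmetry of \rulename{TC-Link} under the involution $\ty{\co{\co{S}}}=\ty{S}$. The crux will be \rulename{SC-ScopeExt}, where $\tm{\res{x}{y}{(\ppar{\conf{C}}{\conf{D}})}}\equiv\tm{\ppar{\conf{C}}{\res{x}{y}{\conf{D}}}}$ under the side condition $\tm{x},\tm{y}\notin\fv(\tm{\conf{C}})$. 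Inverting the left-hand side yields \rulename{TC-New} over \rulename{TC-Par}, so the hyper-environment $\ty{\hyper{G}}\hypersep\ty{\Gamma},\tmty{x}{S}\hypersep\ty{\Delta},\tmty{y}{\co{S}}$ is split between $\tm{\conf{C}}$ and $\tm{\conf{D}}$. The key observation is that \rulename{TC-Par} assigns each type environment \emph{wholesale} to one side; by the linearity lemma an environment given to $\tm{\conf{C}}$ has all its variables free in $\tm{\conf{C}}$, so the side condition forces both $\ty{\Gamma},\tmty{x}{S}$ and $\ty{\Delta},\tmty{y}{\co{S}}$ onto the $\tm{\conf{D}}$ side. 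I can then re-apply \rulename{TC-New} to $\tm{\conf{D}}$ alone and recombine with \rulename{TC-Par}, reaching the right-hand side with the same hyper-environment $\ty{\hyper{G}}\hypersep\ty{\Gamma},\ty{\Delta}$ and the same $R$; the converse direction is symmetric.

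For part~(2), I would discharge \rulename{E-Equiv} using part~(1) with the induction hypothesis, and \rulename{E-Res}, \rulename{E-Par} by the induction hypothesis followed by re-applying \rulename{TC-New} and \rulename{TC-Par}; \rulename{E-Lift-M} follows from subject reduction for $\tred$. For the reification and communication rules I would invert the typing derivation, use the decomposition lemma to isolate the redex within $\tm{F}$, apply the relevant communication type schema, and reconstruct the right-hand side. For \rulename{E-Reify-Fork} the thread's singleton environment $\ty{\Gamma}$ splits as $\ty{\Gamma_1},\ty{\Gamma_2}$ with $\tmty{V}{\tylolli{S}{\tyends}}$ under $\ty{\Gamma_2}$; the parent $\tm{\plug{F}{x}}$ is typed under $\ty{\Gamma_1},\tmty{x}{\co{S}}$ and the new child $\tm{\child\;(V\;x')}$ under $\ty{\Gamma_2},\tmty{x'}{S}$, recombined by \rulename{TC-Par} and \rulename{TC-New}, which merges the environments back to $\ty{\Gamma}$ and preserves $R$ (as $R\isect\tychild=R$); \rulename{E-Reify-Link} is analogous. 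For \rulename{E-Comm-Send} and \rulename{E-Comm-Close} the endpoint types evolve consistently with duality (e.g.\ $\tmty{x}{\tysend{T}{S}}$ becomes $\tmty{x}{S}$ while its dual $\tmty{y}{\tyrecv{T}{\co{S}}}$ becomes $\tmty{y}{\co{S}}$), so \rulename{TC-New} still applies to the residual channel, and \rulename{E-Comm-Link} additionally invokes the substitution lemma to retype the forwarded thread $\tm{\phi\;(M\{y/x'\})}$.

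The hard part will be exactly \rulename{SC-ScopeExt} in part~(1) together with the reification rules in part~(2): in each case the argument turns on hyper-environments faithfully recording the separation imposed by \rulename{TC-Par}, and on linearity pinning each endpoint to a single side, so that a name restriction can be moved across a parallel composition without losing typeability. This is precisely the property that fails for GV's flat configuration typing, and establishing it cleanly is the whole point of the hyper-environment discipline.
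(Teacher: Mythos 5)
Your proposal is correct and takes essentially the same route as the paper: both parts are proved by induction on the respective derivations, with the structural-congruence axioms handled as two-way implications and the reduction cases discharged via subterm-typeability (your decomposition lemma), replacement, and substitution lemmas, exactly as in the appendix. The only cosmetic difference is that you state an explicit linearity lemma to justify the \rulename{SC-ScopeExt} side condition forcing both endpoint-bearing environments onto the $\tm{\conf{D}}$ side, a fact the paper leaves implicit in its displayed derivations.
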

\begin{proof}
    By induction on the derivations of $\tm{\config{C}} \equiv
    \tm{\config{D}}$ and $\tm{\config{C}} \cred \tm{\config{D}}$.
    See~\Cref{sec:appendix:hgv}.
\end{proof}

Before moving onto progress, we must introduce some technical machinery to allow
us to reason about the structure of HGV programs.

\paragraph{Abstract process structures}
Unlike in GV, in HGV we cannot rely on the fact that exactly one
channel is split over each parallel composition.
Instead, we introduce the notion of an \emph{abstract process
  structure} (APS).
Abstract process structures are a crucial ingredient in showing that HGV
configurations can be written in \emph{tree canonical form}, which helps both
with establishing progress results and also the correspondence between HGV and
GV.

We begin by establishing the intuition behind the notion of an APS, and then
describe the formal definitions.
An APS is a graph defined over a hyper-environment $\ty{\hyper{G}}$
and a set of undirected pairs of co-names (a \emph{co-name set})
$\mathcal{N}$ drawn from the names in $\ty{\hyper{G}}$.

The nodes of an APS are the type environments in
$\ty{\hyper{G}}$. Each edge is labelled by a distinct co-name pair
$\{\tm{x_1},\tm{x_2}\} \in \mathcal{N}$, such that $\tmty{x_1}{S} \in
\ty{\Gamma_1}$ and $\tmty{x_2}{\co{S}} \in \ty{\Gamma_2}$.

\begin{exa}\label{ex:topograph-1}\hfill

    \begin{minipage}{0.67\textwidth}
        {\small
  Let $\ty{\hyper{G}} = \ty{\Gamma_1} \hypersep \ty{\Gamma_2} \hypersep \ty{\Gamma_3}$,
  where
$\ty{\Gamma_1} = \tmty{x}{S_1}, \tmty{y}{S_2}$, $\ty{\Gamma_2} =
  \tmty{x'}{\co{S_1}}, \tmty{z}{T}$, and $\ty{\Gamma_3} =
  \tmty{y'}{\co{S_2}}$, and suppose $\mathcal{N} = \{ \{ \tm{x}, \tm{x'} \}, \{ \tm{y}, \tm{y'} \} \}$.
The APS for $\ty{\hyper{G}}$ and $\mathcal{N}$ is illustrated to the
  right.
  }
  \end{minipage}
\hfill
  \begin{minipage}{0.3\textwidth}
  \begin{center}
      \scalebox{0.65}{\begin{tikzpicture}[>=stealth',auto,node distance=1.5cm,
                    semithick]

  \tikzstyle{every state}=[draw=black,text=black]

  \node[state]         (A)                    {$\Gamma_1$};
  \node[state]         (B) [below left of =A]  {$\Gamma_2$};
  \node[state]         (C) [below right of=A] {$\Gamma_3$};

  \path (A) edge              node[above, xshift=-0.45cm] {$\{x, x'\}$} (B)
    (A) edge              node[above, xshift=0.45cm] {$\{y, y'\}$} (C);

    \draw[densely dashed, rounded corners, thin]
      (-2, -2) rectangle (2, 1);
      \node[draw,rectangle, fill=white,minimum size=0.5cm] at (2,1){$\{\{x, x'\}, \{y, y'\}\}$};
\end{tikzpicture}

 }
  \end{center}
  \end{minipage}
\end{exa}

\begin{exa}\label{ex:topograph-2}\hfill

\begin{minipage}{0.67\textwidth}
    {\small
    Let $\ty{\hyper{G}} = \ty{\Gamma_1} \hypersep \ty{\Gamma_2} \hypersep
    \ty{\Gamma_3}$, where
$\ty{\Gamma_1} = \tmty{x}{S_1}, \tmty{z'}{\co{S_3}}$,
    and $\ty{\Gamma_2} = \tmty{x'}{\co{S_1}}, \tmty{y}{S_2}$,
    and
    $\ty{\Gamma_3} = \tmty{y'}{\co{S_2}}, \tmty{z}{S_3}$,
and suppose $\mathcal{N} = \{ \{ x, x' \}, \{ y, y' \}, \{ z, z' \} \}$.
The APS for $\ty{\hyper{G}}$ and $\mathcal{N}$ is illustrated to the
  right.

}
\end{minipage}
\hfill
  \begin{minipage}{0.3\textwidth}
  \begin{center}
      \scalebox{0.65}{\begin{tikzpicture}[,>=stealth',shorten >=1pt,auto,node distance=1.5cm,
                    semithick]
  \tikzstyle{every state}=[draw=black,text=black]

  \node[state]         (A)                    {$\Gamma_1$};
  \node[state]         (B) [below left of=A]  {$\Gamma_2$};
  \node[state]         (C) [below right of=A] {$\Gamma_3$};

  \path (A) edge              node[above, xshift=-0.45cm] {$\{ x, x'\}$} (B)
    (B) edge              node[below, yshift=-0.1cm] {$\{ y, y' \}$} (C)
    (C) edge              node[above, xshift=0.45cm] {$\{ z, z' \}$} (A);

    \draw[densely dashed, rounded corners, thin]
      (-2, -2) rectangle (2, 1);
      \node[draw,rectangle, fill=white,minimum size=0.5cm] at (2,1){$\{\{x,
      x'\}, \{y, y'\}, \{ z, z'\} \}$};

\end{tikzpicture}

 }
  \end{center}
\end{minipage}
\end{exa}

Let us now discuss the formal definition of an APS.
We begin by recalling the definition of an \emph{undirected edge-labelled
multigraph}: an undirected graph that allows multiple edges between vertices.

\begin{defi}[Undirected Multigraph]
  An \emph{undirected multigraph} $G$ is a 3-tuple $(\graphvar{V}, \graphvar{E}, r)$ where:

  \begin{enumerate}
    \item $\graphvar{V}$ is a set of vertices
    \item $\graphvar{E}$ is a set of edge names
    \item $r$ is a function $r : \graphvar{E} \mapsto \{ \{v, w \} : v, w \in \graphvar{V} \}$ from
      edge names to an unordered pair of vertices
  \end{enumerate}
\end{defi}

Denote the size of a set as $\size{\cdot}$.
A \emph{path} is a sequence of edges connecting two vertices. A multigraph $G = (\graphvar{V}, \graphvar{E}, r)$ is \emph{connected} if $\size{\graphvar{V}} = 1$, or if for every pair of vertices $v, w \in \graphvar{V}$ there is a path between $v$ and $w$. A multigraph is \emph{acyclic} if no path forms a cycle.
A \emph{leaf} is a vertex connected to the remainder of a graph by a
single edge.

\begin{defi}[Leaf]
  Given an undirected multigraph $(\mathcal{V}, \mathcal{E}, r)$, a vertex $v \in
  \mathcal{V}$ is a \emph{leaf} if there exists a single $e \in \mathcal{E}$
  such that $v \in r(e)$.
\end{defi}

In an undirected tree containing at least two vertices, there must be at least two leaves.

\begin{lem}\label{lem:two-leaves}
  If $G = (\mathcal{V}, \mathcal{E}, r)$ is an undirected tree where
  $\size{V} \ge 2$, then there exist at least two leaves in
  $\mathcal{V}$.
\end{lem}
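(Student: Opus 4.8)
The plan is to prove the statement by establishing that a leaf exists in any tree with at least two vertices, and then show that a single leaf cannot suffice, so at least two must exist. The cleanest route I would take is an extremal (longest-path) argument, which avoids any explicit induction on vertex count and handles the multigraph subtleties directly.

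First I would observe that since the tree is connected and acyclic with $\size{\mathcal{V}} \ge 2$, the edge set $\mathcal{E}$ is nonempty, so there exist paths of positive length. Because the graph is finite and acyclic, there is a path of maximum length; fix such a maximal path $v_0, e_1, v_1, \dots, e_n, v_n$ with $n \ge 1$ and all vertices distinct (distinctness follows from acyclicity, since a repeated vertex would yield a cycle). The key claim is that both endpoints $v_0$ and $v_n$ are leaves. I would argue this for $v_0$ (the case of $v_n$ being symmetric): suppose $v_0$ were incident to some edge $e \ne e_1$. The other endpoint $w$ of $e$ cannot lie on the path, for if $w = v_i$ with $i \ge 1$ then $e_1, \dots, e_i$ together with $e$ would form a cycle, contradicting acyclicity; and $w \ne v_0$ since otherwise $e$ is a self-loop, again a cycle of length one. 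Hence $w$ is a fresh vertex, and prepending $e$ to the path yields a strictly longer path, contradicting maximality. Therefore $v_0$ is incident to exactly one edge, namely $e_1$, which is precisely the definition of a leaf; the same reasoning shows $v_n$ is a leaf. Since $v_0 \ne v_n$ (the path has distinct vertices and $n \ge 1$), these are two distinct leaves.

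An alternative I would keep in reserve is induction on $\size{\mathcal{V}}$: the base case $\size{\mathcal{V}} = 2$ forces a single edge between the two vertices, making both of them leaves; for the inductive step one removes a leaf (whose existence must itself be argued) and checks that the remaining graph is still a tree. I would prefer the longest-path argument because it delivers \emph{two} leaves simultaneously and sidesteps the bookkeeping of verifying that leaf-deletion preserves connectedness and acyclicity in the multigraph setting.

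The main obstacle I anticipate is being careful about the multigraph structure rather than a simple graph: in principle two distinct edge names $e, e'$ could satisfy $r(e) = r(e')$, i.e.\ parallel edges between the same pair of vertices. I would need to confirm that such parallel edges constitute a cycle under the paper's definition (a path of length two returning to its start), so that acyclicity rules them out; this guarantees that at each endpoint the \emph{single} incident edge required by the \Cref{defi}{Leaf} definition is genuinely unique as an edge name, not merely as an unordered vertex pair. Once parallel edges and self-loops are excluded by acyclicity, the extremal argument goes through verbatim, and the counting of incident edges matches the formal leaf definition exactly.
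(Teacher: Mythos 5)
Your proof is correct, and it is genuinely more of a proof than the one the paper gives: the paper disposes of this lemma with a single sentence asserting that a tree on at least two vertices with fewer than two leaves "would need to have a cycle," leaving the reader to supply the actual argument. You supply it via the standard extremal route: a maximal path exists because the graph is finite, connected, and has an edge; both endpoints must be leaves, since any additional incident edge would either close a cycle (if its other endpoint lies on the path, or if it is a self-loop or parallel edge) or extend the path, contradicting maximality; and the two endpoints are distinct because the path has positive length. Your attention to the multigraph setting --- checking that parallel edges and self-loops are themselves cycles, so that the unique incident edge at a leaf is unique \emph{as an edge name} in the sense of the paper's leaf definition --- is exactly the kind of care the paper's one-liner glosses over. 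The paper's contrapositive phrasing and your longest-path argument ultimately rest on the same fact, but yours is self-contained and checkable, whereas the paper's is an appeal to a well-known result; either would be acceptable here, and yours has the advantage of producing the two leaves explicitly rather than merely asserting their existence.
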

\begin{proof}
  For $G$ to be an undirected tree where $\size{V} \ge 2$ and have fewer than
  two leaves, then there would need to be a cycle, contradicting
  acyclicity.
\end{proof}

With the graph preliminaries in place, we are now ready to introduce the formal
definition of an APS.

\begin{defi}[Abstract process structure]
    The \emph{abstract process structure of a hyper-environment}
    $\ty{\hyper{H}}$ \emph{with respect to a co-name set} $\mathcal{N} = \{ \{
    \tm{x_1}, \tm{y_1} \}, \ldots, \{ \tm{x_n}, \tm{y_n} \} \}$ is an undirected multigraph $(\graphvar{V}, \graphvar{E}, r)$ defined as follows:

  \begin{enumerate}
      \item $\graphvar{V} = \envs{\ty{\hyper{H}}}$
    \item $\graphvar{E} = \mathcal{N}$
    \item $r = (\{ \tm{x}, \tm{y} \} \mapsto \{ \ty{\Gamma_1}, \ty{\Gamma_2} \}
        )$ for each $\{ \tm{x}, \tm{y} \} \in \mathcal{N}$ such that
        $\ty{\Gamma_1} \in \envs{\ty{\hyper{H}}}, \ty{\Gamma_2} \in
        \envs{\ty{\hyper{H}}}, \tm{x} \in \fv(\ty{\Gamma_1}), \tm{y} \in
        \fv(\ty{\Gamma_2})$
  \end{enumerate}
\end{defi}

\begin{exa}
The formal definition of the APS described in Example~\ref{ex:topograph-1} is defined as:
      \begin{itemize}
          \item $\graphvar{V} = \{ \ty{\Gamma_1}, \ty{\Gamma_2}, \ty{\Gamma_3} \}$
          \item $\graphvar{E} = \{ \{ \tm{x}, \tm{x'} \}, \{ \tm{y}, \tm{y'} \} \}$
          \item $r(\{ \tm{x}, \tm{x'} \}) \mapsto \{ \ty{\Gamma_1},
              \ty{\Gamma_2} \})\\
                      r(\{ \tm{y}, \tm{y'} \}) \mapsto \{ \ty{\Gamma_1},
                      \ty{\Gamma_3} \})$
      \end{itemize}
\end{exa}

\noindent
Whereas Example~\ref{ex:topograph-1} is a tree,
Example~\ref{ex:topograph-2} contains a cycle.
Only configurations typeable under a hyper-environment with a \emph{tree
structure} can be written in tree canonical form.

\begin{defi}[Tree structure]
    A hyper-environment $\ty{\hyper{H}}$ with co-name set $\mathcal{N}$
    has a \emph{tree structure}, written $\treeprop{\ty{\hyper{H}}}{\nameset}$,
    if its APS is connected and acyclic.
\end{defi}

An HGV program $\tm{\main\;M}$ has a single type environment, so is
tree-structured; the same goes for child and link threads.
A key feature of HGV is a subformula principle, which states that all
hyper-environments arising in the derivation of an HGV program are
tree-structured.
It follows that a configuration resulting from the reduction of an HGV program
is also tree structured.
Read bottom-up, \LabTirName{TC-New} and \LabTirName{TC-Par} preserve
tree structure, which
is
illustrated by the following two pictures.
\begin{center}
  \hspace{1em}
  \begin{minipage}{0.4\textwidth}
    \scalebox{0.7}{\begin{tikzpicture}[ >=stealth'
  , auto
  , node distance=0.75cm
  , semithick
  ]
  \tikzstyle{every state}=[draw=black,text=black]

  \begin{scope}
    \node
    [ draw
    , densely dashed
    , rounded corners
    , minimum width=1cm
    , minimum height=1cm
    ] (C) {$\hyper{G}$};
    \node[draw] (NBOXC)
    [ above right=0.1cm of C
    , fill=white
    , anchor=center
    ] {$\mathcal{N}$};

    \node
    [ state
    , minimum width=1.25cm
    ] (D)
    [ align=center
    , below left=of C
    ] {$\Gamma$};

    \node
    [ state
    , minimum width=1.25cm
    ] (E)
    [ align=center
    , below right=of C
    ] {$\Delta$};

    \node
    [ draw
    , densely dashed
    , rounded corners
    , minimum width=3.5cm
    , minimum height=4.25cm
    , fit=(C) (NBOXC) (D) (E)
    ] (BOXB) {};

    \node[draw] (NBOXB)
    [ above right=0.25cm of BOXB
    , fill=white
    , anchor=center
    , xshift=-0.5cm
    ] {$\mathcal{N} \uplus \{ \{ z, z' \}, \{ x, y \} \}$};

    \path (C) edge node[above left] {$\{z, z'\}$} (D);
    \path (D) edge node[below] {$\{x, y\}$} (E);
  \end{scope}

  \begin{scope}[xshift=5.5cm]

    \node
    [ draw
    , densely dashed
    , rounded corners
    , minimum width=1cm
    , minimum height=1cm
    , yshift=0.33333cm
    ] (A) {$\hyper{G}$};

    \node [draw] (NBOXA)
    [ above right=0.1cm of A
    , fill=white
    , anchor=center
    ] {$\mathcal{N}$};

    \node
    [ state
    , minimum width=1.25cm
    ] (B)
    [ align=left
    , below=of A
    ] {$\Gamma, \Delta$};

    \path (A) edge node[right] {$\{z, z'\}$} (B);

    \node
    [ draw
    , densely dashed
    , rounded corners
    , minimum height=4.25cm
    , minimum width=2.5cm
    , fit=(A) (B)
    ] (BOX) {};

    \node[draw] (NBOX)
    [ above right=0.25cm of BOX
    , fill=white
    , anchor=center
    , xshift=-0.5cm
    ] {$\mathcal{N} \uplus \{ \{ z, z' \} \}$};
  \end{scope}

  \draw
  [ implies-implies
  , double equal sign distance
  ] (2.75,-1) -- (3.75,-1) {};
\end{tikzpicture}

 }
  \end{minipage}
  \hfill
  \begin{minipage}{0.4\textwidth}
    \scalebox{0.7}{\begin{tikzpicture}[>=stealth',auto,node distance=2cm,
                    semithick]

  \tikzstyle{every state}=[draw=black,text=black]

  \begin{scope}[x=0, y=0]

  \node[draw,densely dashed,rounded corners, minimum width=1cm, minimum height=1cm]
    (A) [align=center]                   {$\hyper{G}$};
  \node[draw,densely dashed,rounded corners, minimum width=1cm, minimum height=1cm]
    (B) [align=center,below of=A]       {$\hyper{H}$};

  \node[draw] (NA) [above right=0.1cm of A, fill=white, anchor=center] {$\hyper{N}_1$};
  \node[draw] (NB) [above right=0.1cm of B, fill=white, anchor=center] {$\hyper{N}_2$};
  \node[fit=(A) (B) (NA) (NB), minimum height=4.25cm] (BOUND) {};

  \end{scope}

  \begin{scope}[xshift=4cm]

  \node[draw,densely dashed,rounded corners, minimum width=1cm, minimum height=1cm]
    (C) [align=center]                   {$\hyper{G}$};
  \node[draw,densely dashed,rounded corners, minimum width=1cm, minimum height=1cm]
    (D) [align=center,below of=C]        {$\hyper{H}$};

  \node[draw] (NC) [above right=0.1cm of C, fill=white, anchor=center] {$\hyper{N}_1$};
  \node[draw] (ND) [above right=0.1cm of D, fill=white, anchor=center] {$\hyper{N}_2$};

  \node[draw,densely dashed,rounded corners, minimum width=3cm, minimum height=4.25cm, fit=(C) (D) (NC) (ND)] (BOX) {};

  \node[draw] (NBOX) [above right=0.25cm of BOX, fill=white, anchor=center, xshift=-0.5cm] {$\hyper{N}_1 \uplus \hyper{N}_2 \uplus \{ \{ x, x' \} \}$};
  \path (C) edge node[left] {$\{x, x'\}$} (D);

  \node[fit=(C) (D) (NC) (ND) (NBOX) (BOX), minimum height=4.25cm] (RBOUND) {};

  \end{scope}

  \draw[implies-implies,double equal sign distance]
  (1.25,-1) -- (2.25,-1) {};

\end{tikzpicture}

 }
  \end{minipage}
  \hspace{1em}
\end{center}

The following lemma states this intuition formally. By analogy to Kleene
equality, we write $\mathcal{P} \kleeneiff \mathcal{Q}$, to mean that either
$\mathcal{P}$ or $\mathcal{Q}$ is undefined, or $\mathcal{P} \iff \mathcal{Q}$.

\begin{lem}[Tree structure]\label{lem:tree-structure-iff}\hfill

\begin{itemize}
\item
  $
  \treeprop
  { (\ty{\hyper{H}} \hypersep \ty{\Gamma_1}, \tmty{x_1}{S}
      \hypersep \ty{\Gamma_2}, \tmty{x_2}{\co{S}}) }
      { \mathcal{N} \uplus \{ \{ \tm{x_1}, \tm{x_2} \} \} }
  ~\kleeneiff~
  \treeprop
  {(\ty{\hyper{H}} \hypersep \ty{\Gamma_1}, \ty{\Gamma_2}))}
    {\mathcal{N}}
  $
\item
    $
    \treeprop
    {(\ty{\hyper{H}_1} \hypersep \ty{\Gamma_1}, \tmty{x_1}{S})}
        {\mathcal{N}_1}
    \wedge
    \treeprop
    {(\ty{\hyper{H}_2} \hypersep \ty{\Gamma_2}, \tmty{x_2}{\co{S}})}
        {\mathcal{N}_2}
    ~\kleeneiff~
    \treeprop
    {(\ty{\hyper{H}_1} \hypersep \ty{\Gamma_1}, \tmty{x_1}{S}
        \hypersep \ty{\hyper{H}_2}
        \hypersep \ty{\Gamma_2}, \tmty{x_2}{\co{S}})}
        {\mathcal{N}_1 \uplus \mathcal{N}_2 \uplus \{ \{ \tm{x_1}, \tm{x_2} \} \}}
$
\end{itemize}
\end{lem}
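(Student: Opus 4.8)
The plan is to unfold $\treeprop{\cdot}{\cdot}$ into its two defining conditions on the associated APS, namely connectedness and acyclicity, and then argue purely graph-theoretically, treating the two bullets as edge contraction and single-bridge gluing respectively. Because both statements are phrased with Kleene equality, I would first pin down when each side is \emph{defined}: an APS is well-defined exactly when every co-name pair in $\mathcal{N}$ has both of its names occurring among the free variables of the hyper-environment, at which point disjointness of the hyper-environment makes $r$ a genuine function (each name determines a unique vertex). I would then discharge the cases in which some side is undefined, where the equivalence holds vacuously, and establish the honest bi-implication only under the assumption that both sides are defined.

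For the first bullet I would note that the two hyper-environments differ only in whether $\Gamma_1$ and $\Gamma_2$ are merged or split by the fresh endpoints $\tmty{x_1}{S}$, $\tmty{x_2}{\co{S}}$ and the extra pair $\{\tm{x_1},\tm{x_2}\}$. Since the union with $\mathcal{N}$ is disjoint and $\tm{x_1},\tm{x_2}$ are fresh, no pair of $\mathcal{N}$ mentions them, so the availability of the $\mathcal{N}$-pairs — and hence definedness — agrees on the two sides and the vacuous cases coincide. Under definedness, the left APS is exactly the right APS with the single vertex $\Gamma_1,\Gamma_2$ split into the two vertices $\Gamma_1,\tmty{x_1}{S}$ and $\Gamma_2,\tmty{x_2}{\co{S}}$ joined by the new edge $\{\tm{x_1},\tm{x_2}\}$; equivalently the right APS is the left one with that edge contracted (as the two pictures depict). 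I would then invoke the standard facts that multigraph edge contraction and its inverse preserve and reflect both properties: connectedness is immediate in either direction, and for acyclicity any cycle in the split graph either avoids the new edge, and so survives contraction as a cycle, or traverses it exactly once, and so contracts to a cycle through the merged vertex (a self-loop being admitted as a cycle); conversely every cycle through the merged vertex lifts back across the new edge.

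For the second bullet the right APS is the disjoint union of the two left APSs together with a single bridging edge $\{\tm{x_1},\tm{x_2}\}$ running from $\Gamma_1,\tmty{x_1}{S}$ to $\Gamma_2,\tmty{x_2}{\co{S}}$. The structural crux is that, when the left conjunction is defined, each pair of $\mathcal{N}_1$ connects two vertices of $\hyper{H_1}\hypersep\Gamma_1,\tmty{x_1}{S}$ and each pair of $\mathcal{N}_2$ connects two vertices of $\hyper{H_2}\hypersep\Gamma_2,\tmty{x_2}{\co{S}}$; by disjointness of the combined hyper-environment these name sets never overlap, so $\{\tm{x_1},\tm{x_2}\}$ is the \emph{unique} edge crossing between the two parts. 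It is then standard that gluing two graphs by a single bridge yields a tree iff both summands are trees: connectedness of the whole is equivalent to connectedness of both parts, since the bridge is the only link and its removal leaves precisely the two component APSs; and a cycle in the whole cannot use a single bridge exactly once, so it must lie entirely within one part, giving a cycle there, and conversely any such cycle persists in the whole.

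I expect the main obstacle to be the bookkeeping around the Kleene equality rather than the graph theory. For the first bullet I must verify carefully that definedness of the two sides genuinely coincides, and for the second I must check that whenever the right side is defined but the left conjunction is \emph{not} — for instance because some pair of $\mathcal{N}_1$ straddles the two parts and so leaves $\treeprop{(\hyper{H_1}\hypersep\Gamma_1,\tmty{x_1}{S})}{\mathcal{N}_1}$ undefined — the equivalence still holds vacuously, while conversely definedness of the left conjunction is exactly what confines the $\mathcal{N}_1$- and $\mathcal{N}_2$-edges to their respective parts. Establishing this confinement is the load-bearing step, since it is what licenses the clean single-bridge decomposition on which the connectedness and acyclicity arguments rest.
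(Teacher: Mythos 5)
Your proposal is correct and follows essentially the same route as the paper, whose proof is a one-liner restricting attention to the cases where both sides are defined and then appealing to the facts that joining two trees by a single edge yields a tree and that removing an edge from a tree partitions it into two subtrees. Your write-up merely spells out the same elementary graph theory (edge contraction for the first clause, single-bridge gluing for the second) and is more explicit about the Kleene-equality bookkeeping than the paper bothers to be.
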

\begin{proof}
    By the definition of $\kleeneiff$, we need only consider the cases where
    both sides of the bi-implication are defined.
Both results follow from the observation that adding an edge between two
    trees results in a tree, and removing an edge from a tree partitions the
    tree into two subtrees.
\end{proof}

\paragraph{Tree canonical form}
We now define a canonical form for configurations that captures the
tree structure of an APS.
Tree canonical form enables a succinct statement of \emph{open
progress}~(Lemma~\ref{lem:hgv:open-progress}) and a means for embedding HGV
in GV~(Proposition~\ref{lem:hgv-to-gv-tcf}).
\begin{defi}[Tree canonical form]
   \label{def:hgv-tree-canonical-forms}
   A configuration $\config{C}$ is in \emph{tree canonical form} if it can be written:
   $\tm{\res{x_1}{y_1}{(\taux_1 \parallel \cdots \parallel \res{x_n}{y_n}{(\taux_n \parallel \phi{N})} \cdots)}}$
   where $\tm{x_i} \in \fv(\tm{\taux_i})$ for $1 \le i \le n$.
 \end{defi}
Every well-typed HGV configuration typeable under a single type
environment can be written in tree canonical form.
\begin{restatable}[Well-typed configurations in tree canonical forms]{thm}{tcfs}\label{lem:hgv:tree-canonical-forms}
  If $\cseq{\ty{\Gamma}}{\config{C}}{R}$, then there exists some $\tm{\config{D}}$
  such that $\tm{\config{C}} \equiv \tm{\config{D}}$ and $\tm{\config{D}}$ is in tree canonical
  form.
\end{restatable}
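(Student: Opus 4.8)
The plan is to proceed by induction on the number $k$ of threads occurring in $\tm{\config{C}}$, after first rewriting $\tm{\config{C}}$ into a form in which its tree structure is manifest, and then peeling off one auxiliary thread at a time from the outside. Throughout I rely on \Cref{thm:hgv-pres} to keep the rewritten configurations typeable under $\ty{\Gamma}$, and on \Cref{lem:tree-structure-iff} to control the shape of the hyper-environments.

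First I would flatten $\tm{\config{C}}$ so that all name restrictions sit at the top and the threads form a single parallel composition, i.e. $\tm{\config{C}} \equiv \tm{\res{x_1}{y_1}{\cdots \res{x_m}{y_m}{(\tany_1 \parallel \cdots \parallel \tany_k)}}}$, using \rulename{SC-ScopeExt}, \rulename{SC-ParAssoc}, \rulename{SC-ParComm} and \rulename{SC-NewComm}. The side conditions of \rulename{SC-ScopeExt} always hold because the double-binder discipline of \rulename{TC-New} places the two endpoints of each channel in two distinct type environments, hence in two distinct threads, so each restricted name occurs free in exactly one parallel component. By \Cref{thm:hgv-pres} the flattened configuration is still typeable under $\ty{\Gamma}$. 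Inspecting its derivation, the parallel block is typed by $k-1$ applications of \rulename{TC-Par} under a hyper-environment $\ty{\hyper{H}} = \ty{\Gamma_1} \hypersep \cdots \hypersep \ty{\Gamma_k}$ with one component per thread, while the $m$ outer restrictions are $m$ applications of \rulename{TC-New}; since each \rulename{TC-New} merges two components and the final hyper-environment is the singleton $\ty{\Gamma}$, we must have $m = k-1$. By the subformula principle, i.e. repeated application of \Cref{lem:tree-structure-iff} read bottom-up, the APS of $\ty{\hyper{H}}$ with respect to the co-name pairs $\{\tm{x_i},\tm{y_i}\}$ is connected and acyclic, hence a tree on $k$ vertices.

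For the inductive step ($k \ge 2$), \Cref{lem:two-leaves} supplies at least two leaves. Because the combination operator $\isect$ is defined only when at most one operand is a main type, $\tm{\config{C}}$ has at most one main thread, so at least one leaf is an auxiliary thread $\tm{\taux}$ sharing exactly one channel $\{\tm{x},\tm{y}\}$, with $\tm{x}\in\fv(\tm{\taux})$, with the rest of the configuration. Using \rulename{SC-ParComm} and \rulename{SC-ParAssoc} to bring $\tm{\taux}$ to the front and \rulename{SC-ScopeExt} to push every other restriction inside past $\tm{\taux}$ (legitimate since $\tm{\taux}$, being a leaf, mentions no other bound name), I obtain $\tm{\config{C}} \equiv \tm{\res{x}{y}{(\ppar{\taux}{\config{C'}})}}$, where $\tm{\config{C'}}$ collects the remaining $k-1$ threads. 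Deleting the leaf removes one vertex and its unique incident edge, so by \Cref{lem:tree-structure-iff} the residual APS is again a tree and $\tm{\config{C'}}$ is typeable under a single type environment; the induction hypothesis rewrites $\tm{\config{C'}} \equiv \tm{\config{D'}}$ with $\tm{\config{D'}}$ in tree canonical form, and prefixing the context $\tm{\res{x}{y}{(\ppar{\taux}{\hole})}}$ preserves the canonical shape. In the base case $k=1$ the sole thread is a main or child thread $\tm{\phi\;N}$, which is already in tree canonical form with $n=0$.

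I expect the main obstacle to be the bridge between the graph-theoretic \emph{leaf} furnished by \Cref{lem:two-leaves} and a genuinely extractable \emph{syntactic} auxiliary thread: one must check that such a leaf shares a single bound channel with its complement, that the side conditions of \rulename{SC-ScopeExt} are met for the extraction, and, crucially, that the remainder stays both single-environment typeable and tree-structured so that the induction hypothesis applies, all of which hinge on \Cref{lem:tree-structure-iff}. A secondary wrinkle is selecting which auxiliary leaf to peel so that a main or child thread, rather than a link thread, survives at the core; retaining the unique main thread when one is present resolves the typical case.
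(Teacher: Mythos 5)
Your proof is correct and follows essentially the same route as the paper's: flatten the configuration so all restrictions are outermost, use \Cref{lem:tree-structure-iff} to establish that the induced APS is a tree, extract an auxiliary leaf supplied by \Cref{lem:two-leaves}, and recurse. The only differences are cosmetic --- you induct on the number of threads rather than the number of $\nu$-binders (equivalent measures, since $m = k-1$), and you are somewhat more explicit about the scope-extrusion side conditions and the link-thread wrinkle, which the paper's proof glosses over.
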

\begin{proof}
    By induction on the number of $\nu$-binders in $\tm{\config{C}}$. In the case that
    $n=0$, it must be the case that $\cseq{\ty{\Gamma}}{\phi M}{R}$ for some thread
    $\tm{M}$, since parallel composition is only typeable under a hyper-environment
    containing two or more type environments. Therefore, $\tm{\config{C}}$ is in tree
  canonical form by definition.

  In the case that $n \ge 1$, by~\Cref{thm:hgv-pres}, we can
  rewrite the configuration as:
\[
    \tm{(\nu x_1 y_1) \cdots (\nu x_n y_n) (\child M_1 \parallel \cdots \parallel
    \child M_n \parallel \phi N)}
  \]Fix $\mathcal{N} = \{ \{ \tm{x_i}, \tm{y_i} \} \mid 1 \le i \le n \}$.
  By definition, $\ty{\Gamma}$ has a tree structure with respect to an empty co-name set.
By repeated applications of \textsc{TC-New}, there exists some $\ty{\hyper{G}}$ such
that
$\cseq
    {\ty{\hyper{G}}}
   {\child M_1 \parallel \cdots \parallel \child M_n \parallel \phi N}
   {T}$;
   by Lemma~\ref{lem:tree-structure-iff} (clause 1, right-to-left),
   $\ty{\hyper{G}}$ has a tree
structure.

Construct the APS for $\ty{\hyper{G}}$ using names $\mathcal{N}$; by
Lemma~\ref{lem:two-leaves}, there exist $\ty{\Gamma_1}, \ty{\Gamma_2} \in
\envs{\ty{\hyper{H}}}$ such that $\ty{\Gamma_1}$ and $\ty{\Gamma_2}$ are leaves of the tree and
therefore by the definition of the APS contain precisely one $\nu$-bound name.
By \textsc{TC-Par}, there must exist two threads $\tm{\config{C}_1},
\tm{\config{C}_2}$ such that
$\cseq{\ty{\Gamma_1}}{\config{C}_1}{R_1}$ and
$\cseq{\ty{\Gamma_2}}{\config{C}_2}{R_2}$. By runtime type combination,
at least one of $\ty{R_1}, \ty{R_2}$ must be $\ty{\tychild}$; without loss of
generality assume this is $\ty{R_1}$. Suppose (again without loss of generality)
that the $\nu$-bound name contained in $\ty{\Gamma_1}$ is $\tm{x_1}$ and
$\tm{L_1} = \tm{M_1}$.

Let $\tm{\config{D}} = \tm{(\nu x_2 y_2) \cdots
  (\nu x_n y_n)(\child M_2 \parallel \cdots \parallel \child M_n \parallel \phi
  N)}$.
By Theorem~\ref{thm:hgv-pres} and the fact that $\tm{x_1}$ is the only
  $\nu$-bound variable in $\tm{M_1}$, we have that
  $\tm{\config{C}} \equiv \tm{(\nu x_1 y_1)(\child M_1 \parallel \config{D})}$.
By the induction hypothesis, there exists some $\tm{\config{D}'}$ such that $\tm{\config{D}} \equiv
  \tm{\config{D}'}$ and $\tm{\config{D}'}$ is in canonical form. By construction we have
  that $\tm{\config{C}} \equiv \tm{(\nu x_1 y_1)(\child M_1 \parallel
  \config{D}')}$, which
is in tree canonical form as required.
\end{proof}

As hyper-environments capture parallelism, a configuration
$\tm{\config{C}}$ typeable under hyper-environment $\ty{\Gamma_1} \hypersep
\cdots \hypersep \ty{\Gamma_n}$ is equivalent to $n$ independent parallel
processes.
\begin{prop}[Independence]\label{prop:hgv:independence}
  If $\cseq{\ty{\Gamma_1} \hypersep \cdots \hypersep
    \ty{\Gamma_n}}{\config{C}}{R}$, then there exist $\ty{R_1},
  \ldots, \ty{R_n}$ and $\tm{\config{D}_1}, \ldots, \tm{\config{D}_n}$
  such that $\ty{R} = \ty{R_1} \isect \cdots \isect \ty{R_n}$ and
  $\tm{\config{C}} \equiv \tm{\config{D}_1 \parallel \cdots \parallel
    \config{D}_n}$ and $\cseq{\ty{\Gamma_i}}{\config{D}_i}{R_i}$ for
  each $i$.
\end{prop}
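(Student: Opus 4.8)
The plan is to induct on the derivation of $\cseq{\ty{\Gamma_1} \hypersep \cdots \hypersep \ty{\Gamma_n}}{\config{C}}{R}$, at each step using structural congruence to split $\tm{\config{C}}$ into one sub-configuration per type environment. The guiding observation is that \rulename{TC-Par} is the only rule that increases the number of type environments in the hyper-environment, \rulename{TC-Main}, \rulename{TC-Child} and \rulename{TC-Link} all demand a singleton hyper-environment, and \rulename{TC-New} merges two environments into one. Two facts about $\isect$ are used throughout: it is commutative and associative wherever it is defined, with $\ty{\tychild}$ as a unit, so the value of $\ty{R_1} \isect \cdots \isect \ty{R_n}$ is independent of how it is bracketed; and every partial combination appearing below is defined, since each occurs as a sub-combination of $\ty{R}$, which is defined by assumption. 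I also use that $\equiv$ is a congruence, hence closed under the contexts $\tm{\ppar{\hole}{\config{D}}}$ and $\tm{\res{x}{y}{\hole}}$, so that decompositions obtained from the induction hypotheses can be lifted through the syntax.

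For the base cases \rulename{TC-Main}, \rulename{TC-Child} and \rulename{TC-Link}, the configuration is typed under a single environment, so $n = 1$ and we take $\tm{\config{D}_1} = \tm{\config{C}}$ and $\ty{R_1} = \ty{R}$, with $\tm{\config{C}} \equiv \tm{\config{C}}$ by reflexivity. For \rulename{TC-Par}, the configuration is $\tm{\ppar{\config{C}'}{\config{C}''}}$, where $\tm{\config{C}'}$ is typed under the environments constituting $\ty{\hyper{G}}$ and $\tm{\config{C}''}$ under those constituting $\ty{\hyper{H}}$, with $\ty{R} = \ty{R'} \isect \ty{R''}$. Applying the induction hypothesis to each premise yields decompositions of $\tm{\config{C}'}$ and $\tm{\config{C}''}$ into parallel compositions of single-environment processes, and \rulename{SC-ParAssoc} and \rulename{SC-ParComm} reassemble their juxtaposition into the required decomposition of $\tm{\ppar{\config{C}'}{\config{C}''}}$; the configuration types recombine by associativity of $\isect$.

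The interesting case is \rulename{TC-New}, where $\tm{\config{C}} = \tm{\res{x}{y}{\config{C}'}}$ and the premise types $\tm{\config{C}'}$ under $\ty{\hyper{G}} \hypersep \ty{\Gamma}, \tmty{x}{S} \hypersep \ty{\Delta}, \tmty{y}{\co{S}}$, one environment larger than the conclusion's $\ty{\hyper{G}} \hypersep \ty{\Gamma}, \ty{\Delta}$. The induction hypothesis gives $\tm{\config{C}'} \equiv \tm{\config{E}_a \parallel \config{E}_b \parallel \config{R}}$, where $\tm{\config{E}_a}$ is typed under $\ty{\Gamma}, \tmty{x}{S}$ with type $\ty{R_a}$, $\tm{\config{E}_b}$ under $\ty{\Delta}, \tmty{y}{\co{S}}$ with type $\ty{R_b}$, and $\tm{\config{R}}$ is the parallel composition of the processes for the environments of $\ty{\hyper{G}}$. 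Setting $\tm{\config{D}_0} = \tm{\res{x}{y}{(\ppar{\config{E}_a}{\config{E}_b})}}$, an application of \rulename{TC-Par} followed by \rulename{TC-New} types $\tm{\config{D}_0}$ under $\ty{\Gamma}, \ty{\Delta}$ with type $\ty{R_a} \isect \ty{R_b}$. Lifting the induction hypothesis through $\tm{\res{x}{y}{\hole}}$, reordering under the binder with \rulename{SC-ParComm} and \rulename{SC-ParAssoc} so that $\tm{\config{E}_a}$ and $\tm{\config{E}_b}$ are adjacent, and finally applying \rulename{SC-ScopeExt} to push the binder past $\tm{\config{R}}$, yields $\tm{\config{C}} \equiv \tm{\ppar{\config{D}_0}{\config{R}}}$; combined with the decomposition of $\tm{\config{R}}$ this gives the desired $n$-way splitting, and the types combine as $\ty{R} = (\ty{R_a} \isect \ty{R_b}) \isect \cdots$.

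The only step that requires care, and which I expect to be the main obstacle, is justifying the application of \rulename{SC-ScopeExt} in the \rulename{TC-New} case, since that rule is sound only when $\tm{x}, \tm{y} \notin \fv(\tm{\config{R}})$. This is where the disjointness built into hyper-environments pays off: invoking the routine free-variable containment lemma (if $\cseq{\ty{\Gamma}}{\config{D}}{R}$ then $\fv(\tm{\config{D}}) \subseteq \dom(\ty{\Gamma})$), the variable $\tm{x}$ lives only in $\ty{\Gamma}, \tmty{x}{S}$ and $\tm{y}$ only in $\ty{\Delta}, \tmty{y}{\co{S}}$, so neither can occur free in any process typed under an environment drawn from $\ty{\hyper{G}}$, and hence not in $\tm{\config{R}}$. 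Establishing and threading this containment invariant through the induction is the crux; once it is in place, every remaining obligation is routine reshuffling by structural congruence together with associativity and commutativity of $\isect$.
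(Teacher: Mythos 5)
Your proof is correct and follows essentially the same route as the paper's: induction on the typing derivation, with the thread-typing rules as immediate base cases and \rulename{TC-Par}/\rulename{TC-New} handled by the induction hypothesis plus structural congruence. The paper states this in one line; your elaboration of the \rulename{TC-New} case, in particular the free-variable containment argument needed to discharge the side condition of \rulename{SC-ScopeExt}, correctly fills in the details the paper leaves implicit.
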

\begin{proof}
    By induction on the derivation of $\cseq{\ty{\Gamma_1} \hypersep \cdots \hypersep
    \ty{\Gamma_n}}{\config{C}}{R}$.
    The cases for \textsc{TC-Main}, \textsc{TC-Child}, and \textsc{TC-Link}
    follow immediately. The cases for \textsc{TC-New} and \textsc{TC-Par}
    follow from the IH and structural congruence rules.
\end{proof}

It follows from Theorem~\ref{lem:hgv:tree-canonical-forms}
and Proposition~\ref{prop:hgv:independence} that any well-typed HGV configuration can be
written as a forest of independent configurations in tree canonical form.

\paragraph{Progress and Deadlock Freedom}
With tree canonical forms defined, we can now state a progress result.
A thread is \emph{blocked} on an endpoint $x$ if it is ready to
perform a communication action on $x$.
\begin{defi}[Blocked thread]
  We say that thread $\tm{\tany}$ is \emph{blocked on variable} $\tm{z}$,
  written $\blocked{\tany}{z}$, if either: $\tm{\tany} = \tm{\child\;z}$;
  $\tm{\tany} = \tm{\linkconfig{z}{x}{y}}$, for some $\tm{x}$, $\tm{y}$;
  or $\tm{\tany} = \tm{\plug{F}{N}}$ for some $\tm{F}$,
  where $\tm{N}$ is $\tm{\send\:\pair{V}{z}}$, $\tm{\recv\:z}$, or $\tm{\wait\:z}$.
\end{defi}
We let $\ty{\Psi}$ range over type environments containing only
session-typed variables, \ie, $\ty{\Psi} \bnfdef \cdot \sep
\ty{\Psi}, \tmty{x}{S}$, which lets us reason about configurations
that are closed except for runtime names. Using
Lemma~\ref{lem:hgv:open-progress} we obtain \emph{open progress} for
configurations with free runtime names.
\begin{lem}[Open Progress]\label{lem:hgv:open-progress}
  Suppose $\cseq{\ty{\Psi}}{\config{C}}{T}$ where
\[
      \tm{\config{C}} = \tmcolor (\nu x_1 y_1)(\taux_1 \parallel \cdots
  \parallel (\nu x_n y_n)(\taux_n \parallel \phi N) \cdots )
  \] is in tree canonical form.
Either $\tm{\config{C}} \cred \tm{\config{D}}$ for some $\tm{\config{D}}$, or:

  \begin{enumerate}
  \item For each $\tm{\taux_i}$ $(1 \le i \le n)$,
$\blocked{\taux_i}{z}$ for some $\tm{z} \in \{ \tm{x_i} \} \cup \{ \tm{y_j}
        \mid 1 \le j < i \} \cup \fv(\ty{\Psi})$
      \item Either $\tm{N}$ is a value or
          $\blocked{\phi N}{z}$ for some $\tm{z} \in \{ \tm{y_i} \mid 1 \le i \le n \} \cup \fv(\ty{\Psi})$
    \end{enumerate}
\end{lem}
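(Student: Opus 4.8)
The plan is to induct on the number $n$ of $\nu$-binders (equivalently, of auxiliary threads) in the tree canonical form of $\tm{\config{C}}$, using a term-level progress result as the workhorse for individual threads. First I would establish, by induction on typing derivations, a \emph{thread progress} lemma: whenever $\cseq{\ty{\Psi'}}{\phi M}{R}$ with $\ty{\Psi'}$ a session-only environment, either $\tm{M}\tred\tm{M'}$, or $\tm{M}=\tm{\plug{E}{\fork\;V}}$ or $\tm{M}=\tm{\plug{E}{\link\;\pair{x}{y}}}$ for some $\tm{E}$, or $\tm{M}$ is a value, or $\blocked{\phi M}{z}$ for some $\tm{z}\in\fv(\ty{\Psi'})$. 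The only non-routine point here is that a child thread can never be a value: by \textsc{TC-Child} its body has type $\ty{\tyends}$, and no value form inhabits $\ty{\tyends}$, so a child thread is always term-reducible, a $\tm{\fork}$/$\tm{\link}$ reification redex, or blocked; a link thread is blocked on its first endpoint by definition.

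For the base case $n=0$ we have $\tm{\config{C}}=\tm{\phi N}$ typed under $\ty{\Psi}$, and thread progress yields either a reduction (lifted directly by \textsc{E-Lift-M}, \textsc{E-Reify-Fork}, or \textsc{E-Reify-Link}) or that $\tm{N}$ is a value or $\blocked{\phi N}{z}$ with $\tm{z}\in\fv(\ty{\Psi})$; clause~1 is then vacuous and clause~2 holds. For the inductive step I would write $\tm{\config{C}}=\tm{\res{x_1}{y_1}{(\ppar{\taux_1}{\config{C}'})}}$ with $\tm{\config{C}'}$ the $(n-1)$-binder tail, itself in tree canonical form. The crucial bookkeeping is a typing inversion: since the ambient hyper-environment is the singleton $\ty{\Psi}$, inverting \textsc{TC-New} forces its $\ty{\hyper{G}}$ component to be empty and $\ty{\Psi}=\ty{\Gamma},\ty{\Delta}$; then inverting \textsc{TC-Par}, and using $\tm{x_1}\in\fv(\tm{\taux_1})$ together with linearity, shows that $\tm{\taux_1}$ is typed under the single environment $\ty{\Gamma},\tmty{x_1}{S}$ and $\tm{\config{C}'}$ under the single session-only environment $\ty{\Delta},\tmty{y_1}{\co{S}}$. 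This is exactly what makes the induction hypothesis applicable to $\tm{\config{C}'}$, and it is also the source of the \emph{variable-confinement} invariant that every free name of $\tm{\taux_1}$ lies in $\{\tm{x_1}\}\cup\fv(\ty{\Psi})$ and every free name of $\tm{\config{C}'}$ lies in $\fv(\ty{\Psi})\cup\{\tm{y_1}\}$.

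I would then apply thread progress to $\tm{\taux_1}$ and the induction hypothesis to $\tm{\config{C}'}$. If either produces a reduction, I lift it to $\tm{\config{C}}$ using \textsc{E-Par}, \textsc{E-Res}, and \textsc{E-Equiv} with \textsc{SC-ParComm} (for a redex in the right-hand component), landing in the first disjunct. Otherwise $\tm{\taux_1}$ is blocked on some $\tm{z}$, and confinement gives $\tm{z}\in\{\tm{x_1}\}\cup\fv(\ty{\Psi})$, which is precisely clause~1 for $i=1$; meanwhile the induction hypothesis supplies clauses~1 and~2 for $\tm{\config{C}'}$ relative to $\ty{\Delta},\tmty{y_1}{\co{S}}$. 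Re-indexing and absorbing $\fv(\ty{\Delta},\tmty{y_1}{\co{S}})\subseteq\fv(\ty{\Psi})\cup\{\tm{y_1}\}$ converts these into clause~1 for $2\le i\le n$ and clause~2 for $\tm{\phi N}$ with exactly the index ranges demanded by the statement. A subtle but important observation is that I never need to \emph{detect} a possible communication across the $\tm{x_1}/\tm{y_1}$ channel: if such a redex exists then $\tm{\config{C}}$ also reduces, but clauses~1 and~2 already hold by confinement, so reporting the blocked characterisation remains correct---the two disjuncts are not mutually exclusive.

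The main obstacle is the confinement bookkeeping in the inductive step: one must chase the \textsc{TC-New}/\textsc{TC-Par} inversions carefully to see that the tail is typed under a single environment whose names are exactly those of $\ty{\Delta}$ together with the fresh dual $\tm{y_1}$, and hence that no endpoint leaks outside the permitted index ranges. This is precisely the structural content that the abstract-process-structure and tree-canonical-form development (\Cref{lem:tree-structure-iff}, \Cref{lem:hgv:tree-canonical-forms}) was designed to supply; once those are in hand, the remaining steps are routine inductive packaging, with \Cref{thm:hgv-pres} used to justify the congruence rewrites that place $\tm{\config{C}}$ in the required shape.
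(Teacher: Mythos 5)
Your proposal is correct and follows essentially the same route as the paper: a term-level progress lemma combined with a structural induction over the inductive presentation of tree canonical forms (your induction on the number of $\nu$-binders is that same induction), the only presentational difference being that the paper packages the inductive invariant as a separate ``satisfies open progress'' predicate under the standing assumption that the configuration does not reduce, whereas you prove the disjunction directly. One small slip is harmless: a free variable \emph{is} a value inhabiting $\hgv{\ty{\tyends}}$ in an open environment, but a child thread whose body is such a variable is blocked on it by the definition of blocked threads, exactly as your case analysis requires.
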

\begin{proof}
Open progress follows as a direct corollary of a slightly more verbose property
which holds on HGV processes, proved by induction on the derivation of an inductive definition of tree canonical forms.
See~\Cref{sec:appendix:hgv} for details.
\end{proof}
Closed configurations enjoy a stronger result: if a closed configuration cannot reduce, then each auxiliary thread must either be a value, or be blocked on its neighbouring endpoint.

\begin{lem}[Closed Progress]\label{lem:hgv:closed-progress}
    Suppose $\cseq{\ty{\Psi}}{\tm{\config{C}}}{\ty{R}}$
    where
    \[
    \tm{\config{C}} = \tm{(\nu x_1 y_1)(\taux_1 \parallel \cdots \parallel (\nu x_n
  y_n)(\taux_n
  \parallel \phi N) \cdots )}
  \] is in tree canonical form.
Either $\tm{\config{C}} \cred \tm{\config{D}}$ for some $\tm{\config{D}}$, or:

  \begin{enumerate}
      \item For each $\tm{\taux_j}$ for $1 \le j \le n$, $\blocked{\tm{\taux_j}}{\tm{x_j}}$
      \item $\tm{N}$ is a value
    \end{enumerate}
\end{lem}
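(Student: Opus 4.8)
The plan is to derive \Cref{lem:hgv:closed-progress} from \emph{open progress}~(\Cref{lem:hgv:open-progress}) and then sharpen its conclusion using closedness together with a counting argument driven by duality. First I would apply \Cref{lem:hgv:open-progress} to $\cseq{\ty{\Psi}}{\config{C}}{R}$: if $\config{C} \cred \config{D}$ we are immediately done, so I assume $\config{C}$ is irreducible and land in the second case. Closedness gives $\fv(\ty{\Psi}) = \emptyset$, so the blocking sets supplied by open progress lose their $\fv(\ty{\Psi})$ components and collapse: each $\taux_i$ is blocked on some endpoint in $\{\tm{x_i}\} \cup \{\tm{y_j} \mid 1 \le j < i\}$, and either $N$ is a value or $\phi N$ is blocked on some $\tm{y_i}$. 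It is precisely the disappearance of $\fv(\ty{\Psi})$ that makes the closed conclusion strictly stronger than the open one.

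The crux is a \emph{no-double-block} property: if the thread holding $\tm{x_i}$ is blocked on $\tm{x_i}$ and the thread holding $\tm{y_i}$ is blocked on $\tm{y_i}$, then $\config{C}$ reduces. Since $\tm{x_i}:\ty{S}$ and $\tm{y_i}:\ty{\co{S}}$ are dual, typing forces the two pending actions to be complementary: a $\send$ facing a $\recv$, a terminated child $\child\;\tm{x_i}$ facing a $\wait$, or a link $\linkconfig{x_i}{x}{y}$ facing a terminated child on $\tm{y_i}$. In each case, after reordering by structural congruence (scope extrusion with the commutative and associative laws, all type-preserving by \Cref{thm:hgv-pres}) so that the two threads sit adjacent beneath $\res{x_i}{y_i}{\cdot}$, exactly one of \rulename{E-Comm-Send}, \rulename{E-Comm-Close}, or \rulename{E-Comm-Link} applies. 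Hence in an irreducible closed configuration no channel has both of its endpoints blocked.

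Next I would count claims. In an irreducible configuration every auxiliary thread is blocked on exactly one endpoint it holds: a link thread blocks on its $\tm{z}$ endpoint, and a child thread is either reducible---contradicting irreducibility---or blocked, where even the terminated form $\child\;\tm{z}$ counts as blocked on $\tm{z}$. Each auxiliary thread therefore \emph{claims} exactly one of the $n$ channels, and by the no-double-block property these claims are pairwise distinct, giving an injection from the $n$ auxiliary threads into the $n$ channels. If $\phi N$ were also blocked it would claim a further channel, yielding $n+1$ distinct claims on $n$ channels---impossible. Thus $\phi N$ is not blocked; since it cannot be a redex either (that would let $\config{C}$ reduce), $N$ must be a value, which is conclusion~(2), and the auxiliary-thread claims form a bijection with the channels.

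Finally I would identify each claim as $\tm{x_i}$ by downward induction on $i$, exploiting the nesting of the tree canonical form. Channel $i$ has endpoints $\tm{x_i}$, held by $\taux_i$, and $\tm{y_i}$, held by some thread strictly further inside the nesting, namely one of $\taux_{i+1}, \ldots, \taux_n$ or $\phi N$ (acyclicity rules out $\taux_i$ holding both ends). The main thread claims nothing, and by the induction hypothesis each inner $\taux_j$ with $j > i$ already claims channel $j \neq i$; as a thread claims only one channel, the holder of $\tm{y_i}$ does not claim channel $i$. Since channel $i$ is claimed exactly once, it is claimed by $\taux_i$, i.e.\ $\blocked{\taux_i}{x_i}$, which is conclusion~(1). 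I expect the main obstacle to be the no-double-block step: making precise that dual, ready-to-fire endpoints always expose a redex requires a careful case analysis over the blocked forms---in particular the child/link interaction consumed by \rulename{E-Comm-Link}---together with the structural-congruence bookkeeping needed to juxtapose the two communicating threads beneath their shared $\nu$-binder.
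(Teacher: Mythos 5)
Your proof is correct and rests on the same two pillars as the paper's: specialising open progress (\Cref{lem:hgv:open-progress}) to the closed setting, and the observation that if both endpoints of one channel are blocking their respective holders then duality exposes a redex (your \emph{no-double-block} property, which the paper states in a single sentence). You differ only in the final accounting. The paper runs a forward induction along the nesting: the blocking set that open progress assigns to $\tm{\taux_1}$ is the singleton $\{\tm{x_1}\}$, so $\tm{\taux_1}$ is blocked on $\tm{x_1}$; then $\tm{\taux_2}$ cannot be blocked on $\tm{y_1}$ without creating a redex opposite $\tm{\taux_1}$, so it must be blocked on $\tm{x_2}$; and so on, until the main thread, unable to block on any $\tm{y_i}$, must be a value. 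You instead first run a pigeonhole argument---the $n$ auxiliary threads claim $n$ pairwise distinct channels, so a blocked main thread would over-claim, whence $\tm{N}$ is a value---and then identify each claim by a downward induction using the resulting bijection. Both are sound; the paper's forward induction is the more economical, since its base case is forced outright by the shape of the blocking sets, whereas your route must establish the bijection before the identification step. Your flagged worry about the \rulename{E-Comm-Link} case of no-double-block is legitimate but harmless: a thread blocked on an endpoint of type $\ty{\mathbf{end}_{!}}$ can only be a terminated child, and the additional binder that \rulename{E-Comm-Link} requires is brought into position by the same type-preserving structural rearrangement you already invoke for the other communication rules.
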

\begin{proof}
  Since the environment is closed, by Lemma~\ref{lem:hgv:open-progress}, for
  each $\tm{\taux_j}$ it must be that
  $\blocked{\tm{\taux_j}}{\tm{z}}$ for some
  $\tm{z} \in \{ \tm{y_i} \mid i \in 1.. {j-1} \} \cup \{ \tm{x_j} \}$.

  Note that if two names $\tm{x}, \tm{y}$ are co-names, and one thread is blocked on
  $\tm{x}$, and another is blocked on $\tm{y}$, then due to typing the names must be dual
  and reduction can occur.

  Consider $\tm{\taux_1}$. Since the environment is closed, $\tm{\taux_1}$ must
  be blocked on $\tm{x_1}$.  Next, consider $\tm{\taux_2}$; the thread cannot be
  blocked on $\tm{y_1}$ as reduction would occur.
  By the definition of tree canonical forms, $\tm{\taux_2}$ must contain
  $\tm{x_2}$ and by the typing rules cannot contain $\tm{y_2}$, so the thread
  must be blocked on $\tm{x_2}$. The argument extends to the remainder of the
  configuration.
\end{proof}

Finally, for \emph{ground configurations}, where the main thread does
not return a runtime name or capture a runtime name in a closure, we
obtain a yet tighter result, \emph{global progress}, which implies
deadlock freedom~\cite{CarboneDM14}.
\begin{defi}[Ground configuration]
  A configuration $\tm{\config{C}}$ is a \emph{ground configuration} if
  $\cseq{\cdot}{\config{C}}{T}$, $\tm{\config{C}}$ is in canonical form,
  and $T$ does not contain session types or function types.
\end{defi}

Our main progress result states that a ground configuration can
reduce, or is a value.

\begin{thm}[Global progress]\label{thm:hgv-global-progress}
  Suppose $\tm{\config{C}}$ is a ground configuration. Either there exists some
  $\tm{\config{D}}$ such that $\tm{\config{C}} \cred \tm{\config{D}}$, or $\tm{\config{C}} =
  \tm{\main V}$ for some value $\tm{V}$.
\end{thm}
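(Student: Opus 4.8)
The plan is to reduce the statement to \emph{closed progress} (Lemma~\ref{lem:hgv:closed-progress}) and then use groundness to rule out the residual stuck case. Being a ground configuration, $\tm{\config{C}}$ is typeable under the empty environment and is already in tree canonical form,
\[
  \tm{\config{C}} = \tm{\res{x_1}{y_1}{(\taux_1 \parallel \cdots \parallel \res{x_n}{y_n}{(\taux_n \parallel \main N)} \cdots)}},
\]
where the central thread $\tm{\main N}$ is the unique main thread and each $\tm{\taux_i}$ is auxiliary. By Lemma~\ref{lem:hgv:closed-progress}, either $\tm{\config{C}} \cred \tm{\config{D}}$ for some $\tm{\config{D}}$, and we are done, or the configuration is stuck: $\tm{N}$ is a value and each $\tm{\taux_j}$ is blocked on $\tm{x_j}$. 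It thus suffices to show that, in the stuck case, $n = 0$, whence $\tm{\config{C}} = \tm{\main N}$ with $\tm{N}$ a value.

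First I would record the key fact about ground values: if $\tseq{\ty{\Gamma}}{V}{T}$ with $\ty{T}$ containing neither session nor function types, then $\tm{V}$ has no free variable of session type. This follows by a short induction on the shape of $\tm{V}$: for $\tm{\unit}$, $\tm{\pair{V_1}{V_2}}$, $\tm{\inl{V'}}$ and $\tm{\inr{V'}}$ the (ground) component types are again session- and function-free, so the induction hypothesis applies; a constant contributes no free variables; and $\tm{\lambda x.M}$ is impossible, since its type is a function type, which $\ty{T}$ excludes. By linearity $\fv(\tm{V}) = \dom(\ty{\Gamma})$, so $\ty{\Gamma}$ itself contains no session-typed variable.

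It remains to contradict $n \ge 1$. Consider the innermost restriction $\tm{\res{x_n}{y_n}{(\taux_n \parallel \main N)}}$. By inversion on \rulename{TC-New} and \rulename{TC-Par}, the endpoints $\tmty{x_n}{S}$ and $\tmty{y_n}{\co{S}}$ must lie in two distinct type environments of the hyper-environment, and hence in the two distinct parallel components $\tm{\taux_n}$ and $\tm{\main N}$. As tree canonical form guarantees $\tm{x_n} \in \fv(\tm{\taux_n})$, we must have $\tm{y_n} \in \fv(\tm{\main N}) = \fv(\tm{N})$. But $\tm{N}$ is a ground value, so by the fact above $\fv(\tm{N})$ contains no session-typed variable, contradicting $\tm{y_n} \in \fv(\tm{N})$. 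Hence $n = 0$, so $\tm{\config{C}} = \tm{\main N}$ with $\tm{N}$ a value, as required.

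The main obstacle is this last step: closed progress only certifies that a non-reducing configuration is \emph{blocked}, and the substance of global progress is to show that groundness turns ``blocked'' into ``terminated''. The crux is the typing-level observation that the dual endpoint $\tm{y_n}$ of the innermost channel must be consumed by the main thread, together with the fact that a value of a session- and function-free type can hold no endpoint; combined, these force the absence of any auxiliary thread.
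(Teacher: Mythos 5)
Your proof is correct and follows the same route as the paper's: apply closed progress (Lemma~\ref{lem:hgv:closed-progress}) and then use groundness to rule out the residual blocked case. Where the paper merely appeals to acyclicity of the tree canonical form to conclude that no auxiliary thread can remain blocked, you make this step precise via the innermost restriction (the dual endpoint $\tm{y_n}$ must be free in the main thread) together with the observation that a value of session- and function-free type captures no endpoints --- a welcome sharpening of an otherwise terse argument.
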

\begin{proof}
    By Lemma~\ref{lem:hgv:closed-progress}, either $\tm{\config{C}}$ can reduce, or
    $\tm{\config{C}}$ can be written:
    \[
  \tm{(\nu x_1 y_1) (\child \config{A}_1 \parallel \cdots \parallel (\nu x_n y_n)
  (\child \config{A}_n \parallel \main V) \cdots )}
  \]
  where
  $\blocked{\tm{\taux_i}}{\tm{x_i}}$  for each $\{ \tm{x_i} \mid i \in 1..n \}$.

  Since $\tm{\config{C}}$ is ground, $\fv(\tm{V}) = \emptyset$. By definition, tree
  canonical form ensures that no cycles are present amongst threads, so no
  auxiliary thread can be blocked.
It follows that if $\tm{\config{C}} \not\cred$, then there cannot be any auxiliary
  threads and thus $\tm{\config{C}} = \tm{\main V}$ for some value $\tm{V}$.
\end{proof}

\paragraph{Determinism and Strong Normalisation}
HGV enjoys a strong form of determinism known as the diamond property,
and due to linearity it enjoys strong normalisation. Unlike with
preservation and progress, the addition of hypersequents does not
substantially change the arguments from~\cite{lindleym15:semantics}.

\begin{thm}[Diamond property]
  If $\cseq{\ty{\hyper{G}}}{\conf{C}}{T}$, $\tm{\conf{C}}\cred\tm{\conf{D}}$, and $\tm{\conf{C}}\cred\tm{\conf{D}'}$, then $\tm{\conf{D}}\equiv\tm{\conf{D}'}$.
\end{thm}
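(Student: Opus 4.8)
The plan is to prove the statement directly, establishing $\tm{\conf{D}}\equiv\tm{\conf{D}'}$ by a case analysis on the two rules justifying $\tm{\conf{C}}\cred\tm{\conf{D}}$ and $\tm{\conf{C}}\cred\tm{\conf{D}'}$, organised by induction on the structure of $\tm{\conf{C}}$. Two determinism facts drive the argument. First, term reduction $\tred$ is a partial function: the call-by-value, left-to-right evaluation contexts give a \emph{unique} decomposition of a term into an evaluation context and a term redex, so within any one thread there is at most one term step and it is forced. Second, linearity pins every configuration redex to a unique occurrence: in a well-typed configuration each runtime endpoint and each dual pair occurs exactly once, so the endpoints named by \rulename{E-Comm-Send}, \rulename{E-Comm-Close} and \rulename{E-Comm-Link}, and the primitive consumed by \rulename{E-Reify-Fork}/\rulename{E-Reify-Link}, identify the fired redex unambiguously. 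Throughout I would use \rulename{E-Equiv} together with \Cref{thm:hgv-pres} to rearrange $\tm{\conf{C}}$, $\tm{\conf{D}}$ and $\tm{\conf{D}'}$ by $\equiv$ without leaving the well-typed fragment, so that the fired redex can always be brought to the top.

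The first family of cases is where both reductions act on the same redex occurrence; here the goal is to land $\tm{\conf{D}}\equiv\tm{\conf{D}'}$ immediately rather than merely joining the two reducts. For a term redex this is forced by the unique-decomposition property: both steps produce the identical term, hence identical configurations. For the communication and reification rules the two reducts can differ only in the fresh bound names introduced by \rulename{E-Reify-Fork} and \rulename{E-Reify-Link}; these are reconciled by $\alpha$-equivalence together with \rulename{SC-NewSwap} and \rulename{SC-LinkComm}, again yielding $\tm{\conf{D}}\equiv\tm{\conf{D}'}$. The administrative rules \rulename{E-Res}, \rulename{E-Par} and \rulename{E-Equiv} only re-present one and the same underlying step inside different syntactic contexts; since $\equiv$ is a congruence for name restriction and parallel composition, two derivations of the same step still deliver $\equiv$-equal reducts, so these subsume into the cases already handled.

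The main obstacle is the remaining family, where the two derivations fire genuinely distinct occurrences, since this is precisely where one is tempted to settle for a common reduct. By the linearity fact the two occurrences cannot share a linear resource, so they lie in disjoint sub-configurations of $\tm{\conf{C}}$; using \Cref{lem:hgv:tree-canonical-forms} and \Cref{prop:hgv:independence} I would normalise $\tm{\conf{C}}$ so that the two independent redexes appear as siblings beneath a common prefix of $\nu$-binders and parallel compositions. The delicate point, and the heart of the proof, is then to show that performing the two independent steps leaves no residual syntactic discrepancy once structural congruence is applied: the reordering rules \rulename{SC-ParComm}, \rulename{SC-ScopeExt} and \rulename{SC-NewComm} are exactly what realigns the two resulting configurations into a single $\equiv$-class. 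Discharging this case is where linearity (which rules out overlapping critical pairs), the congruence closure of $\equiv$, and the independence of parallel components must be combined to conclude $\tm{\conf{D}}\equiv\tm{\conf{D}'}$ outright.
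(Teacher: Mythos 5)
Your proposal breaks precisely at the case you yourself identify as ``the heart of the proof.'' When the two derivations fire genuinely distinct redexes --- two term steps in different threads, or two independent communications on disjoint channels --- the reducts $\tm{\conf{D}}$ and $\tm{\conf{D}'}$ are \emph{not} structurally congruent, and no amount of realignment by \rulename{SC-ParComm}, \rulename{SC-NewComm}, or \rulename{SC-ScopeExt} can make them so: structural congruence only reshuffles parallel compositions, $\nu$-binders, and link directions; it never rewrites the terms inside threads. In $\tm{\conf{D}}$ the first redex has been consumed while the second is still intact, and in $\tm{\conf{D}'}$ it is the other way around (one configuration contains a thread $\tm{F[x]}$ where the other still contains $\tm{F[\mathbf{send}\;(V,x)]}$, say), so they lie in different $\equiv$-classes. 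The correct conclusion in this case --- and the one the paper's own proof argues for --- is not $\tm{\conf{D}}\equiv\tm{\conf{D}'}$ but that the two steps \emph{commute}: there is a common $\tm{\conf{E}}$ reached from each of $\tm{\conf{D}}$ and $\tm{\conf{D}'}$ by performing the other redex. This is what the paper means by ``any overlapping reductions are separate and may be performed in either order,'' and it is how the property is stated in the sources the paper defers to (Lindley--Morris and Fowler's thesis), namely with the disjunctive conclusion: either the same redex was fired, whence $\tm{\conf{D}}\equiv\tm{\conf{D}'}$, or the reducts join in one further step each. The theorem as printed, read literally, is in fact false for configurations containing two independent redexes, and your plan inherits that falsity by committing to prove it head-on.

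The rest of your plan is sound and matches the ingredients of the paper's (very terse) proof: determinism of term reduction via unique decomposition into evaluation context and redex, linearity pinning each configuration redex to a unique occurrence, and $\alpha$-renaming together with \rulename{SC-NewSwap}/\rulename{SC-LinkComm} to reconcile the fresh names introduced by \rulename{E-Reify-Fork} and \rulename{E-Reify-Link} when the \emph{same} redex is fired twice. To repair the proposal, keep all of that, but in the distinct-redex case weaken the goal from $\tm{\conf{D}}\equiv\tm{\conf{D}'}$ to a one-step join, using linearity to show the two redexes are disjoint and hence that each survives the firing of the other.
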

\begin{proof}
  Similar to that of GV~\cite{lindleym15:semantics,fowler19:thesis}:
  $\tred$ is deterministic, and due to linearity, any overlapping
  reductions are separate and may be performed in either order.
\end{proof}

\begin{thm}[Termination]
  If $\cseq{\ty{\hyper{G}}}{\conf{C}}{T}$, there are no infinite
  sequences $\tm{\conf{C}}\cred\cred\cdots$.
\end{thm}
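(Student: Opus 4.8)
The plan is to prove strong normalisation directly, by exhibiting a well-founded measure on configurations that strictly decreases along every step of $\cred$. I would take the measure to be a lexicographically ordered pair $(c(\conf{C}), t(\conf{C})) \in \mathbb{N} \times \mathbb{N}$, where $t(\conf{C})$ is the total size of all terms occurring in $\conf{C}$, and $c(\conf{C})$ is a weighted, globally-additive count of communication formers: weight $1$ for each $\tm{\send}$, $\tm{\recv}$, $\tm{\wait}$, $\tm{\fork}$, weight $1$ for each link thread $\tm{\linkconfig{z}{x}{y}}$, weight $2$ for each $\tm{\link}$ constant, and $0$ for every other syntactic former. Since $(\mathbb{N}\times\mathbb{N}, <_{\mathrm{lex}})$ is well-founded, it suffices to show that every rule of $\cred$ sends $\conf{C}$ to some $\conf{D}$ with $(c(\conf{D}), t(\conf{D})) <_{\mathrm{lex}} (c(\conf{C}), t(\conf{C}))$. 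By \Cref{thm:hgv-pres} every configuration reachable from a well-typed one is again well-typed, so the argument applies throughout a reduction sequence, and crucially we may assume linearity everywhere.

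The key point, and the place where linearity does the work, is that a purely functional step (\rulename{E-Lift-M}, lifting $\tred$) preserves $c$ while strictly decreasing $t$. Each $\tred$ rule contracts a $\beta$/let/case redex through a substitution such as $\subst{M}{V}{x}$; because bound variables of a well-typed term occur \emph{exactly once}, this substitution neither duplicates nor discards the communication formers inside $\tm{V}$, so the global count $c$ is unchanged. Since every redex is strictly larger than its contractum, $t$ strictly decreases, giving strong normalisation of the functional fragment as a special case and, lexicographically, disposing of all functional steps.

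It then remains to check that every genuine concurrent rule strictly decreases $c$, which follows by additivity of the measure: \rulename{E-Reify-Fork} consumes one $\tm{\fork}$ (drop $1$); \rulename{E-Comm-Send} consumes a matched $\tm{\send}$/$\tm{\recv}$ pair (drop $2$), merely \emph{relocating} the transmitted value—and any formers it contains—from one thread to the other; \rulename{E-Comm-Close} consumes a $\tm{\wait}$; and \rulename{E-Comm-Link} consumes a link thread. The one subtle case is forwarding: \rulename{E-Reify-Link} turns a $\tm{\link}$ constant into a link thread, which is precisely why $\tm{\link}$ carries weight $2$ and the link thread weight $1$, so that this step drops $c$ by $1$ and the later \rulename{E-Comm-Link} drops it by a further $1$. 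The contextual rules \rulename{E-Res} and \rulename{E-Par} follow from additivity and the induction hypothesis, and \rulename{E-Equiv} is handled by first observing that each structural-congruence rule (associativity, commutativity, name swapping, scope extrusion, link commutativity) leaves both $c$ and $t$ invariant, so the sandwiching congruence steps cannot disturb the strict decrease contributed by the inner reduction.

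I expect the main obstacle to be bookkeeping rather than conceptual. The delicate verification is that higher-order message passing cannot smuggle in fresh communication capacity: sending a value that itself contains $\tm{\fork}$, $\tm{\send}$, or an entire closure must only relocate its weight, never create new weight. This reduces once more to linearity together with the context-independent, additive definition of $c$. The two points requiring the most care are assigning the correct weights across the $\tm{\link}$/link-thread lifecycle so that both forwarding steps are accounted for, and confirming that every structural-congruence rule is measure-invariant so that \rulename{E-Equiv} introduces no spurious non-terminating behaviour.
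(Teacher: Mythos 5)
Your argument is correct and takes essentially the same route as the paper: an elementary, syntactic, well-founded measure that strictly decreases under every configuration reduction, with linearity ensuring that substitution relocates rather than duplicates or discards subterms. The paper's proof is a one-liner using only the total AST size of all terms contained in threads; yours refines this to a lexicographic pair headed by a count of communication formers. That refinement genuinely buys something: on the most literal reading of ``size'', \rulename{E-Reify-Fork} turns $F[\mathbf{fork}\;V]$ into $(\nu x\, x')(F[x] \parallel \circ\,(V\;x'))$, which introduces a new application node and a fresh variable occurrence, so the raw size count can actually \emph{increase} by one; the paper's measure only works under an unstated weighting convention (e.g.\ variables weigh zero), whereas your primary component disposes of the two reification rules cleanly, since each consumes a $\mathbf{fork}$ or trades a weight-two $\mathbf{link}$ constant for a weight-one link thread. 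One small imprecision in your write-up: functional steps do not always leave $c$ unchanged --- the rules \rulename{E-Inl} and \rulename{E-Inr} discard the branch not taken, which may contain communication formers and hence strictly decrease $c$; this is harmless for the lexicographic order (the measure still drops) but the claim as stated is slightly too strong. Otherwise the case analysis, including the $\mathbf{link}$/link-thread lifecycle and the invariance of both components under structural congruence (needed for \rulename{E-Equiv}), is exactly what is required.
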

\begin{proof}
  As with GV~\cite{lindleym15:semantics,fowler19:thesis}, due to
  linearity, HGV has an elementary strong normalisation proof. Let the
  size of a configuration be the sum of the sizes of all abstract
  syntax trees of all terms contained in threads. The size of a
  configuration is invariant under $\equiv$ and strictly decreases
  under $\cred$, so no infinite reduction sequences can exist.
\end{proof}

 } 
{\section{Relation between HGV and GV}
\label{sec:relation-to-gv}
In this section, we show that well-typed GV configurations are
well-typed HGV configurations, and well-typed HGV configurations with
tree structure are well-typed GV configurations.

\paragraph{GV}
\begingroup
\usingnamespace{gv}
\begin{figure}
  \small
\headersig{Typing rules for configurations}{$\gvseq{\ty{\Gamma}}{\conf{C}}{T}$}
  \begin{mathpar}
    \inferrule*[lab=TG-New]{
      \gvseq{\ty{\Gamma},\tmty{\lock{x}{y}}{\lockty{S}}}{\conf{C}}{R}
    }{\gvseq{\ty{\Gamma}}{(\nu x y) \conf{C}}{R}}
~
    \inferrule*[lab=TG-Connect$_1$]{
      \gvseq{\ty{\Gamma_1},\tmty{x}{S}}{\conf{C}}{R}
      \\\\
      \gvseq{\ty{\Gamma_2},\tmty{y}{\co{S}}}{\conf{D}}{R'}
    }{\gvseq
      {\ty{\Gamma_1},\ty{\Gamma_2},\tmty{\lock{x}{y}}{\lockty{S}}}
      {\conf{C}\parallel\conf{D}}
      {\ty{R}\isect\ty{R'}}}
~
    \inferrule*[lab=TG-Connect$_2$]{
      \gvseq{\ty{\Gamma_1},\tmty{y}{\co{S}}}{\conf{C}}{R}
      \\\\
      \gvseq{\ty{\Gamma_2},\tmty{x}{S}}{\conf{D}}{R'}
    }{\gvseq
      {\ty{\Gamma_1},\ty{\Gamma_2},\tmty{\lock{x}{y}}{\lockty{S}}}
      {\conf{C}\parallel\conf{D}}
      {\ty{R}\isect\ty{R'}}}

    \inferrule*[lab=TG-Child]{
      \gvseq{\ty{\Gamma}}{M}{\tyends}
    }{\gvseq{\ty{\Gamma}}{\child M}{\tychild}}

    \inferrule*[lab=TG-Main]{
      \gvseq{\ty{\Gamma}}{M}{T}
    }{\gvseq{\ty{\Gamma}}{\main M}{\tymain{T}}}

    \inferrule*[lab=TG-Link]{
    }{\gvseq
      {\tmty{x}{S},\tmty{y}{\co{S}},\tmty{z}{\tyendr}}
      {\linkconfig{z}{x}{y}}
      {\tychild}}
  \end{mathpar}
  \caption{GV, typing rules for configurations.}
  \label{fig:gv-typing-runtime}
\end{figure}

HGV and GV share a common term language and reduction semantics, so
only differ in their runtime typing rules. \Cref{fig:gv-typing-runtime} gives
the runtime typing rules for GV. We adapt the rules to use a double-binder
formulation to concentrate on the essence of the relationship with HGV, but it
is trivial to translate GV with single binders into GV with double binders.

GV uses a pseudo-type $\ty{\lockty{S}}$ to type channels.
Unlike endpoints, channels cannot appear in terms.
Read bottom-up, rule \LabTirName{TG-New} types a name restriction
$\tm{\resd{x}{y}{\config{C}}}$, adding
$\tmty{\lock{x}{y}}{\lockty{S}}$ to the type environment, which along
with \LabTirName{TG-Connect$_1$} and \LabTirName{TG-Connect$_2$}
ensures that a session channel of type $S$ will be split into
endpoints $\tm{x}$ and $\tm{y}$ over a parallel composition. In turn, this
enforces a tree process structure. The remaining typing rules are as
in HGV.  \endgroup

\begingroup
\paragraph{A simple embedding of GV into HGV}
The simplest embedding of GV in HGV relies on the observation
from~\Cref{sec:problem-with-gv} that each parallel composition splits
a single channel. Let $\gv{\tm{\config{C} \parannot{x}{y} \config{D}}}$
denote two configurations $\tm{\config{C}}$ and $\tm{\config{D}}$ connected
by a channel with endpoints $x, y$. We can write an arbitrary closed GV
configuration in the form:
\[\gv{\tmcolor \config{C}_1 \parannot{x_1}{y_1}\cdots\parannot{x_{n - 2}}{y_{n - 2}}\config{C}_{n - 1} \parannot{x_{n - 1}}{y_{n - 1}} \config{C}_n
}\]
where each $\tm{\config{C}}$ does not contain a further parallel
composition, and any main thread is in $\tm{\config{C}_n}$. We can then
subsequently embed the configuration in HGV as:
\[\gv{\tmcolor (\nu x_1 y_1)(\config{C}_1 \parallel \cdots \parallel (\nu x_{n - 2} y_{n - 2})(\config{C}_{n - 2} \parallel (\nu x_{n - 1} y_{n - 1})(\config{C}_{n - 1} \parallel \config{C}_n)) \cdots)
}\]
which is well-typed by construction. As a corollary, every well-typed,
closed GV configuration is equivalent to a well-typed, closed HGV
configuration.

\paragraph{A structure-preserving embedding of GV into HGV}
Though the simple embedding of GV into HGV is sound, it does not
respect the \emph{intention} of GV. In fact, we can provide
a stronger result: every well-typed open GV configuration is exactly a
well-typed HGV configuration.

\begin{defi}[Flattening]
  Flattening, written $\flatten{}$, converts GV type environments and HGV hyper-environments into HGV environments.
  \[
    \begin{array}{lcl}
      \gv{\flatten{\emptyenv}}
      & = & \gv{\emptyenv}
      \\
      \gv{\flatten{(\ty{\Gamma},\tmty{\lock{x}{x'}}{\lockty{S}})}}
      & = & \gv{\flatten{\ty{\Gamma}},\tmty{x}{S},\tmty{x'}{\co{S}}}
      \\
      \gv{\flatten{(\ty{\Gamma},\tmty{x}{T})}}
      & = & \gv{\flatten{\ty{\Gamma}},\tmty{x}{T}}
    \end{array}
    \qquad\qquad
    \begin{array}{lcl}
      \hgv{\flatten{\emptyhyperenv}}
      & = & \hgv{\emptyhyperenv}
      \\
      \hgv{\flatten{(\ty{\hyper{G}\hypersep\ty{\Gamma}})}}
      & = & \hgv{\flatten{\ty{\hyper{G}},\ty{\Gamma}}}
      \\ \\
    \end{array}
    \hfill
  \]
\end{defi}
\begin{defi}[Splitting]
  Splitting converts GV type environments into hyper-environments.
  Given channels $\gv{\{ \tmty{\lock{x_i}{x'_i}}{\lockty{S_i}} \}_{i \in
  1..n}}$ in $\ty{\Gamma}$,
a hyper-environment
      $\hgv{\ty{\hyper{G}}}$ is a \emph{splitting} of
      $\hgv{\ty{\Gamma}}$ if $\hgv{\flatten{\ty{\hyper{G}}}} =
      \gv{\flatten{\ty{\Gamma}}}$ and $\exists \hgv{\ty{\Gamma_1},
          \ldots, \ty{\Gamma_{n + 1}}}$ such that $\hgv{\ty{\hyper{G}}
          = \ty{\Gamma_1} \hypersep \cdots \hypersep \ty{\Gamma_{n +
              1}}}$, and $\treeprop{\hgv{\ty{\hyper{G}}}}{\{ \{ \tm{x_1}, \tm{x'_1} \},
  \ldots, \{ \tm{x_n}, \tm{x'_n} \} \}}$.
\end{defi}
\noindent
A well-typed GV configuration is typeable in HGV under a splitting of its type environment.

\begin{restatable}[Typeability of GV configurations in HGV]{thm}{gvinhgv}\begingroup
    \usingnamespace{hgv}
  \label{thm:gv-in-hgv}
  If $\gv{\gvseq{\ty{\Gamma}}{\config{C}}{R}}$, then there exists some
  $\hgv{\ty{\hyper{G}}}$ such that $\hgv{\ty{\hyper{G}}}$ is a
  splitting of $\gv{\ty{\Gamma}}$ and
  $\hgv{\cseq{\ty{\hyper{G}}}{\config{C}}{R}}$.
  \endgroup
\end{restatable}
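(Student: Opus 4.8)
The plan is to induct on the derivation of $\gv{\gvseq{\ty{\Gamma}}{\config{C}}{R}}$, mirroring each GV rule by the corresponding HGV rule applied to hyper-environments and read bottom-up. At every step the induction hypothesis hands me a splitting of the sub-derivation's environment, and I must re-establish two invariants for the hyper-environment I build: that its flattening agrees with $\gv{\flatten{\ty{\Gamma}}}$, and that it is tree-structured with respect to the co-name set consisting of the channels of $\ty{\Gamma}$. The two clauses of \Cref{lem:tree-structure-iff} carry the tree structure across the points where environments are merged or split. The base cases \textsc{TG-Main}, \textsc{TG-Child}, and \textsc{TG-Link} are immediate: their environments carry no channel pseudo-types, so flattening is the identity and the singleton hyper-environment built from that one type environment is a splitting (its APS has a single vertex and no edges, hence is trivially connected and acyclic), and the matching HGV rule applies unchanged.

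For \textsc{TG-New} the premise types $\tm{\config{C}}$ under $\gv{\ty{\Gamma}, \tmty{\lock{x}{y}}{\lockty{S}}}$, and the induction hypothesis gives a splitting $\hgv{\ty{\hyper{G}'}}$ of this environment with $\hgv{\cseq{\ty{\hyper{G}'}}{\config{C}}{R}}$, whose co-name set contains the pair $\{\tm{x},\tm{y}\}$. The crucial observation is that acyclicity of the APS of $\hgv{\ty{\hyper{G}'}}$ forces $\tm{x}$ and $\tm{y}$ into \emph{distinct} type environments: were they in the same environment, the edge $\{\tm{x},\tm{y}\}$ would be a self-loop and hence a cycle. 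Thus $\hgv{\ty{\hyper{G}'}}$ can be written $\hgv{\ty{\hyper{H}} \hypersep \ty{\Gamma_1}, \tmty{x}{S} \hypersep \ty{\Gamma_2}, \tmty{y}{\co{S}}}$, which is exactly the shape of the premise of \textsc{TC-New}. Applying that rule yields $\hgv{\cseq{\ty{\hyper{H}} \hypersep \ty{\Gamma_1}, \ty{\Gamma_2}}{\res{x}{y}{\config{C}}}{R}}$, and I take the merged hyper-environment as my witness. Its flattening equals $\gv{\flatten{\ty{\Gamma}}}$, since the endpoints $\tm{x}$ and $\tm{y}$ produced by flattening the channel $\gv{\tmty{\lock{x}{y}}{\lockty{S}}}$ appear on both sides of the flattening equation and cancel, and \Cref{lem:tree-structure-iff} (clause~1, read left-to-right) shows that merging the two environments while dropping the edge $\{\tm{x},\tm{y}\}$ preserves tree structure.

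For \textsc{TG-Connect}$_1$ the two premises type $\tm{\config{C}}$ and $\tm{\config{D}}$ under $\gv{\ty{\Gamma_1}, \tmty{x}{S}}$ and $\gv{\ty{\Gamma_2}, \tmty{y}{\co{S}}}$, and the induction hypothesis supplies splittings $\hgv{\ty{\hyper{G}_1}}$ and $\hgv{\ty{\hyper{G}_2}}$. Since $\tm{x}$ (resp.\ $\tm{y}$) is a plain endpoint it contributes no channel to the co-name set, so $\hgv{\ty{\hyper{G}_1}}$ and $\hgv{\ty{\hyper{G}_2}}$ are trees over the channels of $\ty{\Gamma_1}$ and $\ty{\Gamma_2}$ respectively, each exhibiting the endpoint in one of its environments. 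Combining the two HGV derivations by \textsc{TC-Par} gives $\hgv{\cseq{\ty{\hyper{G}_1} \hypersep \ty{\hyper{G}_2}}{\ppar{\config{C}}{\config{D}}}{R \isect R'}}$; the flattenings concatenate as required, and \Cref{lem:tree-structure-iff} (clause~2, read left-to-right) states precisely that joining the two trees by the fresh edge $\{\tm{x},\tm{y}\}$---the channel introduced by the rule---produces a tree, so $\hgv{\ty{\hyper{G}_1} \hypersep \ty{\hyper{G}_2}}$ is a splitting of $\gv{\ty{\Gamma_1}, \ty{\Gamma_2}, \tmty{\lock{x}{y}}{\lockty{S}}}$. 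The case \textsc{TG-Connect}$_2$ is symmetric, using that hyper-environment concatenation is unordered.

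I expect the \textsc{TG-New} case to be the main obstacle. GV threads a single environment through a name restriction, whereas \textsc{TC-New} demands that the two endpoints of the restricted channel reside in two \emph{separate} environments of the hyper-environment; reconciling the two is exactly where the tree structure of a splitting is indispensable, as acyclicity is what drives the endpoints apart. This is also the conceptual reason the theorem is stated for splittings rather than for arbitrary hyper-environments. The remaining content is routine bookkeeping of flattening equations together with the two appeals to \Cref{lem:tree-structure-iff}.
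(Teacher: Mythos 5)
Your proposal is correct and follows essentially the same route as the paper's proof: induction on the GV typing derivation, with singleton hyper-environments for the thread base cases, clause~1 of \Cref{lem:tree-structure-iff} (left-to-right) to merge the two endpoint-bearing environments in the \textsc{TG-New} case, and clause~2 (left-to-right) to join the two splittings across \textsc{TC-Par} in the \textsc{TG-Connect} cases. The only difference is cosmetic: where the paper extracts the decomposition $\hgv{\ty{\hyper{G}'} \hypersep \ty{\Gamma_1}, \tmty{x}{S} \hypersep \ty{\Gamma_2}, \tmty{y}{\co{S}}}$ directly ``by definition'' of tree structure, you justify it explicitly via the observation that a self-loop would violate acyclicity, which is a welcome clarification rather than a divergence.
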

\begin{proof}
  By induction on the derivation of $\gv{\gvseq{\ty{\Gamma}}{\config{C}}{T}}$ (see \Cref{appendix:gv-hgv}).
\end{proof}

\begin{exa}
  Consider a configuration where a child thread pings the main thread:{
  \[
    \hgv{\tm{
        \res{x}{y}{(\ppar
          {\child\;(\send\;(\variable{ping}, x))}
          {\main\;(\letpair{()}{y}{\recv\;y}{\wait\;y})})}}}
  \]
  }We can write a GV typing derivation as follows:
  {\footnotesize
  \begin{mathpar}
    \usingnamespace{gv}\inferrule*{
      \inferrule*{
        \gvseq
        {\tmty{x}{\tysend{\tyunit}{\tyends}},\tmty{\variable{ping}}{\tyunit}}
        {\child\;(\send\;(\variable{ping}, x))}
        {\tychild}
        \and
        \gvseq
        {\tmty{y}{\tyrecv{\tyunit}{\tyendr}}}
        {\main\;(\letpair{()}{y}{\recv\;y}{\wait\;y})}
        {\tymain{\tyunit}}
      }{\gvseq
        {\tmty{\lock{x}{y}}{\lockty{\tysend{\tyunit}{\tyends}}},\tmty{\variable{ping}}{\tyunit}}
        {\resd{x}{y}{(\child(\send\;(\variable{ping}, x)) \parallel \main(\letpair{()}{y}{\recv\;y}{\wait\;y}))}}
        {\tyunit}}
    }{\gvseq
      {\tmty{\variable{ping}}{\tyunit}}
      {\resd{x}{y}{(\child(\send\;(\variable{ping}, x)) \parallel \main(\letpair{()}{y}{\recv\;y}{\wait\;y}))}}
      {\tyunit}}
  \end{mathpar}}The corresponding HGV derivation is:
  {\footnotesize
  \begin{mathpar}
    \usingnamespace{hgv}\inferrule*{
      \inferrule*{
        \cseq
        {\tmty{x}{\tysend{\tyunit}{\tyends}},\tmty{\variable{ping}}{\tyunit}}
        {\child\;(\send\;(\variable{ping}, x))}
        {\tychild}
        \and
        \cseq
        {\tmty{y}{\tyrecv{\tyunit}{\tyendr}}}
        {\main\;(\letpair{()}{y}{\recv\;y}{\wait\;y})}
        {\tymain{\tyunit}}
      }{\cseq
        {\tmty{x}{\tysend{\tyunit}{\tyends}},\tmty{\variable{ping}}{\tyunit}
          \hypersep \tmty{y}{\tyrecv{\tyunit}{\tyendr}}}
        {\res{x}{y}{(\ppar
            {\child(\send\;(\variable{ping}, x))}
            {\main(\letpair{()}{y}{\recv\;y}{\wait\;y})}
            )}}
        {\tymain{\tyunit}}}
    }{\cseq
      {\tmty{\variable{ping}}{\tyunit}}
      {\res{x}{y}{(\ppar
          {\child(\send\;(\variable{ping}, x))}
          {\main(\letpair{()}{y}{\recv\;y}{\wait\;y})}
          )}}
      {\tymain{\tyunit}}}
  \end{mathpar}}Note that
  $\hgv{
    \tmty{x}{\tysend{\tyunit}{\tyends}},
    \tmty{\variable{ping}}{\tyunit} \hypersep
    \tmty{y}{\tyrecv{\tyunit}{\tyendr}}
  }$ is a splitting of
  $\gv{
    \tmty{\lock{x}{y}}{\lockty{(\tysend{\tyunit}{\tyends})}},
    \tmty{\variable{ping}}{\tyunit}
  }$.
\end{exa}
\endgroup

\paragraph{Translating HGV to GV}\label{sec:hgv-to-gv}
\begingroup
\sloppypar
As we saw in~\secref{sec:problem-with-gv}, unlike in HGV, equivalence in GV is not
type-preserving. It follows that HGV types strictly more processes
than GV. Let us revisit Lindley and Morris' example
from~\secref{sec:introduction} (adapted to use double-binders), where
    $\gv{
      \gvseq
        {\ty{\Gamma_1}, \ty{\Gamma_2}, \ty{\Gamma_3}}
        { (\nu x x')(\nu y y')(\config{C} \parallel (\config{D} \parallel
        \config{E}))}
        { R_1 \isect R_2 \isect R_3 }
    }
    $
with
    $\gv{\gvseq{\ty{\Gamma_1}, \tmty{x}{S}}{\config{C}}{R_1}}$,
    $\gv{\gvseq{\ty{\Gamma_2}, \tmty{y}{S'}}{\config{D}}{R_2}}$,
    and
    $\gv{\gvseq{\ty{\Gamma_3}, \tmty{x'}{\co{S}}, \tmty{y'}{\co{S'}}}{\config{E}}{R_3}}$.

    The structurally-equivalent term
$\tm{(\nu x x')(\nu y y')((\config{C} \parallel \config{D}) \parallel \config{E})}$
is not typeable in GV, since we cannot split both channels over a
single parallel composition:

{
\begin{mathpar}
    \usingnamespace{gv}
    \inferrule*
    {
        \inferrule*
        {
            \inferrule*
            {
                \nseq
                    {\ty{\Gamma_1}, \ty{\Gamma_2}, \tmty{x}{S} }
                    {\config{C} \parallel \config{D}}
                    {R_1 \isect R_2}
                \\
                \nseq
                {\ty{\Gamma_3}, \tmty{x'}{\co{S}}, \tmty{\lock{y}{y'}}{\lockty{S'}}}
                    {\config{E}}
                    {R_3}
            }
            {
                \nseq
                { \ty{\Gamma_1}, \ty{\Gamma_2}, \ty{\Gamma_3},
                    \tmty{\lock{x}{x'}}{\lockty{S}},
                    \tmty{\lock{y}{y'}}{\lockty{S'}}
                }
                { (\config{C} \parallel \config{D}) \parallel \config{E} }
                { R_1 \isect R_2 \isect R_3 }
            }
        }
        {
            \nseq
            { \ty{\Gamma_1}, \ty{\Gamma_2}, \ty{\Gamma_3}, \tmty{\lock{x}{x'}}{\lockty{S}} }
            { (\nu y y') ((\config{C} \parallel \config{D}) \parallel \config{E}) }
            { R_1 \isect R_2 \isect R_3}
        }
    }
    { \nseq
        {\ty{\Gamma_1}, \ty{\Gamma_2}, \ty{\Gamma_3}}
        {(\nu x x')(\nu y y')((\config{C} \parallel \config{D}) \parallel \config{E})}
        {R_1 \isect R_2 \isect R_3}
    }
\end{mathpar}}However, we \emph{can} type this process in HGV:

{
\begin{mathpar}
    \usingnamespace{hgv}
    \inferrule*
    {
        \inferrule*
        {
            \inferrule*
            {
                \inferrule*
                {
                    \cseq
                    { \ty{\Gamma_1}, \tmty{x}{S} }
                    { \config{C} }
                    { R_1 }
                    \\
                    \cseq
                    { \ty{\Gamma_2}, \tmty{y}{S'} }
                    { \config{D} }
                    { R_2 }
                }
                {
                \cseq
                    { \ty{\Gamma_1}, \tmty{x}{S} \hypersep \ty{\Gamma_2}, \tmty{y}{S'}}
                    { \config{C} \parallel \config{D} }
                    { R_1 \isect R_2 }
                }
                \\
                \cseq
                    { \ty{\Gamma_3}, \tmty{x'}{\co{S}}, \tmty{y'}{\co{S'}}}
                    { \config{E} }
                    { R_3 }
            }
            {
                \cseq
                { \shade{\ty{\Gamma_1}, \tmty{x}{S} \hypersep \ty{\Gamma_2},
                \tmty{y}{S'} \hypersep \ty{\Gamma_3}, \tmty{x'}{\co{S}},
            \tmty{y'}{\co{S'}}} }
                {(\config{C} \parallel \config{D}) \parallel \config{E}}
                { R_1 \isect R_2 \isect R_3 }
            }
        }
        {
            \cseq
                {
                \ty{\Gamma_1}, \tmty{x}{S} \hypersep \ty{\Gamma_2},
                \ty{\Gamma_3}, \tmty{x'}{\co{S}}
                }
                {(\nu y y')((\config{C} \parallel \config{D}) \parallel \config{E})}
                { R_1 \isect R_2 \isect R_3 }
        }
    }
    { \cseq
        {\ty{\Gamma_1}, \ty{\Gamma_2}, \ty{\Gamma_3} }
        {(\nu x x')(\nu y y')((\config{C} \parallel \config{D}) \parallel \config{E})}
        { R_1 \isect R_2 \isect R_3 }
    }
\end{mathpar}
}Note in particular the shaded hyper-environment, which includes
hyper-environment separators to separate endpoints $x$ and $x'$, as well as $y$
and $y'$. It follows that, unlike in GV, \emph{both} channels can be split over
the same parallel composition. Similarly, the hyper-environment separator allows
$\tm{\config{C}}$ and $\tm{\config{D}}$ to be composed \emph{without} sharing any
channels.

Although HGV types more processes, every well-typed HGV configuration typeable
under a singleton hyper-environment $\hgv{\ty{\Gamma}}$ is \emph{equivalent} to
a well-typed GV configuration, which we show using tree canonical forms.
\begin{restatable}{prop}{hgvingv}\label{lem:hgv-to-gv-tcf}
    \begingroup
    \usingnamespace{hgv}
  Suppose $\hgv{\cseq{\ty{\Gamma}}{\config{C}}{R}}$ where
  $\hgv{\tm{\config{C}}}$ is in tree canonical form. Then,
  $\gv{\gvseq{\ty{\Gamma}}{\config{C}}{R}}$.
  \endgroup
\end{restatable}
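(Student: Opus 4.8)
The plan is to induct on the number $n$ of $\nu$-binders appearing in the tree canonical form of $\tm{\config{C}}$, at each step inverting the HGV typing derivation and reassembling a GV derivation of the same judgment. Two facts make this work. First, HGV configuration typing is syntax-directed: each of \rulename{TC-New}, \rulename{TC-Par}, \rulename{TC-Main}, \rulename{TC-Child} and \rulename{TC-Link} is keyed to the outermost configuration constructor, and there is no subsumption or weakening, so the shape of the derivation is fixed by $\tm{\config{C}}$. Second, term typing and auxiliary/main-thread typing coincide in HGV and GV (the pairs \rulename{TC-Main}/\rulename{TG-Main}, \rulename{TC-Child}/\rulename{TG-Child} and \rulename{TC-Link}/\rulename{TG-Link} are identical), so the only real work is translating the hyper-environment discipline of \rulename{TC-New}/\rulename{TC-Par} into GV's channel-splitting discipline of \rulename{TG-New}/\rulename{TG-Connect}.

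For the base case $n = 0$, tree canonical form gives $\tm{\config{C}} = \tm{\phi N}$, a single main or child thread typed under the singleton $\ty{\Gamma}$, and the GV judgment is immediate from the corresponding (identical) GV thread rule. For $n \ge 1$, write $\tm{\config{C}} = \hgv{\res{x_1}{y_1}{(\ppar{\taux_1}{\config{C}'})}}$ where $\tm{\config{C}'}$ is in tree canonical form with $n-1$ binders and $\tm{x_1} \in \fv(\tm{\taux_1})$. Since the conclusion is typed under the single environment $\ty{\Gamma}$, inversion of \rulename{TC-New} forces its hyper-environment context to be empty, so its premise is $\hgv{\cseq{\ty{\Gamma_a},\tmty{x_1}{S} \hypersep \ty{\Gamma_b},\tmty{y_1}{\co{S}}}{\ppar{\taux_1}{\config{C}'}}{R}}$ with $\ty{\Gamma} = \ty{\Gamma_a}, \ty{\Gamma_b}$. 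Inverting \rulename{TC-Par} then splits this two-environment hyper-environment between the two sides; because $\tm{\taux_1}$ is an auxiliary thread it must be typed under a singleton, and because $\tm{x_1} \in \fv(\tm{\taux_1})$ occurs only in $\ty{\Gamma_a}, \tmty{x_1}{S}$, the partition is forced: $\tm{\taux_1}$ is typed under $\ty{\Gamma_a}, \tmty{x_1}{S}$ and $\tm{\config{C}'}$ under the remaining single environment $\ty{\Gamma_b}, \tmty{y_1}{\hgv{\co{S}}}$.

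I then apply the induction hypothesis to $\tm{\config{C}'}$, obtaining $\gv{\gvseq{\ty{\Gamma_b},\tmty{y_1}{\co{S}}}{\config{C}'}{R'}}$, and transfer the HGV typing of the auxiliary thread $\tm{\taux_1}$ verbatim to GV to get $\gv{\gvseq{\ty{\Gamma_a},\tmty{x_1}{S}}{\taux_1}{\tychild}}$. Applying \rulename{TG-Connect$_1$}, which reintroduces the channel $\gv{\tmty{\lock{x_1}{y_1}}{\lockty{S}}}$ by handing $\tm{x_1}$ to $\tm{\taux_1}$ and $\tm{y_1}$ to $\tm{\config{C}'}$, types $\hgv{\ppar{\taux_1}{\config{C}'}}$, and a final \rulename{TG-New} discharges the binder to give $\gv{\gvseq{\ty{\Gamma_a},\ty{\Gamma_b}}{\config{C}}{R}}$, as required. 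The configuration-type combinations agree on both sides: \rulename{TC-Par} and \rulename{TG-Connect$_1$} both combine the component types with $\gv{\isect}$, while \rulename{TC-New} and \rulename{TG-New} both pass the type through unchanged, so the resulting $R$ matches.

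The main obstacle I anticipate is the forced-split argument underpinning the inversion of \rulename{TC-Par}: one must use linearity and the free-variable side condition of tree canonical form to argue that the partition cannot, for instance, allot $\tm{\taux_1}$ the empty hyper-environment or the environment containing $\tm{y_1}$, and that the dual endpoint $\tm{y_1}$ genuinely lands in $\tm{\config{C}'}$ so that \rulename{TG-Connect$_1$} (rather than its mirror \rulename{TG-Connect$_2$}) applies. Once uniqueness of this split is established, the remainder is routine bookkeeping of the shared thread and term typing and of the $\gv{\isect}$ combination of configuration types.
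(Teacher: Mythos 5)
Your proposal is correct and follows essentially the same route as the paper's proof: induction on the number of $\nu$-binders, with the base case discharged by the matching GV thread rule and the inductive case peeling off $(\nu x_1 y_1)(\mathcal{A}_1 \parallel \mathcal{C}')$, applying the induction hypothesis to $\mathcal{C}'$, and reassembling with \rulename{TG-Connect$_1$} and \rulename{TG-New}. Your explicit attention to the forced partition in the inversion of \rulename{TC-Par} is a point the paper passes over quickly (it simply cites the tree-canonical-form condition $x \in \fv(L)$), but it is the same argument.
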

\begin{proof}
    By induction on the derivation of
    $\hgv{\cseq{\ty{\Gamma}}{\config{C}}{R}}$, making use of an inductive
    definition of tree canonical forms. See~\Cref{appendix:gv-hgv} for
    details.
\end{proof}

\begin{rem}
  It is not the case that every HGV configuration typeable under an
  \emph{arbitrary} hyper-environment $\hgv{\ty{\hyper{H}}}$ is
  equivalent to a well-typed GV configuration.  This is because open
  HGV configurations can form \emph{forest} process structures,
  whereas (even open) GV configurations must form a \emph{tree}
  process structure.
\end{rem}
Since we can write all well-typed HGV configurations in canonical
form, and HGV tree canonical forms are typeable in GV, it
follows that every well-typed HGV configuration typeable under a
single type environment is equivalent to a well-typed GV
configuration.
\begin{cor}\label{thm:hgv-equiv-gv}
  If $\hgv{\cseq{\ty{\Gamma}}{\config{C}}{R}}$, then there exists some
  $\tm{\config{D}}$ such that $\tm{\config{C}}\equiv\tm{\config{D}}$ and
  $\gv{\gvseq{\ty{\Gamma}}{\config{D}}{R}}$.
\end{cor}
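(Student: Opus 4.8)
The plan is to read this corollary as a straightforward composition of three results already available: preservation under structural congruence (\Cref{thm:hgv-pres}), the existence of tree canonical forms (\Cref{lem:hgv:tree-canonical-forms}), and the embedding of HGV tree canonical forms into GV (\Cref{lem:hgv-to-gv-tcf}). The key structural fact is that the hypothesis types $\tm{\config{C}}$ under a \emph{single} type environment $\ty{\Gamma}$, i.e.\ a singleton hyper-environment, which is exactly the shape required by both \Cref{lem:hgv:tree-canonical-forms} and \Cref{lem:hgv-to-gv-tcf}.

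First I would apply \Cref{lem:hgv:tree-canonical-forms} to the hypothesis $\hgv{\cseq{\ty{\Gamma}}{\config{C}}{R}}$, obtaining a configuration $\tm{\config{D}}$ with $\tm{\config{C}} \equiv \tm{\config{D}}$ and $\tm{\config{D}}$ in tree canonical form. This is the step that does the real work, and it is discharged wholesale by the cited theorem; the witness $\tm{\config{D}}$ it produces is precisely the configuration the corollary asks for.

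Next I would transport the typing derivation along the equivalence. Since $\tm{\config{C}} \equiv \tm{\config{D}}$, the first clause of \Cref{thm:hgv-pres} (preservation under $\equiv$) applied to $\hgv{\cseq{\ty{\Gamma}}{\config{C}}{R}}$ yields $\hgv{\cseq{\ty{\Gamma}}{\config{D}}{R}}$ in HGV. The important observation is that preservation keeps the hyper-environment verbatim, so $\tm{\config{D}}$ is still typeable under the same singleton $\ty{\Gamma}$ with the same result type $\ty{R}$, which is exactly the precondition needed for the final step.

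Finally I would apply \Cref{lem:hgv-to-gv-tcf} to $\hgv{\cseq{\ty{\Gamma}}{\config{D}}{R}}$: as $\tm{\config{D}}$ is in tree canonical form and typeable in HGV under the single environment $\ty{\Gamma}$, it is also typeable in GV, i.e.\ $\gv{\gvseq{\ty{\Gamma}}{\config{D}}{R}}$. Together with $\tm{\config{C}} \equiv \tm{\config{D}}$ this is exactly the claim. There is no genuine obstacle; the substance lives in the two cited theorems, and the only subtlety to watch is that every step preserves the singleton type environment rather than introducing additional hyper-environment separators. This is guaranteed because we begin with a singleton $\ty{\Gamma}$ and neither $\equiv$ nor the canonical-form construction adds top-level separators; indeed, as the preceding remark notes, the corollary would fail for an arbitrary, forest-structured hyper-environment.
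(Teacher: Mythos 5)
Your proof is correct and follows exactly the route the paper intends for this corollary: obtain a tree canonical form via Theorem~\ref{lem:hgv:tree-canonical-forms}, transport the typing derivation along $\equiv$ using Theorem~\ref{thm:hgv-pres}, and conclude with Proposition~\ref{lem:hgv-to-gv-tcf}. Your remark that the singleton environment is preserved throughout is the right subtlety to flag, and nothing is missing.
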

\endgroup

 }  
{\section{Relation between HGV and HCP}\label{sec:relation-to-cp}
In this section, we explore two translations, from HGV to HCP and from
HCP to HGV, together with their operational correspondence results.

\paragraph{Hypersequent CP}
\begingroup
\usingnamespace{hcp}

HCP~\cite{montesip18:ct,kokkemp19:tlla} is a session-typed process
calculus with a correspondence to CLL, which exploits hypersequents to
fix extensibility and modularity issues with CP.

\begin{figure}
  \small \usingnamespace{hcp}

  \headersig{Typing rules for processes}{$\seq{P}{\ty{\hyper{G}}}$}
  \begin{mathpar}
    \inferrule*[lab=TP-Link]{
    }{\seq{\link[\ty{A}]{x}{y}}{\tmty{x}{A},\tmty{y}{\co{A}}}}
    ~~
    \inferrule*[lab=TP-New]{
      \seq{P}{\ty{\hyper{G}}\hypersep\ty{\Gamma},\tmty{x}{A}\hypersep\ty{\Delta},\tmty{y}{\co{A}}}
    }{\seq{\res{x}{y}{P}}{\ty{\hyper{G}}\hypersep\ty{\Gamma},\ty{\Delta}}}
    ~~
    \inferrule*[lab=TP-Par]{
      \seq{P}{\ty{\hyper{G}}}
      \and
      \seq{Q}{\ty{\hyper{H}}}
    }{\seq{\ppar{P}{Q}}{\ty{\hyper{G}}\hypersep\ty{\hyper{H}}}}
    ~~
    \inferrule*[lab=TP-Halt]{
    }{\seq{\halt}{\emptyhyperenv}}
    ~~
    \inferrule*[lab=TP-Close]{
      \seq{P}{\emptyhyperenv}
    }{\seq{\close{x}{P}}{\tmty{x}{\tyone}}}

    \inferrule*[lab=TP-Wait]{
      \seq{P}{\ty{\Gamma}}
    }{\seq{\wait{x}{P}}{\ty{\Gamma},\tmty{x}{\tybot}}}
    ~~
    \inferrule*[lab=TP-Send]{
      \seq{P}{\ty{\Gamma},\tmty{y}{A}\hypersep\ty{\Delta},\tmty{x}{B}}
  }{\seq{\send{x}{y}{P}}{\ty{\Gamma},\ty{\Delta},\tmty{x}{\tytens{A}{B}}}}
    ~~
    \inferrule*[lab=TP-Recv]{
      \seq{P}{\ty{\Gamma},\tmty{y}{A},\tmty{x}{B}}
    }{\seq{\recv{x}{y}{P}}{\ty{\Gamma},\tmty{x}{\typarr{A}{B}}}}
    ~~
    \inferrule*[lab=TP-Offer-Absurd]{
    }{\seq{\absurd{x}}{\ty{\Gamma},\tmty{x}{\tytop}}}

    \inferrule*[lab=TP-Select-Inl]{
      \seq{P}{\ty{\Gamma},\tmty{x}{A}}
    }{\seq{\inl{x}{P}}{\ty{\Gamma},\tmty{x}{\typlus{A}{B}}}}
    ~~
    \inferrule*[lab=TP-Select-Inr]{
      \seq{P}{\ty{\Gamma},\tmty{x}{B}}
    }{\seq{\inr{x}{P}}{\ty{\Gamma},\tmty{x}{\typlus{A}{B}}}}
    ~~
    \inferrule*[lab=TP-Offer]{
      \seq{P}{\ty{\Gamma},\tmty{x}{A}}
      \and
      \seq{Q}{\ty{\Gamma},\tmty{x}{B}}
    }{\seq{\offer{x}{P}{Q}}{\ty{\Gamma},\tmty{x}{\tywith{A}{B}}}}
  \end{mathpar}
  \headersig{Duality}{$\co{\ty{A}}$}
  \[
    \begin{array}{lcl}
      \ty{\co{(\tytens{A}{B})}} & = & \ty{\typarr{\co{A}}{\co{B}}} \\
      \ty{\co{(\typarr{A}{B})}} & = & \ty{\tytens{\co{A}}{\co{B}}}
    \end{array}
    \quad
    \begin{array}{lcl}
      \ty{\co{(\tyone)}} & = & \ty{\tybot} \\
      \ty{\co{(\tybot)}} & = & \ty{\tyone}
    \end{array}
    \quad
    \begin{array}{lcl}
      \ty{\co{(\typlus{A}{B})}} & = & \ty{\tywith{\co{A}}{\co{B}}} \\
      \ty{\co{(\tywith{A}{B})}} & = & \ty{\typlus{\co{A}}{\co{B}}}
    \end{array}
    \quad
    \begin{array}{lcl}
      \ty{\co{(\tynil)}} & = & \ty{\tytop} \\
      \ty{\co{(\tytop)}} & = & \ty{\tynil}
    \end{array}
  \]
  \caption{HCP, duality and typing rules for processes.}
  \label{fig:hcp-typing}
\end{figure}

Types ($\ty{A}$, $\ty{B}$) consist of the connectives of
linear logic: the multiplicative operators ($\ty{\tytensop}$,
$\ty{\typarrop}$) and units ($\ty{\tyone}$, $\ty{\tybot}$) and the
additive operators ($\ty{\typlusop}$, $\ty{\tywithop}$) and units
($\ty{\tynil}$, $\ty{\tytop}$).
\[
  \ty{A}, \ty{B}
  \bnfdef \ty{\tyone}
  \sep      \ty{\tybot}
  \sep      \ty{\tynil}
  \sep      \ty{\tytop}
  \sep      \ty{\tytens{A}{B}}
  \sep      \ty{\typarr{A}{B}}
  \sep      \ty{\typlus{A}{B}}
  \sep      \ty{\tywith{A}{B}}
\]
Type environments ($\ty{\Gamma}$, $\ty{\Delta}$) associate names with
types. Hyper-environments ($\ty{\hyper{G}}$, $\ty{\hyper{H}}$) are
collections of type environments. The empty type environment and
hyper-environment are written $\emptyenv$ and $\emptyhyperenv$,
respectively. Names in type and hyper-environments must be unique
and environments may be combined, written $\ty{\Gamma},\ty{\Delta}$
and $\ty{\hyper{G}}\hypersep\ty{\hyper{H}}$, only if they are
disjoint.

Processes ($\tm{P}$, $\tm{Q}$) are a variant of the $\pi$-calculus
with forwarding~\cite{sangiorgi96,boreale98}, bound
output~\cite{sangiorgi96}, and double
binders~\cite{vasconcelos12:fundamentals}.
The syntax of processes is given by the typing rules
(\Cref{fig:hcp-typing}), which are standard for
HCP~\cite{montesip18:ct,kokkemp19:tlla}:
$\tm{\link[\ty{A}]{x}{y}}$ forwards messages between $\tm{x}$ and $\tm{y}$;
$\tm{\res{x}{y}{P}}$ creates a channel with endpoints $\tm{x}$ and $\tm{y}$, and continues as $\tm{P}$;
$\tm{\ppar{P}{Q}}$ composes $\tm{P}$ and $\tm{Q}$ in parallel;
$\tm{\halt}$ is the terminated process;
$\tm{\send{x}{y}{P}}$ creates a new channel, outputs one endpoint over $\tm{x}$, binds the other to $\tm{y}$, and continues as $\tm{P}$;
$\tm{\recv{x}{y}{P}}$ receives a channel endpoint, binds it to $\tm{y}$, and continues as $\tm{P}$;
$\tm{\close{x}{P}}$ and $\tm{\wait{x}{P}}$ close $\tm{x}$ and continue as $\tm{P}$;
$\tm{\inl{x}{P}}$ and $\tm{\inr{x}{P}}$ {make} a binary choice;
$\tm{\offer{x}{P}{Q}}$ {offers} a binary choice;
and $\tm{\absurd{x}}$ offers a nullary choice.
As HCP is synchronous, the only difference between $\tm{\send{x}{y}{P}}$ and
$\tm{\recv{x}{y}{P}}$ is their typing (and similarly for $\tm{\close{x}{P}}$ and
$\tm{\wait{x}{P}}$).
We write \emph{unbound} send as $\tm{\usend{x}{y}{P}}$ (short for
$\tm{\send{x}{z}{(\ppar{\link{y}{z}}{P})}}$), and synchronisation as
$\tm{\ping{x}{P}}$ (short for
$\tm{\send{x}{z}{(\ppar{\close{z}{\halt}}{P})}}$) and
$\tm{\pong{x}{P}}$ (short for $\tm{\recv{x}{z}{\wait{z}{P}}}$).
Duality is standard and is involutive, \ie,
$\ty{\co{(\co{A})}}=\ty{A}$.

We define a standard structural congruence ($\equiv$) similar to that
of HGV, \ie, parallel composition is commutative and associative, we
can commute name restrictions, swap the order of endpoints, swap
links, and have scope extrusion (similar to \Cref{fig:hgv-reduction}).
Note that since we base our formal developments on an LTS semantics, structural
congruence is not required for reduction.

\hcp{
  \begin{mathpar}
    \tm{\link[\ty{A}]{x}{y}} \equiv \tm{\link[\ty{\co{A}}]{y}{x}}

    \tm{\ppar{P}{\halt}} \equiv \tm{\tm{P}}

    \tm{\ppar{P}{Q}} \equiv \tm{\ppar{Q}{P}}

    \tm{\ppar{P}{(\ppar{Q}{R})}} \equiv \tm{\ppar{(\ppar{P}{Q})}{R}}

    \tm{(\nu x x')(\nu y y') P} \equiv \tm{(\nu y y')(\nu x x') P}

    \tm{(\nu x y)P} \equiv \tm{(\nu y x)P}

    \tm{(\nu x y)(\ppar{P}{Q})} \equiv \tm{\ppar{P}{(\nu x y)}{Q}} \quad
    \text{if } \tm{x}, \tm{y} \not\in \fv(\tm{P})
  \end{mathpar}
}

\begin{figure}[t]
  \small \usingnamespace{hcp}

  \header{Action rules}
  \begin{mathpar}
    \inferrule*[lab=Act-Pref]{}{\tm{\pi.P} \lto{\pi} \tm{P}}

    \inferrule*[lab=Act-Link\textsubscript{1}]{}{\tm{\link{x}{y}} \lto{\lablink{x}{y}} \tm{\halt}}

    \inferrule*[lab=Act-Link\textsubscript{2}]{}{\tm{\link{x}{y}} \lto{\lablink{y}{x}} \tm{\halt}}

    \inferrule*[lab=Act-Off-Inl]{}{\tm{\offer{x}{P}{Q}} \lto{\laboffinl{x}} \tm{P}}

    \inferrule*[lab=Act-Off-Inr]{}{\tm{\offer{x}{P}{Q}} \lto{\laboffinr{x}} \tm{Q}}
  \end{mathpar}

  \header{Communication Rules}
  \begin{mathpar}
\inferrule*[lab=Alp-Link]{
      \tm{P}\lto{\lablink{x}{z}}\tm{P'}
    }{\tm{\res{x}{y}{P}}\ato\tm{\subst{P'}{z}{y}}}

    \inferrule*[lab=Bet-Send]{
      \tm{P}\lto{\ppar{\labsend{x}{x'}}{\labrecv{y}{y'}}}\tm{P'}
    }{\tm{\res{x}{y}{P}}\bto\tm{\res{x}{y}{\res{x'}{y'}{P'}}}}

    \inferrule*[lab=Bet-Close]{
      \tm{P}\lto{\ppar{\labclose{x}}{\labwait{y}}}\tm{P'}
    }{\tm{\res{x}{y}{P}}\bto\tm{P'}}

    \inferrule*[lab=Bet-Inl]{
      \tm{P}\lto{\ppar{\labselinl{x}}{\laboffinl{y}}}\tm{P'}
    }{\tm{\res{x}{y}{P}}\bto\tm{\res{x}{y}{P'}}}

    \inferrule*[lab=Bet-Inr]{
      \tm{P}\lto{\ppar{\labselinr{x}}{\laboffinr{y}}}\tm{P'}
    }{\tm{\res{x}{y}{P}}\bto\tm{\res{x}{y}{P'}}}
  \end{mathpar}

  \header{Structural Rules}
  \begin{mathpar}
    \inferrule*[lab=Str-Res]{
      \tm{P}\lto{\ell}\tm{P'}
      \and
      \tm{x},\tm{y}\not\in\cn(\tm{\ell})
    }{\tm{\res{x}{y}{P}}\lto{\ell}\tm{\res{x}{y}{P'}}}

    \inferrule*[lab=Str-Par\textsubscript{1}]{
      \tm{P}\lto{\ell}\tm{P'}
      \and
      \bn(\tm{\ell})\cap\fn(\tm{Q})=\varnothing
    }{\tm{\ppar{P}{Q}}\lto{\ell}\tm{\ppar{P'}{Q}}}

    \inferrule*[lab=Str-Par\textsubscript{2}]{
      \tm{Q}\lto{\ell}\tm{Q'}
      \and
      \bn(\tm{\ell})\cap\fn(\tm{P})=\varnothing
    }{\tm{\ppar{P}{Q}}\lto{\ell}\tm{\ppar{P}{Q'}}}

    \inferrule*[lab=Str-Syn]{
      \tm{P}\lto{\ell}\tm{P'}
      \and
      \tm{Q}\lto{\ell'}\tm{Q'}
      \and
      \bn(\ell)\cap\bn(\ell')=\varnothing
    }{\tm{\ppar{P}{Q}}\lto{\lbl\parallel\lbl'}\tm{\ppar{P'}{Q'}}}
  \end{mathpar}
  \caption{HCP, label transition semantics.}
  \label{fig:hcp-lts}
\end{figure}

We define the labelled transition system for HCP as a small refinement of the
LTS for the additive-multiplicative fragment of the $\pi\text{LL}$ calculus
introduced by Montesi and Peressotti~\cite{montesip21:pill}, in turn inspired by
their previous system \emph{CT}~\cite{montesip18:ct}. The LTS is identical, save
for the fact that we distinguish two types of internal actions.
Action labels $\tm{l}$ represent the actions that a
process can fire.
Prefixes $\tm{\pi}$ are a convenient subset of action labels which can be
written as prefixes to processes, \ie, $\tm{\pi.P}$.
Transition labels $\tm\translbl$ include action labels and the parallel
composition of two action labels, along with \emph{internal} actions
$\tm\alpha$, $\tm\beta$, and $\tm\tau$.
The LTS gives rise to two types of internal action: $\ta$ represents only the
evaluation of links as \emph{renaming}, and $\tb$ represents only
\emph{communication}.
Labels $\tau$ arise only due to saturated transition
(Definition~\ref{def:saturated-transition}) and are not produced by the rules in
the LTS.
\[
  \begin{array}{lrcl}
    \text{Prefixes}
     & \tm{\pi}
     & \bnfdef
     & \tm{\labsend{x}{y}}
    \sep \tm{\labclose{x}}
    \sep \tm{\labrecv{x}{y}}
    \sep \tm{\labwait{x}}
    \sep \tm{\labselinl{x}}
    \sep \tm{\labselinr{x}}
    \\
    \text{Action Labels}
     & \tm{\lbl}
     & \bnfdef
     & \tm{\pi}
    \sep \tm{\lablink{x}{y}}
    \sep \tm{\laboffinl{x}}
    \sep \tm{\laboffinr{x}}
    \\
    \text{Transition Labels}
     & \tm{\translbl}
     & \bnfdef
     & \tm{\lbl}
    \sep \tm{\lbl \parallel \lbl'}
    \sep \tm{\ta}
    \sep \tm{\tb}
  \end{array}
\]

We let $\tm{\ell_{x}}$ range over labels on $\tm{x}$:
$\tm{\lablink{x}{y}}$, $\tm{\labsend{x}{y}}$, $\tm{\labclose{x}}$,
\etc. Labelled transition $\lto{\ell}$ is defined
in~\Cref{fig:hcp-lts}. We write $\lto{\ell}\lto{\ell'}$ for the
composition of $\lto{\ell}$ and $\lto{\ell'}$, $\lto{\ell^+}$ for the
transitive closure of $\lto{\ell}$, and $\lto{\ell^*}$ for the
reflexive-transitive closure of $\lto{\ell}$. We write
$\bn(\tm{\ell})$ and $\fn(\tm{\ell})$ for the bound and free names
contained in $\tm{\ell}$, respectively.  We write $\cn(\tm{\ell})$ for
all names in $\tm{\ell}$, \ie $\cn(\tm{\ell}) = \fn(\tm{\ell}) \cup
  \bn(\tm{\ell})$.

\paragraph{Metatheory}
Transitions preserve typeability. Since internal actions occur only under binders,
they are typable under the same hyper-environment.

\begin{restatable}[Type Preservation]{thm}{hcppres}
  Suppose $\seq{P}{\ty{\hyper{G}}}$ and $\tm{P} \lto{\ell} \tm{Q}$.
  \begin{itemize}
    \item If $\tm\ell$ is internal, then $\seq{Q}{\ty{\hyper{G}}}$.
    \item If $\tm\ell$ is not internal, then there exists some $\ty{\hyper{H}}$ such
          that $\seq{Q}{\ty{\hyper{H}}}$.
  \end{itemize}
\end{restatable}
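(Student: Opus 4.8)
The plan is to induct on the derivation of $\tm{P}\lto{\ell}\tm{Q}$ using the rules of \Cref{fig:hcp-lts}, but the statement as given is too weak to serve as its own induction hypothesis. The communication rules \rulename{Alp-Link}, \rulename{Bet-Send}, \rulename{Bet-Close}, \rulename{Bet-Inl} and \rulename{Bet-Inr} produce the internal labels $\ta$ and $\tb$, yet each appeals to a sub-transition carrying a \emph{non-internal} label (a prefix, a link, an offer, or a synchronising pair $\tm{\lbl\parallel\lbl'}$); for such a label the bare existence of some $\ty{\hyper{H}}$ tells us nothing about how $\ty{\hyper{H}}$ relates to $\ty{\hyper{G}}$, which is exactly what those cases need. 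I would therefore first establish a strengthened claim that reads off, for each non-internal label, the precise shape of the residual hyper-environment by inverting the associated typing rule: a label $\tm{\labsend{x}{y}}$ forces $\ty{\hyper{G}}=\ty{\hyper{G}'}\hypersep\ty{\Gamma},\ty{\Delta},\tmty{x}{\tytens{A}{B}}$ and yields $\ty{\hyper{H}}=\ty{\hyper{G}'}\hypersep\ty{\Gamma},\tmty{y}{A}\hypersep\ty{\Delta},\tmty{x}{B}$; a link label $\tm{\lablink{x}{y}}$ forces a singleton component $\tmty{x}{A},\tmty{y}{\co{A}}$ which it deletes; an offer label keeps the selected branch type; and a synchronising label applies the two corresponding component-changes to disjoint parts of $\ty{\hyper{G}}$. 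This refined statement is proved by the same induction and immediately implies the theorem.

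The base cases are the action rules, which all carry non-internal labels and are discharged by inverting the unique typing rule for the exposed constructor: \rulename{Act-Pref} on each prefix uses the matching rule among \rulename{TP-Send}, \rulename{TP-Recv}, \rulename{TP-Close}, \rulename{TP-Wait}, \rulename{TP-Select-Inl} and \rulename{TP-Select-Inr}; \rulename{Act-Link} uses \rulename{TP-Link}, so that $\tm{\halt}$ is typed under $\emptyhyperenv$ by \rulename{TP-Halt} once the singleton component is deleted; and \rulename{Act-Off-Inl}/\rulename{Act-Off-Inr} use \rulename{TP-Offer}. For the structural rules I apply the induction hypothesis to the premise(s) and reassemble. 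The key observation is that a label's effect on the hyper-environment is localised to the component(s) of its names (and is nil when the label is internal), so for \rulename{Str-Res} the side condition $\tm{x},\tm{y}\notin\cn(\tm{\ell})$ guarantees that $\tm{x}$ and $\tm{y}$ survive in two distinct, still-dual components and \rulename{TP-New} can be re-applied; the two \rulename{Str-Par} rules re-apply \rulename{TP-Par}; and \rulename{Str-Syn} composes the two localised changes, which is legitimate because the bound names of the two labels are disjoint.

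The crux is the communication rules, where the label is internal and the residual hyper-environment must be \emph{exactly} $\ty{\hyper{G}}$; here the strengthened hypothesis does the work, and in every case the point is that \rulename{TP-New} re-binds precisely the endpoints whose types the synchronisation has just exposed. For \rulename{Alp-Link}, the premise $\tm{P}\lto{\lablink{x}{z}}\tm{P'}$ deletes the component $\tmty{x}{A},\tmty{z}{\co{A}}$, and since the enclosing \rulename{TP-New} binds $\tm{x}$ and $\tm{y}$, the forwarding substitution $\tm{P'\{z/y\}}$ renames $\tm{y}$ to $\tm{z}$, so the two changes cancel and $\ty{\hyper{G}}$ is recovered. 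For \rulename{Bet-Close} the close and wait labels consume the dual units $\ty{\tyone}$ and $\ty{\tybot}$ and no new binder is needed, while for \rulename{Bet-Inl} and \rulename{Bet-Inr} the select and offer labels reduce the dual additives to matching components $\tmty{x}{A},\tmty{y}{\co{A}}$ that $\tm{\res{x}{y}{\cdot}}$ re-binds. The main obstacle is \rulename{Bet-Send}: the synchronisation label $\tm{\labsend{x}{x'}\parallel\labrecv{y}{y'}}$ first \emph{splits} the component carrying $\tmty{x}{\tytens{A}{B}}$ into two and exposes $\tmty{x'}{A}$ and $\tmty{y'}{\co{A}}$ alongside the continuation $\tmty{x}{B}$, $\tmty{y}{\co{B}}$; one must then check that the two restrictions in $\tm{\res{x}{y}{\res{x'}{y'}{P'}}}$ merge the resulting components back into the original one. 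This bookkeeping is where the duality laws $\ty{\co{(\tytens{A}{B})}}=\ty{\typarr{\co{A}}{\co{B}}}$ (\etc) are indispensable, and it is the one place where the mere existence statement of the theorem would not be a strong enough hypothesis.
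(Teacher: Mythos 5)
Your proposal is correct and follows essentially the same route as the paper, which establishes the result by the standard proof transformations on typing derivations from the cited HCP/$\pi$LL literature; your strengthened induction hypothesis---reading off, for each non-internal label, the exact residual hyper-environment---is precisely the content of those transformations, and your treatment of \rulename{Alp-Link} (cancellation via the forwarding substitution) and \rulename{Bet-Send} (re-merging the split components under the two restrictions) matches the key cases. The paper merely defers to this argument and notes that the $\alpha$/$\beta$ stratification does not affect it, which your case analysis confirms.
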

\begin{proof}
  Following the approach of~\cite{kokkemp19:popl, montesip18:ct,
    montesip21:pill}, type preservation is
  established by defining proof transformations on typing derivations of each
  reducing process. The only difference with respect to~\cite{montesip18:ct,
    montesip21:pill} arises due to our
  separate treatment of $\alpha$ and $\beta$ actions, which does not
  materially impact the proof.
\end{proof}

Similarly, our LTS for HCP satisfies progress.
Following~\cite{kokkemp19:popl, montesip21:pill}, the key intermediate
step is to note that for every type environment in a
hyper-environment, there is some free name which can be acted
upon. Again, the stratification of internal actions does not
materially impact the proof.

\begin{thm}[Progress]
  If $\seq{\tm{P}}{\ty{\hyper{H}}}$ and $\tm{P} \not\equiv \tm{\halt}$, then there
  exist some $\tm\ell, \tm{Q}$ such that $\tm{P} \lto{\tm\ell} \tm{Q}$.
\end{thm}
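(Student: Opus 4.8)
The plan is to derive progress from a stronger \emph{readiness} lemma and then conclude by a short case analysis. The readiness lemma I would prove is: if $\seq{P}{\ty{\hyper{G}}}$, then either $\tm{P}$ can already fire an internal action ($\tm{P} \ato \tm{P'}$ or $\tm{P} \bto \tm{P'}$ for some $\tm{P'}$), or for every type environment $\ty{\Gamma} \in \ty{\hyper{G}}$ there is a free name $\tm{x} \in \dom(\ty{\Gamma})$ and a label $\ell$ with subject $\tm{x}$ such that $\tm{P} \lto{\ell} \tm{P'}$. This is the formal content of the slogan that every type environment exposes a free name that can be acted upon, refined with the escape hatch that a ready pair of dual actions may instead have already been consumed into an internal step. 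Granting the lemma, progress is immediate: if $\ty{\hyper{G}} = \emptyhyperenv$ a routine induction on the typing rules shows $\tm{P} \equiv \tm{\halt}$, contradicting the hypothesis; otherwise $\ty{\hyper{G}}$ contains at least one environment and the lemma hands us a transition (either the internal action, or the ready action on a free name).

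I would prove the readiness lemma by induction on the derivation of $\seq{P}{\ty{\hyper{G}}}$. The leaf rules each expose a top-level action on a free name of their single environment: \rulename{TP-Link} fires by \rulename{Act-Link\textsubscript{1}}, the prefix rules \rulename{TP-Close}, \rulename{TP-Wait}, \rulename{TP-Send}, \rulename{TP-Recv}, \rulename{TP-Select-Inl} and \rulename{TP-Select-Inr} fire by \rulename{Act-Pref}, and \rulename{TP-Offer} fires by \rulename{Act-Off-Inl}. For \rulename{TP-Par}, with $\tm{P} = \tm{\ppar{Q}{R}}$ and $\ty{\hyper{G}} = \ty{\hyper{G}_1} \hypersep \ty{\hyper{G}_2}$, I apply the induction hypothesis to each side: if either side offers an internal action it lifts to $\tm{P}$ by \rulename{Str-Par\textsubscript{1}} or \rulename{Str-Par\textsubscript{2}}; otherwise the ready action witnessing an environment of $\ty{\hyper{G}_1}$ (resp.\ $\ty{\hyper{G}_2}$) lifts by the same rule, with the disjointness side conditions discharged by $\alpha$-renaming bound names.

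The crux is \rulename{TP-New}, where $\tm{P} = \tm{\res{x}{y}{Q}}$ with $\seq{Q}{\ty{\hyper{G}} \hypersep \ty{\Gamma},\tmty{x}{A} \hypersep \ty{\Delta},\tmty{y}{\co{A}}}$ and conclusion hyper-environment $\ty{\hyper{G}} \hypersep \ty{\Gamma},\ty{\Delta}$. Applying the induction hypothesis to $\tm{Q}$: if $\tm{Q}$ fires an internal action then, since such an action is on names bound strictly inside $\tm{Q}$ and therefore distinct from $\tm{x},\tm{y}$, it lifts by \rulename{Str-Res}, so $\tm{P}$ too fires an internal action. Otherwise every environment of $\tm{Q}$ is ready; in particular there are ready actions of $\tm{Q}$ on some $\tm{z_1} \in \dom(\ty{\Gamma},\tmty{x}{A})$ and $\tm{z_2} \in \dom(\ty{\Delta},\tmty{y}{\co{A}})$. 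If $\tm{z_1} \neq \tm{x}$ or $\tm{z_2} \neq \tm{y}$ then that name is free in $\tm{P}$ and its action lifts by \rulename{Str-Res}, re-establishing readiness of the merged environment $\ty{\Gamma},\ty{\Delta}$. The remaining subcase is the heart of the argument: the only ready actions of these two environments are on $\tm{x}$ and $\tm{y}$ themselves. Because $\tm{x}$ and $\tm{y}$ inhabit distinct type environments, they sit in parallel sub-components, so the two transitions combine via \rulename{Str-Syn} into $\tm{Q} \lto{\ell_x \parallel \ell_y} \tm{Q'}$, and duality of $\ty{A}$ and $\ty{\co{A}}$ forces $\ell_x$ and $\ell_y$ to be complementary. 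Hence exactly one communication rule (\rulename{Alp-Link} when a side is a link, otherwise \rulename{Bet-Send}, \rulename{Bet-Close}, \rulename{Bet-Inl} or \rulename{Bet-Inr}) applies to $\tm{\res{x}{y}{Q}}$, giving an internal action and closing the case.

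The hard part will be this final synchronisation step: to invoke \rulename{Str-Syn} I must know that the two ready actions live in genuinely disjoint parallel components with disjoint bound names, which I would justify from the correspondence between hyper-environment separators and parallel composition (an independence property for HCP analogous to the one established for HGV) together with the type-preservation theorem above. The one delicate edge case is the nullary offer: $\tm{\absurd{x}}$ has no standalone transition rule, so the corresponding leaf of the readiness lemma cannot fire. I would discharge it by observing that its dual $\ty{\tynil}$ is uninhabited --- no typing rule introduces a $\ty{\tynil}$-typed name --- so a $\ty{\tytop}$-typed endpoint is never bound by \rulename{TP-New}; consequently $\tm{\absurd{x}}$ occurs only with $\tm{x}$ free and never has to participate in a synchronisation, leaving its free name as the witness (matched, in the full system, by a nullary-offer action). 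Finally I would verify the pervasive $\alpha$-renaming bookkeeping in the side conditions of the structural rules, which is routine but easy to get wrong.
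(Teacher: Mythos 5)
The paper gives no real proof of this theorem: it cites Kokke~\emph{et al.}\ and Montesi--Peressotti and records only the key intermediate step --- that every type environment in the hyper-environment exposes a free name that can be acted upon --- which is exactly your readiness lemma, so your proposal is a correct elaboration of the same approach, including the crucial \rulename{TP-New} case where independence of the two dual ready actions across the hyper-environment separator licenses \rulename{Str-Syn} followed by a communication rule. The one caveat is the nullary offer: as you yourself note, $x \triangleright \{\}$ has no transition in the LTS of the paper, so your readiness lemma --- and indeed the theorem as literally stated --- fails for it; your observation that a $\top$-typed endpoint can never be cut against an endpoint of the uninhabited type $\mathbf{0}$ is the right idea, but actually closing that case requires either adding a nullary-offer action to the LTS or excluding such processes from the statement, neither of which the paper does.
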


\paragraph{Behavioural Theory}
The behavioural theory for HCP follows Kokke~\etal~\cite{kokkemp19:popl}, except
that we distinguish two subrelations of weak bisimilarity, following the subtypes of internal actions.
\begin{defi}[Strong bisimulation and strong bisimilarity]
  A symmetric relation $\rel{R}$ on processes is a \emph{strong bisimulation}
  if $\tm{P}\mathbin{\rel{R}}\tm{Q}$ implies that if
  $\tm{P}\lto{\ell}\tm{P'}$, then $\tm{Q}\lto{\ell}\tm{Q'}$ for some $\tm{Q'}$
  such that $\tm{P'}\mathbin{\rel{R}}\tm{Q'}$. \emph{Strong bisimilarity} is
  the largest relation $\sbis$ that is a strong bisimulation.
\end{defi}
\begin{defi}[Saturated transition]\label{def:saturated-transition}
The \em{$\lblset$-saturated transition relation}, for $\lblset \subseteq \{\tm\ta, \tm\tb\}$, is the smallest relation $\slto[\lblset]{}$ closed under the following rules, with saturated transition labels $\ell$ ranging over transition labels and the distinguished label $\tau$:
\begin{mathpar}
    \inferrule
    { }
    { \tm{P} \slto[\lblset]{\tau} \tm{P}}

    \inferrule
    { \tm{P} \slto[\lblset]{\tau} \tm{P'} \lto{\ell} \tm{Q'} \slto[\lblset]{\tau} \tm{Q} \\
      \ell \in \lblset
    }
    { \tm{P} \slto[\lblset]{\tau} \tm{Q} }

    \inferrule
    { \tm{P} \slto[\lblset]{\tau} \tm{P'} \lto{\ell} \tm{Q'} \slto[\lblset]{\tau} \tm{Q} \\
      \ell \notin \lblset
    }
    { \tm{P} \slto[\lblset]{\ell} \tm{Q} }
  \end{mathpar}
We write $\slto[\tm\ell]{}$ as shorthand for $\slto[\{ \tm\ell \}]{}$, and
  we write $\slto[]{}$ as shorthand for $\slto[\{ \tm\alpha, \tm\beta \}]{}$.
\end{defi}
\begin{defi}[Weak bisimulation and weak bisimilarity]
  \label{def:bisimulation}
  A symmetric relation $\rel{R}$ on processes is an $\lblset$-bisimulation, for
  $\lblset\subseteq\{\tm\ta,\tm\tb\}$, if $\tm{P}\mathbin{\rel{R}}\tm{Q}$ implies that
  if $\tm{P}\slto[\lblset]{\ell'}\tm{P'}$, then
  $\tm{Q}\slto[\lblset]{\ell'}\tm{Q'}$ for some $\tm{Q'}$ such that
  $\tm{P'}\mathbin{\rel{R}}\tm{Q'}$.
The $\lblset$-bisimilarity relation is the
  largest relation $\bis_\lblset$ that is an $\lblset$-bisimulation.
We write $\bis$ as shorthand for $\bis_{\{ \ta, \tb \}}$.
\end{defi}
\begin{restatable}{lem}{corhcpsbisbis}
  \label{cor:hcp-sbis-to-bis}
  Structural congruence, strong bisimilarity and the various forms of weak
  bisimilarity are related as follows:
  \begin{mathpar}
    {\equiv}\subset{\sbis}

    {\sbis}\subset{\bis}

    {\sbis}\subset{\bis_\ta}

    {\sbis}\subset{\bis_\tb}
  \end{mathpar}
\end{restatable}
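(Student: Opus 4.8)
{\usingnamespace{hcp}
The plan is to split the four inclusions into two independent tasks: the single inclusion $\equiv \subseteq \sbis$, established by showing that structural congruence is itself a strong bisimulation; and the family $\sbis \subseteq \bis_\lblset$ for $\lblset \in \{\{\ta,\tb\}, \{\ta\}, \{\tb\}\}$, all of which I would obtain uniformly from one lemma: for \emph{every} $\lblset \subseteq \{\ta,\tb\}$, the relation $\sbis$ is an $\lblset$-bisimulation. The three listed inclusions are then the instances $\lblset = \{\ta,\tb\}$, $\{\ta\}$, and $\{\tb\}$.

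For $\equiv \subseteq \sbis$, I would show that $\equiv$ is a strong bisimulation. As $\equiv$ is symmetric, it suffices to establish the transfer property: whenever $P \equiv Q$ and $P \lto{\ell} P'$, there is a $Q'$ with $Q \lto{\ell} Q'$ and $P' \equiv Q'$. I would prove this by induction on the derivation of $P \equiv Q$, taking the structural-congruence axioms (commutativity and associativity of $\parallel$, name-restriction commutation and endpoint swapping, link symmetry, and scope extrusion) as base cases and the reflexivity, symmetry, transitivity, and congruence closure rules as inductive cases. Each axiom is handled by a case analysis on the shape of the firing transition $\lto{\ell}$, reconstructing the matching transition on the other side using the structural rules \rulename{Str-Res}, \rulename{Str-Par}, and \rulename{Str-Syn} together with the communication rules. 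Since $\sbis$ is the largest strong bisimulation, $\equiv \subseteq \sbis$ follows.

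For the uniform lemma, fix $\lblset \subseteq \{\ta,\tb\}$ and show that $\sbis$ is an $\lblset$-bisimulation. Given $P \sbis Q$ and a saturated transition $P \slto[\lblset]{\ell'} P'$, I would first decompose the saturated step, by induction on its derivation in \Cref{def:saturated-transition}, into a finite sequence of single LTS transitions whose labels all lie in $\lblset$ except for at most one step carrying the visible label $\ell'$ (none when $\ell' = \tau$). I then replay this sequence on $Q$ one step at a time: each single transition $P_i \lto{m} P_{i+1}$ is matched, using the strong bisimulation property of $\sbis$, by a transition $Q_i \lto{m} Q_{i+1}$ with the \emph{identical} label $m$ and $P_{i+1} \sbis Q_{i+1}$. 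Because the labels coincide exactly, the steps labelled in $\lblset$ remain absorbable and the single visible step keeps the label $\ell'$; reassembling the matched steps via the saturation rules yields $Q \slto[\lblset]{\ell'} Q'$ with $P' \sbis Q'$. As $\bis_\lblset$ is the largest $\lblset$-bisimulation, $\sbis \subseteq \bis_\lblset$, and instantiating $\lblset$ gives the three remaining inclusions.

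The main obstacle is the congruence lemma for $\equiv$: the delicate cases are scope extrusion and the synchronising communication rules (\rulename{Bet-Send}, \rulename{Bet-Close}, \rulename{Bet-Inl}, \rulename{Bet-Inr}), where I must track how a name restriction $\res{x}{y}{P}$ interacts with a transition and verify the bound-, free-, and co-name side conditions of \rulename{Str-Par} and \rulename{Str-Syn}, so that a synchronisation label of the form $\ell \parallel \ell'$ straddling two parallel components is still firable after reassociating or re-scoping. By contrast, the $\sbis \subseteq \bis_\lblset$ direction is routine once saturated transitions are decomposed, since strong bisimilarity matches labels exactly and saturation only ever glues together identically-labelled steps. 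Finally, should $\subset$ be read as \emph{proper} inclusion, strictness in each case is witnessed by small examples: a closed redex performing a single internal $\ta$- (respectively $\tb$-) communication is weakly but not strongly bisimilar to its reduct, separating $\sbis$ from each $\bis_\lblset$; and an identity-expanded link is strongly bisimilar but not structurally congruent to the corresponding atomic link, separating $\equiv$ from $\sbis$.
}
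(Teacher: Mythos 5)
Your proposal is correct and matches the standard argument that the paper itself leaves implicit: the source states this lemma without proof, deferring to the behavioural theory of Kokke~\emph{et al.}, and the intended justification is exactly your two-part decomposition --- showing $\equiv$ is a strong bisimulation by induction on its derivation with a case analysis on the firing transition, and showing strong bisimilarity is an $\mathcal{L}$-bisimulation for every $\mathcal{L} \subseteq \{\alpha,\beta\}$ by unfolding a saturated transition into single steps, matching each step label-for-label, and reassembling. Your uniform treatment of all three weak inclusions via a single parametric lemma, and your explicit strictness witnesses for the $\subset$ reading, are both sound and slightly more thorough than anything the paper records.
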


\paragraph{Differences with previous version}
The LTS in~\Cref{fig:hcp-lts} is similar to that in the previous version of this
work~\cite{FKDLM21}, with the exception that we have removed the rules
\rulename{Tau-Alp} and \rulename{Tau-Bet}:
\begin{mathpar}
  \evil{\inferrule*[lab=Tau-Alp]{
      \tm{P}\ato\tm{P'}
    }{\tm{P}\tto\tm{P'}}}

  \evil{\inferrule*[lab=Tau-Bet]{
      \tm{P}\bto\tm{P'}
    }{\tm{P}\tto\tm{P'}}}
\end{mathpar}

To see why these rules are problematic, consider processes
$\tm{P}=\tm{\res{x}{y}{(\ppar{\link{z}{x}}{\close{y}{\halt}})}}$ and
$\tm{Q}=\tm{\close{z}{\halt}}$.  Following~\Cref{def:bisimulation}, $\tm{P}$ and
$\tm{Q}$ are $\ta$-bisimilar, as $\tm{P}$ only has the $\ta$-transition
$\tm{P}\ato\tm{Q}$ and $\tm{Q}$ has no transitions.  In the previous version,
\rulename{Tau-Alp} gave $\tm{P}$ the derived $\tau$-transition
$\tm{P}\tto\tm{Q}$, which meant that $\tm{P}\not\bis_\ta\tm{Q}$, as
$\tm{Q}\centernot\stto\tm{Q}$. Therefore \rulename{Tau-Alp} collapses $\bis_\ta$
to $\sbis$ and \rulename{Tau-Bet} collapses $\bis_\tb$ to $\sbis$.

The solution we adopted was to remove \rulename{Tau-Alp} and \rulename{Tau-Bet}
from the label transition relation $\lto{}$, and instead lift $\ta$- and
$\tb$-transitions to $\tau$-transitions in the definition of saturated
transition\footnote{We thank Marco Peressotti for notifying us of the error and
suggesting the fix.}.
\endgroup

\paragraph{Translating HGV to HCP}\label{sec:hgv-to-hcp}
We factor the translation from HGV to HCP into two translations: (1) a
translation into \fgHGV, a fine-grain call-by-value~\cite{levypt03}
variant of HGV, which makes control flow explicit; and (2) a
translation from \fgHGV to HCP. In so doing, we can concentrate on the essence
of the translations as opposed to concerning ourselves with administrative
reductions.

\paragraph{\fgHGV}
We define \fgHGV as a refinement of HGV in which any non-trivial term
must be named by a let-binding before being used.
While $\hgv{\tm{\calcwd{let}}}$ is syntactic sugar in HGV, it is part of the core language
in \fgHGV. Correspondingly, the reduction rule for
$\hgv{\tm{\calcwd{let}}}$ follows from
the encoding in HGV, \ie
$\hgv{\tm{\letbind{x}{V}{M}}\tred\tm{\subst{M}{V}{x}}}$.
\[
  \small
  \usingnamespace{hgv}
  \begin{array}{lrcl}
    \text{Terms}
     & \tm{L}, \tm{M}, \tm{N}
     & \bnfdef                & \tm{V}
    \sep      \tm{\letbind{x}{M}{N}}
    \sep      \tm{V\;W}                                            \\
     &                        & \sep         & \tm{\letunit{V}{M}}
    \sep      \tm{\letpair{x}{y}{V}{M}}                            \\
     &                        & \sep         & \tm{\absurd{V}}
    \sep      \tm{\casesum{V}{x}{M}{y}{N}}
    \\
    \text{Values}
     & \tm{V}, \tm{W}
     & \bnfdef                & \tm{x}
    \sep      \tm{K}
    \sep      \tm{\lambda x.M}
    \sep      \tm{\unit}
    \sep      \tm{\pair{V}{W}}
    \sep      \tm{\inl{V}}
    \sep      \tm{\inr{V}}                                         \\

    \text{Evaluation contexts}
     & \tm{E}
     & \bnfdef                & \tm{\hole}
    \sep      \tm{\letbind{x}{E}{M}}                               \\
    \text{Thread contexts}
     & \tm{\conf{F}}
     & \Coloneqq              & \tm{\phi\;E}
  \end{array}
\]

\begin{rem}
  Fine-grain call-by-value $\lambda$-calculi typically include an explicit
  $\hgv{\tm{\calcwd{return}\:V}}$ construct to embed values into the term language.
  As there is no difference between the shapes of the value and term typing
  judgements, we allow ourselves to embed values directly for simplicity.
\end{rem}

We can \emph{na\"ively} translate HGV to \fgHGV ($\fgcbv{\cdot}$) by
let-binding each subterm in a value position, \eg,
$\hgv{\tm{\fgcbv{\inl{M}}} =
    \tm{\letbind{z}{\fgcbv{M}}{\inl{z}}}}$.

\begin{defi}[Na\"ive translation of HGV to \fgHGV]
  \label{def:hgv-to-fghgv}
  \[
  \hspace*{-5pt}
    \begin{array}{lcl}
      \hgv{\tm{\fgcbv{x}}}
       &\!=\!& \hgv{\tm{\ret{x}}}
      \\
      \hgv{\tm{\fgcbv{\lambda x.M}}}
       &\!=\!& \hgv{\tm{\ret{\lambda x.\fgcbv{M}}}}
      \\
      \hgv{\tm{\fgcbv{L\;M}}}
       &\!=\!& \hgv{\tm{\letbind{x}{\fgcbv{L}}{\letbind{y}{\fgcbv{M}}{x\;y}}}}
      \\
      \hgv{\tm{\fgcbv{\unit}}}
       &\!=\!& \hgv{\tm{\ret{\unit}}}
      \\
      \hgv{\tm{\fgcbv{\letunit{L}{M}}}}
       &\!=\!& \hgv{\tm{\letbind{z}{\fgcbv{L}}{\letunit{z}{\fgcbv{M}}}}}
      \\
      \hgv{\tm{\fgcbv{\pair{M}{N}}}}
       &\!=\!& \hgv{\tm{\letbind{x}{\fgcbv{M}}{\letbind{y}{\fgcbv{N}}{\ret{\pair{x}{y}}}}}}
      \\
      \hgv{\tm{\fgcbv{\letpair{x}{y}{L}{M}}}}
       &\!=\!& \hgv{\tm{\letbind{z}{\fgcbv{L}}{\letpair{x}{y}{z}{\fgcbv{M}}}}}
      \\
      \hgv{\tm{\fgcbv{\inl{M}}}}
       &\!=\!& \hgv{\tm{\letbind{z}{\fgcbv{M}}{\ret{\inl{z}}}}}
      \\
      \hgv{\tm{\fgcbv{\inr{M}}}}
       &\!=\!& \hgv{\tm{\letbind{z}{\fgcbv{M}}{\ret{\inr{z}}}}}
      \\
      \hgv{\tm{\fgcbv{\casesum{L}{x}{M}{y}{N}}}}
       &\!=\!& \hgv{\tm{\letbind{z}{\fgcbv{L}}{\casesum{z}{x}{\fgcbv{M}}{y}{\fgcbv{N}}}}}
      \\
      \hgv{\tm{\fgcbv{\absurd{L}}}}
       &\!=\!& \hgv{\tm{\letbind{z}{\fgcbv{L}}{\absurd{z}}}}
      \\
    \end{array}
  \]
\end{defi}

Standard techniques can be used to avoid administrative
redexes~\cite{Plotkin75,DanvyMN07}.
We give a full definition of \fgHGV in Appendix~\ref{appendix:hcp-to-hgv}.

\paragraph{\fgHGV to HCP}
The translation from \fgHGV to HCP is given in
\Cref{fig:fghgv-to-hcp}. All control flow is encapsulated in values
and let-bindings. We define a pair of translations on types,
$\hcp{\ty{\gvcpdown{\cdot}}}$ and $\hcp{\ty{\gvcpup{\cdot}}}$, such
that $\hcp{\ty{\gvcpdown{T}}}=\hcp{\ty{\co{\gvcpup{T}}}}$. We extend
these translations pointwise to type environments and
hyper-environments. We define translations on configurations
($\tm{\gvcpcnf{\cdot}{r}}$), terms ($\tm{\gvcpcom{\cdot}{r}}$) and
values ($\tm{\gvcpval{\cdot}{r}}$), where $\tm{r}$ is a fresh name
denoting a distinguished output channel.

We translate an HGV sequent
$\hgv{\cseq{\ty{\hyper{G}}\hypersep\ty{\Gamma}}{\conf{C}}{T}}$ as
$\hcp{\seq{\gvcpcnf{\conf{C}}{r}}{\ty{\gvcpdown{\hyper{G}}\hypersep\gvcpdown{\Gamma}},\tmty{r}{\co{\gvcpdown{T}}}}}$,
where $\ty{\Gamma}$ is the type environment corresponding to the main
thread.
The translation of computations includes synchronisation action in
order to faithfully simulate a call-by-value reduction strategy.
The (term) translation of a value $\hgv{\tm{\gvcpcom{\ret{V}}{r}}}$
immediately pings the output channel $\tm{r}$ to announce that it is a
value. The translation of a let-binding
$\hgv{\tm{\gvcpcom{\letbind{w}{M}{N}}{r}}}$ first evaluates $\tm{M}$
to a value, which then pings the internal channel $\tm{x}/\tm{x'}$ and
unblocks the continuation $\hcp{\tm{\pong{x}{\gvcpcom{N}{r}}}}$.
The translations of main and child threads each make use of an
internal result channel.
The translation of a child thread consumes the yielded unit endpoint
once the child thread has terminated.
The translation of the main thread forwards the result value along the
external output channel once the main thread has terminated.

\begin{figure}[t]
  \small 

  \headersig{Translation on types}{$\ty{\gvcpdown{T}}$ and $\ty{\gvcpup{T}}$}
  \[
  \begin{array}{@{}c@{}}
    \begin{array}{@{}c@{\qquad}c@{\qquad}c@{}}
      \begin{array}[t]{@{}l@{~}c@{~}l@{}}
        \hgv{\ty{\gvcpdown{\tysend{T}{S}}}}
        & = & \hcp{\ty{\tytens{\co{\gvcpdown{T}}}{\gvcpdown{S}}}} \\
        \hgv{\ty{\gvcpdown{\tyrecv{T}{S}}}}
        & = & \hcp{\ty{\typarr{\co{\gvcpdown{T}}}{\gvcpdown{S}}}} \\
      \end{array}
      &
      \begin{array}[t]{@{}l@{~}c@{~}l@{}}
        \hgv{\ty{\gvcpdown{\tyends}}}
        & = & \hcp{\ty{\tyone}} \\
        \hgv{\ty{\gvcpdown{\tyendr}}}
        & = & \hcp{\ty{\tybot}} \\
      \end{array}
      &
      \begin{array}[t]{@{}l@{~}c@{~}l@{}}
        \hgv{\ty{\gvcpdown{T}}}
        & = & \hcp{\ty{\co{\gvcpup{T}}}},\\
        &   & \text{if $\ty{T}$ is not a session type}
      \end{array}
    \end{array}
    \bigskip
    \\
    \begin{array}{@{}c@{\qquad}c@{\qquad}c@{}}
      \begin{array}[t]{@{}l@{~}c@{~}l@{}}
        \hgv{\ty{\gvcpup{\typrod{T}{U}}}}
        & = & \hcp{\ty{\tytens{\gvcpup{T}}{\gvcpup{U}}}} \\
        \hgv{\ty{\gvcpup{\tysum{T}{U}}}}
        & = & \hcp{\ty{\typlus{\gvcpup{T}}{\gvcpup{U}}}} \\
      \end{array}
      &
      \begin{array}[t]{@{}l@{~}c@{~}l@{}}
        \hgv{\ty{\gvcpup{\tyunit}}}
        & = & \hcp{\ty{\tyone}} \\
        \hgv{\ty{\gvcpup{\tyvoid}}}
        & = & \hcp{\ty{\tynil}} \\
      \end{array}
      &
      \begin{array}[t]{@{}l@{~}c@{~}l@{}}
        \hgv{\ty{\gvcpup{\tylolli{T}{U}}}}
        & = & \hcp{\ty{\typarr{\co{\gvcpup{T}}}{(\tytens{\tyone}{\gvcpup{U}})}}} \\
        \hgv{\ty{\gvcpup{S}}}
        & = & \hcp{\ty{\co{\gvcpdown{S}}}}
      \end{array}
    \end{array}
  \end{array}
  \]

  \headersig{Translation on configurations, terms, and values}{$\tm{\gvcpcnf{\tm{\conf{C}}}{\tm{r}}}$, $\tm{\gvcpcom{\tm{M}}{\tm{r}}}$, and $\tm{\gvcpval{\tm{V}}{\tm{r}}}$}
\[
\begin{array}{@{}c@{}}
\begin{array}{@{}l@{\qquad}l@{\qquad}l@{}}
      \begin{array}[t]{@{}l@{~}c@{~}l@{}}
        \hgv{\tm{\gvcpcnf{\child\;M}{r}}}
        & = & \hcp{\tm{\res{y}{y'}{(\ppar{\gvcpcom{M}{y}}{\pong{y'}{\close{y'}{\halt}}})}}}
        \\
        \hgv{\tm{\gvcpcnf{\main\;M}{r}}}
        & = & \hcp{\tm{\res{y}{y'}{(\ppar{\gvcpcom{M}{y}}{\pong{y'}{\link{y'}{r}}})}}}
        \\
\end{array}
&
      \begin{array}[t]{@{}l@{~}c@{~}l@{}}
        \hgv{\tm{\gvcpcnf{\res{x}{x'}{\conf{C}}}{r}}}
        & = & \hcp{\tm{\res{x}{x'}{\gvcpcnf{\conf{C}}{r}}}}
        \\
        \hgv{\tm{\gvcpcnf{\ppar{\,\conf{C}}{\conf{D}}}{r}}}
        & = & \hcp{\tm{\ppar{\gvcpcnf{\conf{C}}{r}}{\gvcpcnf{\conf{D}}{r}}}}
        \\
      \end{array}
&
      \begin{array}[t]{@{}l@{~}c@{~}l@{}}
        \hgv{\tm{\gvcpcnf{\linkconfig{z}{x}{y}}{r}}}
        & = & \hcp{\tm{\ping{z}{\wait{z}{\link{x}{y}}}}}
        \\
        \\
      \end{array}
  \end{array}
  \bigskip
  \\
\begin{array}{@{}l@{\qquad}l@{\qquad}l@{}}
    \begin{array}[t]{@{}l@{~}c@{~}l@{}}
            \hgv{\tm{\gvcpval{x}{r}}}
            & = & \hcp{\tm{{\link{r}{x}}}}
            \\
            \hgv{\tm{\gvcpval{\lambda x.M}{r}}}
            & = & \hcp{\tm{{\recv{r}{x}{\gvcpcom{M}{r}}}}}
            \\
    \end{array}
    &
    \begin{array}[t]{@{}l@{~}c@{~}l@{}}
            \hgv{\tm{\gvcpval{\unit}{r}}}
            & = & \hcp{\tm{{\close{r}{\halt}}}}
            \\
            \hgv{\tm{\gvcpval{\pair{V}{W}}{r}}}
            & = & \hcp{\tm{\send{r}{x}{(\ppar{\gvcpval{V}{x}}{\gvcpval{W}{r}})}}}
            \\
    \end{array}
    &
    \begin{array}[t]{@{}l@{~}c@{~}l@{}}
            \hgv{\tm{\gvcpval{\inl{V}}{r}}}
            & = & \hcp{\tm{\inl{r}{\gvcpval{V}{r}}}}
            \\
            \hgv{\tm{\gvcpval{\inr{V}}{r}}}
            & = & \hcp{\tm{\inr{r}{\gvcpval{V}{r}}}}
    \end{array}
  \end{array}
  \bigskip
  \\
\begin{array}[t]{@{}l@{~}c@{~}l@{}}
      \hgv{\tm{\gvcpcom{V\;W}{r}}}
      & = & \hcp{\tm{\res{x}{x'}{\res{y}{y'}{(\ppar
            {\usend{y}{x}{{\link{r}{y}}}}
            {\ppar
            {\gvcpval{V}{y'}}
            {\gvcpval{W}{x'}}
            }
            )}}}}
      \\
      \hgv{\tm{\gvcpcom{\letunit{V}{M}}{r}}}
      & = & \hcp{\tm{\res{x}{x'}{(\ppar
            {{\wait{x}{\gvcpcom{M}{r}}}}
            {\gvcpval{V}{x'}}
            )}}}
      \\
      \hgv{\tm{\gvcpcom{\letpair{x}{y}{V}{M}}{r}}}
      & = & \hcp{\tm{\res{y}{y'}{(\ppar
            {{\recv{y}{x}{\gvcpcom{M}{r}}}}
            {\gvcpval{V}{y'}}
            )}}}
      \\
      \hgv{\tm{\gvcpcom{\casesum{V}{x}{M}{y}{N}}{r}}}
      & = & \hcp{\tm{\res{x}{x'}{(\ppar
            {{\offer{x}{\gvcpcom{M}{r}}{\gvcpcom{\subst{N}{x}{y}}{r}}}}
            {\gvcpval{V}{x'}}
            )}}}
      \\
      \hgv{\tm{\gvcpcom{\absurd{V}}{r}}}
      & = & \hcp{\tm{\res{x}{x'}{(\ppar
            {{\absurd{x}}}
            {\gvcpval{V}{x'}}
            )}}}
      \\
      \hgv{\tm{\gvcpcom{\letbind{x}{M}{N}}{r}}}
      & = & \hcp{\tm{\res{x}{x'}{(\ppar
            {\pong{x}{\gvcpcom{N}{r}}}
            {\gvcpcom{M}{x'}}
            )}}}
      \\
      \hgv{\tm{\gvcpcom{\ret{V}}{r}}}
      & = & \hcp{\tm{\ping{r}{\gvcpval{V}{r}}}}
      \\
  \end{array}
  \bigskip
  \\
\begin{array}{@{}l@{~}c@{~}l@{}}
    \hgv{\tm{\gvcpval{\link}{r}}}
    & = & \hcp{\tm{\recv{r}{y}{\recv{y}{x}{\ping{r}{\wait{r}{\link{x}{y}}}}}}}
    \\
    \hgv{\tm{\gvcpval{\fork}{r}}}
    & = & \hcp{\tm{\res{y}{y'}{(\recv{r}{x}{\usend{y}{x}{\ping{r}{\link{r}{y}}}}
                                \parallel
                                \recv{y'}{x}{\usend{x}{y'}{\pong{x}{\close{x}{\halt}}}}})}}
\\
    \hgv{\tm{\gvcpval{\send}{r}}}
     & = & \hcp{\tm{\recv{r}{y}{\recv{y}{x}{\usend{y}{x}{\ping{r}{\link{r}{y}}}}}}}
    \\
    \hgv{\tm{\gvcpval{\recv}{r}}}
     & = & \hcp{\tm{\recv{r}{x}{\recv{x}{y}{\ping{r}{\usend{r}{y}{\link{r}{x}}}}}}}
    \\
    \hgv{\tm{\gvcpval{\wait}{r}}}
    & = & \hcp{\tm{\recv{r}{x}{\wait{x}{\ping{r}{\close{r}{\halt}}}}}}
    \\
  \end{array}
\end{array}
\]

  \caption{Translation from \fgHGV to HCP.}
  \label{fig:fghgv-to-hcp}
\end{figure}

There are two changes with respect to the translation of our earlier
paper~\cite{FKDLM21}.
First, in the earlier work the translation of the main thread output
directly to the external output channel instead of forwarding via an
intermediary as in the current translation. This change is purely
aesthetic.
Second, in the earlier work the translation of $\tm{\hgv\fork}$ was
not sufficiently concurrent. Correspondingly there was an error in the
case of the operational correspondence proof which is fixed in the
current paper.

\begin{restatable}[Type Preservation]{lem}{lemhgvtohcptyping}
  \label{lem:hgv-to-hcp-typing}
  ~
  \begin{enumerate}
    \item If $\hgv{\tseq{\ty \Gamma} V T}$, then $\hcp{\seq{\gvcpval V r}{\ty{\gvcpdown\Gamma, \tmty r {\co{\gvcpdown T}}}}}$.
    \item If $\hgv{\tseq{\ty \Gamma} M T}$, then $\hcp{\seq{\gvcpcom M r}{\ty{\gvcpdown\Gamma, \tmty r {\tytens \tyone {\co{\gvcpdown T}}}}}}$.
    \item If $\hgv{\cseq{\ty {\hyper G \hypersep \Gamma}} {\conf C} T}$, where $\Gamma$ is the type environment for the main thread in $\conf C$, then $\hcp{\seq{\gvcpcnf{\conf C} r}{\ty{\gvcpdown{\hyper G} \hypersep \gvcpdown\Gamma, \tmty r {\co{\gvcpdown T}}}}}$.
  \end{enumerate}
\end{restatable}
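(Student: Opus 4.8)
The plan is to prove the three statements simultaneously by mutual induction on the HGV typing derivations: part~(1) on $\hgv{\tseq{\ty\Gamma}{V}{T}}$, part~(2) on $\hgv{\tseq{\ty\Gamma}{M}{T}}$, and part~(3) on $\hgv{\cseq{\ty{\hyper{G} \hypersep \Gamma}}{\conf{C}}{T}}$. The mutual dependency mirrors the syntax: values embed terms (through $\hgv{\tm{\lambda x.M}}$), terms embed values and subterms, and configurations embed terms through \rulename{TC-Main}, \rulename{TC-Child}, and \rulename{TC-Link}. Before the induction I would record the two duality identities of \Cref{fig:fghgv-to-hcp}, namely $\hgv{\ty{\gvcpdown{T}}} = \hcp{\ty{\co{\gvcpup{T}}}}$ for non-session $\ty{T}$ and $\hcp{\ty{\co{\gvcpdown{S}}}} = \hgv{\ty{\gvcpup{S}}}$ for session types, so that dualising any translated type becomes mechanical; these are exactly what \rulename{TP-New} will demand when closing internal channels.

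For part~(1), each value constructor is matched against a single HCP rule applied to the translations of its immediate subvalues or subterms. For example, $\hgv{\tm{\gvcpval{x}{r}}} = \hcp{\tm{\link{r}{x}}}$ type-checks by \rulename{TP-Link} with $\ty{A} = \hcp{\ty{\co{\gvcpdown{T}}}}$; $\hgv{\tm{\gvcpval{\lambda x.M}{r}}} = \hcp{\tm{\recv{r}{x}{\gvcpcom{M}{r}}}}$ follows from IH~(2) on $\hgv{\tm{M}}$ and \rulename{TP-Recv} after unfolding $\hgv{\ty{\gvcpup{\tylolli{T}{U}}}}$; the injections use \rulename{TP-Select-Inl}/\rulename{TP-Select-Inr}; and $\hgv{\tm{\unit}}$ uses \rulename{TP-Close}. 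The representative case is $\hgv{\tm{\gvcpval{\pair{V}{W}}{r}}}$: applying IH~(1) to $\hgv{\tm{V}}$ (on output channel $\tm{x}$) and to $\hgv{\tm{W}}$ (on $\tm{r}$), composing with \rulename{TP-Par}, and finishing with \rulename{TP-Send}, where the hyper-environment separator introduced by \rulename{TP-Par} supplies precisely the split of endpoints that \rulename{TP-Send} requires.

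For part~(2) the crucial pattern is the synchronisation discipline. The return term $\hgv{\tm{\gvcpcom{\ret{V}}{r}}} = \hcp{\tm{\ping{r}{\gvcpval{V}{r}}}}$ uses IH~(1) and reshapes $\hcp{\ty{\co{\gvcpdown{T}}}}$ into $\hcp{\ty{\tytens{\tyone}{\co{\gvcpdown{T}}}}}$, matching the computation typing. The let-binding $\hgv{\tm{\gvcpcom{\letbind{x}{M}{N}}{r}}}$ is the pivotal case: IH~(2) types $\hgv{\tm{\gvcpcom{M}{x'}}}$ with $\hcp{\tmty{x'}{\tytens{\tyone}{\co{\gvcpdown{T}}}}}$, IH~(2) types $\hgv{\tm{\gvcpcom{N}{r}}}$ with the let-variable at $\hcp{\tmty{x}{\gvcpdown{T}}}$, and the synchronisation prefix in $\hcp{\tm{\pong{x}{\gvcpcom{N}{r}}}}$ (which unfolds to $\hcp{\tm{\recv{x}{z}{\wait{z}{\cdots}}}}$) reshapes $\hcp{\ty{\gvcpdown{T}}}$ into $\hcp{\ty{\typarr{\tybot}{\gvcpdown{T}}}}$ on the \emph{same} endpoint $\tm{x}$, so that \rulename{TP-Par} followed by \rulename{TP-New} closes $\tm{x}/\tm{x'}$ exactly because $\hcp{\ty{\typarr{\tybot}{\gvcpdown{T}}}}$ and $\hcp{\ty{\tytens{\tyone}{\co{\gvcpdown{T}}}}}$ are dual. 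The remaining eliminators (unit- and pair-elimination, $\hgv{\tm{\casesum{V}{x}{M}{y}{N}}}$, $\hgv{\tm{\absurd{V}}}$, and application $\hgv{\tm{V\;W}}$) follow the same recipe. For part~(3) I would strengthen the statement to a case analysis on $\ty{R}$: when $\ty{R} = \hgv{\ty{\tychild}}$ there is no main thread and $\tm{r}$ does not occur in $\hgv{\tm{\gvcpcnf{\conf{C}}{r}}}$, whereas when $\ty{R} = \hgv{\ty{\tymain{T}}}$ a unique main thread contributes $\hcp{\tmty{r}{\co{\gvcpdown{T}}}}$. Then \rulename{TC-Main} and \rulename{TC-Child} reduce to part~(2), with the auxiliary pong wrappers supplying the dual endpoint for \rulename{TP-New}; \rulename{TC-Link} type-checks the fixed process $\hcp{\tm{\ping{z}{\wait{z}{\link{x}{y}}}}}$ directly; and \rulename{TC-New}, \rulename{TC-Par} translate homomorphically, discharged by \rulename{TP-New}, \rulename{TP-Par}, and the IH, the absence of $\tm{r}$ on a child side being exactly what avoids a clash.

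I expect the main obstacle to be the communication primitives treated as constants in part~(1)---$\hgv{\tm{\link}}$, $\hgv{\tm{\fork}}$, $\hgv{\tm{\send}}$, $\hgv{\tm{\recv}}$, and $\hgv{\tm{\wait}}$---whose translations are bespoke multi-prefix processes rather than compositional images of subterms, so each must be checked by hand to inhabit the dual of the down-translation of its type schema, unfolding both the $\hgv{\ty{\gvcpdown{\cdot}}}$ and $\hgv{\ty{\gvcpup{\cdot}}}$ images of the function and session constructors and tracking every $\hcp{\ty{\tyone}}$/$\hcp{\ty{\tybot}}$ synchronisation. The delicate one is $\hgv{\tm{\fork}}$: its translation spawns a parallel pair under a name restriction and must be shown to satisfy simultaneously the duality conditions of \rulename{TP-New} and the separation conditions of \rulename{TP-Par} while exposing $\tm{r}$ at the translated function type of its schema. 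This is precisely the point where the earlier version's translation was insufficiently concurrent, so getting this duality and separation bookkeeping right is the crux of the argument.
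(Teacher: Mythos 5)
Your proposal matches the paper's proof: a mutual induction over the three judgements, checking each translation clause against the corresponding HCP typing rule, with the communication constants (and $\fork$ in particular) verified by hand against the dual of the down-translation of their type schemas, and part (3) discharged by straightforward induction once parts (1) and (2) are in place. The details you highlight --- the duality identities, the $\tyone$/$\tybot$ synchronisation bookkeeping in the let-binding case, and the parallel-plus-restriction structure of the $\fork$ case --- are exactly the points the paper's derivations work through.
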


\begin{restatable}[Substitution]{lem}{corhgvtohcpsubstitution}
  \label{cor:hgv-to-hcp-substitution}
  If $\tm{M}$ is a well-typed term with $\tm{w}\in\fv(\tm{M})$, and $\tm{V}$ is a well-typed value, then $\hcp{\tm{\res{w}{w'}{(\ppar{\gvcpcom{M}{r}}{\gvcpval{V}{w'}})}}\bis_\ta\tm{\gvcpcom{\subst{M}{V}{w}}{r}}}$.
\end{restatable}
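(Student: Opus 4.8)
The plan is to prove a slightly stronger pair of claims by simultaneous induction on the typing derivations of terms and values, since the term translation in \Cref{fig:fghgv-to-hcp} is defined mutually with the value translation and $\tm w$ may occur inside either. Concretely, I prove: for every well-typed term $\tm M$ with $\tm w \in \fv(\tm M)$, $\hcp{\res{w}{w'}{(\ppar{\gvcpcom{M}{r}}{\gvcpval{V}{w'}})} \bis_\ta \gvcpcom{\subst{M}{V}{w}}{r}}$; and for every well-typed value $\tm U$ with $\tm w \in \fv(\tm U)$, $\hcp{\res{w}{w'}{(\ppar{\gvcpval{U}{r}}{\gvcpval{V}{w'}})} \bis_\ta \gvcpval{\subst{U}{V}{w}}{r}}$. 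Linearity is the key structural fact: in each typing rule $\tm w$ occurs free in exactly one immediate subterm, which pinpoints where the channel $\tm w/\tm{w'}$ must be routed. Throughout I rely on the behavioural theory: that $\bis_\ta$ is a congruence for name restriction, parallel composition, and prefixing, and that $\equiv\,\subset\,\sbis\,\subset\,\bis_\ta$ (\Cref{cor:hcp-sbis-to-bis}), so all structural-congruence rewrites (scope extrusion, commuting restrictions) are available for free up to $\bis_\ta$. Typeability of every process involved follows from \Cref{lem:hgv-to-hcp-typing}.

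The base case is the value variable $\tm U = \tm w$, where $\hgv{\tm{\gvcpval{w}{r}}} = \hcp{\link{r}{w}}$. Here the link fires on the restricted endpoint $\tm w$ as a single $\alpha$-transition: by \textsc{Act-Link}, \textsc{Str-Par}, and \textsc{Alp-Link} we get $\hcp{\res{w}{w'}{(\ppar{\link{r}{w}}{\gvcpval{V}{w'}})}} \ato \hcp{\gvcpval{V}{r}}$ (the forwarding being realised as the renaming of $\tm{w'}$ to $\tm r$). Since $\hgv{\tm{\gvcpval{\subst{w}{V}{w}}{r}}} = \hcp{\gvcpval{V}{r}}$ and a lone $\ato$ step is absorbed by $\alpha$-saturation, the two sides are $\bis_\ta$-related. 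This case is exactly why the statement uses $\bis_\ta$ rather than $\bis$: substitution is implemented purely as link/renaming ($\alpha$) steps, never genuine communication ($\beta$) steps.

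For the inductive cases I use scope extrusion to relocate $\hcp{\res{w}{w'}{(\ppar{-}{\gvcpval{V}{w'}})}}$ beside the unique subterm containing $\tm w$, and then apply the induction hypothesis under a surrounding process context, closing by congruence of $\bis_\ta$. When the relevant subterm sits in a parallel component (e.g.\ the $\tm{M_1}$ branch of $\hgv{\tm{\letbind{x}{M_1}{M_2}}}$, whose translation places $\hgv{\tm{\gvcpcom{M_1}{x'}}}$ in parallel under $\hcp{\res{x}{x'}{-}}$), the relocation is a pure $\equiv$-rewrite using $\tm w,\tm{w'}\notin\fv$ of the other component, followed by the IH and congruence. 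When the subterm sits under a prefix on a name other than $\tm w,\tm{w'}$ — the head action on $\tm r$ in every value translation, and the $\pong{x}{}$ guard of the let-binding when $\tm w\in\fv(\tm{M_2})$ — I appeal to a commuting lemma $\hcp{\res{w}{w'}{(\ppar{\pi.P}{Q})} \bis_\ta \pi.\res{w}{w'}{(\ppar{P}{Q})}}$ for $\pi$ not mentioning $\tm w,\tm{w'}$ and $Q = \hcp{\gvcpval{V}{w'}}$, then apply the IH inside the prefix and conclude by congruence.

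The main obstacle is establishing this commuting lemma, which rests on a \emph{freezing} observation: under the restriction $\hcp{\res{w}{w'}{-}}$, the value translation $\hcp{\gvcpval{V}{w'}}$ has no enabled transition other than $\alpha$-transitions. Indeed, every head action it offers is on $\tm{w'}$, and such labels cannot escape the binder (\textsc{Str-Res} requires $\tm{w'}\notin\cn(\ell)$), while the dual $\tm w$ is buried under prefixes deep inside $P$, so no premature $\beta$-synchronisation is possible; the only exception is when $\tm V$ is itself a variable, whose link is an $\alpha$-action. Consequently, in $\hcp{\res{w}{w'}{(\ppar{\pi.P}{Q})}}$ the sole non-$\alpha$ transition is $\pi$ itself, which both sides match identically, after which they reach the same residual; the frozen $Q$'s $\alpha$-steps are absorbed on both sides. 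Verifying this gives a genuine $\alpha$-bisimulation and simultaneously confirms that no case forces an unmatched $\beta$-step (the real $\pong{x}{}/\ping{x}{}$ synchronisations of the let-case appear symmetrically on both sides and are matched by congruence), so the relation stays within $\bis_\ta$ as required.
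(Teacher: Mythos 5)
Your proof is correct, and it rests on the same two ingredients as the paper's: the occurrence of $\tm{w}$ inside $\hcp{\tm{\gvcpcom{M}{r}}}$ translates to a link that fires as a single $\alpha$-transition realising the renaming, and $\hcp{\tm{\gvcpval{V}{w'}}}$ is inert under the restriction because every initial action it offers mentions $\tm{w'}$. The difference is one of factoring. The paper proves a single process-level lemma (\Cref{lem:hgv-to-hcp-substitution}): for any process-context $\tm{\plug{P}{\;}}$ avoiding $\tm{z},\tm{w},\tm{w'}$ and any $\tm{Q}$ \emph{blocked on} $\tm{w'}$, we have $\hcp{\tm{\res{w}{w'}{(\ppar{\plug{P}{\link{z}{w}}}{Q})}}\bis_\ta\tm{\plug{P}{\subst{Q}{z}{w'}}}}$, proved by induction on $\tm{\plug{P}{\;}}$ (hole, restriction, parallel, prefix, offer); the present lemma is then immediate, since $\hcp{\tm{\gvcpcom{M}{r}}}$ is such a context around the link translating the occurrence of $\tm{w}$, and every value translation is blocked on its result channel. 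Your mutual induction on term and value typing derivations walks the same structure from the source side: your prefix-commuting lemma is exactly the prefix and offer cases of the context induction, and your ``freezing'' observation is exactly the blockedness hypothesis. The paper's factoring performs the delicate transition analysis once, at the process level, without tracking the term/value mutual recursion or appealing to congruence of $\alpha$-bisimilarity as a separately established fact; your formulation makes the value-level statement explicit rather than recovering it from the shape of the translation. One caution for your route: when $\tm{V}$ is a variable, $\hcp{\tm{\gvcpval{V}{w'}}}$ is a bare link whose $\alpha$-transition \emph{is} enabled under the restriction even while the other parallel component is still guarded, so the candidate relation witnessing your commuting lemma must be closed under that link having already fired; saturation absorbs the step, as you observe, but the relation is not simply ``both sides with the prefix commuted.''
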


\begin{restatable}[Operational Correspondence]{thm}{thmhgvtohcpocconfs}
  \label{thm:hgv-to-hcp-oc-confs}
  Suppose $\hgv{\tm{\conf{C}}}$ is a well-typed configuration.
  \begin{enumerate}
    \item (Preservation of reductions)
If $\hgv{\tm{\conf{C}\cred\tm{\conf{C}'}}}$, then there exists a $\tm{P}$ such that
          $\hcp{\tm{\gvcpcnf{\conf{C}}{r}}\mathbin{\slto[\alpha]{\beta^+}} \tm{P}}$ and
          $\hcp{\tm{P} \bis_{\ta} \tm{\gvcpcnf{\conf{C}'}{r}}}$; and
\item (Reflection of transitions)\hfill
          \begin{itemize}
            \item
                  if $\hcp{\tm{\gvcpcnf{\conf{C}}{r}}\ato\tm{P}}$, then
                  $\hcp{\tm{P}\bis_{\ta}\tm{\gvcpcnf{\conf{C}}{r}}}$; and
\item
                  if $\hcp{\tm{\gvcpcnf{\conf{C}}{r}}\bto\tm{P}}$, then
                  there exists a $\hgv{\tm{\conf{C}'}}$ and a $\hcp{\tm{P'}}$
                  such that $\hgv{\tm{\conf{C}}\cred\tm{\conf{C}'}}$ and
                  $\hcp{\tm{P} \mathbin{\slto[\alpha]{\beta^*}} \tm{P'}}$ and
                  $\hcp{\tm{P'}\bis_{\ta}\tm{\gvcpcnf{\conf{C}'}{r}}}$.
                  Furthermore, $\hgv{\tm{\conf{C}'}}$ is unique up to structural congruence.
          \end{itemize}
  \end{enumerate}
\end{restatable}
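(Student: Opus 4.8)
The plan is to establish both directions by matching HGV configuration reduction against HCP $\beta$-transitions, while treating HCP $\alpha$-transitions (link evaluations, which are pure renamings) as behaviourally silent and absorbed by $\hcp{\bis_\ta}$. Three earlier results do most of the heavy lifting: the compositionality of the translation (it commutes with $\hcp{\tm{\res{x}{x'}{\cdot}}}$ and $\parallel$), the inclusions $\hcp{{\equiv}\subset{\sbis}\subset{\bis_\ta}}$ from \cref{cor:hcp-sbis-to-bis}, and the Substitution Lemma (\cref{cor:hgv-to-hcp-substitution}), which is what allows a chain of administrative synchronisations in the image to be collapsed to a single source-level substitution. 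Type preservation (\cref{lem:hgv-to-hcp-typing}) is used throughout to constrain the shape of translated processes.

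For part~(1) I would induct on the derivation of $\hgv{\tm{\conf{C}}\cred\tm{\conf{C}'}}$. The structural rules \rulename{E-Res} and \rulename{E-Par} follow from compositionality and the fact that $\hcp{\slto[\alpha]{\beta^+}}$ and $\hcp{\bis_\ta}$ are preserved by $\nu$ and $\parallel$; \rulename{E-Equiv} follows from $\hcp{{\equiv}\subset{\bis_\ta}}$. The substance is in the base cases. For a term rule fired under \rulename{E-Lift-M} (e.g.\ \rulename{E-Lam}, \rulename{E-Pair}, \rulename{E-Inl}), I would unfold the translation of the redex, fire the single $\beta$-synchronisation that models the elimination (typically \rulename{Bet-Send}, or \rulename{Bet-Close}/\rulename{Bet-Inl}) --- possibly after the administrative ping that announces a value --- and then apply \cref{cor:hgv-to-hcp-substitution} to identify the residual with $\hcp{\tm{\gvcpcom{\subst{M}{V}{x}}{r}}}$ up to $\hcp{\bis_\ta}$. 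For the genuine communication rules \rulename{E-Comm-Send}, \rulename{E-Comm-Close} and \rulename{E-Comm-Link}, I would trace the two participating thread-translations under their common $\nu$-binder and exhibit the matching $\beta$-synchronisation (and, for forwarding, the intervening \rulename{Alp-Link} $\alpha$-step); for \rulename{E-Reify-Fork} and \rulename{E-Reify-Link} I would unfold $\hgv{\tm{\gvcpval{\fork}{r}}}$ and $\hgv{\tm{\gvcpval{\link}{r}}}$ and verify that the emitted process structure agrees with $\hcp{\tm{\gvcpcnf{\conf{C}'}{r}}}$ after synchronisation. Every case produces at least one $\beta$-step, yielding $\hcp{\slto[\alpha]{\beta^+}}$.

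For part~(2) the two sub-claims are proved separately. The $\alpha$-case states that a link-renaming leaves the image unchanged up to $\hcp{\bis_\ta}$; I would isolate this as a lemma asserting that, for $\hcp{\tm{P}}$ in the image of the translation, $\hcp{\tm{P}\ato\tm{P'}}$ implies $\hcp{\tm{P}\bis_\ta\tm{P'}}$. Because weak bisimilarity does not absorb arbitrary silent steps, this needs the specific structure of translated processes: I would show that $\alpha$-steps are confluent and never disable a later transition, so that, restricted to the image, $\hcp{{\ato}\subseteq{\bis_\ta}}$. For the $\beta$-case I would case-analyse on which prefix pair synchronises in $\hcp{\tm{\gvcpcnf{\conf{C}}{r}}}$; writing $\hgv{\tm{\conf{C}}}$ in tree canonical form (\cref{lem:hgv:tree-canonical-forms}) and appealing to \cref{lem:hgv-to-hcp-typing} pins down the possibilities. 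Each firing $\beta$-step is the image of a unique term elimination or configuration communication, which determines $\hgv{\tm{\conf{C}}\cred\tm{\conf{C}'}}$; the Substitution Lemma then discharges the remaining administrative synchronisations, so that after a $\beta^*$ tail the process reaches some $\hcp{\tm{P'}\bis_\ta\tm{\gvcpcnf{\conf{C}'}{r}}}$. Uniqueness of $\conf{C}'$ up to $\equiv$ follows from the Diamond property of HGV together with the fact that the translation reflects the position of the active redex.

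The main obstacle is concentrated in the reflection direction and is twofold. First, the $\alpha$-inessentiality lemma is delicate precisely because absorbing silent steps is unsound for weak bisimilarity in general; making it go through is exactly what motivates stratifying HCP's internal actions into $\alpha$ and $\beta$, and it is where most of the care is needed. Second, the $\hgv{\tm{\fork}}$ case: as the authors remark, an insufficiently concurrent translation of $\fork$ breaks the correspondence, so I must confirm that the revised $\hgv{\tm{\gvcpval{\fork}{r}}}$ lets the child thread's synchronisations interleave freely with the parent's, guaranteeing that every $\beta$-transition of the image is reflected by a reduction of $\conf{C}$ and that no spurious transitions are introduced.
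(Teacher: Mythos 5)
Your proposal matches the paper's proof in both structure and substance: part~(1) is by induction on the derivation of $\hgv{\tm{\conf{C}}\cred\tm{\conf{C}'}}$, unfolding the translation of each redex (via a lemma decomposing $\tm{\gvcpcnf{\plug{F}{M}}{r}}$ into a process context plugged with $\tm{\gvcpcom{M}{v}}$), firing the matching $\beta$-synchronisations, and collapsing the administrative residue with the Substitution Lemma and ${\equiv}\subset{\approx_{\alpha}}$, while part~(2) is by case analysis on the well-typed synchronisations, with uniqueness from determinism. The only presentational differences are that the paper factors the term-level cases into a separate operational-correspondence lemma for terms invoked at \rulename{E-Lift-M}, and dismisses $\alpha$-reflection as immediate from the definition of $\approx_{\alpha}$ where you propose a (more cautious, but compatible) confluence argument for processes in the image of the translation.
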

The proof is in Appendix~\ref{appendix:hcp-to-hgv}.
One might strive for a tighter operational correspondence here, but
our current translation generates multiple administrative
$\beta$-transitions.
The only term reduction that translates to multiple
$\beta$-transitions is the one for let-bindings. This is because we
choose to encode synchronisation using two $\beta$-transitions. We
could adjust the accounting here by treating synchronisation as a
single $\beta$-transition or its own special kind of administrative
transition.
Many more administrative reductions arise from the configuration
translation. These are due to a combination of synchronisations and
also the fact that we use constants along with pairs and application
for our communication primitives instead of building-in fully-applied
communication primitives.

\paragraph{Translating HCP to HGV}\label{sec:hcp-to-hgv}
We cannot translate HCP processes to HGV terms directly: HGV's term
language only supports $\hgv{\tm{\fork}}$
(see~\Cref{sec:hgv-plus} for further discussion),
so there is no way to translate an individual
name restriction or parallel composition. However, we can still translate HCP
into HGV via the composition of known translations.

\begin{description}
  \item[HCP into CP]
    We must first reunite each
    parallel composition with its corresponding name restriction, \ie,
    translate to CP using the \emph{disentanglement}
    translation shown by~Kokke~\emph{et al.}~\cite[Lemma 4.7]{kokkemp19:tlla}.
    The result is a collection of independent CP processes.
  \item[CP into GV]
    Next, we can translate each CP process into a GV configuration using (a
    variant of) Lindley and Morris' translation~\cite[Figure 8]{lindleym15:semantics}.
  \item[GV into HGV]
    Finally, we can use our embedding of GV into HGV~(\Cref{thm:gv-in-hgv})
    to obtain a collection of well-typed HGV configurations, which can be
    composed using \textsc{TC-Par} to result in a single well-typed HGV
    configuration.
\end{description}

The translation from HCP into CP and the embedding of GV into HGV preserve and
reflect reduction. However, as previously mentioned, Lindley and Morris's original translation from CP to
GV preserves but does \emph{not} reflect reduction due to an asynchronous encoding of
choice.
By adapting their translation to use a synchronous encoding of choice
(\Cref{sec:hgv-choice}), we obtain a translation from CP to GV that both
preserves and reflects reduction.
Thus, composing all three translations
together we obtain a translation from HCP to HGV that preserves and reflects
reduction.

 }  
{\section{Extensions}\label{sec:extensions}
\usingnamespace{hgv}

In this section, we outline three extensions to HGV that exploit
generalising the tree structure of processes to a forest structure.
These extensions are of particular interest since
HGV already supports a core aspect of forest structure, enabling its
full utilisation merely through the addition of a structural rule.
In contrast, to extend GV with forest structure one must distinguish
two distinct introduction rules for parallel
composition~\cite{lindleym15:semantics, fowler19:thesis}.
Other extensions to GV such as shared
channels~\cite{lindleym15:semantics}, polymorphism~\cite{LindleyM17},
and recursive session types~\cite{lindleym16:bananas} adapt to HGV
almost unchanged.

\paragraph{From trees to forests}
The \LabTirName{TC-Par} rule allows two processes to be composed in parallel
if they are typeable under separate hyper-environments. In a closed program,
hyper-environment separators are introduced by \LabTirName{TC-Res}, meaning that
each process must be connected by a channel.

The following \textsc{TC-Mix} rule allows two type environments $\ty{\Gamma_1},
\ty{\Gamma_2}$ to be
split by a hyper-environment separator \emph{without} a channel connecting them,
and is inspired by Girard's \mix rule~\cite{girard87:ll}; in the concurrent
setting, \mix can be interpreted as concurrency \emph{without} communication~
\cite{lindleym15:semantics,atkeylm16}. \mixrule admits a much simpler treatment
of $\tm{\link}$ and provides a crucial ingredient for handling exceptional behaviour.

{
  \small
  \begin{mathpar}
    \inferrule[TC-Mix]{
      \cseq{\ty{\hyperg} \hypersep \ty{\Gamma_1} \hypersep \ty{\Gamma_2}}{\config{C}}{R}
    }{\cseq{\ty{\hyperg} \hypersep \ty{\Gamma_1} ,         \ty{\Gamma_2}}{\config{C}}{R}}
  \end{mathpar}
}

Atkey \emph{et al.}~\cite{atkeylm16} show that conflating the $\hcp{\ty{\tyone}}$
and $\hcp{\ty{\tybot}}$
types in CP (which correspond respectively to the $\ty{\tyends}$ and
$\ty{\tyendr}$ types
in GV) is logically equivalent to adding the \textsc{Mix} rule and a
\textsc{0-Mix} rule (used to type an empty process).
It follows that in the presence of \textsc{TC-Mix}, we use self-dual
$\ty{\tyend}$ type; in the GV setting, by using a self-dual $\ty{\tyend}$ type, we
decouple closing a channel from process termination. We therefore refine the
\LabTirName{TC-Child} rule and the type schema for $\tm{\fork}$ to ensure that
each child thread returns the unit value, and replace the $\tm{\wait}$ constant with
a $\tm{\close}$ constant which eliminates an endpoint of type $\ty{\tyend}$.

{\small
\begin{mathpar}
    \tm{\fork} : \ty{\tylolli{(\tylolli{S}{\tyunit})}{\co{S}}}

    \tm{\close} : \ty{\tylolli{\tyend}{\tyunit}}

  \inferrule
  [TC-Child]
  { \tseq{\ty{\Gamma}}{M}{\ty{\tyunit}} }
  { \cseq{\ty{\Gamma}}{\child M }{\ty{\tychild}} }

  \inferrule
  [E-Close]
  {}
  { \tm{\res{x}{y}{(E[\close\;x] \parallel E'[\close\;y])}} \cred \tm{E[()]
  \parallel E'[()]} }
\end{mathpar}
}

Given \mixrule, we might expect a term-level construct $\tm{\spawn} :
\ty{\tylolli{(\tylolli{\tyunit}{\tyunit})}{\tyunit}}$ which spawns a parallel thread without a
connecting channel. We can encode such a construct using $\tm{\fork}$ and
$\tm{\close}$
(assuming fresh $\tm{x}$ and $\tm{y}$):
\[
    \tm{\spawn\;M} \defeq
{
\tm{\letbindtwo{x}{\fork (\lambda y . \close\;y;\; M)}} \; \tm{\close\;x}
}
\]
Assuming the encoded $\tm{\spawn}$ is running in a main thread, after two
reduction steps, we are left with the configuration:

{\small
\begin{mathpar}
    \inferrule*
    [right=TC-Mix]
    {
        \inferrule*
        [right=TC-Par]
        {
            \inferrule*
            [right=TC-Child]
            {
                \tseq{\cdot}{M}{\tyunit}
            }
            { \cseq{\cdot}{\child M}{\tychild} }
            \\
            \inferrule*
            [right=TC-Main]
            { \tseq{\cdot}{M}{\tyunit} }
            { \cseq{\cdot}{\main ()}{\tyunit} }
        }
        { \cseq{\cdot \hypersep \cdot}{\child M \parallel \main
        ()}{\tyunit} }
    }
    { \cseq{\cdot}{\child M \parallel \main ()}{\tyunit} }
\end{mathpar}
}
Note the essential use of \mixrule to insert a hyper-environment separator.

The addition of \LabTirName{TC-Mix} does not affect preservation or progress.
The result follows from routine adaptations of the proof
of~\Cref{thm:hgv-pres} and~\Cref{thm:hgv-global-progress}.

By relaxing the tree process structure restriction using \LabTirName{TC-Mix},
we can obtain a more efficient treatment of $\tm{\link}$, and can support the
treatment of exceptions advocated by Fowler~\emph{et al.}~\cite{fowlerlmd19:stwt}.

\paragraph{A simpler link}
The $\tm{\link\;(x, y)}$ construct forwards messages from $\tm{x}$ to $\tm{y}$
and vice-versa.
Consider threads $\tm{L} = \tm{\plug{F}{\tm{\link}\;(x, y)}}$, $\tm{M}$,
$\tm{N}$, where $\tm{L}$ connects to $\tm{M}$ by $\tm{x}$ and to $\tm{N}$ by
$\tm{y}$.
\begin{center}
\scalebox{0.75}{
\begin{minipage}{0.425\textwidth}
    \begin{center}
    \begin{tikzpicture}[>=stealth',auto,node distance=1.5cm,
                    semithick]
  \tikzstyle{every state}=[draw=black,text=black]

  \node[state]         (A)                    {$L$};
  \node[state]         (B) [below left of=A]  {$M$};
  \node[state]         (C) [below right of=A] {$N$};

  \path (A) edge              node[above, xshift=-0.45cm] {$\{x, x'\}$} (B)
    (A) edge              node[above, xshift=0.45cm] {$\{y, y'\}$} (C);
\end{tikzpicture}

 \end{center}
\end{minipage}
\hfill
\begin{minipage}{0.025\textwidth}
    \[
        \longrightarrow
    \]
\end{minipage}
\begin{minipage}{0.425\textwidth}
    \begin{center}
    \begin{tikzpicture}[>=stealth',auto,node distance=1.5cm,
                    semithick]
  \tikzstyle{every state}=[draw=black,text=black]

  \node[state]         (A)                    {$L$};
  \node[state]         (B) [below left of=A]  {$M$};
  \node[state]         (C) [below right of=A] {$N$};

  \path (B) edge              node[below] {$\{y, y'\}$} (C);
\end{tikzpicture}
 \end{center}
\end{minipage}
}
\end{center}
The result of link reduction has forest structure.
Well-typed closed programs in both GV and HGV must \emph{always}
maintain tree structure. Different versions of GV do so in various
unsatisfactory ways: one is pre-emptive
blocking~\cite{lindleym15:semantics}, which breaks confluence; another
is two-stage linking (\Cref{fig:hgv-reduction}), which defers
forwarding via a special link thread~\cite{lindleym16:bananas}.

{\paragraph{Pre-emptive blocking}}

Lindley and Morris~\cite{lindleym15:semantics} implement $\tm\link$ using the
following rule (modified here to use a double-binder formulation):
\[
    { \tm{\res{x}{x'}{(F[\link\;(x, y)] \parallel \config{F}'[M])}}
    \cred
    \tm{\res{x}{x'}{(F[x] \parallel \config{F}'[\wait\;x'; M\{ y /
    x'\}])}}
  }
  \quad
  { \text{where } \tm{x'} \in \fv(\tm{M}) }
\]
The first thread will eventually reduce to $\tm{\child x}$, at which point the second
thread will synchronise to eliminate $\tm{x}$ and $\tm{x'}$ and then evaluate
the continuation $\tm{M}$ with endpoint $\tm{y}$ substituted for $\tm{x'}$.
Unfortunately, this formulation of $\tm{\link}$ preemptively inhibits reduction in
the second thread, since the evaluation rule inserts a blocking $\tm{\wait}$.  The
resulting system does not satisfy the diamond property.

{\paragraph{Link threads}}

HGV uses the incarnation of $\tm\link$ advocated by Lindley and
Morris~\cite{lindleym16:bananas}, where linking is split into two
stages: the first generates a fresh pair of endpoints $\tm{z},
\tm{z'}$ and a link thread of the form $\tm{\linkconfig{z'}{x}{y}}$,
and returns $\tm{z}$ to the calling thread. Once the calling thread
has evaluated to a value (which must by typing be $\tm{z}$), then the
link substitution can take place.
This formulation recovers confluence, but we still lose a degree of concurrency:
communication on $\tm{y}$ is blocked until the linking thread has fully evaluated. In
an ideal implementation, the behaviour of the linking thread would be irrelevant
to the remainder of the configuration.  The operation requires additional
runtime syntax and thus complicates the metatheory.

{\paragraph{With \LabTirName{TC-Mix}}}
The above issues are symptomatic of the fact that the process structure after a
link takes place is a forest rather than a tree. However, with
\LabTirName{TC-Mix}, we can refine the type schema for $\tm\link$ to
$\ty{\tylolli{(\typrod{S}{\co{S}})}{\tyunit}}$ and we can use the following rule:
\[
    \tm{\res{x}{x'}{(F[\link\;(x, y)] \parallel \phi N)}}
    \cred
    \tm{F[()] \parallel \phi N \{ y / x' \}}
\]
This formulation enables immediate substitution, maximimising
concurrency.
A variant of HGV replacing
\textsc{E-Reify-Link} and \textsc{E-Comm-Link} with \textsc{E-Link-Mix} retains
HGV's metatheory.

\paragraph{Exceptions}
In order to support exceptions in the presence of linear
endpoints~\cite{fowlerlmd19:stwt, mostrousv18:affine} we must have a
way of \emph{cancelling} an endpoint.
Mostrous and Vasconcelos~\cite{mostrousv18:affine} describe a process calculus allowing the
\emph{explicit cancellation} of a channel endpoint, accounting for exceptional
scenarios such as a client disconnecting, or a thread encountering an
unrecoverable error. Attempting to communicate with a cancelled endpoint
raises an exception. Fowler \emph{et al.}~\cite{fowlerlmd19:stwt} extend these
ideas to the functional setting, introducing Exceptional GV (EGV).
EGV supports exceptional behaviour by adding three term-level constructs:

\begin{itemize}
    \item a new constant, $\tm{\cancel} : \ty{\tylolli{S}{\tyunit}}$, which
        allows us to discard an arbitrary session endpoint with type $\ty{S}$
    \item a
        construct $\tm{\raiseexn}$, which raises an exception
    \item an exception handling construct $\tm{\tryasinotherwise{L}{x}{M}{N}}$
        in the style of Benton \& Kennedy~\cite{bentonk01}, which attempts
        possibly-failing computation $\tm{L}$, binding the result to $\tm{x}$ in success
        continuation $\tm{M}$ if successful and evaluating $\tm{N}$ if an exception is raised
\end{itemize}

Cancellation generates a \emph{zapper thread} ($\zap{x}$) which severs a tree topology
into a forest as in the following example.

\begin{minipage}{0.4\textwidth}
\small
\centering
\[
  \tmcolor
  (\nu x x')(\nu y y')(\child x' \parallel \child y' \parallel \main
  (\cancel\;{x}; \wait\;{y}))
\]
\begin{tikzpicture}[sibling distance=5em, level distance=3em,
    every node/.style = {shape=rectangle, rounded corners,
    draw, align=center}]
    \node { $\tm{\main\:(\andthen{\cancel\;x}{\wait\;y})}$ }
      child { node { $\tm{\child\:x'}$ } }
      child { node { $\tm{\child\:y'}$ } };
\end{tikzpicture}
\end{minipage}
\begin{minipage}{0.1\textwidth}
    \[
        \cred
    \]
\end{minipage}
\begin{minipage}{0.45\textwidth}
\small
\[
  \tmcolor
  (\nu x x')(\nu y y')(\child x' \parallel \child y' \parallel
  \zap{x} \parallel \main ((); \wait\;{y})
\]
\centering
\begin{tikzpicture}[sibling distance=10em, node distance = 10em,
    level distance=3em,
    every node/.style = {shape=rectangle, rounded corners,
    draw, align=center}]
    \node (a) { $\tm{\zap{x}}$ }
      child { node { $\tm{\child\:x'}$ } };

    \node[right of=a] (b) { $\tm{\main\:(\andthen{()}{\wait\;y})}$ }
      child { node { $\tm{\child\:y'}$ } };
\end{tikzpicture}
\end{minipage}

\bigskip
\noindent
The configuration on the left has a tree process structure. However, after
reduction, we obtain the configuration on the right which is clearly a forest
and thus needs \LabTirName{TC-Mix} to be typeable.
We have described a \emph{synchronous} version of EGV, but extending our
treatment to asynchrony as in the work of~\cite{fowlerlmd19:stwt} is a routine
adaptation.

 }      
{\section{Can we separate fork?}\label{sec:hgv-plus}
\usingnamespace{hgv}

Hyper-environments allow us to cleanly separate name restriction and parallel
composition in process configurations. A natural follow-on question is whether
we could use the same technique at the level of \emph{terms} in order to
split $\hgv{\tm{\calcwd{fork}}}$ into separate constructs for creating a channel
and spawning a process. As tantalising a prospect this is, we argue that
the disadvantages outweigh the benefits.

Suppose we were to extend term typing to allow hyper-environments,
$\tseq{\ty{\hyper{G}}}{M}{T}$,\linebreak[4]and were to introduce terms
$\tm{\letnew{x}{x'}{M}}$ to create a channel and\linebreak[4]
$\tm{\letspawn{M}{N}}$ to spawn a thread, with the following typing rules:
\begin{mathpar}
  \inferrule[TM-LetNew]{
    \tseq{\ty{\hyper{G}}\hypersep\ty{\Gamma_1},\tmty{x}{S}\hypersep\ty{\Gamma_2},\tmty{x'}{\co{S}}}{M}{T}
  }{\tseq{\ty{\hyper{G}}\hypersep\ty{\Gamma_1},\ty{\Gamma_2}}{{\letnew{x}{x'}{M}}}{T}}

  \inferrule[TM-LetSpawn]{
    \tseq{\ty{\hyper{G}}}{M}{\tyends}
    \\
    \tseq{\ty{\hyper{H}}}{N}{T}
  }{\tseq{\ty{\hyper{G}}\hypersep\ty{\hyper{H}}}{{\letspawn{M}{N}}}{T}}
\end{mathpar}
These rather ad-hoc rules mirror hypersequent cut and hypersequent composition:\linebreak[4]\rulename{TM-LetNew} creates a new channel with endpoints $\tm{x}$ and $\tm{x'}$, and requires them to be used in separate threads in the continuation $\tm{M}$; and \rulename{TM-LetSpawn} takes a term $\tm{M}$, spawns it as a child thread, and continues as $\tm{N}$. Using these rules, we can encode $\tm{\fork\;M}$ as $\tm{\letnew{x}{x'}{\letspawn{(M\;x)}{x'}}}$.

Where else can we allow hyper-environments? In HCP, we have two options: (1) if
we restrict \emph{all logical rules} to singleton hypersequents and allow
hyper-environments only in the rules for name restriction and parallel
composition, we can use standard sequential
semantics~\cite{montesip18:ct,kokkemp19:tlla}; but (2) if we allow
hyper-environments in \emph{any logical rules}, we must use a semantics which
allows the corresponding actions to be delayed~\cite{kokkemp19:popl}.
This is unlikely to be a property of logical rules, but rather due to the fact
that the logical rules correspond exactly to the communication actions---which
block reduction---and the structural rules to name restriction and parallel
composition---which do not. Therefore, we expect the positions where
hypersequents can safely occur to follow from the structure of evaluation
contexts and whether any blocking term perform a communication action.

Regardless of our choice, we would be left with restrictions on the syntax of
terms that seem sensible in a process calculus, but are surprising in a $\lambda$-calculus. In the strictest variant, where we disallow hyper-environments in all but the above two rules, uses of \rulename{TM-LetNew} and \rulename{TM-LetSpawn} may be interleaved, but no other construct may appear between a \rulename{TM-LetNew} and its corresponding \rulename{TM-LetSpawn}. Consider the following terms, where $\tm{M}$ uses $\tm{x}$ and $\tm{y}$, and $\tm{N}$ uses $\tm{x'}$. Term (\ref{eq:yep}) may be well-typed, but (\ref{eq:nope}) is always ill-typed:
\begin{align}
  \label{eq:yep}
  \tm{\letbind{y}{1}{\letnew{x}{x'}{\letspawn{M}{N}}}}
  \\
  \label{eq:nope}
  \tm{\letnew{x}{x'}{\letbind{y}{1}{\letspawn{M}{N}}}}
\end{align}

Note that $\tm{\letnew{x}{x'}{M}}$ is a single, monolithic term
constructor---exactly what hypersequents were meant to prevent! However, if we
attempt to decompose these constructors, we find that these are not the regular
product and unit types.

 }        
{\section{Related work}\label{sec:related-work}

\paragraph{Session Types and Functional Languages}

Session types were originally introduced in the context of process
calculi~\cite{honda93,takeuchihk94,hondavk98}, however they have been vastly integrated also in functional calculi, a line of work initiated by Gay and collaborators~\cite{VascoRG04, VascoRG06, gayv10:last}.
This family of calculi builds session types directly into a lambda
calculus.
Toninho~\emph{et al.}~\cite{ToninhoCP13} take an alternative approach,
stratifying their system into a session-typed process calculus and a
separate functional calculus.
There are many pragmatic embeddings of session type systems in
existing functional programming
languages~\cite{neubauerthiemann04,pucellatov08,sackmaneisenbach08,imaiyuen10,orchardyoshida16,KD21}. A
detailed survey is given by Orchard and Yoshida
\cite{orchardyoshida17}.

\paragraph{Propositions as Sessions}
When Girard introduced linear logic~\cite{girard87:ll} he suggested a
connection with concurrency.
Abramsky~\cite{abramsky94} and Bellin and Scott~\cite{bellins94} give
embeddings of linear logic proofs in $\pi$-calculus, where cut
reduction is simulated by $\pi$-calculus reduction.
Both embeddings interpret tensor as parallel composition. The
correspondence with $\pi$-calculus is not tight in that these systems
allow independent prefixes to be reordered.
Caires and Pfenning~\cite{cairesp10:pidill} give a propositions as
types correspondence between dual intuitionistic linear logic and a
session-typed $\pi$-calculus called $\pi\text{DILL}$.
They interpret tensor as output. The correspondence with
$\pi$-calculus is tight in that independent prefixes may not be
reordered.
With CP~\cite{wadler14:sessions}, Wadler adapts $\pi\text{DILL}$ to
classical linear logic.
Aschieri and Genco~\cite{AschieriG20} give an interpretation of
classical multiplicative linear logic as concurrent functional
programs. They interpret $\parr$ as parallel composition, and the
connection to session types is less direct.

\paragraph{Priority-based Calculi}
Systems such as $\pi\text{DILL}$, CP, and GV (and indeed HCP and HGV)
ensure deadlock freedom by exploiting the type system to statically
impose a tree structure on the communication topology --- there can be
at most one communication channel between any two processes.
Another line of work explores a more liberal approach to deadlock
freedom enabling some cyclic communication topologies, where deadlock
freedom is guaranteed via \emph{priorities}, which impose an order on
actions. Priorites were introduced by Kobayashi and Padovani
\cite{kobayashi06,padovani14} and adopted by Dardha and Gay
\cite{dardhag18:pcp} in Priority CP (PCP), and Kokke and Dardha in
Priority GV (PGV) \cite{kokked21:pgv}.
Dezani~\etal~\cite{dezani-ciancagliniliguoro09progress} and Vieira and Vasconcelos~\cite{vieiravasconcelos13} use a partial order on channels to guarantee deadlock freedom, following Kobayashi's work \cite{kobayashi06}. Later on Dezani~\etal~\cite{dezani-ciancaglinimostrous06} guarantee progress by allowing only one active session at a time.
Carbone~\etal~\cite{CarboneDM14} use catalysers to show that progress is a compositional form of lock freedom for standard typed $\pi$ calculus. The authors describe how this technique can be used for session typed $\pi$-calculus by using the the encoding of session types to linear types~\cite{dardhags17:revisited,dardha14beat,Dardha16}. Dardha and Perez \cite{DardhaP22} compare the different calculi and techniques for deadlock freedom using CP and CLL as a yardstick and showing that the class of processes in CP is strictly included in the class of processes typed by Kobayashi \cite{kobayashi06}.

\paragraph{Graph-theoretic Approaches}
Carbone and Debois~\cite{carbonedebois10} define a graph-theoretic approach for
a session typed $\pi$-calculus. They define an explicit dependency graph defined
inductively on the structure of a process, in contrast to our approach of
inducing a graph on type environments given a co-name set. They ensure
progress for processes with acyclic graphs using a \emph{catalyser}, which provides a
missing counterpart to a process.
Jacobs~\etal~\cite{jacobsbk22} also define a graph-theoretic approach to
deadlock freedom, but differently from Carbone and Debois, their work is based
on separation logic.  A line of work on many-writer, single-reader process
calculi~\cite{padovani18,deLiguorop18} uses explicit dependency graphs to both
ensure resource separation and guarantee deadlock freedom, however it is not immediate how to apply this approach to functional calculi.

 }    
{\section{Conclusion and future work}\label{sec:conclusion}

HGV exploits hypersequents to resolve fundamental modularity issues
with GV.
As a consequence, we have obtained a tight operational correspondence
between HGV and HCP.
HGV is a modular and extensible core calculus for functional
programming with \emph{binary} session types.
In future we intend to apply hypersequents to \emph{multiparty} versions of
CP~\cite{CarboneLMSW16} and GV~\cite{JacobsBK22a} to exhibit a similarly strong
operational correspondence.
 }

\section*{Acknowledgements}
We are deeply grateful to Marco Peressotti for discovering an error in
our presentation of the LTS for HCP in the original conference paper,
and for suggesting a fix via the definition of saturated transitions.
We thank the CONCUR and LMCS reviewers for their helpful comments and
suggestions.
This work was supported by EPSRC grants EP/K034413/1 (ABCD),
EP/T014628/1 (STARDUST), EP/L01503X/1 (CDT PPar), ERC Consolidator
Grant 682315 (Skye), UKRI Future Leaders Fellowship MR/T043830/1
(EHOP), a UK Government ISCF Metrology Fellowship grant, EU HORIZON
2020 MSCA RISE project 778233 (BehAPI), and NSF grant CCF-2044815.

\bibliographystyle{alphaurl}
\bibliography{main}

\clearpage
\appendix
{

\etoctocstyle{1}{\huge Appendices}
\etocsettagdepth{mtchapter}{none}
\etocsettagdepth{mtappendix}{subsection}

\tableofcontents

\etocdepthtag.toc{mtappendix}
{\section{Omitted Proofs for \Cref{sec:hgv}: Hypersequent GV}\label{sec:appendix:hgv}

In this Appendix we give full definitions and proofs
for~\Cref{sec:hgv}.

\begingroup
\usingnamespace{hgv}

\subsection{Derived typing rules for syntactic sugar}

\begin{figure}[H]
\begin{mathpar}
  \inferrule*[lab=T-Seq]{
    \tseq{\ty{\Gamma}}{M}{\tyunit}
    \\
    \tseq{\ty{\Delta}}{N}{T}
  }{\tseq{\ty{\Gamma},\ty{\Delta}}{\andthen{M}{N}}{T}}

  \inferrule*[lab=T-LamUnit]{
    {\tseq{\ty{\Gamma}}{M}{T}}
  }{\tseq
    {\ty{\Gamma}}
    {\lambda\unit.M}
    {\tylolli{\tyunit}{T}}}

  \inferrule*[lab=T-LamPair]
  {\tseq
    {\ty{\Gamma},\tmty{x}{T},\tmty{y}{T'}}
    {M}
    {U}}
  {\tseq
    {\ty{\Gamma}}
    {\lambda\pair{x}{y}.M}
    {\tylolli{\typrod{T}{T'}}{U}}}

  \inferrule*[lab=T-Let]{
    \tseq{\ty{\Gamma}}{M}{T}
    \\
    \tseq{\ty{\Delta},\tmty{x}{T}}{N}{U}
  }{\tseq{\ty{\Gamma},\ty{\Delta}}{\letbind{x}{M}{N}}{U}}

  \inferrule*[lab=T-Select-Inl]{
  }{\tseq{\emptyenv}{\select{\labinl}}{\tylolli{\tyselect{S}{S'}}{S}}}

  \inferrule*[lab=T-Select-Inr]{
  }{\tseq{\emptyenv}{\select{\labinr}}{\tylolli{\tyselect{S}{S'}}{S'}}}

  \inferrule*[lab=T-Offer]{
    {\tseq
      {\ty{\Gamma}}
      {L}
      {\tyoffer{S}{S'}}}
    \\
    {\tseq
      {\ty{\Delta},\tmty{x}{S}}
      {M}
      {T}}
    \\
    {\tseq
      {\ty{\Delta},\tmty{y}{S'}}
      {N}
      {T}}
  }{\tseq
    {\ty{\Gamma},\ty{\Delta}}
    {\offer{L}{x}{M}{y}{N}}
    {T}}

  \inferrule*[lab=T-Offer-Absurd]{
    \tseq
    {\ty{\Gamma}}
    {L}
    {\tyofferemp}
  }{\tseq
    {\ty{\Gamma},\ty{\Delta}}
    {\offeremp{L}}
    {T}}
\end{mathpar}
\caption{Derived rules for syntactic sugar}
\label{fig:derived-sugar-rules}
\end{figure}

The main body makes use of syntactic sugar, and encodings of branching and
selection. Figure~\ref{fig:derived-sugar-rules} shows the derived typing rules.

\subsection{Preservation Proof}

Next, we detail the proof of preservation. We begin with the usual lemmas to
manipulate evaluation contexts, and the usual substitution lemma.

\begin{lem}[Subterm typeability]\label{lem:hgv:subterm-typ}
    Suppose $\deriv{D}$ is a derivation of
    $\tseq{\ty{\Gamma}}{\tm{E[M]}}{\ty{T}}$.
    Then, there exist $\ty{\Gamma_1}, \ty{\Gamma_2}$ such that
    $\ty{\Gamma} = \ty{\Gamma_1}, \ty{\Gamma_2}$,
      a type $\ty{U}$, and some subderivation $\deriv{D}'$ of $\deriv{D}$ concluding
  $\tseq{\ty{\Gamma_2}}{\tm{M}}{\ty{U}}$, where the position of $\deriv{D}'$ in $\deriv{D}$
  coincides with the position of the hole in $\deriv{D}$.
\end{lem}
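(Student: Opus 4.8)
The plan is to proceed by induction on the structure of the evaluation context $\tm{E}$, exploiting the fact that HGV's term typing rules (\Cref{fig:hgv-typing-static}) are syntax-directed: the head constructor of a term uniquely determines the last rule of any derivation concluding a judgment for it. At each step I would invert this last rule, locate the premise whose subject is the immediate sub-context containing the hole, apply the induction hypothesis to that premise, and reassemble the splitting of $\ty{\Gamma}$.

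In the base case $\tm{E} = \tm{\hole}$ we have $\tm{E[M]} = \tm{M}$, so $\deriv{D}$ already concludes $\tseq{\ty{\Gamma}}{M}{T}$; here I take $\ty{\Gamma_1} = \emptyenv$, $\ty{\Gamma_2} = \ty{\Gamma}$, $\ty{U} = \ty{T}$, and $\deriv{D}' = \deriv{D}$, whose position (the root) coincides with the hole. As a representative inductive case, take $\tm{E} = \tm{E'\;N}$, so that $\tm{E[M]} = \tm{(E'[M])\;N}$ and the last rule of $\deriv{D}$ must be \rulename{TM-App}, splitting $\ty{\Gamma} = \ty{\Gamma'}, \ty{\Delta}$ with premises $\tseq{\ty{\Gamma'}}{E'[M]}{\tylolli{T'}{T}}$ and $\tseq{\ty{\Delta}}{N}{T'}$. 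Applying the induction hypothesis to the first premise yields a splitting $\ty{\Gamma'} = \ty{\Gamma_1'}, \ty{\Gamma_2}$, a type $\ty{U}$, and a subderivation $\deriv{D}'$ concluding $\tseq{\ty{\Gamma_2}}{M}{U}$ sitting at the hole of $\tm{E'}$. Setting $\ty{\Gamma_1} = \ty{\Gamma_1'}, \ty{\Delta}$ gives $\ty{\Gamma} = \ty{\Gamma_1}, \ty{\Gamma_2}$ as required, and $\deriv{D}'$ remains at the hole's position within $\deriv{D}$.

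The remaining cases---$\tm{V\;E}$, $\tm{\letunit{E}{M}}$, $\tm{\pair{E}{M}}$, $\tm{\pair{V}{E}}$, $\tm{\letpair{x}{y}{E}{M}}$, $\tm{\inl{E}}$, $\tm{\inr{E}}$, and $\tm{\casesum{E}{x}{M}{y}{N}}$---are entirely analogous: invert the rule forced by the head constructor, apply the induction hypothesis to the premise covering the sub-context, and fold the environments of the other premises into $\ty{\Gamma_1}$. I expect no genuine obstacle, as the lemma is routine bookkeeping; the only points needing care are tracking the linear splitting of environments (immediate, since every rule partitions rather than shares its context) and making precise the informal phrase that ``the position of $\deriv{D}'$ coincides with the position of the hole''. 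The latter simply records that the typing derivation mirrors the term's abstract syntax tree, so that descending into the premise for a sub-context is the same move as descending into the hole of $\tm{E}$; a fully formal treatment would fix a notion of occurrence shared between contexts and derivations and check that it is preserved at each inductive step.
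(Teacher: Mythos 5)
Your proposal is correct and follows exactly the route the paper takes: the paper's entire proof of this lemma is ``By induction on the structure of $\tm{E}$,'' and your base case, representative application case, and handling of the linear environment splitting are the routine details that proof elides. Nothing further is needed.
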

\begin{proof}
    By induction on the structure of $\tm{E}$.
\end{proof}

\begin{lem}[Replacement, Evaluation Contexts]\label{lem:hgv:replacement}
  If:
  \begin{itemize}
      \item $\deriv{D}$ is a derivation of $\tseq{\ty{\Gamma_1},
          \ty{\Gamma_2}}{\tm{E}[\tm{M}]}{\ty{T}}$
      \item $\deriv{D}'$ is a subderivation of $\textbf{D}$ concluding
          $\tseq{\ty{\Gamma_2}}{\tm{M}}{\ty{U}}$
    \item The position of $\textbf{D}'$ in $\textbf{D}$ corresponds to that of
        the hole in $\tm{E}$
    \item $\tseq{\ty{\Gamma_3}}{\tm{N}}{\ty{U}}$
    \item $\ty{\Gamma_1}, \ty{\Gamma_3}$ is defined
  \end{itemize}

  then $\tseq{\ty{\Gamma_1}, \ty{\Gamma_3}}{\tm{E}[\tm{N}]}{\ty{T}}$.
\end{lem}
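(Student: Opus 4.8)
The plan is to prove the Replacement Lemma by induction on the structure of the evaluation context $\tm{E}$, mirroring the induction used for the Subterm Typeability Lemma (\Cref{lem:hgv:subterm-typ}). The two lemmas are companions: the first extracts a subderivation for the term plugged into a hole, and this one shows that we may swap that subderivation for another derivation of the same type $\ty{U}$ (with a possibly different type environment $\ty{\Gamma_3}$) and still obtain a valid derivation for the whole context, with the ambient environment $\ty{\Gamma_1}$ left untouched. The crucial bookkeeping is linear: the environment splits as $\ty{\Gamma_1}, \ty{\Gamma_2}$, where $\ty{\Gamma_2}$ is exactly the environment typing the hole-term, and we are replacing the $\ty{\Gamma_2}$ portion by $\ty{\Gamma_3}$ while keeping $\ty{\Gamma_1}$ fixed.

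First I would handle the base case $\tm{E} = \tm{\hole}$. Here $\tm{E[M]} = \tm{M}$, the subderivation $\deriv{D}'$ is all of $\deriv{D}$, so $\ty{\Gamma_1}$ is empty and $\ty{\Gamma_2} = \ty{\Gamma}$; the conclusion $\tseq{\ty{\Gamma_3}}{\tm{N}}{\ty{U}}$ is exactly the required $\tseq{\ty{\Gamma_1}, \ty{\Gamma_3}}{\tm{E[N]}}{\ty{T}}$ since $\ty{T} = \ty{U}$ and $\ty{\Gamma_1}$ contributes nothing. For the inductive cases, each corresponds to one production of the evaluation-context grammar in \Cref{fig:hgv-reduction}. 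In each case the outermost typing rule (e.g.\ \rulename{TM-App}, \rulename{TM-LetUnit}, \rulename{TM-Pair}, \rulename{TM-LetPair}, \rulename{TM-Inl}, \rulename{TM-Inr}, \rulename{TM-CaseSum}) splits the top-level environment into a piece that types the sub-context $\tm{E'}$ containing the hole and a piece that types the remaining sibling terms. I would apply the induction hypothesis to the premise typing $\tm{E'[M]}$, obtaining a derivation of $\tm{E'[N]}$ with the hole's environment fragment replaced, and then reassemble using the same outer rule. The environment for the sibling subterms plays the role of part of $\ty{\Gamma_1}$ and is carried through unchanged.

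The one point that needs care, rather than being purely mechanical, is threading the environment split correctly through the binary rules where both the context branch and a value/non-context branch consume linear resources. For a context like $\tm{V\;E'}$ (typed by \rulename{TM-App}), the top environment is $\ty{\Delta}, \ty{\Delta'}$ with $\ty{\Delta}$ typing the value $\tm{V}$ and $\ty{\Delta'}$ typing $\tm{E'[M]}$; by \Cref{lem:hgv:subterm-typ} the hole's environment $\ty{\Gamma_2}$ is a sub-environment of $\ty{\Delta'}$, so $\ty{\Gamma_1} = \ty{\Delta}, (\ty{\Delta'} \setminus \ty{\Gamma_2})$. I must verify that $\ty{\Gamma_1}, \ty{\Gamma_3}$ being defined guarantees the smaller disjointness conditions needed to reapply \rulename{TM-App}, namely that $\ty{\Delta}$ and the replaced branch environment remain disjoint. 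This follows because replacing $\ty{\Gamma_2}$ by $\ty{\Gamma_3}$ only alters the portion of the environment feeding the context branch, and disjointness of $\ty{\Gamma_1}, \ty{\Gamma_3}$ is assumed as a hypothesis.

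The main obstacle I anticipate is not conceptual but notational: precisely tracking which part of each rule's environment split corresponds to $\ty{\Gamma_1}$ versus the hole fragment $\ty{\Gamma_2}$, and confirming in each binary rule that the global disjointness hypothesis $\ty{\Gamma_1}, \ty{\Gamma_3}$ is strong enough to license reapplying the rule to the smaller reassembled pieces. The branching context $\tm{\casesum{E'}{x}{M'}{y}{N'}}$ (rule \rulename{TM-CaseSum}) is the most delicate, since the scrutinee context and the two branches share a common environment $\ty{\Delta}$; the hole lies in the scrutinee branch typed by $\ty{\Gamma}$, so I would set $\ty{\Gamma_1}$ to absorb $\ty{\Delta}$ and apply the induction hypothesis to the scrutinee premise alone, leaving the two branch derivations and their shared $\ty{\Delta}$ intact. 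Once the split is pinned down, reassembly with the original rule is immediate, completing the induction.
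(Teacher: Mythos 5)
Your proposal is correct and matches the paper's proof, which is exactly an induction on the structure of $\tm{E}$ (the paper leaves the case analysis and environment bookkeeping implicit). The points you flag as delicate --- the linear splitting in the binary rules and the shared branch environment in \rulename{TM-CaseSum} --- are handled just as you describe.
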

\begin{proof}
    By induction on the structure of $\tm{E}$.
\end{proof}

\begin{lem}[Substitution]\label{lem:hgv:substitution}
  If:
  \begin{enumerate}
      \item $\tseq{\ty{\Gamma_1}, \tm{x} : \ty{U}}{\tm{M}}{\ty{T}}$
      \item $\tseq{\ty{\Gamma_2}}{\tm{N}}{\ty{U}}$
      \item $\ty{\Gamma_1}, \ty{\Gamma_2}$ is defined
  \end{enumerate}

  then $\tseq{\ty{\Gamma_1}, \ty{\Gamma_2}}{\tm{M \{ N / x \}}}{\ty{T}}$.
\end{lem}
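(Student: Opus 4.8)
The plan is to prove the Substitution Lemma by structural induction on the typing derivation of $\tseq{\ty{\Gamma_1}, \tmty{x}{U}}{M}{T}$. Since HGV is linear, the variable $\tm{x}$ occurs exactly once in $\tm{M}$, which means that in any given typing rule the hypothesis $\tmty{x}{U}$ is routed into exactly one premise. This observation drives the entire argument: at each rule we locate the unique premise whose type environment contains $\tm{x}$, apply the induction hypothesis there (splicing in $\ty{\Gamma_2}$ and replacing the relevant subterm by its substituted form), and leave the other premises untouched.

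First I would handle the base case \rulename{TM-Var}: here $\tm{M}=\tm{x}$, so $\ty{\Gamma_1}=\emptyenv$ and $\ty{T}=\ty{U}$, and $\tm{M\{N/x\}}=\tm{N}$; the result is exactly hypothesis~(2). The cases \rulename{TM-Const} and \rulename{TM-Unit} cannot apply, since their conclusions use an empty environment that cannot contain $\tmty{x}{U}$. For the remaining rules I would proceed uniformly. In a single-premise rule such as \rulename{TM-Lam}, \rulename{TM-Absurd}, \rulename{TM-Inl}, or \rulename{TM-Inr}, the variable $\tm{x}$ must lie in the unique premise, so I apply the induction hypothesis directly, taking care in \rulename{TM-Lam} that the bound variable is distinct from $\tm{x}$ and the free variables of $\tm{N}$ (justified by $\alpha$-renaming). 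For a two-premise rule such as \rulename{TM-App}, \rulename{TM-LetUnit}, \rulename{TM-Pair}, \rulename{TM-LetPair}, or \rulename{TM-CaseSum}, linearity guarantees that $\tmty{x}{U}$ appears in the environment of exactly one premise; I split into subcases on which premise that is, apply the induction hypothesis to the premise containing $\tm{x}$ (with $\ty{\Gamma_2}$ merged into its environment), and reassemble with the original rule. In the branching rule \rulename{TM-CaseSum}, the two branch premises share the environment $\ty{\Delta}$, so if $\tm{x}$ lies in $\ty{\Delta}$ the substitution must be carried out in \emph{both} branches simultaneously, each by the induction hypothesis.

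The disjointness side-condition $\ty{\Gamma_1},\ty{\Gamma_2}$ being defined (hypothesis~(3)) is what licenses merging $\ty{\Gamma_2}$ into whichever sub-environment receives the substitution, and its preservation down into the relevant premise is exactly what must be checked so that the induction hypothesis applies; this bookkeeping is routine but pervasive. The main obstacle, such as it is, is purely administrative rather than conceptual: in the multi-premise rules one must correctly track which sub-environment ($\ty{\Gamma}$ or $\ty{\Delta}$ in \Cref{fig:hgv-typing-static}) the hypothesis $\tmty{x}{U}$ belongs to, and ensure the other premise's environment is left exactly as in the original derivation so that the rule can be reapplied. Since linearity forces a unique choice at every node, no genuine difficulty arises, and the induction goes through cleanly.
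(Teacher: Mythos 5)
Your proposal is correct and matches the paper's proof, which is stated simply as ``by induction on the derivation of $\tseq{\ty{\Gamma_1}, \tm{x} : \ty{U}}{M}{T}$''; your elaboration of the linearity-driven case analysis (including the shared-$\ty{\Delta}$ subtlety in \rulename{TM-CaseSum}) is exactly the standard bookkeeping that proof leaves implicit.
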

\begin{proof}
    By induction on the derivation of $\tseq{\ty{\Gamma_1}, \tm{x} : \ty{U}}{M}{T}$.
\end{proof}

Preservation of typing under term reduction is standard.

\begin{lem}[Preservation, $\tred$]\label{lem:gv:term-pres}
    If $\tseq{\ty{\Gamma}}{M}{T}$ and $\tm{M} \tred \tm{N}$, then
    $\tseq{\ty{\Gamma}}{N}{T}$.
\end{lem}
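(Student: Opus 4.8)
The plan is to argue by case analysis on the rule used to derive $\tm{M}\tred\tm{N}$; equivalently, this is an induction on the reduction relation in which the only genuinely inductive case is \rulename{E-Lift}, while the five $\beta$-rules are base cases. For each $\beta$-rule I would invert the typing derivation $\tseq{\ty{\Gamma}}{M}{T}$ to expose the premises, and then close the case using \Cref{lem:hgv:substitution}. Since the $\lambda$-calculus fragment of HGV is entirely standard and linear, each inversion is forced: there is exactly one typing rule whose conclusion matches the redex, so the shape of the derivation is determined up to the choice of the environment splitting.

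Concretely, for \rulename{E-Lam} the term $\tm{(\lambda x.M)\;V}$ is typed by \rulename{TM-App} from $\tseq{\ty{\Gamma_1}}{\lambda x.M}{\tylolli{U}{T}}$ and $\tseq{\ty{\Gamma_2}}{V}{U}$ with $\ty{\Gamma}=\ty{\Gamma_1},\ty{\Gamma_2}$; inverting \rulename{TM-Lam} yields $\tseq{\ty{\Gamma_1},\tmty{x}{U}}{M}{T}$, and \Cref{lem:hgv:substitution} gives $\tseq{\ty{\Gamma_1},\ty{\Gamma_2}}{\subst{M}{V}{x}}{T}$ as required. The cases \rulename{E-Unit}, \rulename{E-Inl}, and \rulename{E-Inr} are analogous single-variable substitutions (for the sum cases one selects the branch matching the injection and discards the other). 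The case \rulename{E-Pair} is the only one requiring a simultaneous substitution $\subst{M}{V}{x}$ followed by the $\tm{W}/\tm{y}$ substitution: after inverting \rulename{TM-LetPair} and \rulename{TM-Pair} I would apply \Cref{lem:hgv:substitution} twice, first for $\tm{x}$ and then for $\tm{y}$, relying on the fact that the environments partition disjointly (so $\tm{y}\notin\fv(\tm{V})$ and the two substitutions do not interfere).

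For \rulename{E-Lift}, where $\tm{M}=\tm{E[M_0]}\tred\tm{E[N_0]}=\tm{N}$ with $\tm{M_0}\tred\tm{N_0}$, I would invoke \Cref{lem:hgv:subterm-typ} to obtain a splitting $\ty{\Gamma}=\ty{\Gamma_1},\ty{\Gamma_2}$ and a subderivation $\tseq{\ty{\Gamma_2}}{M_0}{U}$ sitting at the hole position. The induction hypothesis then gives $\tseq{\ty{\Gamma_2}}{N_0}{U}$ at the \emph{same} type $\ty{U}$ and under the \emph{same} environment $\ty{\Gamma_2}$, which is exactly the hypothesis needed by \Cref{lem:hgv:replacement} to conclude $\tseq{\ty{\Gamma_1},\ty{\Gamma_2}}{E[N_0]}{T}$. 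The main obstacle, and the reason \Cref{lem:hgv:subterm-typ,lem:hgv:replacement} are phrased as they are, is the bookkeeping here: one must be sure that term reduction preserves both the type of the redex and its free-variable footprint, so that plugging $\tm{N_0}$ back into the context is type-preserving without re-splitting $\ty{\Gamma}$; linearity is what guarantees the context consumes precisely $\ty{\Gamma_1}$ regardless of the hole's contents.
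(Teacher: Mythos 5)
Your proposal is correct and takes essentially the same approach as the paper, whose proof is given only as ``a standard induction on the derivation'' of the term reduction relation: your case analysis by inversion, the substitution lemma for the $\beta$-rules, and subterm typeability plus replacement for \rulename{E-Lift} is exactly that standard argument made explicit. One cosmetic slip: \rulename{E-Unit} involves no substitution at all---inverting \rulename{TM-Unit} forces the unit's environment to be empty, so the typing of the continuation is already the required conclusion.
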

\begin{proof}
    A standard induction on the derivation of $\tred$.
\end{proof}

Runtime type merging is commutative and associative. We make use of these
properties implicitly in the remainder of the proofs.

\begin{lem} \hfill 
    \begin{enumerate}
        \item $\ty{R_1} \isect \ty{R_2} \iff \ty{R_2} \isect \ty{R_1}$
        \item $\ty{R_1} \isect (\ty{R_2} \isect \ty{R_3}) \iff (\ty{R_1} \isect \ty{R_2}) \isect \ty{R_3}$
    \end{enumerate}
\end{lem}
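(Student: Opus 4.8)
The plan is to treat $\isect$ as the partial binary operation given by its three defining equations, noting that it is \emph{undefined} precisely when both operands are main-thread types, since there is no clause for $\ty{\tymain{T}}\isect\ty{\tymain{T'}}$. Accordingly I would read each $\iff$ in the statement as Kleene equality in the sense of $\kleeneiff$ introduced earlier: the two sides are defined under exactly the same conditions, and agree whenever both are defined. Both parts then reduce to finite case analyses on the shapes of the $\ty{R_i}$, each of which ranges over $\ty{\tychild}$ and $\ty{\tymain{T}}$.

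For part (1), I would split on $\ty{R_1}$ and $\ty{R_2}$, giving four cases. When both are $\ty{\tychild}$ the third defining equation yields $\ty{\tychild}$ on either side. When exactly one is a main type, the first two defining equations are mirror images of each other, so both $\ty{R_1}\isect\ty{R_2}$ and $\ty{R_2}\isect\ty{R_1}$ produce the same $\ty{\tymain{T}}$. When both are main types, neither side is defined, so the Kleene equality holds vacuously. The symmetry of the defining clauses makes each case immediate.

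For part (2), rather than grind through all eight combinations I would first record the unifying characterisation that makes the result transparent: for any bracketing, the combination of $\ty{R_1}, \ty{R_2}, \ty{R_3}$ is defined iff at most one of them is a main-thread type, and when defined it equals the unique $\ty{\tymain{T}}$ present, or $\ty{\tychild}$ if none is. I would then check that this invariant is respected by both associations $\ty{R_1}\isect(\ty{R_2}\isect\ty{R_3})$ and $(\ty{R_1}\isect\ty{R_2})\isect\ty{R_3}$: in each, a main type appearing in any position propagates outward through the first two clauses, while two main types anywhere force an undefined subexpression under both bracketings. Hence the two sides are simultaneously defined and equal, establishing the Kleene equality.

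There is essentially no mathematical obstacle here; the only point requiring care is the partiality of $\isect$. Stating the Kleene-equality reading of $\iff$ explicitly, and verifying that the ``two main types'' cases leave \emph{both} bracketings undefined, so that definedness rather than merely the computed value is preserved, is what keeps the argument honest. These properties are precisely what licenses the implicit, unparenthesised use of $\isect$ throughout the configuration-typing proofs.
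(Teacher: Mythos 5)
Your proof is correct and follows the same route as the paper, which simply notes that both properties are immediate from the defining clauses of $\isect$. Your explicit treatment of the partiality (reading $\iff$ as agreement on both definedness and value, with the two-main-types cases undefined under either bracketing) is a careful spelling-out of what the paper leaves implicit, not a different argument.
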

\begin{proof}
    Immediate from the definition of $\isect$.
\end{proof}

The first more major result is preservation of configuration typing under
structural congruence.

\begin{lem}[Preservation ($\equiv$)]\label{lem:hgv-equiv-pres}
If $\cseq{\ty{\hyper{G}}}{\conf{C}}{R}$ and $\tm{\conf{C}}\equiv\tm{\conf{D}}$,
          then $\cseq{\ty{\hyper{G}}}{\conf{D}}{R}$.
\end{lem}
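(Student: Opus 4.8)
The plan is to induct on the derivation of $\tm{\conf{C}}\equiv\tm{\conf{D}}$. Since $\equiv$ is the least congruence generated by the six axioms in \Cref{fig:hgv-reduction}, the cases split into three kinds: the equivalence-closure rules (reflexivity, symmetry, transitivity), the compatibility rules (closure under $\tm{\res{x}{y}{-}}$, $\tm{\ppar{-}{\conf{D}}}$ and $\tm{\ppar{\conf{C}}{-}}$), and the six axioms themselves. Reflexivity is immediate; transitivity chains two appeals to the induction hypothesis; for symmetry I show each axiom preserves typing in both directions. The compatibility cases follow by inverting the typing rule for the relevant constructor, applying the induction hypothesis to the congruent subderivation, and reassembling with the same rule—inversion is unambiguous because each configuration constructor is typed by exactly one rule.

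For the axioms, the work is to isolate which structural property of hyper-environments each one exploits. \rulename{SC-ParComm} and \rulename{SC-ParAssoc} are handled by inverting \rulename{TC-Par} and reapplying it after reordering the component hyper-environments; this is sound because disjoint concatenation $\hypersep$ is commutative and associative and, by the preceding lemma, so is the configuration-type combination operator $\isect$, so both the resulting hyper-environment and the resulting configuration type are literally unchanged. \rulename{SC-NewComm} permutes two applications of \rulename{TC-New}, which is sound because the two bound channels are independent and the underlying type environments are merged associatively. \rulename{SC-NewSwap} and \rulename{SC-LinkComm} both rely on duality being involutive, $\ty{\co{\co{S}}}=\ty{S}$: swapping the two endpoints of a binder (or of a link) merely reassigns the roles of $\ty{S}$ and $\ty{\co{S}}$, which leaves the premise of \rulename{TC-New} (resp.\ \rulename{TC-Link}) derivable.

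The main obstacle is \rulename{SC-ScopeExt}, namely $\tm{\res{x}{y}{(\ppar{\conf{C}}{\conf{D}})}}\equiv\tm{\ppar{\conf{C}}{\res{x}{y}{\conf{D}}}}$ with side condition $\tm{x},\tm{y}\notin\fv(\tm{\conf{C}})$. Reading the left-hand derivation bottom-up, \rulename{TC-New} introduces $\tmty{x}{S}$ and $\tmty{y}{\co{S}}$ into two environments of the hyper-environment typing $\tm{\ppar{\conf{C}}{\conf{D}}}$, which \rulename{TC-Par} then splits into the parts typing $\tm{\conf{C}}$ and $\tm{\conf{D}}$. The crux is the invariant that every name occurring in a type environment is a free variable of the configuration it types: combined with $\tm{x},\tm{y}\notin\fv(\tm{\conf{C}})$, this forces the environments carrying $\tmty{x}{S}$ and $\tmty{y}{\co{S}}$ onto the $\tm{\conf{D}}$ side of the split, so \rulename{TC-New} can be pushed underneath \rulename{TC-Par} to reconstruct exactly the right-hand derivation (and conversely). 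Showing that both endpoints land on the same side—rather than being separated across the parallel composition—is where the hyper-environment discipline does the real work and is the step requiring the most care. I would close by checking that all disjointness and well-formedness conditions on the rearranged hyper-environments are preserved at each rewriting.
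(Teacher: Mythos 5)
Your proposal is correct and follows essentially the same route as the paper's proof: induction on the derivation of $\equiv$, dispatching the congruence closure cases by the induction hypothesis, and handling each axiom by inverting the unique typing rule for the outermost constructor and reassembling, using commutativity/associativity of $\hypersep$ and $\isect$, involutive duality for \rulename{SC-NewSwap} and \rulename{SC-LinkComm}, and the linearity invariant that type environments mention exactly the free names, which forces both endpoints onto the $\tm{\conf{D}}$ side in the \rulename{SC-ScopeExt} case. If anything, you make the crux of \rulename{SC-ScopeExt} more explicit than the paper, which presents that case simply as a bi-implication of the two displayed derivations.
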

\begin{proof}
  We consider the cases for the equivalence axioms; the congruence cases are
  straightforward applications of the IH.

  \begin{case}{\LabTirName{SC-ParAssoc}}
   \[
       \tm{\config{C} \parallel (\config{D} \parallel \config{E})}
       \equiv
       \tm{(\config{C} \parallel \config{D}) \parallel \config{E}}
   \]

      {\small
    \begin{mathpar}
      \begin{array}{r@{\quad}c@{\quad}l}
        \inferrule*
        {
          \cseq{\ty{\hyper{G}_1}}{\config{C}}{R_1} \\
\inferrule*
          { \cseq{\ty{\hyper{G}_2}}{\config{D}}{R_2} \\
            \cseq{\ty{\hyper{G}_3}}{\config{E}}{R_3} }
            { \cseq{\ty{\hyper{G}_2} \hypersep \ty{\hyper{G}_3}}{\config{D} \parallel \config{E}}{R_2 \isect R_3}  }
        }
        {
            \cseq
              {\ty{\hyper{G}_1} \hypersep \ty{\hyper{G}_2} \hypersep \ty{\hyper{G}_3}}
              {\config{C} \parallel (\config{D} \parallel \config{E}) }
              {R_1 \isect R_2 \isect R_3}
        }
&\hspace*{-10pt}\iff\hspace*{-10pt}&
\inferrule*
        {
          \inferrule*
          {
              \cseq{\ty{\hyper{G}_1}}{\config{C}}{R_1} \\
              \cseq{\ty{\hyper{G}_2}}{\config{D}}{R_2}
          }
          { \cseq
              {\ty{\hyper{G}_1} \hypersep \ty{\hyper{G}_2}}
              { \config{C} \parallel \config{D}}
              {R_1 \isect R_2}
          } \\
\cseq{\ty{\hyper{G}_3}}{\config{E}}{R_3}
        }
        {
           \cseq
             {\ty{\hyper{G}_1} \hypersep \ty{\hyper{G}_2} \hypersep \ty{\hyper{G}_3}}
             { (\config{C} \parallel \config{D}) \parallel \config{E} }
             {R_1 \isect R_2 \isect R_3}
         }
      \end{array}
    \end{mathpar}
}
  \end{case}
  \begin{case}{\LabTirName{SC-ParComm}}
    \[
        \tm{\config{C} \parallel \config{D}} \equiv \tm{\config{D} \parallel \config{C}}
    \]
    \begin{mathpar}
      \begin{array}{r@{\quad}c@{\quad}l}
        \inferrule*
        {
            \cseq{\ty{\hyper{G}}}{\config{C}}{R_1} \\
            \cseq{\ty{\hyper{H}}}{\config{D}}{R_2} \\
        }
        { \cseq
            {\ty{\hyper{G}} \hypersep \ty{\hyper{H}}}
            {\config{C} \parallel \config{D}}
            {R_1 \isect R_2}
        }
& \iff &
        \inferrule*
        {
            \cseq{\ty{\hyper{H}}}{\config{D}}{U} \\
            \cseq{\ty{\hyper{G}}}{\config{C}}{T}
        }
        {
            \cseq
            { \ty{\hyper{G}} \hypersep \ty{\hyper{H}} }
            { \config{D} \parallel \config{C}}
            { R_1 \isect R_2 }
        }
      \end{array}
    \end{mathpar}
  \end{case}

  \begin{case}{\LabTirName{SC-NewComm}}
\[
    \tm{(\nu x x')(\nu y y')\config{C}} \equiv \tm{(\nu y y')(\nu x x')\config{C}}
\]
    Two illustrative subcases:

    {\small
    \begin{subcase}{1}
    \begin{mathpar}
      \begin{array}{c}
        \inferrule*
        {
          \inferrule*
          {
            \cseq
                { \ty{\hyper{G}} \hypersep \ty{\Gamma_1}, \tm{x} : \ty{S}
                    \hypersep \ty{\Gamma_2}, \tm{x'} : \ty{\co{S}}
                    \hypersep \ty{\Gamma_3}, \tm{y} : \ty{S'} \hypersep
                    \ty{\Gamma_4}, \tm{y'} : \ty{\co{S'}}
                }
                { \config{C} }
                { R }
          }
          {
            \cseq
            { \ty{\hyper{G}} \hypersep \ty{\Gamma_1}, \tm{x} : \ty{S} \hypersep
                \ty{\Gamma_2}, \tm{x'} : \ty{\co{S}}
                \hypersep \ty{\Gamma_3, \Gamma_4}
              }
              { (\nu y y') \config{C} }
              { R }
          }
        }
        {
          \cseq
          { \ty{\hyper{G}} \hypersep \ty{\Gamma_1, \Gamma_2} \hypersep
              \ty{\Gamma_3, \Gamma_4} }
            { (\nu x x')(\nu y y') \config{C} }
            { R }
        }
\\
        \iff
        \\
        \inferrule*
        {
          \inferrule*
          {
              \cseq
              { \ty{\hyper{G}} \hypersep \ty{\Gamma_1}, \tm{y} : \ty{S'}
                  \hypersep \ty{\Gamma_2}, \tm{y'} : \ty{\co{S'}}
              \hypersep \ty{\Gamma_3}, \tm{x} : \ty{S} \hypersep
              \ty{\Gamma_4}, \tm{x'} : \ty{\co{S}} }
              { \config{C} }
              { R }
          }
          {
              \cseq
              { \ty{\hyper{G}} \hypersep \ty{\Gamma_1}, \tm{y} : \ty{S'} \hypersep
          \ty{\Gamma_2}, \tm{y'} : \ty{\co{S'}} \hypersep \ty{\Gamma_3}, \ty{\Gamma_4} }
                { (\nu x x') \config{C} }
                { R }
          }
        }
        {
            \cseq
            { \ty{\hyper{G}} \hypersep \ty{\Gamma_1}, \ty{\Gamma_2} \hypersep
                \ty{\Gamma_3}, \ty{\Gamma_4} }
                { (\nu y y')(\nu x x') \config{C} }
                { R }
        }
      \end{array}
    \end{mathpar}
    \end{subcase}

    \begin{subcase}{2}
    \begin{mathpar}
      \begin{array}{c}
        \inferrule*
        {
          \inferrule*
          {
              \cseq
              {  \ty{\hyper{G}} \hypersep \ty{\Gamma_1}, \tm{x} : \ty{S}, \tm{y} :
                  \ty{S'} \hypersep \ty{\Gamma_2}, \tm{y'} :
              \ty{\co{S'}} \hypersep \ty{\Gamma_3}, \tm{x'} : \ty{\co{S}} }
                { \config{C} }
                { R }
          }
          { \cseq
              { \ty{\hyper{G}} \hypersep \ty{\Gamma_1}, \ty{\Gamma_2}, \tm{x} :
              \ty{S} \hypersep \ty{\Gamma_3}, \tm{x'} : \ty{\co{S}} }
              { (\nu y y') \config{C} }
              { R }
          }
        }
        {
            \cseq
            {  \ty{\hyper{G}} \hypersep \ty{\Gamma_1}, \ty{\Gamma_2}, \ty{\Gamma_3} }
                {  (\nu x x')(\nu y y') \config{C} }
                { R }
        }
        \\
        \iff
        \\
        \inferrule*
        {
          \inferrule*
          {
              \cseq
              { \ty{\hyper{G}} \hypersep \ty{\Gamma_1}, \tm{x} : \ty{S}, \tm{y} :
                  \ty{S'} \hypersep
                  \ty{\Gamma_2}, \tm{y'} : \ty{\co{S'}} \hypersep \ty{\Gamma_3}, \tm{x'}
          : \ty{\co{S}} }
                { \config{C} }
                { R }
          }
          {
              \cseq
              { \ty{\hyper{G}} \hypersep \ty{\Gamma_1}, \ty{\Gamma_3}, \tm{y} :
              \ty{S'} \hypersep \ty{\Gamma_2}, \tm{y'} : \ty{\co{S'}} }
                { (\nu x x') \config{C} }
                { R }
          }
        }
        {
            \cseq
            { \ty{\hyper{G}} \hypersep \ty{\Gamma_1}, \ty{\Gamma_2}, \ty{\Gamma_3} }
                { (\nu y y')(\nu x x') \config{C} }
                { R }
        }
    \end{array}
  \end{mathpar}
  \end{subcase}
}
  \end{case}

  \begin{case}{\LabTirName{SC-NewSwap}}
      \[
          \tm{(\nu x y)\config{C}} \equiv \tm{(\nu y x)\config{C}}
\]
    Follows immediately since hyper-environments are treated as unordered.
  \end{case}

  \begin{case}{\LabTirName{SC-ScopeExt}}
      \[
          \tm{\config{C} \parallel (\nu x y)\config{D}} \equiv
          \tm{(\nu x y)(\config{C} \parallel \config{D})}
        \]
        (where $\tm{x}, \tm{y} \not\in \fv(\tm{\config{C}})$)

    {\small
    \begin{mathpar}
      \begin{array}{c}
        \inferrule*
        {
          \inferrule*
          {
              \cseq
                {\ty{\hyper{G}}}
                {\config{C}}
                {R_1} \\
\cseq
              { \ty{\hyper{H}} \hypersep \ty{\Gamma_1}, \tm{x} : \ty{S}
              \hypersep \ty{\Gamma_2}, \tm{y} : \ty{\co{S}}}
                {\config{D}}
                {R_2}
          }
          {
              \cseq
              { \ty{\hyper{G}} }
              { \config{C} }
              { R_1 } \\
\cseq
              { \ty{\hyper{H}} \hypersep \ty{\Gamma_1}, \ty{\Gamma_2} }
              { (\nu x y) \config{D} }
              { R_2 }
          }
        }
        {
            \cseq
                { \ty{\hyper{G}} \hypersep \ty{\hyper{H}} \hypersep \ty{\Gamma_1}, \ty{\Gamma_2} }
                { \config{C} \parallel (\nu x y) \config{D} }
                { R_1 \isect R_2 }
        }
        \\
        \iff
        \\
        \inferrule*
        {
          \inferrule*
          {
              \cseq
                  { \ty{\hyper{G}} }
                  { \config{C} }
                  { R_1 }
              \\
              \cseq
              { \ty{\hyper{H}} \hypersep \ty{\Gamma_1}, \tm{x} : \ty{S}
              \hypersep \ty{\Gamma_2}, \tm{y} : \ty{\co{S}} }
                { \config{D} }
                { R_2 }
          }
          {
            \cseq
                { \ty{\hyper{G}} \hypersep \ty{\hyper{H}} \hypersep
                    \ty{\Gamma_1}, \tm{x} : \ty{S} \hypersep \ty{\Gamma_2},
                    \tm{y} : \ty{\co{S}}
                }
                { \config{C} \parallel \config{D} }
                { R_1 \isect R_2 }
          }
        }
        {
            \cseq
                {  \ty{\hyper{G}} \hypersep \ty{\hyper{H}} \hypersep \ty{\Gamma_1}, \ty{\Gamma_2} }
                { (\nu x y) (\config{C} \parallel \config{D} ) }
                { R_1 \isect R_2 }
        }
      \end{array}
    \end{mathpar}
    }
  \end{case}

  \begin{case}{\LabTirName{SC-LinkComm}}
      \[
          \tm{\linkconfig{z}{x}{y}} \equiv \tm{\linkconfig{z}{y}{x}}
      \]

      Assumption:

          \begin{mathpar}
              \inferrule*
              { }
              { \cseq{\tm{x} : \ty{S}, \tm{y} : \ty{\co{S}}}{\linkconfig{z}{x}{y}}{\tychild} }
          \end{mathpar}

          By dualising both variables, we have that $ \tm{x} : \ty{\co{S}},
          \tm{y} : \ty{\co{\co{S}}}$.
          Since duality is an involution, we can show $\tm{x} : \ty{S}, \tm{y} :
          \ty{\co{S}} \iff \tm{x} : \ty{\co{S}}, \tm{y} : \ty{S}$.

          Thus:

          \begin{mathpar}
              \inferrule*
              { }
              { \cseq{\tm{y} : \ty{S}, \tm{x} :
              \ty{\co{S}}}{\tm{\linkconfig{z}{y}{x}}}{\ty{\tychild}} }
          \end{mathpar}

          The reasoning for the symmetric case is identical. \qedhere
  \end{case}
\end{proof}

The next result shows that configuration typeability is preserved under
configuration reduction. Note that this lemma makes crucial use of
Lemma~\ref{lem:hgv-equiv-pres} due to \rulename{E-Equiv}.

\begin{lem}[Preservation ($\cred$)]\label{lem:hgv-cred-pres}
    If $\cseq{\ty{\hyper{G}}}{\conf{C}}{R}$ and $\tm{\conf{C}}\cred\tm{\conf{D}}$, then
          $\cseq{\ty{\hyper{G}}}{\conf{D}}{R}$.
\end{lem}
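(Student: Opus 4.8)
The plan is to proceed by induction on the derivation of $\tm{\config{C}} \cred \tm{\config{D}}$, handling the structural rules and the axiom schemas separately. The three structural rules are immediate: for \rulename{E-Res} and \rulename{E-Par} the result follows from the induction hypothesis together with re-applying \rulename{TC-New} and \rulename{TC-Par} over the unchanged surrounding type environments and hyper-environment separators; for \rulename{E-Equiv} I would invoke preservation under structural congruence (\cref{lem:hgv-equiv-pres}) on the leading and trailing $\equiv$-steps and the induction hypothesis on the intervening reduction. The term-lifting rule \rulename{E-Lift-M} reduces to term preservation: I would use subterm typeability (\cref{lem:hgv:subterm-typ}) to extract a typing of the redex inside the thread context $\tm{F}$, retype it via \cref{lem:gv:term-pres}, and reassemble using replacement (\cref{lem:hgv:replacement}).

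For the two reification axioms, the key observation is that the fresh channel exactly restores the original type environment. In \rulename{E-Reify-Fork} I would use \cref{lem:hgv:subterm-typ} to split the thread's environment as $\ty{\Gamma_F}, \ty{\Gamma_V}$, where $\tm{V}$ has type $\ty{\tylolli{S}{\tyends}}$ under $\ty{\Gamma_V}$ and $\tm{\fork\;V}$ has type $\ty{\co{S}}$; after reduction, \rulename{TC-Child} types $\tm{\child\;(V\;x')}$ under $\ty{\Gamma_V}, \tmty{x'}{S}$, \rulename{TC-Par} introduces a separator, and \rulename{TC-New} merges the two environments back to $\ty{\Gamma_F}, \ty{\Gamma_V}$ while cancelling $\tm{x}$ and $\tm{x'}$; the result type is $\ty{R} \isect \ty{\tychild} = \ty{R}$. \rulename{E-Reify-Link} is analogous, using \rulename{TC-Link} for the link thread and the freshly dual pair $\tmty{z}{\tyendr}$, $\tmty{z'}{\tyends}$. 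The communication axioms \rulename{E-Comm-Send} and \rulename{E-Comm-Close} follow the same pattern: extract the two redexes with \cref{lem:hgv:subterm-typ}, check that the session types step in lockstep under duality, and rebuild each thread with \cref{lem:hgv:replacement}, keeping the outer $\nu$-binder in place (or discarding it, in the case of \rulename{E-Comm-Close}).

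The main obstacle will be \rulename{E-Comm-Link}, the link-forwarding axiom, whose left-hand side nests two name restrictions over two parallel compositions. Here I would repeatedly invert \rulename{TC-New} and \rulename{TC-Par} to recover the typings of $\tm{\linkconfig{z}{x}{y}}$, the terminated child $\tm{\child\;z'}$, and the continuation $\tm{\phi\;M}$, carefully tracking how the four endpoints $\tm{z}, \tm{z'}, \tm{x}, \tm{x'}$ are distributed across the hyper-environment separators and how duality forces the relevant session types to coincide. The crux is then an application of \cref{lem:hgv:substitution} to justify replacing $\tm{x'}$ by $\tm{y}$ in $\tm{M}$, after observing that the two carry dual endpoints of the now-eliminated channel, so that the resulting singleton hyper-environment is precisely $\ty{\hyper{G}}$ with the thread type preserved. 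I expect the bookkeeping of separators and the verification that all four bound names cancel cleanly to be the most delicate part of the argument.
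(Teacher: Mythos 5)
Your proposal is correct and follows essentially the same route as the paper's proof: induction on the reduction derivation, with the reification and communication axioms handled via subterm typeability and replacement, \rulename{E-Comm-Link} via inversion of \rulename{TC-New}/\rulename{TC-Par} followed by the substitution lemma, and the structural rules discharged by the induction hypothesis together with preservation under $\equiv$. The key lemmas you cite are exactly the ones the paper uses.
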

  \begin{proof}
      By induction on the derivation of $\tm{\config{C}} \cred \tm{\config{D}}$.
    Where there is a choice for $\phi$, we prove the case for $\phi = \main$ and
    expand $\tm{\config{T}[M]}$ to $\tm{\main (E[M])}$ for some evaluation
    context $\tm{E}$; the
    other cases are similar.

    \begin{case}{E-Reify-Fork}
      \[
          \tm{
                \main E[\fork\;V] \cred (\nu x y)(\main E[x] \parallel
                \child V \app y )
            }
      \]

      Assumption:
      \begin{mathpar}
        \inferrule*
        { \tseq{\ty{\Gamma}}{ E[\fork\;V]}{ T } }
        { \cseq{\ty{\Gamma}}{ \main E[\fork\;V] }{T} }
      \end{mathpar}

      By Lemma~\ref{lem:hgv:subterm-typ}, there exist $\ty{\Gamma_1}, \ty{\Gamma_2},
      \ty{S}$ such that
      $\ty{\Gamma} = \ty{\Gamma_1}, \ty{\Gamma_2}$ and
      $\tseq{\ty{\Gamma_1}, \ty{\Gamma_2}}{E[\fork\;V]}{T}$ and:

      \begin{mathpar}
        \inferrule*
        { \tseq{\ty{\Gamma_2}}{ V }{ \tylolli{S}{\tyends } } }
        { \tseq{\ty{\Gamma_2}}{\fork\;V}{\co{S}} }
      \end{mathpar}

      By Lemma~\ref{lem:hgv:replacement}:
      \begin{mathpar}
        \inferrule*
        { \tseq{\ty{\Gamma_1}, \tm{x} : \ty{\co{S}}}{E[x]}{T} }
        { \cseq{\ty{\Gamma_1}, \tm{x} : \ty{\co{S}}}{\main E[x] }{T} }
      \end{mathpar}

      By \textsc{TM-App}, $\tseq{\ty{\Gamma_2}, \tm{y} : \ty{S}}{V \app y}{\tyends}$ and so
      by \textsc{TC-Child}, $\cseq{\ty{\Gamma_2}, \tm{y} : \ty{S}}{V \app y}{\tychild}$

      Recomposing:

      \begin{mathpar}
        \inferrule*
        {
          \inferrule*
          {
            \inferrule*
            { \tseq{\ty{\Gamma_1}, \tm{x} : \ty{\co{S}}}{E[x]}{T} }
            { \cseq{\ty{\Gamma_1}, \tm{x} : \ty{\co{S}}}{\main E[x]}{T}} \\
\inferrule*
            { \tseq{\ty{\Gamma_2}, \tm{y} : \ty{S}}{V \app y}{\tyends} }
            { \cseq{\ty{\Gamma_2}, \tm{y} : \ty{S}}{\child (V \app y) }{\tychild} }
          }
          {
              \cseq
              { \ty{\Gamma_1}, \tm{x} : \ty{\co{S}} \hypersep \ty{\Gamma_2},
              \tm{y} : \ty{S} }
                { \main E[x] \parallel \child (V \app y)}
                { T }
          }
        }
        {
            \cseq
                {\ty{\Gamma_1}, \ty{\Gamma_2}}
                { (\nu x y) (\main E[x] \parallel \child (V \app y) }
                {T}
        }
      \end{mathpar}

      as required.
    \end{case}

 \begin{case}{E-Comm-Send}
   \[
       \tm{(\nu x y)( \main E[\gvsend{V}{x}] \parallel \child E'[\recv\;{y}])}
        \cred
        \tm{(\nu x y)( \main E[x] \parallel \child E'[(V, y)])}
   \]

  Assumption:
  \begin{mathpar}
    \inferrule*
    {
      \inferrule*
      {
        \inferrule*
        { \tseq{\ty{\Gamma}, \tm{x} : \ty{S}}{E[\gvsend{V}{x}]}{U}}
        { \cseq{\ty{\Gamma}, \tm{x} : \ty{S}}{\main E[\gvsend{V}{x}] }{U} }
        \\
        \inferrule*
        { \tseq{\ty{\Gamma'}, \tm{y} : \ty{\co{S}}}{E'[\recv\;{y}]}{\tyends } }
        { \cseq{\ty{\Gamma'}, \tm{y} : \ty{\co{S}}}{\child E'[\recv\;{y}] }{\tychild} }
      }
      { \cseq{\ty{\Gamma}, \tm{x} : \ty{S}
          \hypersep \ty{\Gamma'}, \tm{y} : \ty{\co{S}}}
          {\main E[\gvsend{V}{x}] \parallel \child E'[\recv\;{y}] }{U} }
    }
    { \cseq{\ty{\Gamma}, \ty{\Gamma'}}{(\nu x y)( \main E[\gvsend{V}{x}]
        \parallel \child E'[\recv\;{y}]) }{U}
    }
  \end{mathpar}

  By Lemma~\ref{lem:hgv:subterm-typ}, there exist $\ty{\Gamma_1}, \ty{\Gamma_2},
  \ty{S}$ such
  that $\ty{\Gamma} = \ty{\Gamma_1}, \ty{\Gamma_2}$, and $
  \tseq{\ty{\Gamma_1}, \ty{\Gamma_2}, \tm{x} : \ty{S}}{E[\gvsend{V}{x}]}{U}$ and:

  \begin{mathpar}
    \inferrule*
    {
        \tseq{\ty{\Gamma_2}}{V}{T} \\
        \tseq{\tm{x} : \ty{\tysend{T}{S'}}}{x}{\tysend{T}{S'}}
    }
    { \tseq{\ty{\Gamma_2}, \tm{x} : \ty{\tysend{T}{S'}}}{\gvsend{V}{x}}{S'} }
  \end{mathpar}

  With the knowledge that $\ty{S} = \ty{\tysend{T}{S'}}$, we can refine our original
  derivation:

  \begin{mathpar}
    \inferrule*
    {
      \inferrule*
      {
        \inferrule*
        { \tseq{\ty{\Gamma_1}, \ty{\Gamma_2},
            \tm{x} : \ty{\tysend{T}{S'}}}{E[\gvsend{V}{x}]}{U} }
            { \cseq{\ty{\Gamma_1}, \ty{\Gamma_2}, \tm{x} : \ty{\tysend{T}{S'}}}{\main E[\gvsend{V}{x}] }{U} }
        \\
        \inferrule*
        { \tseq{\ty{\Gamma'}, \tm{y} : \ty{\tyrecv{T}{\co{S'}}}}{E'[\recv\;{y}]}{ \tyends } }
        { \cseq{\ty{\Gamma'}, \tm{y} : \ty{\tyrecv{T}{\co{S'}}}}{\child E'[\recv\;{y}] }{\tychild} }
      }
      { \cseq
          {\ty{\Gamma_1}, \ty{\Gamma_2}, \tm{x} : \ty{\tysend{T}{S'}}
          \hypersep \ty{\Gamma'}, \tm{y} : \ty{\tyrecv{T}{\co{S'}}}}
          {\main E[\gvsend{V}{x}] \parallel \child E'[\recv\;{y}] }
          { U }
      }
    }
    {
        \cseq
            {\ty{\Gamma_1}, \ty{\Gamma_2}, \ty{\Gamma'} }
            { (\nu x y)( \main E[\gvsend{V}{x}] \parallel \child E'[\recv\;{y}]) }
            { U }
    }
  \end{mathpar}

  Again by Lemma~\ref{lem:hgv:subterm-typ}, we have that
  $\tseq{\ty{\Gamma'}, \tm{y} : \ty{\tyrecv{T}{\co{S'}}}}{E'[\recv\;{y}]}{\tyends}$ and:

  \begin{mathpar}
    \inferrule*
    { \tseq{\tm{y} : \ty{\tyrecv{T}{\co{S'}}}}{y}{\tyrecv{T}{\co{S'}}} }
    { \tseq{\tm{y} : \ty{\tyrecv{T}{\co{S'}}}}{\recv\;{y}}{\typrod{T}{\co{S'}} } }
  \end{mathpar}

  We can show:
  \begin{mathpar}
    \inferrule*
    { \tseq{\ty{\Gamma_2}}{V}{T} \\ \tseq{\tm{y} : \ty{\co{S'}}}{\tm{y}}{\ty{\co{S'}}} }
    { \tseq{\ty{\Gamma_2}, \tm{y} : \ty{\co{S'}}}{(V, y)}{\typrod{T}{\co{S'}} } }
  \end{mathpar}

  By Lemma~\ref{lem:hgv:replacement}, we have that
  $\tseq{\ty{\Gamma_2}, \ty{\Gamma'}, \tm{y} : \ty{\co{S'}}}{E'[(V, y)]}{\co{S'}}$.

  Recomposing:

  \begin{mathpar}
    \inferrule*
    {
      \inferrule*
      {
        \inferrule*
        { \tseq{\ty{\Gamma_1}, \tm{x} : \ty{S'}}{E[x]}{U} }
        { \cseq{\ty{\Gamma_1}, \tm{x} : \ty{S'}}{\main E[x]}{U} } \\
\inferrule*
        { \tseq{\ty{\Gamma_2}, \ty{\Gamma'}, \tm{y} : \ty{\co{S'}}}{E'[(V, y)]}{\tyends} }
        { \cseq{\ty{\Gamma_2}, \ty{\Gamma'}, \tm{y} : \ty{\co{S'}}}{\child E'[(V, y)]}{\tychild} }
      }
      {
          \cseq
          {\ty{\Gamma_1}, \tm{x} : \ty{S'} \hypersep \ty{\Gamma_2},
          \ty{\Gamma'}, \tm{y} : \ty{\co{S'}} }
            {\main E[x] \parallel \child E'[(V, y)] }
            { U }
      }
    }
    { \cseq
        { \ty{\Gamma_1}, \ty{\Gamma_2}, \ty{\Gamma'} }
        { (\nu x y)( \main E[x] \parallel \child E'[(V, y)]) }
        { U }
    }
  \end{mathpar}

  as required.
 \end{case}

\begin{case}{E-Comm-Close}

  \[
      \tm{(\nu x y)( \config{T}[\wait\;{x}] \parallel \child y)} \cred
      \tm{\config{T}[()]}
  \]

  Taking $\config{T} = \main E$, assumption:

  \begin{mathpar}
    \inferrule*
    {
      \inferrule*
      {
        \inferrule*
        { \tseq{\ty{\Gamma}, \tm{x} : \ty{\tyendr}}{E[\wait\;{x}]}{T} }
        { \tseq{\ty{\Gamma}, \tm{x} : \ty{\tyendr}}{\main E[\wait\;{x}] }{T} }
        \\
        \inferrule*
        {
          \inferrule*
          { }
          { \tseq{\tm{y} : \ty{\tyends}}{y}{\tyends} }
        }
        { \cseq{\tm{y} : \ty{\tyends}}{\child y}{\tychild} }
      }
      { \cseq
          {\ty{\Gamma}, \tm{x} : \ty{\tyendr} \hypersep \tm{y} : \ty{\tyends} }
          {\main E[\wait\;{x}] \parallel \child y}
          { T }
      }
    }
    { \cseq
        { \ty{\Gamma} }
        { (\nu x y) (\main E[\wait\;{x}] \parallel \child y)  }
        { T }
    }
  \end{mathpar}

  By Lemma~\ref{lem:hgv:subterm-typ}, we have that:

  \begin{mathpar}
    \inferrule*
    { \tseq{\tm{x} : \ty{\tyendr}}{x}{\tyendr} }
    { \tseq{\tm{x} : \ty{\tyendr}}{\wait\;{x}}{\tyunit} }
  \end{mathpar}

  By Lemma~\ref{lem:hgv:replacement}, $\tseq{\ty{\Gamma}}{E[()]}{T}$.

  Recomposing:
  \begin{mathpar}
    \inferrule*
    { \tseq{\ty{\Gamma}}{E[()]}{T} }
    { \cseq{\ty{\Gamma}}{\main E[()]}{T} }
  \end{mathpar}

  as required.
\end{case}

\begin{case}{E-Reify-Link}

  \[
      \tm{F[\gvlink{x}{y}]} \cred
      \tm{(\nu z z')(\linkconfig{z}{x}{y} \parallel F[z'] )}
  \]
  where $\tm{z}, \tm{z'}$ fresh.

  Taking $\tm{F} = \tm{\main E}$, we have that:

  \begin{mathpar}
    \inferrule*
    { \tseq{\ty{\Gamma}}{E[\gvlink{x}{y}]}{T} }
    { \cseq{\ty{\Gamma}}{\main E[\gvlink{x}{y}]}{T} }
  \end{mathpar}

  By Lemma~\ref{lem:hgv:subterm-typ}, we have that $\ty{\Gamma} = \ty{\Gamma'},
  \tm{x} : \ty{S}, \tm{y} : \ty{\co{S}}$ such that:

  \begin{mathpar}
    \inferrule*
    {
      \inferrule*
      { \tseq{\tm{x} : \ty{S}}{x}{S} \\ \tseq{\tm{y} : \ty{\co{S}}}{y}{\co{S} } }
      { \tseq{\tm{x} : \ty{S}, \tm{y} : \ty{\co{S}}}{(x, y)}{\typrod{S}{\co{S}}} }
    }
    { \tseq{\tm{x} : \ty{S}, \tm{y}: \ty{\co{S}}}{\gvlink{x}{y}}{\tychild} }
  \end{mathpar}

  By Lemma~\ref{lem:hgv:replacement}, we have that $\tseq{\ty{\Gamma'}, \tm{z} :
  \ty{\tyends}}{E[z]}{T}$.

  Reconstructing:

  \begin{mathpar}
    \inferrule*
    {
      \inferrule*
      {
        \inferrule*
        { }
        { \cseq
            {\tm{z} : \ty{\tyendr}, \tm{x} : \ty{S}, \tm{y} : \ty{\co{S}} }
            {\linkconfig{z}{x}{y}}
            {\tychild}
        } \\
\cseq
            { \ty{\Gamma'}, \tm{z} : \ty{\tyends} }
            {\main E[z]}
            {T}
}
      {
          \cseq
          { \tm{z} : \ty{\tyendr}, \tm{x} : \ty{S}, \tm{y} : \ty{\co{S}}
                  \hypersep \ty{\Gamma'}, \tm{z} : \ty{\tyends} }
          { \linkconfig{z}{x}{y} \parallel \main E[z] }
          { T }
      }
    }
    { \cseq
        { \ty{\Gamma'}, \tm{x} : \ty{S}, \tm{y} : \ty{\co{S}} }
        { (\nu z z')( \linkconfig{z}{x}{y} \parallel \main E[z] )}
        { T }
    }
  \end{mathpar}

  as required.
\end{case}

\begin{case}{E-Comm-Link}

  \[
    \tm{(\nu z z')(\nu x x')(\linkconfig{z}{x}{y} \parallel \child z \parallel
    \main M)}
    \cred
    \tm{\main (M \{ y / x \})}
  \]

  Assumption:

  \begin{mathpar}
    \inferrule*
    {
      \inferrule*
      {
        \inferrule*
        {
          \inferrule*
          { }
          { \cseq
              { \tm{x} : \ty{S}, \tm{y} : \ty{\co{S}}, \tm{z} : \ty{\tyendr} }
              { \linkconfig{z}{x}{y}}
              { \tychild }
          }
          \\
          \inferrule*
          {
            \inferrule*
            { \tseq{\tm{z'} : \ty{\tyends}}{z}{\tyends} }
            { \cseq{\tm{z'} : \ty{\tyends}}{\child z}{\tychild} }
            \\
            \inferrule*
            { \tseq{\ty{\Gamma}, \tm{x'} : \ty{\co{S}}}{M}{T} }
            { \cseq{\ty{\Gamma}, \tm{x'} : \ty{\co{S}}}{\main M}{T} }
          }
          { \cseq
              {\tm{z'} : \ty{\tyends} \hypersep \ty{\Gamma}, \tm{x'} :
              \ty{\co{S}}}
            {\child z  \parallel \main M}
            { T }
          }
        }
        { \cseq
            {\tm{x} : \ty{S}, \tm{y} : \ty{\co{S}}, \tm{z} : \ty{\tyendr}
            \hypersep \tm{z'} : \ty{\tyends} \hypersep \ty{\Gamma}, \tm{x'} : \ty{\co{S}} }
            { \linkconfig{z}{x}{y} \parallel \child z' \parallel \main M }
            { T }
        }
      }
      {
          \cseq
          { \ty{\Gamma}, \tm{y} : \ty{\co{S}}, \tm{z} : \ty{\tyendr} \hypersep
          \tm{z'} : \ty{\tyends} }
            { (\nu x x')(\linkconfig{z}{x}{y} \parallel \child z' \parallel \main M)}
            { T }
      }
    }
    {
        \cseq
            { \ty{\Gamma}, \tm{y} : \ty{\co{S}} }
            { (\nu z z')(\nu x x')(\linkconfig{z}{x}{y} \parallel \child z' \parallel \main M)}
            { T }
    }
  \end{mathpar}

  By Lemma~\ref{lem:hgv:substitution}, $\tseq{\ty{\Gamma}, \tm{y'} : \ty{\co{S}}}{M \{ y / x' \}}{T}$, thus:

\begin{mathpar}
  \inferrule*
  {
      \tseq{\ty{\Gamma}, \tm{y'} : \ty{\co{S}}}{M \{ y / x' \}}{T}
  }
  { \cseq{\ty{\Gamma}, \tm{y'} : \ty{\co{S}}}{\main M \{ y / x' \}}{T} }
\end{mathpar}

as required.
\end{case}

  \begin{case}{E-Res}
    \[
        \tm{(\nu x y) \config{C}} \cred \tm{(\nu x y) \config{D}} \qquad
        \text{if } \tm{\config{C}} \cred \tm{\config{D}}
     \]

     Immediate by the IH.
  \end{case}

  \begin{case}{E-Par}
    \[
        \tm{\config{C} \parallel \config{D}} \cred
        \tm{\config{C}' \parallel \config{D}} \qquad
        \text{if } \tm{\config{C}} \cred \tm{\config{C}'}
    \]

    Immediate by the IH.
  \end{case}

  \begin{case}{E-Equiv}
    \[
        \tm{\config{C}}  \cred  \tm{\config{D}} \quad \text{if }
        \tm{\config{C}} \equiv \tm{\config{C}'}, \tm{\config{C}'} \cred
        \tm{\config{D}'}, \text{ and }
        \tm{\config{D}'} \equiv \tm{\config{D}}
    \]
  \end{case}

  Assumption: $\cseq{\ty{\hyper{G}}}{\config{C}}{R}$.

  By~\Cref{lem:hgv-equiv-pres}, $\cseq{\ty{\hyper{G}}}{\config{C}'}{R}$.

  By the IH, $\cseq{\ty{\hyper{G}}}{\config{D}'}{R}$.

  By~\Cref{lem:hgv-equiv-pres}, $\cseq{\ty{\hyper{G}}}{\config{D}}{R}$, as
  required.

  \begin{case}{E-Lift-M}
    \[
        \tm{\phi M} \cred \tm{\phi N} \qquad \text{if } \tm{M} \tred \tm{N}
    \]

    Immediate by Lemma~\ref{lem:gv:term-pres}. \qedhere
  \end{case}
\end{proof}

\hgvpres*
\begin{proof}
    A direct corollary of Lemmas~\ref{lem:hgv-equiv-pres}
    and~\ref{lem:hgv-cred-pres}.
\end{proof}

\subsection{Progress}

Functional reduction satisfies progress: under an environment only containing
runtime names, a term will either reduce, be a value, or be ready to perform a
communication action.

\begin{lem}[Progress, Terms]\label{lem:gv:term-progress}
    If $\tseq{\ty{\Psi}}{M}{T}$, then either $\tm{M}$ is a value,
    or there exists some $\tm{N}$ such that
    $\tm{M} \tred \tm{N}$, or $\tm{M}$ can be written $\tm{E[N]}$ for some\\
    $\tm{N} \in \{ \tm{\fork\;{V}},
      \tm{\send\;{(V, W)}},
      \tm{\recv\;{V}},
      \tm{\wait\;{V}},
      \tm{\link\;(V, W)} \}$.
\end{lem}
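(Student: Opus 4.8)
The plan is to proceed by induction on the typing derivation $\tseq{\ty{\Psi}}{M}{T}$, following the standard progress argument for a call-by-value linear $\lambda$-calculus augmented with the session-typed communication constants. The key observation that makes the statement work is that the context $\ty{\Psi}$ contains \emph{only} session-typed runtime names, so the variable case cannot produce a term blocked on an ordinary free variable: a bare variable $\tm{x}$ in $\ty{\Psi}$ has a session type $\ty{S}$, which is already a value. I would first record a canonical forms lemma stating that a closed value of function type is a $\lambda$-abstraction or one of the communication constants, a value of product type is a pair, a value of sum type is an $\tm{\inl}$ or $\tm{\inr}$, and a value of $\ty{\tyunit}$ type is $\tm{\unit}$; these facts drive the interesting cases.

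Next I would walk through the typing rules. The axiom cases \rulename{TM-Var}, \rulename{TM-Unit}, \rulename{TM-Lam}, and \rulename{TM-Const} all give values immediately (noting that a constant $\tm{K}$ such as $\tm\fork$, $\tm\send$, $\tm\recv$, $\tm\wait$, $\tm\link$ is itself a value, so these do not yet produce the $\tm{E[N]}$ forms). For each elimination rule --- \rulename{TM-App}, \rulename{TM-LetUnit}, \rulename{TM-LetPair}, \rulename{TM-CaseSum}, \rulename{TM-Absurd} --- I apply the induction hypothesis to the principal premise. If that subterm reduces or is already of the blocked $\tm{E[N]}$ form, I lift the reduction or the evaluation context through the corresponding frame (e.g.\ $\tm{E\;M}$, $\tm{\letunit{E}{M}}$, $\tm{\casesum{E}{x}{M}{y}{N}}$) and conclude. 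The remaining possibility is that the principal subterm is a value, at which point canonical forms tells me its shape and I either fire a term reduction ($\tred$) via \rulename{E-Lam}, \rulename{E-Unit}, \rulename{E-Pair}, \rulename{E-Inl}/\rulename{E-Inr}, or --- in the case where the value at a function position is one of the communication constants applied to a value argument --- I exhibit the term in the form $\tm{E[N]}$ with $\tm{N}$ a fully-applied primitive. The $\tm{\absurd{V}}$ case is vacuous since there is no closed value of type $\ty{\tyvoid}$ under a purely session-typed context.

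The main obstacle, and the point needing the most care, is the application case \rulename{TM-App} when the operator is a communication constant. Here I must match the type schemas from \Cref{fig:hgv-typing-static} against the five blocked forms in the statement: $\tm\fork$ expects a closure, $\tm\send$ and $\tm\link$ expect a pair, while $\tm\recv$ and $\tm\wait$ expect a single endpoint argument. I would use canonical forms on the \emph{argument} to certify that $\tm\send\;W$ is blocked precisely when $W$ is a pair $\tm{\pair{V}{x}}$ (hence $\tm{E[\send\;\pair{V}{x}]}$ matches $\tm{\send\;(V,W)}$), and similarly that $\tm\link\;W$ with $W$ a pair matches $\tm{\link\;(V,W)}$; a constant applied to a not-yet-value argument is handled by the induction hypothesis on the argument and the $\tm{V\;E}$ frame. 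Care is also required to confirm that these blocked forms are genuinely irreducible as \emph{terms} (the reductions for $\tm\fork$, $\tm\send$, etc.\ live at the configuration level, via \rulename{E-Reify-Fork}, \rulename{E-Comm-Send}, and friends, not in $\tred$), so that exhibiting $\tm{E[N]}$ is the correct disjunct rather than a term reduction.
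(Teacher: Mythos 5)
Your proposal is correct and matches the paper's approach: the paper proves this lemma by exactly the standard induction on the typing derivation $\tseq{\ty{\Psi}}{M}{T}$ that you describe, relying on the same observations (variables in $\ty{\Psi}$ are session-typed values, canonical forms drive the elimination cases, and fully-applied communication constants are irreducible at the term level and fall into the $\tm{E[N]}$ disjunct). Your write-up simply spells out the details that the paper leaves implicit.
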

\begin{proof}
    A standard induction on the derivation of $\tseq{\ty{\Psi}}{M}{T}$.
\end{proof}

Note that tree canonical forms can be defined inductively:

\[
    \tm{\tcf} ::= \tm{\phi M} \sep \tm{(\nu x y)(\taux \parallel \tcf)}
\]

We assume the same requirement for configurations $\tm{\tcf}$ as the
non-inductive definition of tree canonical forms: i.e., that for a
configuration $\tm{(\nu x y)(\taux \parallel \tcf)}$, that $x \in
\fv(\tm{\taux})$.

\Cref{lem:hgv:open-progress} follows as a direct corollary of a slightly more
verbose property, which follows from the inductive definition of TCFs.

\begin{defi}[Open progress]
    Suppose $\cseq{\ty{\Psi}}{\config{\tcf}}{R}$, where $\tm{\tcf} \not\cred$.
    We say that $\tm{\tcf}$
  \emph{satisfies open progress} if:

  \begin{enumerate}
      \item $\tm{\config{C}} = \tm{(\nu x x')(\taux \parallel \tcf'})$, where:
      \begin{enumerate}
          \item There exist $\ty{\Psi_1}, \ty{\Psi_2}$ such that $\ty{\Psi} =
              \ty{\Psi_1}, \ty{\Psi_2}$
\item $\cseq{\ty{\Psi_1}, \tm{x} {:} \ty{S}}{\taux}{\tychild}$ for some
              session type $\ty{S}$,
              and $\blocked{\tm{\taux}}{\tm{y}}$ for some $\tm{y} \in
              \fv(\ty{\Psi_1}, \tm{x} {:} \ty{S})$
\item $\cseq{\ty{\Psi_2}, \tm{x'} {:} \ty{\co{S}}}{\config{D}}{R}$, where
              $\tm{\tcf'}$ satisfies open progress
      \end{enumerate}
  \item $\tm{\tcf} = \tm{\phi M}$, and either $\tm{M}$ is a value,
      or $\blocked{\tm{\phi M}}{\tm{x}}$ for some $\tm{x} \in \fv(\ty{\Psi})$.
  \end{enumerate}
\end{defi}

\begin{lem}[Open progress]
    If $\cseq{\ty{\Psi}}{\tcf}{R}$ and $\tm{\tcf} \not\cred$, then $\tm{\tcf}$
    satisfies open progress.
\end{lem}
\begin{proof}
    By induction on the derivation of $\cseq{\ty{\hyper{G}}}{\tcf}{R}$.
    By the definition of canonical forms, it must be the case that
    $\tm{\config{C}}$
    is of the form $\tm{(\nu x y)(\taux \parallel \tcf')}$ or $\tm{\main M}$.

  We show the case where
  $\tm{\config{C}} = \tm{(\nu x y)(\child M \parallel \tcf')}$; the cases
  for
  $\tm{\config{A}} = \tm{\linkconfig{z}{x}{x'}}$ and
  $\tm{\config{C}} = \tm{\main M}$ follow similar reasoning.

  Assumption:
  \begin{mathpar}
    \inferrule*
    {
      \inferrule*
      {
          \cseq{\ty{\Psi_1}, \tm{x} : \ty{S}}{\tm{\taux}}{\tychild}
        \\
        \cseq{\ty{\Psi_2}, \tm{y} : \ty{\co{S}}}{\tcf'}{R}
      }
      { \cseq
          { \ty{\Psi_1}, \tm{x} : \ty{S} \hypersep \ty{\Psi_2}, \tm{y} : \ty{\co{S}}}
          {\taux \parallel \tcf' }
          { R }
      }
    }
    { \cseq
        { \ty{\Psi_1}, \ty{\Psi_2} }
        { (\nu x y) (\child M \parallel \tcf')}
        { R }
    }
  \end{mathpar}
  In both cases, by the induction hypothesis,
  $\cseq{\ty{\Psi_2}, \tm{y} : \ty{\co{S}}}{\tm{\tcf'}}{\ty{T}}$ satisfies open progress.

  \begin{subcase}{$\tm{\taux} = \tm{\child M}$}
    By Lemma~\ref{lem:gv:term-progress}, either $\tm{M}$ is a value, or $\tm{M}$ can be
    written $\tm{E[N]}$ for some communication and concurrency construct $\tm{N \in \{
    \fork\;{V}, \gvsend{V}{W}, \recv\;{V}, \wait\;{V},}$ $\tm{ \gvlink{V}{W} \}}$.

    Otherwise, $\tm{M}$ is a communication or concurrency construct. If $\tm{N} =
    \tm{\fork\;V}$, then reduction could occur by \textsc{E-Reify-Fork}.
If $\tm{N} = \tm{\link\;(V, W)}$, then by the type schema for $\tm\link$, we
    have that $\tm{\gvlink{V}{W}}$ must be of the form $\tm{\gvlink{z}{z'}}$ for
    $\tm{z, z'} \in \fv(\ty{\Psi}, \tm{x} : \ty{S})$ and could reduce by
    \textsc{E-Reify-Link}.

    Otherwise, it must be the case that $\blocked{\tm{\child M}}{\tm{z}}$ for
    some $\tm{z} \in \fv(\ty{\Psi_1}, \tm{x} : \ty{S})$.

    Thus, $\tm{(\nu x y) (\child M \parallel \config{D})}$ satisfies open progress, as
    required.
\end{subcase}

\begin{subcase}{$\tm{\taux} = \tm{\linkconfig{z_1}{z_2}{z_3}}$}
    We have that $\tm{z_1, z_2, z_3} \in \fv(\ty{\Psi_1}, \tm{x} : \ty{S})$, and the thread must be
    blocked by definition. \qedhere
\end{subcase}
\end{proof}

\endgroup

 }
  {\section{Omitted Proofs for \cref{sec:relation-to-gv}: Relation between HGV and GV}\label{appendix:gv-hgv}
\usingnamespace{hgv}

\gvinhgv*
\begin{proof}
    By induction on the derivation of $\cseq{\ty{\Gamma}}{\config{C}}{R}$.

  \begin{case}{\textsc{TG-New}}
    Assumption:

    \gv{
    \begin{mathpar}
      \inferrule*
      { \gvseq{\ty{\Gamma}, \tm{\lock{y}{y'}} :
      \ty{\lockty{S}}}{\tm{\config{C}}}{\ty{R}} }
      { \gvseq{\ty{\Gamma}}{(\nu y y') \config{\tm{C}}}{\ty{R}} }
    \end{mathpar}
    }

    Suppose $\ty{\Gamma} = \tm{\lock{x_1}{x'_1}} : \ty{\lockty{S_1}},
    \ldots, \tm{\lock{x_n}{x'_n}}: \ty{\lockty{S_n}}$
    (for clarity, without loss of generality, we assume the
    absence of non-session variables.  As these are simply split between
    environments, they can be added orthogonally).

    By the IH, we have that there exists some hyper-environment $\ty{\hyper{G}}$ such
    that $\cseq{\ty{\hyper{G}}}{\tm{\config{C}}}{\ty{R}}$,
    where $\ty{\hyper{G}}$ is a splitting of
    $\ty{\Gamma}, \tm{\lock{y}{y'}} : \ty{\lockty{S}}$.

    Since $\ty{\hyper{G}}$ is a splitting of $\tm{\config{C}}$, we know that
    $\tm{y} : \ty{S} \in \ty{\hyper{G}}$ and
    $\tm{y'} : \ty{\co{S}} \in \ty{\hyper{G}}$, and that $\ty{\hyper{G}}$ has a
    tree structure with respect to names $\{ \{ \tm{x_1}, \tm{x'_1} \}, \ldots,
    \{ \tm{x_n}, \tm{x'_n} \}, \{ \tm{y}, \tm{y'} \} \}$.

    Since $\ty{\hyper{G}}$ has a tree structure, by definition we have that
    $\ty{\hyper{G}} = \ty{\hyper{G}'} \hypersep
    \ty{\Gamma_1}, \tm{y} : \ty{S} \hypersep \ty{\Gamma_2}, \tm{y'} : \ty{\co{S}}$
    for some $\ty{\hyper{G}'}, \ty{\Gamma_1}, \ty{\Gamma_2}$, where
    $\ty{\hyper{G}'}$ has a tree structure.

    By Lemma~\ref{lem:tree-structure-iff} (clause 1, left-to-right),
    $\ty{\hyper{G}'} \hypersep
    \ty{\Gamma_1}, \ty{\Gamma_2}$ has a tree structure with respect to names
    $\{ \{ \tm{x_1}, \tm{x'_1} \}, \ldots, \{ \tm{x_n}, \tm{x'_n} \} \}$.

    Thus, we can show:

    \begin{mathpar}
      \inferrule*
      { \cseq
          {\ty{\hyper{G}'} \hypersep \ty{\Gamma_1},
          \tm{y} : \ty{S} \hypersep \ty{\Gamma_2}, \tm{y'} : \ty{\co{S}} }
          { \tm{\config{C}} }
          { \ty{R} }
      }
      { \cseq{\ty{\hyper{G}'} \hypersep \ty{\Gamma_1},
      \ty{\Gamma_2}}{(\nu \tm{y} \tm{y'})\tm{\config{C}}}{\ty{R}} }
    \end{mathpar}

    where
    $\ty{\hyper{G}'} \hypersep \ty{\Gamma_1}, \ty{\Gamma_2}$ has a tree
    structure with respect to names $\{ \{ \tm{x_1}, \tm{x'_1} \}, \ldots,
        \{ \tm{x_n}, \tm{x'_n} \} \}$ and is therefore a
        splitting of $\ty{\Gamma}$, as required.
  \end{case}

  \begin{case}{\textsc{TG-Connect}$_1$}
    Assumption:

    \gv{
    \begin{mathpar}
      \inferrule*
      {
          \gvseq{\ty{\Gamma_1}, \tm{y} : \ty{S}}{\config{C}}{\ty{R_1}} \\
          \gvseq{\ty{\Gamma_2}, \tm{y'} : \ty{\co{S}}}{\tm{\config{D}}}{\ty{R_2}}
      }
      { \gvseq
          {\ty{\Gamma_1}, \ty{\Gamma_2}, \tm{\lock{y}{y'}} : \ty{\lockty{S}} }
          { \tm{\config{C} \parallel \config{D}} }
          { \ty{R_1} \isect \ty{R_2} }
      }
    \end{mathpar}
    }

    Suppose $\ty{\Gamma_1} = \tm{\lock{x_1}{x'_1}} : \ty{\lockty{S_1}}, \ldots,
    \tm{\lock{x_m}{x'_m}} : \ty{\lockty{S_m}}$
    and $\ty{\Gamma_2} = \tm{\lock{x_{m + 1}}{x'_{m + 1}}} :
    \ty{\lockty{S_{m + 1}}}, \ldots, \tm{\lock{x_n}{x'_n}} : \ty{\lockty{S_n}}$.

    By the IH, there exist hyper-environments $\ty{\hyper{G}}, \ty{\hyper{H}}$ such that:
    \gv{
    \begin{enumerate}
        \item $\ty{\hyper{G}}$ is a splitting of $\ty{\Gamma_1}, \tm{y} : \ty{S}$
        \item $\ty{\hyper{H}}$ is a splitting of $\ty{\Gamma_2}, \tm{y'} : \ty{\co{S}}$
        \item $\gvseq{\ty{\hyper{G}}}{\tm{\config{C}}}{\ty{R_1}}$
        \item $\gvseq{\ty{\hyper{H}}}{\tm{\config{D}}}{\ty{R_2}}$
    \end{enumerate}
    }

    By the definition of splittings, $\ty{\hyper{G}}$ and $\ty{\hyper{H}}$ can be
    written $\ty{\hyper{G}} = \ty{\hyper{G}'} \hypersep \ty{\Gamma'_1}, \tm{y} :
    \ty{S}$ and $\ty{\hyper{H}} =
    \ty{\hyper{H}'} \hypersep \ty{\Gamma'_2}, \tm{y'} : \co{\ty{S}}$ for some
    $\ty{\Gamma'_1}, \ty{\Gamma'_2}$.
    Furthermore, $\ty{\hyper{G}}$ has a tree structure with respect\linebreak[4]to
    $\{ \{ \tm{x_1}, \tm{x'_1} \}, \ldots, \{ \tm{x_m}, \tm{x'_m} \} \} $ and
    $\ty{\hyper{H}}$ has a tree structure with respect to\linebreak[4]
    $\{ \{ \tm{x_{m + 1}}, \tm{x'_{m + 1}} \}, \ldots, \{ \tm{x_n}, \tm{x'_n} \} \} $.

    By Lemma~\ref{lem:tree-structure-iff} (clause 2, left-to-right),
    $\ty{\hyper{G}'} \hypersep \ty{\Gamma'_1}, \tm{y} : \ty{S} \hypersep
     \ty{\hyper{H}'} \hypersep \ty{\Gamma'_2}, \tm{y'} : \ty{\co{S}}$
    has a tree structure with respect to
    $ \{ \{ \tm{x_1}, \tm{x'_1} \}, \ldots, \{ \tm{x_n}, \tm{x'_n} \}, \{
    \tm{y}, \tm{y'} \}  \}$
    and therefore $\ty{\hyper{G}} \hypersep \ty{\hyper{H}}$ is a splitting of
    $\ty{\Gamma_1}, \ty{\Gamma_2}, \tm{\lock{y}{y'}} : \ty{\lockty{S}}$.

    Recomposing in HGV:
    \begin{mathpar}
      \inferrule*
      {
          \cseq{\ty{\hyper{G}}}{\tm{\config{C}}}{\ty{R_1}} \\
          \cseq{\ty{\hyper{H}}}{\tm{\config{D}}}{\ty{R_2}}
      }
      { \cseq
          {\ty{\hyper{G}} \hypersep \ty{\hyper{H}} }
          { \tm{\config{C}} \parallel \tm{\config{D}} }
          { \ty{R_1} \isect \ty{R_2} }
      }
    \end{mathpar}

    as required.
  \end{case}

  \begin{case}{\textsc{TG-Connect}$_2$}
    Similar to \textsc{TG-Connect}$_1$.
  \end{case}

  \begin{case}{\textsc{TG-Child}}
    Assumption:

    \gv{
    \begin{mathpar}
      \inferrule*
      { \tseq{\ty{\Gamma}}{\tm{M}}{\ty{\tyends}} }
      { \gvseq{\ty{\Gamma}}{\tm{\child M}}{\ty{\tychild}} }
    \end{mathpar}
    }

    Since we mandated that variables of type $\ty{\lockty{S}}$ cannot appear in
    terms, there are no names of type $\ty{\lockty{S}}$ in $\ty{\Gamma}$.
    Therefore, the singleton hyper-environment $\ty{\Gamma}$ is a valid
    splitting, and so we can conclude by \textsc{TC-Child} in HGV.
  \end{case}

  \begin{case}{\textsc{TG-Main}}
    Similar to \textsc{TG-Child}. \qedhere
  \end{case}
\end{proof}

\hgvingv*
\begin{proof}
  By induction on the number of $\nu$-bound names.

  In the case that $n = 0$, the result follows immediately by \textsc{TG-Child}
  or \textsc{TG-Main}.

  In the case that $n \ge 1$, we have that $\ty{\Gamma} = \ty{\Gamma_1}, \ty{\Gamma_2}$
  for some $\ty{\Gamma_1}, \ty{\Gamma_2}$ and:

  \begin{mathpar}
    \inferrule*
    {
      \inferrule*
      {
          \cseq{\ty{\Gamma_1}, \tm{x} : \ty{S}}{\tm{\child L}}{\ty{\tychild}} \\
      \cseq{\ty{\Gamma_2}, \tm{y} : \ty{\co{S}}}{\tm{\config{D}}}{\ty{R}}
      }
      {  \cseq
          { \ty{\Gamma_1}, \tm{x} : \ty{S} \hypersep \ty{\Gamma_2}, \tm{y} : \ty{\co{S}} }
          { \tm{\child L \parallel \config{D}} }
          {R}
      }
    }
    { \cseq
        { \ty{\Gamma_1, \Gamma_2} }
        { \tm{(\nu x y)(\child L \parallel \config{D})} }
        { \ty{R} }
    }
  \end{mathpar}

  such that $\tm{\config{D}}$ is in tree canonical form.
  That $\cseq{\ty{\Gamma_1}, \tm{x} : \ty{S}}{\tm{\child L}}{\ty{\tychild}}$
  follows by the definition of tree canonical forms, since $\tm{x} \in \fv(\tm{L})$.

  By the IH, $\cseq{\ty{\Gamma_2}, \tm{y} :
  \ty{\co{S}}}{\tm{\config{D}}}{\ty{R}}$ in GV.

  Thus, we can write:

  \begin{mathpar}
    \inferrule*
    {
      \inferrule*
      {
          \cseq{\ty{\Gamma_1}, \tm{x} : \ty{S}}{\tm{\child L}}{\ty{\tychild}} \\
          \cseq{\ty{\Gamma_2}, \tm{y} : \ty{\co{S}}}{\tm{\config{D}}}{\ty{R}}
      }
      { \cseq
          { \ty{\Gamma_1}, \ty{\Gamma_2}, \tm{\lock{x}{y}} : \ty{\lockty{S}} }
          { \tm{\child L \parallel \config{D}} }
          { \ty{R} }
      }
    }
    {
        \cseq
        { \ty{\Gamma_1}, \ty{\Gamma_2} }
            { \tm{(\nu x y)(\child L \parallel \config{D})} }
            { \ty{R} }
    }
  \end{mathpar}

  as required.
\end{proof}

 }
  {\section{Omitted Proofs for \cref{sec:relation-to-cp}: Relation between HGV and CP}\label{appendix:hcp-to-hgv}

\subsection{Full definition of \fgHGV}
{
  \usingnamespace{hgv}

\paragraph{Syntax}
\[
\small
  \usingnamespace{hgv}
  \begin{array}{lrcl}
    \text{Terms}
    & \tm{L}, \tm{M}, \tm{N}
    & \bnfdef & \tm{V}
      \sep      \tm{\letbind{x}{M}{N}}
      \sep      \tm{V\;W} \\
    &&\sep    & \tm{\letunit{V}{M}}
      \sep      \tm{\letpair{x}{y}{V}{M}} \\
    &&\sep    & \tm{\absurd{V}}
      \sep      \tm{\casesum{V}{x}{M}{y}{N}}
    \\
    \text{Values}
    & \tm{V}, \tm{W}
    & \bnfdef & \tm{x}
      \sep      \tm{K}
      \sep      \tm{\lambda x.M}
      \sep      \tm{\unit}
      \sep      \tm{\pair{V}{W}}
      \sep      \tm{\inl{V}}
      \sep      \tm{\inr{V}} \\
\text{Evaluation contexts}
    & \tm{E}
    & \bnfdef & \tm{\hole}
      \sep      \tm{\letbind{x}{E}{M}} \\
    \text{Thread contexts}
     & \tm{F}
     & \Coloneqq & \tm{\phi\;E}
  \end{array}
\]

\headersig{Typing rules for values}{$\tseq{\ty{\Gamma}}{V}{T}$}
{\small
\begin{mathpar}
  \inferrule*[lab=TV*-Var]{
    }{\tseq{\tmty{x}{T}}{x}{T}}

    \inferrule*[lab=TV*-Const]{
    }{\tseq{\emptyenv}{K}{T}}

    \inferrule*[lab=TV*-Lam]{
      \tseq{\ty{\Gamma},\tmty{x}{T}}{M}{U}
    }{\tseq{\ty{\Gamma}}{\lambda x.M}{\tylolli{T}{U}}}

    \inferrule*[lab=TV*-Unit]{
    }{\tseq{\emptyenv}{\unit}{\tyunit}}

    \inferrule*[lab=TV*-Pair]{
      \tseq{\ty{\Gamma}}{V}{T}
      \\
      \tseq{\ty{\Delta}}{W}{U}
    }{\tseq{\ty{\Gamma},\ty{\Delta}}{\pair{V}{W}}{\typrod{T}{U}}}

    \inferrule*[lab=TV*-Absurd]{
      \tseq{\ty{\Gamma}}{V}{\tyvoid}
    }{\tseq{\ty{\Gamma}}{\absurd{V}}{T}}

    \inferrule*[lab=TV*-Inl]{
      \tseq{\ty{\Gamma}}{V}{T}
    }{\tseq{\ty{\Gamma}}{\inl{V}}{\tysum{T}{U}}}

    \inferrule*[lab=TV*-Inr]{
      \tseq{\ty{\Gamma}}{V}{U}
    }{\tseq{\ty{\Gamma}}{\inr{V}}{\tysum{T}{U}}}
\end{mathpar}
}

  \headersig{Typing rules for terms}{$\tseq{\ty{\Gamma}}{M}{T}$}
  {\small
  \begin{mathpar}
      \inferrule*[lab=TM*-App]{
      \tseq{\ty{\Gamma}}{V}{\tylolli{T}{U}}
      \\
      \tseq{\ty{\Delta}}{W}{T}
    }{\tseq{\ty{\Gamma},\ty{\Delta}}{V\;W}{U}}

    \inferrule*[lab=TM*-Let]
    {
        \tseq{\ty{\Gamma}}{M}{T} \\
        \tseq{\ty{\Delta, x : \ty{T}}}{N}{U}
    }
    { \tseq{\ty{\Gamma}, \ty{\Delta}}{\letbind{x}{M}{N}}{U} }

    \inferrule*[lab=TM*-LetUnit]{
      \tseq{\ty{\Gamma}}{V}{\tyunit}
      \\
      \tseq{\ty{\Delta}}{M}{T}
    }{\tseq{\ty{\Gamma},\ty{\Delta}}{\letunit{V}{M}}{T}}

    \inferrule*[lab=TM*-LetPair]{
      \tseq{\ty{\Gamma}}{V}{\typrod{T}{T'}}
      \\
      \tseq{\ty{\Delta},\tmty{x}{T},\tmty{y}{T'}}{M}{U}
    }{\tseq{\ty{\Gamma},\ty{\Delta}}{\letpair{x}{y}{V}{M}}{U}}

    \inferrule*[lab=TM*-CaseSum]{
      \tseq{\ty{\Gamma}}{V}{\tysum{T}{T'}}
      \\\\
      \tseq{\ty{\Delta},\tmty{x}{T}}{M}{U}
      \\
      \tseq{\ty{\Delta},\tmty{y}{T'}}{N}{U}
    }{\tseq{\ty{\Gamma},\ty{\Delta}}{\casesum{V}{x}{M}{y}{N}}{U}}
\end{mathpar}
}

The typing of constants is the same as for HGV.

\paragraph{Operational Semantics}
The operational semantics for \fgHGV is the same as for HGV
(Figure~\ref{fig:hgv-reduction}), with the addition of the following explicit
rule for $\tm{\calcwd{let}}$:

\[
    \begin{array}{lrcl}
        \textsc{E-Let} & \tm{\letbind{x}{V}{M}} & \cred & \tm{\subst{M}{V}{x}}
    \end{array}
\]
Similarly, \fgHGV directly inherits HGV's runtime typing.
}

\subsection{Translating \fgHGV to HCP}

The translation is guaranteed to have only internal (i.e., $\alpha$ or
$\beta$) transitions and transitions on the dedicated output channel.
More specifically:
\begin{lem}\hfill
  \label{lem:hgv-to-hcp-terms}
  \begin{itemize}
  \item If $\hcp{\tm{\gvcpcnf{C}{r}}\lto{\ell}}$, then $\tm{\ell} \in
    \{ \tm{\alpha}, \tm{\beta} \}$ or $\ell = \ell_r$.
  \item If $\hcp{\tm{\gvcpcom{M}{r}}\lto{\ell}}$ and $\hgv{\tm{M}}$ is
    a non-value, then $\tm{\ell} \in \{ \tm{\alpha}, \tm{\beta} \}$.
  \item If $\hgv{\tm{V}}$ is a value, then
    $\hcp{\tm{\gvcpcom{V}{r}\lto{\ell_r}}}$.
  \item If $\hcp{\tm{\gvcpval{V}{r}}\lto{\ell}}$ then $\tm{\ell} \in
    \{ \tm{\alpha}, \tm{\beta} \}$.
  \end{itemize}
\end{lem}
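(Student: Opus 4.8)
The plan is to prove all four statements \emph{simultaneously} by structural induction on the source syntax, since the translation of \Cref{fig:fghgv-to-hcp} is itself mutually recursive: $\gvcpcnf{\cdot}{r}$ calls $\gvcpcom{\cdot}{r}$, the computation translation calls $\gvcpval{\cdot}{r}$ and recurses on subcomputations, and $\gvcpval{\cdot}{r}$ recurses on subvalues. I would run a single induction whose hypothesis packages all four clauses for structurally smaller configurations, terms and values, with an inner case analysis on which rule of the LTS in \Cref{fig:hcp-lts} could fire from the resulting process.

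The value clauses form the base. Clause three is immediate by unfolding: $\gvcpcom{V}{r} = \hcp{\ping{r}{\gvcpval V r}}$, and $\hcp{\ping{r}{\gvcpval V r}}$ is a bound output on $r$, so its leading (and only unguarded) prefix is the action on $r$. For clause four I would case on the shape of $V$: in every case the head of $\gvcpval{V}{r}$ is a single action on $r$ --- a link, an input, a close, a bound output, or a selection --- and the residual subtranslations of a pair or an injection sit strictly underneath this prefix, so no further transition is enabled. For the communication constants the same holds, the only delicate one being $\hgv{\fork}$, which unfolds to a $\nu$-bound parallel composition whose sole unguarded prefix is still the input on $r$ (the second component is blocked on a bound endpoint and has no synchronisation partner). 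Hence the unique enabled transition is on the output channel.

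Clause two then proceeds by cases on the non-value term. Each translates to a $\nu$-bound parallel composition placing the value subtranslations opposite an eliminator over freshly bound endpoints. Invoking clauses three and four of the induction hypothesis, each component either takes an internal step or is poised on its private bound channel; the structural rules \rulename{Str-Res}, \rulename{Str-Par} and \rulename{Str-Syn} then either propagate an internal action under the binder or force a synchronisation across the bound endpoint pair, both of which are $\alpha$ or $\beta$. The one case worth spelling out is $\gvcpcom{\hgv{\letbind{x}{M}{N}}}{r} = \hcp{\res{x}{x'}{(\ppar{\pong{x}{\gvcpcom{N}{r}}}{\gvcpcom{M}{x'}})}}$: if $M$ is a value then $\gvcpcom{M}{x'}$ fires an action on $x'$ by clause three, which is exactly absorbed into a $\beta$-synchronisation with the guarding $\hcp{\pong{x}{\gvcpcom{N}{r}}}$ under the binder; if $M$ is a non-value, clause two yields an internal step that simply propagates.

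Finally, clause one follows by induction on the configuration translation, threading the previous clauses through \rulename{TP-New} (which binds the runtime channels) and \rulename{TP-Par}. I expect the main obstacle to be the thread cases $\gvcpcnf{\hgv{\main\;M}}{r}$ and $\gvcpcnf{\hgv{\child\;M}}{r}$, where one must show that the action produced on the \emph{internal} result channel $y$ never leaks as a foreign external label. Concretely, $\gvcpcom{M}{y}$ either steps internally (clause two) or, when $M$ is a value, fires an action on $y$ (clause three) that synchronises under the $\nu$-binder for $y,y'$ with the guard $\hcp{\pong{y'}{\link{y'}{r}}}$ to yield a $\beta$-transition; only the forwarder $\hcp{\link{y'}{r}}$ of the main thread can ever expose an $\ell_r$ label, and in the child case $r$ does not occur at all. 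The fiddly part is checking that the bound/free/co-name side conditions of \rulename{Str-Res}, \rulename{Str-Par} and \rulename{Str-Syn}, together with the double-binder discipline, permit exactly these transitions and rule out spurious ones.
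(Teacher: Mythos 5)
Your overall strategy---a mutual structural induction over configurations, terms, and values, with an inner case analysis on which LTS rule of \Cref{fig:hcp-lts} can fire from the translated process---is exactly the approach the paper takes; its proof of this lemma is the single line ``by induction on the structure of $\hgv{\tm{M}}$'', so your elaboration supplies detail the paper omits. Your reasoning for the first three clauses is sound: the propagation/blocking analysis via \rulename{Str-Res}, \rulename{Str-Par} and \rulename{Str-Syn}, and the observation that only the main thread's forwarder can ever expose a label on $\tm{r}$ (so no composite label $\ell_r \parallel \ell_r$ can arise across \rulename{TP-Par}), are the right points to make.

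There is, however, a genuine mismatch in your treatment of the fourth clause. You argue, correctly, that the head of $\hcp{\tm{\gvcpval{V}{r}}}$ is in every case a single prefix or link on $\tm{r}$, and you conclude that ``the unique enabled transition is on the output channel''. But the clause you are meant to establish asserts the opposite: that every transition of $\hcp{\tm{\gvcpval{V}{r}}}$ lies in $\{\alpha,\beta\}$. A transition labelled $\hcp{\tm{\labrecv{r}{x}}}$ or $\hcp{\tm{\lablink{r}{x}}}$ is an action label, not an $\alpha$ or $\beta$---those arise only once the relevant endpoints are $\nu$-bound, via \rulename{Alp-Link} and the \rulename{Bet-\,*} rules. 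So what you have proved is not clause four as written but its analogue ``$\ell = \ell_r$'', which is in fact the property you then (correctly) invoke inside clause two, where each value component is ``poised on its private bound channel'' and the enclosing binder converts that action on the bound endpoint into an $\alpha$ or $\beta$. You should either flag that the statement of clause four appears to need amending (e.g.\ to ``$\ell \in \{\alpha,\beta\}$ or $\ell = \ell_r$'', matching clause one), or explain why your conclusion discharges the clause as printed; as it stands, your proof of that clause establishes a different (and incompatible) claim.
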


\begin{proof}
  By induction on the structure of $\hgv{\tm{M}}$.
\end{proof}
We do not use the above lemma directly, but it is a useful sanity
check.

\begin{defi}[process contexts]
  A~process context $\tm{\plug{P}{\;}}$ is a process with a single hole, denoted $\tm{\hole}$. We extend the typing rules, LTS and typing rules to process contexts. We write $\hcp{\seq{\plug{P}{\;}}{\ty{\hyper{G}}/\ty{\hyper{H}}}}$ to mean that $\hcp{\tm{\plug{P}{\;}}}$ is typed under hyper-environment $\hcp{\ty{\hyper{H}}}$ expecting a process typed under $\hcp{\ty{\hyper{G}}}$, \ie if $\hcp{\seq{Q}{\ty{\hyper{G}}}}$ then $\hcp{\seq{\plug{P}{Q}}{\ty{\hyper{H}}}}$.
\end{defi}

\begin{defi}
  A~process $\tm{P}$ is \emph{blocked on} $\tm{x}$ if it only has transitions $\hcp{\tm{P}\lto{\ell_x}}$.
\end{defi}

\begin{lem}
  \label{lem:hgv-to-hcp-substitution}
\begin{sloppypar}
  If $\tm{\plug{P}{\;}}$ is a process context with $\tm{z},\tm{w},\tm{w'}\not\in\cn(\tm{\plug{P}{\;}})$, and $\tm{Q}$ is a process blocked on $\tm{w'}$, then $\hcp{\tm{\res{w}{w'}{(\ppar{\plug{P}{\link{z}{w}}}{Q})}}\bis_\ta\tm{\plug{P}{\subst{Q}{z}{w'}}}}$.
\end{sloppypar}
\end{lem}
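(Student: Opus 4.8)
\textbf{Proof plan for \Cref{lem:hgv-to-hcp-substitution}.}
The plan is to establish the stated weak $\alpha$-bisimilarity by exhibiting an explicit $\alpha$-bisimulation relation $\rel{R}$ that contains the pair $(\hcp{\tm{\res{w}{w'}{(\ppar{\plug{P}{\link{z}{w}}}{Q})}}},\, \hcp{\tm{\plug{P}{\subst{Q}{z}{w'}}}})$ and is closed under the LTS up to $\slto[\ta]{}$-saturated transitions. The key observation is that since $\tm{Q}$ is blocked on $\tm{w'}$, it cannot fire any transition on its own until the link $\hcp{\tm{\link{z}{w}}}$ interacts with it across the restriction $\hcp{\tm{\res{w}{w'}{\cdot}}}$; the only available synchronisation is the $\alpha$-action arising from \rulename{Alp-Link}, which renames $\tm{w'}$ to $\tm{z}$ in $\tm{Q}$, substituting $\tm{z}$ for $\tm{w'}$ and discarding the (now-terminated) link. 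After this single $\alpha$-step, the left-hand side becomes $\hcp{\tm{\plug{P}{\subst{Q}{z}{w'}}}}$ up to structural congruence (the residual $\hcp{\tm{\halt}}$ from the link is absorbed by $\hcp{\tm{\ppar{P}{\halt}} \equiv \tm{P}}$), at which point both sides coincide.

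First I would set up $\rel{R}$ to relate $\hcp{\tm{\plug{P'}{\res{w}{w'}{(\ppar{\plug{P''}{\link{z}{w}}}{Q})}}}}$ with $\hcp{\tm{\plug{P'}{\plug{P''}{\subst{Q}{z}{w'}}}}}$ for the decomposition of the process-context $\tm{\plug{P}{\;}}$ into an outer part $\tm{\plug{P'}{\;}}$ (through which transitions may freely pass) and the innermost frame surrounding the link, together with the identity relation to handle residuals; the precise shape must allow the context $\tm{\plug{P'}{\;}}$ to fire any transition, matched identically on both sides via \rulename{Str-Res}, \rulename{Str-Par\textsubscript{1}}, and \rulename{Str-Par\textsubscript{2}}, since $\tm{z},\tm{w},\tm{w'}\not\in\cn(\tm{\plug{P}{\;}})$ guarantees that no action of the context mentions the bound names or the link variable. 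Second I would verify the matching conditions: any context transition $\slto[\ta]{\ell'}$ on the left is matched by the same transition on the right and vice versa, landing in $\rel{R}$; and the distinguished $\alpha$-transition collapsing the link (which is internal and hence ignored by $\slto[\ta]{}$) moves the left configuration into the right-hand process, which is related to itself by the identity.

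The main obstacle will be handling the interaction between the free transitions of the surrounding context $\tm{\plug{P}{\;}}$ and the blocked process $\tm{Q}$: I must argue that no transition of $\tm{\plug{P}{\;}}$ can consume $\tm{w'}$ prematurely or otherwise disturb the link before the $\alpha$-renaming occurs, which relies crucially on the hypotheses $\tm{z},\tm{w},\tm{w'}\not\in\cn(\tm{\plug{P}{\;}})$ and $\tm{Q}$ blocked on $\tm{w'}$. A careful treatment requires generalising the statement so that the bisimulation is preserved under the transitions of $\tm{\plug{P}{\;}}$; I expect this to be the technically delicate point, since one must track how the hole's position interacts with \rulename{Str-Syn} when the context synchronises internally while the link still awaits its partner. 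I would discharge this by appealing to \Cref{cor:hcp-sbis-to-bis} (so that structural-congruence rearrangements such as $\hcp{\tm{\ppar{P}{\halt}}\equiv\tm{P}}$ and scope extrusion are subsumed by $\sbis\subset\bis_\ta$) and by invoking \Cref{cor:hgv-to-hcp-substitution}, the value-substitution lemma, to factor out the behaviour of $\tm{\subst{Q}{z}{w'}}$ once the renaming has fired.
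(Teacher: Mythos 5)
Your proposal is correct in substance but takes a genuinely different route from the paper. The paper proves this lemma by structural induction on the process-context $\tm{\plug{P}{\;}}$: the base case $\tm{\hole}$ fires the single $\alpha$-transition from \rulename{Alp-Link} and concludes by reflexivity; the restriction and parallel cases commute the outer $(\nu w\, w')$ inward by structural congruence (via \Cref{cor:hcp-sbis-to-bis}) and apply the induction hypothesis under the constructor; and the prefix and offer cases exploit the hypothesis that $\tm{Q}$ is blocked on $\tm{w'}$ to observe that both sides have exactly the same one (resp.\ two) initial transitions, after which the induction hypothesis applies to the residuals. You instead exhibit a single $\alpha$-bisimulation relation closed over all evolutions of the context. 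This buys you independence from the congruence properties of $\approx_{\alpha}$ under restriction and parallel composition, which the paper's inductive cases use implicitly when they wrap the induction hypothesis in a constructor; the cost is that you must verify closure of the relation under every transition shape at once, including the point you correctly flag as delicate: the $\alpha$-renaming is only enabled once the hole is unguarded, so your opening claim that the left-hand side collapses ``after this single $\alpha$-step'' holds only for unguarded holes and is really subsumed by the general closure argument, exactly as the paper's prefix and offer cases make explicit. Two small corrections. First, your candidate relation need not track a decomposition into an outer context $\tm{\plug{P'}{\;}}$ and an inner one $\tm{\plug{P''}{\;}}$: the restriction $(\nu w\, w')$ remains at top level throughout, and every transition of the configuration rewrites $\tm{\plug{P}{\;}}$ to another context of the same shape (the side condition $\tm{z},\tm{w},\tm{w'}\not\in\cn(\tm{\plug{P}{\;}})$ guarantees the link subterm is never touched), so pairs of the form given in the lemma statement, ranging over all admissible contexts and closed up to structural congruence, suffice. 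Second, the appeal to \Cref{cor:hgv-to-hcp-substitution} is circular --- the paper derives that corollary \emph{from} the present lemma --- and it is also unnecessary: once the link has fired, the two sides agree up to discarding the terminated link by structural congruence, which \Cref{cor:hcp-sbis-to-bis} already covers.
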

\begin{proof}
  \usingnamespace{hcp}
  By induction on the process context $\tm{\plug{P}{\;}}$.
  \begin{case}{$\tm{\hole}$}
    \[
      \begin{array}{lrll}
        \multicolumn{3}{l}{\tm{\res{w}{w'}{(\ppar{\link{z}{w}}{Q})}}}
        \\
        & \ato
        & \tm{\subst{Q}{z}{w'}}
        \\
        & \sbis
        & \tm{\subst{Q}{z}{w'}}
        & \text{(by reflexivity)}
      \end{array}
    \]
  \end{case}
  \begin{case}{$\tm{\res{x}{y}{\plug{P}{\;}}}$}
    \[
      \begin{array}{lrll}
        \multicolumn{3}{l}{\tm{\res{w}{w'}{(\ppar{\res{x}{y}{(\plug{P}{\link{z}{w}})}}{Q})}}}
        \\
        & \sbis
        & \tm{\res{x}{y}{\res{w}{w'}{(\ppar{\plug{P}{\link{z}{w}}}{Q})}}}
        & \text{(by Lemma~\ref{cor:hcp-sbis-to-bis})}
        \\
        & \bis_\ta
        & \tm{\res{x}{y}{(\plug{P}{\subst{Q}{z}{w'}})}}
        & \text{(by Lemma~\ref{cor:hcp-sbis-to-bis} and IH)}
      \end{array}
    \]
  \end{case}
  \begin{case}{$\tm{\ppar{\plug{P}{\;}}{R}}$}
    \[
      \begin{array}{lrll}
        \multicolumn{3}{l}{\tm{\res{w}{w'}{(\ppar{\ppar{\plug{P}{\link{z}{w}}}{R}}{Q})}}}
        \\
        & \sbis
        & \tm{\ppar{\res{w}{w'}{(\ppar{\plug{P}{\link{z}{w}}}{Q})}}{R}}
        & \text{(by Lemma~\ref{cor:hcp-sbis-to-bis})}
        \\
        & \bis_\ta
        & \tm{\ppar{\plug{P}{\subst{Q}{z}{w'}}}{R}}
        & \text{(by Lemma~\ref{cor:hcp-sbis-to-bis} and IH)}
      \end{array}
    \]
  \end{case}
  \begin{case}{$\tm{\ppar{R}{\plug{P}{\;}}}$}
    \[
      \begin{array}{lrll}
        \multicolumn{3}{l}{\tm{\res{w}{w'}{(\ppar{\ppar{R}{\plug{P}{\link{z}{w}}}}{Q})}}}
        \\
        & \sbis
        & \tm{\ppar{R}{\res{w}{w'}{(\ppar{\plug{P}{\link{z}{w}}}{Q})}}}
        & \text{(by Lemma~\ref{cor:hcp-sbis-to-bis})}
        \\
        & \bis_\ta
        & \tm{\ppar{R}{\plug{P}{\subst{Q}{z}{w'}}}}
        & \text{(by Lemma~\ref{cor:hcp-sbis-to-bis} and IH)}
      \end{array}
    \]
  \end{case}
  \begin{case}{$\tm{\pi.{\plug{P}{\;}}}$}
    Since $\tm{Q}$ is blocked on $\tm{w'}$, the process $\tm{\res{w}{w'}{(\ppar{\pi.{\plug{P}{\link{z}{w}}}}{Q})}}$ has only one transition,
    \[
      \tm{\res{w}{w'}{(\ppar{\pi.{\plug{P}{\link{z}{w}}}}{Q})}}
      \lto{\pi}
      \tm{\res{w}{w'}{(\ppar{{\plug{P}{\link{z}{w}}}}{Q})}}.
    \]
    The process $\tm{\pi.{\plug{P}{\subst{Q}{z}{w'}}}}$ has only one transition, also with label $\tm{\pi}$,
    \[
      \tm{\pi.{\plug{P}{\subst{Q}{z}{w'}}}}
      \lto{\pi}
      \tm{{\plug{P}{\subst{Q}{z}{w'}}}}.
    \]
    The resulting processes are bisimilar by the induction hypothesis.
  \end{case}
  \begin{case}{$\tm{\offer{x}{\plug{P}{\;}}{\plug{P'}{\;}}}$}
    Since $\tm{Q}$ is blocked on $\tm{w'}$, the process\linebreak[4]$\tm{\res{w}{w'}{(\ppar{\offer{x}{\plug{P}{\link{z}{w}}}{\plug{P'}{\link{z}{w}}}}{Q})}}$ has only two transitions,
    \[
      \tm{\res{w}{w'}{(\ppar{\offer{x}{\plug{P}{\link{z}{w}}}{\plug{P'}{\link{z}{w}}}}{Q})}}
      \lto{\laboffinl{x}}
      \tm{\res{w}{w'}{(\ppar{{\plug{P}{\link{z}{w}}}}{Q})}}
    \]
    and
    \[
      \tm{\res{w}{w'}{(\ppar{\offer{x}{\plug{P}{\link{z}{w}}}{\plug{P'}{\link{z}{w}}}}{Q})}}
      \lto{\laboffinr{x}}
      \tm{\res{w}{w'}{(\ppar{{\plug{P'}{\link{z}{w}}}}{Q})}}.
    \]
    The process $\tm{\offer{x}{\plug{P}{\subst{Q}{z}{w'}}}{\plug{P'}{\subst{Q}{z}{w'}}}}$ has only two transitions, also with labels $\tm{\laboffinl{x}}$ and $\tm{\laboffinr{x}}$,
    \[
      \tm{\offer{x}{\plug{P}{\subst{Q}{z}{w'}}}{\plug{P'}{\subst{Q}{z}{w'}}}}
      \lto{\laboffinl{x}}
      \tm{{\plug{P}{\subst{Q}{z}{w'}}}}
    \]
    and
    \[
      \tm{\offer{x}{\plug{P}{\subst{Q}{z}{w'}}}{\plug{P'}{\subst{Q}{z}{w'}}}}
      \lto{\laboffinr{x}}
      \tm{{\plug{P'}{\subst{Q}{z}{w'}}}}.
    \]
    The resulting processes are bisimilar by the induction hypothesis. \qedhere
  \end{case}
\end{proof}

\begin{sloppypar}
\corhgvtohcpsubstitution*
\end{sloppypar}
\begin{proof}
  Immediately from Lemma~\ref{lem:hgv-to-hcp-substitution}.
\end{proof}

\begin{sloppypar}
  \lemhgvtohcptyping*
\end{sloppypar}

\begin{proof}
\emph{Part 1.}
\begin{itemize}
\item
    \begin{case}{$\tm{x}$}
    We have
    \(
      \hgv{
        \inferrule*{
        }{\tseq{\tmty{x}{T}}{x}{T}}
      }
    \)
    and
    \(
      \hgv{\tm{\gvcpval{x}{r}}}= \hcp{\tm{{\link{r}{x}}}}
    \). We can derive:
    \[
      \hcp{
        \inferrule*{
        }{\seq{\link{x}{r}}{\tmty x {\gvcpdown A}, \tmty r {\co{\gvcpdown A}}}}
      }
    \]
  \end{case}
\item
    \begin{case}{$\tm{K}$}
    We have one case for each communication primitive.
    \begin{itemize}
        \item Subcase $\hgv{\tm{\link}}$.  We have
      \(
      \hgv{\tmty{\link}{\tylolli{\typrod{S}{\co{S}}}{\tyends}}},
      \)
      where
      \[\hcp{
        \begin{aligned}
            \ty{\co{\gvcpdown{\hgv{\tylolli{\typrod{S}{\co{S}}}\tyends}}}}
          &= \ty{\gvcpup{\hgv{\tylolli {\typrod S {\co S}} \tyends}}} \\
          &= \ty{\typarr {\co {\gvcpup {\hgv{\typrod S {\co S}}}}} {(\tytens \tyone {\gvcpup {\hgv\tyends}})}} \\
          &= \ty{\typarr {(\typarr {\co {\gvcpup S}} {\gvcpup S})} {(\tytens \tyone \tybot)}}
      \end{aligned}
      }\]
      and
      \(
      \hgv{\tm{\gvcpval{\link}{r}}} = \hcp{\tm{\recv{r}{y}{\recv{y}{x}{\ping{r}{\wait{r}{\link{x}{y}}}}}}}.
      \)
      We can derive:
      \begin{mathpar}
        \hcp{
          \inferrule*{
            \inferrule*{
              \inferrule*{
                \inferrule*{
                  \inferrule*{
                  }{\seq {\link x y} {\tmty x {\co {\gvcpup S}}, \tmty y {\gvcpup S}}}
                }{\seq {\tm{\wait{r}{\link{x}{y}}}} {\tmty x {\co {\gvcpup S}}, \tmty y {\gvcpup S}, \tmty r \tybot}}
              }{\seq {\tm{\ping{r}{\wait{r}{\link{x}{y}}}}} {\tmty x {\co {\gvcpup S}}, \tmty y {\gvcpup S}, \tmty r {\tytens \tyone \tybot}}}
            }{\seq {\tm{\recv{y}{x}{\ping{r}{\wait{r}{\link{x}{y}}}}}} {\tmty y {\typarr {\co {\gvcpup S}} {\gvcpup S}}, \tmty r {\tytens \tyone \tybot}}}
          }{\seq {\tm{\recv{r}{y}{\recv{y}{x}{\ping{r}{\wait{r}{\link{x}{y}}}}}}} {\tmty r {\typarr {(\typarr {\co {\gvcpup S}} {\gvcpup S})} {(\tytens \tyone \tybot)}}}}}
      \end{mathpar}
      \item Subcase $\hgv{\tm{\fork}}$:
      We have
      \(
      \hgv{\tmty{\fork}{\tylolli{(\tylolli{S}{\tyends})}{\co{S}}}}
      \)
      where
      \[
        \hcp{
          \begin{aligned}
              \ty{\co{\gvcpdown{\hgv{\tylolli{(\tylolli{S}{\tyends})}{\co{S}}}}}}
            &= \ty{\gvcpup{\hgv{\tylolli{(\tylolli{S}{\tyends})}{\co{S}}}}} \\
            &= \ty{\typarr {\co{\gvcpup{\hgv{\tylolli S \tyends}}}} (\tytens
            \tyone {\gvcpup{\co S}})} \\
            &= \ty{\typarr {\co {(\typarr {\co {\gvcpup S}} {(\tytens \tyone
            {\gvcpup{\hgv\tyends}})})}} {(\tytens \tyone {\gvcpdown S})}} \\
            &= \ty{\typarr {(\tytens {\co {\gvcpdown S}} (\typarr \tybot \tyone))} {(\tytens \tyone {\gvcpdown S})}}
          \end{aligned}
        }\]
      and
      \(
      \hgv{\tm{\gvcpval{\fork}{r}}} = \hcp{\tm{\res{y}{y'}{(\recv{r}{x}{\usend{y}{x}{\ping{r}{\link{r}{y}}}} \parallel  \recv{y'}{x}{\usend{x}{y'}{\pong{x}{\close{x}{\halt}}}}})}}
      \).
      We derive:
      \begin{mathpar}
        \small
        \hcp{
          \inferrule*{
            \inferrule*{
              \inferrule*{
                \inferrule*{
                  \inferrule*{
                    \inferrule*{
                    }{\seq {\tm{\link r y}} {\tmty y {\co{\gvcpdown S}}, \tmty r {\gvcpdown S}}}
                  }{\seq {\tm{\ping{r}{\link{r}{y}}}} {\tmty y {\co{\gvcpdown S}}, \tmty r {\tytens \tyone {\gvcpdown S}}}}
                }{\seq {\tm{\usend{y}{x}{\ping{r}{\link{r}{y}}}}} {\tmty y {\tytens {(\typarr {\gvcpdown S} {(\tytens \tyone \tybot)})} {\co{\gvcpdown S}}}, \tmty x {(\tytens {\co {\gvcpdown S}} {(\typarr \tybot \tyone)})}, \tmty r {(\tytens \tyone {\gvcpdown S})}}}
              }{\seq {\tm{(\recv{r}{x}{\usend{y}{x}{\ping{r}{\link{r}{y}}}})}} {\tmty y {\tytens {(\typarr {\gvcpdown S} {(\tytens \tyone \tybot)})} {\co{\gvcpdown S}}}, \tmty r {\typarr {(\tytens {\co {\gvcpdown S}} {(\typarr \tybot \tyone)})} {(\tytens \tyone {\gvcpdown S})}}}}
              \\
              \mathcal{D}
            }{\seq {\tm{(\recv{r}{x}{\usend{y}{x}{\ping{r}{\link{r}{y}}}} \parallel  \recv{y'}{x}{\usend{x}{y'}{\pong{x}{\close{x}{\halt}}}}})} {\tmty y T, \tmty r {\typarr {(\tytens {\co {\gvcpdown S}} {(\typarr \tybot \tyone)})} {(\tytens \tyone {\gvcpdown S})}} \hypersep \tmty {y'} {\co T}}}
          }{\seq {{\tm{\res{y}{y'}{(\recv{r}{x}{\usend{y}{x}{\ping{r}{\link{r}{y}}}} \parallel  \recv{y'}{x}{\usend{x}{y'}{\pong{x}{\close{x}{\halt}}}}})}}} {\tmty r {\typarr {(\tytens {\co {\gvcpdown S}} {(\typarr \tybot \tyone)})} {(\tytens \tyone {\gvcpdown S})}}}}
        }
      \end{mathpar}
      where
      \[\hcp{
      \begin{aligned}
          \ty{T} &= \ty{\tytens {(\typarr {\gvcpdown S} {(\tytens \tyone \tybot)})} {\co{\gvcpdown S}}} \\
          \ty{\co T} &= \ty{\typarr {(\tytens {\co{\gvcpdown S}} {(\typarr \tybot \tyone)})} {{\gvcpdown S}}}
      \end{aligned}}
      \]
      \noindent
      and $\mathcal D$ is the derivation
      \begin{mathpar}
        \hcp{
          \inferrule*{
            \inferrule*{
              \inferrule*{
                \inferrule*{
                  \inferrule*{
                  }{\seq {\tm\halt} \emptyhyperenv}
                }{\seq {\tm{\close{x}{\halt}}} {\tmty x \tyone}}
              }{\seq {\tm{\pong{x}{\close{x}{\halt}}}} {\tmty x {(\typarr \tybot \tyone)}}}
            }{\seq {\tm{\usend{x}{y'}{\pong{x}{\close{x}{\halt}}}}} {\tmty x {(\tytens {\co{\gvcpdown S}} {(\typarr \tybot \tyone)})}, \tmty {y'} {{{\gvcpdown S}}}}}
          }{\seq {\tm{\recv{y'}{x}{\usend{x}{y'}{\pong{x}{\close{x}{\halt}}}}}} {\tmty {y'} {\typarr {(\tytens {\co{\gvcpdown S}} {(\typarr \tybot \tyone)})} {{\gvcpdown S}}}}}
        }
      \end{mathpar}
  \item Subcase $\hgv{\ty{\send}}$:
      We have
      \(
      \hgv{\tmty{\send}{\tylolli{\typrod{T}{\tysend{T}{S}}}{S}}}
      \)
      where
      \[
        \hcp{
        \begin{aligned}
          \co{\gvcpdown{\hgv{\tylolli{\typrod{T}{\tysend{T}{S}}}{S}}}}
          &= \ty{\gvcpup{\hgv{\tylolli{\typrod{T}{\tysend{T}{S}}}{S}}}} \\
          &= \ty{\typarr {\co{\gvcpup{\hgv{\typrod T {\tysend T S}}}}} {(\tytens 1 {\gvcpup S})}} \\
          &= \ty{\typarr {(\typarr {\co{\gvcpup T}} {\gvcpdown {\hgv {\tysend T S}}})} {(\tytens 1 {\gvcpup S})}} \\
          &= \ty{\typarr {(\typarr {\co{\gvcpup T}} {(\tytens {\gvcpup T} {\gvcpdown S})})} {(\tytens 1 {\co{\gvcpdown S}})}}
        \end{aligned}
      }\]
      and
      \(
      \hgv{\tm{\gvcpval{\send}{r}}} = \hcp{\tm{\recv{r}{y}{\recv{y}{x}{\usend{y}{x}{\ping{r}{\link{r}{y}}}}}}}
      \).
      We derive:
      \begin{mathpar}
        \hcp{
          \inferrule*{
            \inferrule*{
              \inferrule*{
                \inferrule*{
                  \inferrule*{
                  }{\seq {\tm {\link r y}} {\tmty y {\gvcpdown S}, \tmty r {\co{\gvcpdown S}}}}
                }{\seq {\tm{\ping{r}{\link{r}{y}}}} {\tmty y {\gvcpdown S}, \tmty r {(\tytens 1 {\co{\gvcpdown S}})}}}
              }{\seq {\tm{\usend{y}{x}{\ping{r}{\link{r}{y}}}}} {\tmty x {\co{\gvcpup T}}, \tmty y {(\tytens {\gvcpup T} {\gvcpdown S})}, \tmty r {(\tytens 1 {\co{\gvcpdown S}})}}}
            }{\seq {\tm{\recv{y}{x}{\usend{y}{x}{\ping{r}{\link{r}{y}}}}}} {\tmty y {(\typarr {\co{\gvcpup T}} {(\tytens {\gvcpup T} {\gvcpdown S})})}, \tmty r {(\tytens 1 {\co{\gvcpdown S}})}}}
          }{\seq {\tm{\recv{r}{y}{\recv{y}{x}{\usend{y}{x}{\ping{r}{\link{r}{y}}}}}}} {\tmty r {\typarr {(\typarr {\co{\gvcpup T}} {(\tytens {\gvcpup T} {\gvcpdown S})})} {(\tytens 1 {\co{\gvcpdown S}})}}}}
        }
      \end{mathpar}
      \item Subcase $\hgv{\ty{\recv}}$:
      We have
      \(
      \hgv{\tmty{\recv}{\tylolli{\tyrecv{T}{S}}{\typrod{T}{S}}}}
      \)
      where
      \[
        \hcp{
          \begin{aligned}
              \ty{\co{\gvcpdown{\hgv{\tylolli{\tyrecv{T}{S}}{\typrod{T}{S}}}}}}
            &= \ty{\gvcpup{\hgv{\tylolli{\tyrecv{T}{S}}{\typrod{T}{S}}}}} \\
            &= \ty{\typarr {\co{\gvcpup{\hgv{\tyrecv T S}}}} {(\tytens 1 {\gvcpup{\hgv{\typrod T S}}})}} \\
            &= \ty{\typarr {(\typarr {\co{\gvcpup T}} {\co{\gvcpup S}})} {(\tytens 1 {(\tytens {\gvcpup T} {\gvcpup S})})}}
          \end{aligned}}
      \]
      and
      \(
      \hgv{\tm{\gvcpval{\recv}{r}}} =\hcp{\tm{\recv{r}{x}{\recv{x}{y}{\ping{r}{\usend{r}{y}{\link{r}{x}}}}}}}
      \).
      We derive:
      \begin{mathpar}
        \hcp{
          \inferrule*{
            \inferrule*{
              \inferrule*{
                \inferrule*{
                  \inferrule*{
                  }{\seq {\tm{\link r x}} {\tmty x {\co{\gvcpup S}}, \tmty r {\gvcpup S}}}
                }{\seq {\tm{\usend{r}{y}{\link{r}{x}}}} {\tmty y {\co{\gvcpup T}}, \tmty x {\co{\gvcpup S}}, \tmty r {\tytens {\gvcpup T} {\gvcpup S}}}}
              }{\seq {\tm{\ping{r}{\usend{r}{y}{\link{r}{x}}}}} {\tmty y {\co{\gvcpup T}}, \tmty x {\co{\gvcpup S}}, \tmty r {(\tytens 1 {(\tytens {\gvcpup T} {\gvcpup S})})}}}
            }{\seq {\tm{\recv{x}{y}{\ping{r}{\usend{r}{y}{\link{r}{x}}}}}} {\tmty x {(\typarr {\co{\gvcpup T}} {\co{\gvcpup S}})}, \tmty r {(\tytens 1 {(\tytens {\gvcpup T} {\gvcpup S})})}}}
          }{\seq {\tm{\recv{r}{x}{\recv{x}{y}{\ping{r}{\usend{r}{y}{\link{r}{x}}}}}}} {\tmty r {\typarr {(\typarr {\co{\gvcpup T}} {\co{\gvcpup S}})} {(\tytens 1 {(\tytens {\gvcpup T} {\gvcpup S})})}}}}
        }
      \end{mathpar}
  \item Subcase $\hgv{\tm{\wait}}$:
      We have
      \(
      \hgv{\tmty{\wait}{\tylolli{\tyendr}{\tyunit}}}
      \)
      where
      \[
        \hcp{
          \begin{aligned}
              \ty{\co{\gvcpdown{\hgv{\tylolli{\tyendr}{\tyunit}}}}}
            &= \ty{\gvcpup{\hgv{\tylolli\tyendr\tyunit}}} \\
            &= \ty{\typarr {\co{\gvcpup{\hgv\tyendr}}} {(\tytens \tyone {\gvcpup{\hgv\tyunit}})}} \\
            &= \ty{\typarr \tybot {(\tytens \tyone \tyone)}}
          \end{aligned}
        }
      \]
      and
      \(
      \hgv{\tm{\gvcpval{\wait}{r}}} = \hcp{\tm{\recv{r}{x}{\wait{x}{\ping{r}{\close{r}{\halt}}}}}}
      \).
      We derive
      \begin{mathpar}
        \hcp{
          \inferrule*{
            \inferrule*{
              \inferrule*{
                \inferrule*{
                  \inferrule*{
                  }{\seq {\tm\halt} \emptyhyperenv}
                }{\seq {\tm{\close{r}{\halt}}} {\tmty r \tyone}}
              }{\seq {\tm{\ping{r}{\close{r}{\halt}}}} {\tmty r {\tytens \tyone \tyone}}}
            }{\seq {\tm{\wait{x}{\ping{r}{\close{r}{\halt}}}}} {\tmty x \tybot, \tmty r {\tytens \tyone \tyone}}}
          }{\seq {\tm{\recv{r}{x}{\wait{x}{\ping{r}{\close{r}{\halt}}}}}} {\tmty r {\typarr \tybot {(\tytens \tyone \tyone)}}}}
        }
      \end{mathpar}
    \end{itemize}
  \end{case}
\item
    \begin{case}{$\tm{\lambda x.M}$}
    We assume $\tmty{\hgv{\tm{\gvcpcom M r}}}{\hcp{\gvcpdown\Gamma, \tmty x {\gvcpdown T}, \tmty r {\tytens \tyone {\co{\gvcpdown U}}}}}$ and derive
    \begin{mathpar}
      \hcp{
        \inferrule*{
          \seq {\tm{\gvcpcom M r}} {\gvcpdown \Gamma, \tmty x {\gvcpdown T}, \tmty r {\tytens \tyone {\gvcpup U}}}
        }{\seq {\tm{\recv r x {\gvcpcom M r}}} {\gvcpdown\Gamma, \tmty r {\typarr {\gvcpdown T} {(\tytens \tyone {\gvcpup U})}}}}
      }
    \end{mathpar}
  \end{case}
\item
    \begin{case}{$\tm{()}$}
    We derive:
    \begin{mathpar}
      \hcp{
        \inferrule*{
          \seq \halt \emptyhyperenv
        }{\seq {\close x \halt} {\tmty x \tyone}}}
    \end{mathpar}
  \end{case}
\item
    \begin{case}{$\hgv{\tm{\inl W}}$}
    We assume $\tmty{\hgv{\tm{\gvcpval W r}}}{\hcp{\gvcpdown \Gamma, \tmty r {\co{\gvcpdown T}}}}$ and derive
    \begin{mathpar}
      \hcp{
        \inferrule*{
          \seq {\tm {\gvcpval W r}} {\gvcpdown\Gamma, \tmty r {\co{\gvcpdown T}}}
        }{\seq {\tm{\inl r {\gvcpval W r}}} {\gvcpdown\Gamma, \tmty r {\typlus {\gvcpup T} {\gvcpup U}}}}
      }
    \end{mathpar}
  \end{case}
\item
    \begin{case}{$\tm{(V, W)}$}
    We assume $\tmty{\tm{\gvcpval V x}}{\hcp{\gvcpdown\Gamma, \tmty x {\co{\gvcpdown T}}}}$, $\tmty{\tm{\gvcpval W r}}{\cp{\gvcpdown\Delta, \tmty r {\co{\gvcpdown U}}}}$, and derive
    \begin{mathpar}
      \hcp{
        \inferrule*{
          \inferrule*{
            \seq {\tm {\gvcpval V x}} {\ty{\gvcpdown\Gamma}, \tmty x {\gvcpup T}}
            \\
            \seq {\tm {\gvcpval W r}} {\ty{\gvcpdown\Delta}, \tmty r {\gvcpup U}}
          }{\seq {\tm {\ppar {\gvcpval V x} {\gvcpval W r}}} {\ty{\gvcpdown\Gamma}, \tmty x {\gvcpup T} \hypersep \ty{\gvcpdown\Delta}, \tmty r {\gvcpup U}}}
        }{\seq {\tm {\send r x {(\ppar {\gvcpval V x} {\gvcpval W r})}}} {\ty{\gvcpdown\Gamma}, \ty{\gvcpdown\Delta}, \tmty r {\tytens {\gvcpup T} {\gvcpup U}}}}
      }
    \end{mathpar}
  \end{case}
\end{itemize}
\emph{Part 2.}
\begin{itemize}
\item
    \begin{case}{$\tm{V \; W}$}
    We assume $\tmty {\tm {\gvcpval V {y'}}} {\hcp{\ty{\gvcpdown\Gamma}, \tmty {y'} {\typarr {\co {\gvcpup T}} {(\tytens \tyone {\gvcpup U})}}}}$ and $\tmty {\tm {\gvcpval W {x'}}} {\hcp{\ty{\gvcpdown\Delta}, \tmty {x'} {\gvcpup T}}}$ and derive
    \begin{mathpar}
    \hspace*{-10pt}
      \tiny
      \hcp{
        \inferrule*{
          \inferrule*{
            \inferrule*{
              \inferrule*{
                \inferrule*{
                }{\seq {\tm {\link r y}} {\tmty y {\tytens {\gvcpup T} {(\typarr \tybot {\co{\gvcpup U}})}}, \tmty r {\tytens \tyone {\gvcpup U}}}}
              }{\seq {\tm {\usend y x {\link r y}}} {\tmty x {\co{\gvcpup T}}, \tmty y {\tytens {\gvcpup T} {(\typarr \tybot {\co{\gvcpup U}})}}, \tmty r {\tytens \tyone {\gvcpup U}}}}
              \\
              \mathcal{D}
            }{\seq {\tm {\ppar {\usend y x {\link r y}} {\ppar {\gvcpval V {y'}} {\gvcpval W {x'}}}}} {\tmty x {\co{\gvcpup T}}, \tmty y {\tytens {\gvcpup T} {(\typarr \tybot {\co{\gvcpup U}})}}, \tmty r {\tytens 1 {\gvcpup U}} \hypersep \ty{\gvcpdown\Gamma}, \tmty {y'} {\typarr {\co {\gvcpup T}} {(\tytens \tyone {\gvcpup U})}} \hypersep \ty{\gvcpdown\Delta}, \tmty {x'} {\gvcpup T}}}
          }{\seq {\tm {\res y {y'} {(\ppar {\usend y x {\link r y}} {\ppar {\gvcpval V {y'}} {\gvcpval W {x'}}})}}} {\tmty x {\co{\gvcpup T}}, \tmty r {\tytens \tyone {\gvcpup U}} \hypersep \ty{\gvcpdown\Gamma}, \ty{\gvcpdown\Delta}, \tmty {x'} {\gvcpup T}}}
        }{\seq {\tm {\res x {x'} {\res y {y'} {(\ppar {\usend y x {\link r y}} {\ppar {\gvcpval V {y'}} {\gvcpval W {x'}}})}}}} {\ty{\gvcpdown\Gamma}, \ty{\gvcpdown\Delta}, \tmty r {\tytens \tyone {\gvcpup U}}}}
      }
    \end{mathpar}
    where $\mathcal D$ is the derivation
    \begin{mathpar}
      \hcp{
        \inferrule*{
          \seq {\tm {\gvcpval V {y'}}} {\hcp{\ty{\gvcpdown\Gamma}, \tmty {y'} {\typarr {\co {\gvcpup T}} {(\tytens \tyone {\gvcpup U})}}}}
          \\
          \seq {\tm {\gvcpval W {x'}}} {\hcp{\ty{\gvcpdown\Delta}, \tmty {x'} {\gvcpup T}}}
        }{\seq {\tm {\ppar {\gvcpval V {y'}} {\gvcpval W {x'}}}} {\ty{\gvcpdown\Gamma}, \tmty {y'} {\typarr {\co {\gvcpup T}} {(\tytens \tyone {\gvcpup U})}} \hypersep \ty{\gvcpdown\Delta}, \tmty {x'} {\gvcpup T}}}
      }
    \end{mathpar}
  \end{case}
\item
    \begin{case}{$\hgv{\tm{\letpair x y V M}}$}\\
    We assume $\hcp{\tmty {\gvcpval V {y'}} {\ty{\gvcpdown\Gamma}, \tmty {y'} {\tytens {\gvcpup T} {\gvcpup {T'}}}}}$ and $\hcp{\tmty {\gvcpcom M r} {\ty{\gvcpdown\Delta}, \tmty x {\co{\gvcpup T}}, \tmty y {\co{\gvcpup {T'}}}, \tmty r {\tytens \tyone {\co{\gvcpdown U}}}}}$ and derive
    \begin{mathpar}
      \hcp{
        \inferrule*{
          \inferrule*{
            \inferrule*{
              \seq {\gvcpcom M r} {\ty{\gvcpdown\Delta}, \tmty x {\co{\gvcpup T}}, \tmty y {\co{\gvcpup {T'}}}, \tmty r {\tytens \tyone {\co{\gvcpdown U}}}}
            }{\seq {\recv y x {\gvcpcom M r}} {\ty{\gvcpdown\Delta}, \tmty y {\typarr {\co{\gvcpup T}} {\co{\gvcpup {T'}}}}, \tmty r {\tytens \tyone {\co{\gvcpdown U}}}}
            }
            \\
            \seq {\gvcpval V {y'}} {\ty{\gvcpdown\Gamma}, \tmty {y'} {\tytens {\gvcpup T} {\gvcpup {T'}}}}
          }{\seq {\ppar {\recv y x {\gvcpcom M r}} {\gvcpval V {y'}}} {\ty{\gvcpdown\Delta}, \tmty y {\typarr {\co{\gvcpup T}} {\co{\gvcpup {T'}}}}, \tmty r {\tytens \tyone {\co{\gvcpdown U}}}} \hypersep \ty{\gvcpdown\Gamma}, \tmty {y'} {\tytens {\gvcpup T} {\gvcpup {T'}}}}
        }{\seq {\res y {y'} {(\ppar {\recv y x {\gvcpcom M r}} {\gvcpval V {y'}})}} {\ty{\gvcpdown\Gamma}, \ty{\gvcpdown\Delta}, \tmty r {\tytens \tyone {\co{\gvcpdown U}}}}}
      }
    \end{mathpar}
  \end{case}
\item
    \begin{case}{$\hgv{\tm{\absurd V}}$}
    We assume $\tmty {\gvcpval V {x'}} {\hcp{\ty{\gvcpdown\Gamma}, \tmty {x'} \tynil}}$, and derive:
    \begin{mathpar}
      \hcp{
        \inferrule*{
          \inferrule*{
            \inferrule*{
            }{\seq {\absurd x} {\tmty r {\tytens \tyone {\co{\gvcpdown T}}}, \tmty {x} \tytop}}
            \\
            \seq {\gvcpval V {x'}} {\ty{\gvcpdown\Gamma}, \tmty {x'} \tynil}
          }{\seq {\ppar {\absurd x} {\gvcpval V {x'}}} {\tmty r {\tytens \tyone {\co{\gvcpdown T}}}, \tmty {x} \tytop \hypersep \ty{\gvcpdown{\Gamma}}, \tmty {x'} \tynil}}
        }{\seq {\res x {x'} {(\ppar {\absurd x} {\gvcpval V {x'}})}} {\ty{\gvcpdown\Gamma}, \tmty r {\tytens \tyone {\co{\gvcpdown T}}}}}
      }
    \end{mathpar}
  \end{case}
\item
    \begin{case}{$\hgv{\tm{\letbind x M N}}$}\\
    We assume $\tmty {\gvcpcom M {x'}} {\hcp{\ty{\gvcpdown\Gamma}, \tmty {x'} {\tytens \tyone {\gvcpup T}}}}$ and $\tmty {\gvcpcom N r} {\hcp{\ty{\gvcpdown\Delta}, \tmty x {\co{\gvcpup T}}, \tmty r {\gvcpup U}}}$ and derive
    \begin{mathpar}
      \hcp{
        \inferrule*{
          \inferrule*{
            \inferrule*{
              \seq {\gvcpcom N r} {\ty{\gvcpdown\Delta}, \tmty x {{\co{\gvcpup T}}}, \tmty r {\gvcpup U}}
            }{\seq {\pong x {\gvcpcom N r}} {\ty{\gvcpdown\Delta}, \tmty x {\typarr \tybot {\co{\gvcpup T}}}, \tmty r {\gvcpup U}}}
            \\
            {\seq {\gvcpcom M {x'}} {\ty{\gvcpdown\Gamma}, \tmty {x'} {\tytens \tyone {\gvcpup T}}}}
          }{\seq {\ppar {\pong x {\gvcpcom N r}} {\gvcpcom M {x'}}} {\ty{\gvcpdown\Delta}, \tmty x {\typarr \tybot {\co{\gvcpup T}}}, \tmty r {\gvcpup U} \hypersep \ty{\gvcpdown\Gamma}, \tmty {x'} {\tytens \tyone {\gvcpup T}}}}
        }{\seq {\tm {\res x {x'} {(\ppar {\pong x {\gvcpcom N r}} {\gvcpcom M {x'}})}}} {\ty{\gvcpdown\Gamma}, \ty{\gvcpdown\Delta}, \tmty r {\gvcpup U}}}
      }
    \end{mathpar}
  \end{case}
\item
    \begin{case}{$\hgv{\tm{\ret V}}$}
    We assume $\tmty {\gvcpval V r} {\ty{\gvcpdown\Gamma}, \tmty r {\gvcpup T}}$ and derive
    \begin{mathpar}
      \hcp{
        \inferrule*{
          \seq {\tm{\gvcpval V r}} {\ty{\gvcpdown\Gamma}, \tmty r {\gvcpup T}}
        }{\seq {\tm{\ping r {\gvcpval V r}}} {\ty{\gvcpdown\Gamma}, \tmty r {\tytens \tyone {\gvcpup T}}}}
      }
    \end{mathpar}
  \end{case}
\end{itemize}
\emph{Part 3.}
The cases are all by immediate induction.
\end{proof}

\begin{lem}
  \label{lem:hgv-to-hcp-evaluation-context}
  Let $\tm{F}$ be an \fgHGV evaluation context and $\tm{r}$ a result
  endpoint. Then there exists a process context $\tm{\gvcpevc{F}{r}}$ and a
  result endpoint $\tm{v} = \hr(\tm{F}, \tm{r})$ for the hole such that for all $\tm{M}$ we
  have that $\tm{\gvcpcnf{\plug{F}{M}}{r}} =
  \tm{\plug{\gvcpevc{F}{r}}{\gvcpcom{M}{v}}}$.
\end{lem}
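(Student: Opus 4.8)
\usingnamespace{hcp}
The plan is to prove the statement by structural induction, after first strengthening it to an auxiliary claim at the level of \emph{terms}. Since a thread context has the single shape $F = \phi\;E$ and the configuration translation $\gvcpcnf{\cdot}{r}$ is defined through the term translation $\gvcpcom{\cdot}{r}$, all of the real work happens inside an evaluation context $E$. So first I would prove, by induction on the \fgHGV evaluation context $E$, the following: for every $E$ and result endpoint $r$ there exist a process context $\gvcpevc{E}{r}$ and a hole endpoint $v = \hr(E,r)$ such that $\gvcpcom{\plug{E}{M}}{r} = \plug{(\gvcpevc{E}{r})}{\gvcpcom{M}{v}}$ for all $M$.

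In the base case $E = \hole$ the translation is unchanged, since $\gvcpcom{\plug{\hole}{M}}{r} = \gvcpcom{M}{r}$, so I take $\gvcpevc{\hole}{r} = \hole$ and $\hr(\hole,r) = r$. In the step case $E = \hgv{\letbind{x}{E'}{N}}$ I would unfold the let clause of the translation (choosing $x,x'$ fresh),
\[
\gvcpcom{\hgv{\letbind{x}{\plug{E'}{M}}{N}}}{r}
= \res{x}{x'}{(\ppar{\pong{x}{\gvcpcom{N}{r}}}{\gvcpcom{\plug{E'}{M}}{x'}})},
\]
and then apply the induction hypothesis to $E'$ at endpoint $x'$, which rewrites the right-hand component as $\plug{(\gvcpevc{E'}{x'})}{\gvcpcom{M}{v}}$ with $v = \hr(E',x')$. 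I would then set
\[
\gvcpevc{E}{r} = \res{x}{x'}{(\ppar{\pong{x}{\gvcpcom{N}{r}}}{\plug{(\gvcpevc{E'}{x'})}{\hole}})}
\qquad\text{and}\qquad
\hr(E,r) = \hr(E',x').
\]

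With the auxiliary claim in hand, the stated lemma follows for $F = \phi\;E$ by unfolding the thread clause of $\gvcpcnf{\cdot}{r}$ and applying the auxiliary claim at the internal endpoint $y$. For a child thread,
\[
\gvcpcnf{\hgv{\child\;(\plug{E}{M})}}{r}
= \res{y}{y'}{(\ppar{\plug{(\gvcpevc{E}{y})}{\gvcpcom{M}{v}}}{\pong{y'}{\close{y'}{\halt}}})},
\]
so I would take $\gvcpevc{F}{r}$ to be this context with $\gvcpcom{M}{v}$ replaced by $\hole$ and $\hr(F,r) = \hr(E,y)$; the main-thread case is identical except that $\pong{y'}{\close{y'}{\halt}}$ is replaced by $\pong{y'}{\link{y'}{r}}$, neither of which mentions the hole.

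The argument is conceptually routine precisely because \fgHGV evaluation contexts admit only the shape $\letbind{x}{E}{N}$, so no further cases arise; the hard part is purely bookkeeping. I must choose the bound names $y,y'$ and $x,x'$ introduced by the translation clauses fresh, so that the hole (sitting at the $\gvcpcom{M}{v}$ position, underneath these binders) is never captured and the displayed equalities hold as literal syntactic identities; note that the hole endpoint $v$ is itself bound by $\gvcpevc{F}{r}$, which is exactly as intended since the translation of $M$ communicates on that internal channel. The remaining care is to define $\hr$ as precisely the recursion that threads the result endpoint through nested let-bindings, so that $\hr(F,r)$ records the innermost endpoint $v$; this is where I expect the proof to need the most attention.
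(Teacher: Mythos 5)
Your proof is correct and takes the same route as the paper, which simply states ``by induction on the structure of $F$'': since $F = \phi\;E$ and the only recursive case of an \fgHGV evaluation context is $\letbind{x}{E'}{N}$, your auxiliary claim on $E$ followed by the lift through the thread-translation clause is exactly that induction spelled out. Your bookkeeping of the hole endpoint also matches the paper's remark that $v$ is a variable bound by $\gvcpevc{F}{r}$ except in the trivial case.
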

\begin{proof}
    By induction on the structure of $\tm{F}$.
\end{proof}
In the above lemma, if $\tm{F}$ is the empty context then $\tm{v} =
\tm{r}$. Otherwise $\tm{v}$ is a variable bound by the process context
$\tm{\gvcpevc{F}{r}}$.

\begin{lem}[Operational Correspondence, Terms]
  \label{thm:hgv-to-hcp-oc-terms}
  If $\hgv{\tm{M}}$ is a well-typed term:
  \begin{enumerate}
  \item
    If $\hgv{\tm{M}\tred\tm{M'}}$, then there exists a $P$ such that
    $\hcp{\tm{\gvcpcom{M}{r}} \slto[\alpha]{\beta^+} P}$ and $\hcp{P \bis_{\ta} \tm{\gvcpcom{M'}{r}}}$; and
  \item
    if $\hcp{\tm{\gvcpcom{M}{r}}\bto\tm{P}}$, then there exists an
    $\hgv{\tm{M'}}$ and a $\hcp{P'}$ such that
    $\hgv{\tm{M}\tred\tm{M'}}$ and $\hcp{P \slto[\alpha]{\beta^*} P'}$
    and $\hcp{\tm{P'}\bis_{\ta}\tm{\gvcpcom{M'}{r}}}$.
  \end{enumerate}
\end{lem}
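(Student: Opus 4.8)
The plan is to prove Part~1 by induction on the derivation of $\hgv{\tm{M}\tred\tm{M'}}$ and Part~2 by induction on the structure of $\hgv{\tm{M}}$, using the Substitution Lemma (\Cref{cor:hgv-to-hcp-substitution}) as the workhorse that closes each residual gap up to weak $\alpha$-bisimilarity $\hcp{\bis_{\ta}}$. For Part~1 the base cases are the $\beta$-reduction axioms of \fgHGV: \rulename{E-Lam}, \rulename{E-Unit}, \rulename{E-Pair}, \rulename{E-Inl}, \rulename{E-Inr}, and the explicit \rulename{E-Let}. In each case I would unfold $\gvcpcom{\cdot}{r}$ on the redex and observe that the private channels joining the translated subterms fire as a forced sequence of $\ta$- and $\tb$-transitions, where the $\tb$-steps realise the genuine communication and the $\ta$-steps are the link renamings contributed by the translations of variables, unbound send, and the communication primitives. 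The residual process then has exactly the shape $(\nu w w')(\gvcpcom{N}{r} \parallel \gvcpval{V}{w'})$ appearing on the left of \Cref{cor:hgv-to-hcp-substitution}, so it is $\hcp{\bis_{\ta}}$-related to $\gvcpcom{\subst{N}{V}{w}}{r}$, the translation of the reduct. For instance, \rulename{E-Let} applied to $\hgv{\tm{\letbind{x}{\ret{V}}{N}}}$ reduces in two $\tb$-steps (the synchronisation encoded by the ping/pong prefixes) to $(\nu x x')(\gvcpcom{N}{r} \parallel \gvcpval{V}{x'})$, which \Cref{cor:hgv-to-hcp-substitution} identifies with $\gvcpcom{\subst{N}{V}{x}}{r}$; the remaining axioms are analogous, with \rulename{E-Lam} additionally consuming an unbound-send prefix and one link renaming.

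The only inductive case of Part~1 is \rulename{E-Lift}, where $\hgv{\tm{M}} = \hgv{\tm{\plug{E}{M_0}}}$ reduces to $\hgv{\tm{\plug{E}{N_0}}}$ because $\hgv{\tm{M_0}\tred\tm{N_0}}$. Here I would use the evaluation-context decomposition of \Cref{lem:hgv-to-hcp-evaluation-context} (specialised to term evaluation contexts) to write $\gvcpcom{\plug{E}{M_0}}{r}$ as a process context $\gvcpevc{E}{r}$ filled with $\gvcpcom{M_0}{v}$ at the hole endpoint $v = \hr(E,r)$. The induction hypothesis yields a run $\hcp{\gvcpcom{M_0}{v} \slto[\alpha]{\beta^+} P_0}$ with $\hcp{P_0 \bis_{\ta} \gvcpcom{N_0}{v}}$; the structural rules \rulename{Str-Res} and \rulename{Str-Par} of the LTS lift this run through the surrounding process context, and closing under the context preserves $\hcp{\bis_{\ta}}$. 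This last point relies on $\hcp{\bis_{\ta}}$ being a congruence for the process-context formers produced by $\gvcpevc{\cdot}{r}$, which I would obtain from the behavioural theory for HCP (\Cref{cor:hcp-sbis-to-bis} together with the standard congruence argument for the weak bisimilarities).

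Part~2 is the harder direction and is where I expect the main difficulty. Given a single source $\tb$-transition $\hcp{\gvcpcom{M}{r} \bto P}$, I would proceed by cases on the shape of $\hgv{\tm{M}}$ and enumerate every $\tb$-transition the translation can emit. When $\hgv{\tm{M}}$ is a returned value $\hgv{\tm{\ret{V}}}$ its translation offers only an action on the free output channel $r$ and no $\tb$-transition, so those cases are vacuous; the genuine $\tb$-transitions arise only at a redex, where each corresponds to exactly one source reduction $\hgv{\tm{M}\tred\tm{M'}}$. The delicate point is that a single source reduction is simulated by a \emph{block} of administrative transitions in the translation --- most visibly for \rulename{E-Let}, whose ping/pong encoding contributes two $\tb$-steps, and for application, whose unbound send and link forwarding contribute further $\ta$- and $\tb$-steps. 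I would therefore fire the observed $\bto$, complete the remaining administrative synchronisations via $\hcp{\slto[\alpha]{\beta^*}}$ to reach a canonical residual $P'$, and finally invoke \Cref{cor:hgv-to-hcp-substitution} to establish $\hcp{P' \bis_{\ta} \gvcpcom{M'}{r}}$. The main obstacle is precisely this bookkeeping: showing that whichever $\tb$-transition is observed first, the set of pending administrative reductions can always be driven to the same canonical form up to $\hcp{\bis_{\ta}}$, so that $\hgv{\tm{M'}}$ is well defined (unique up to structural congruence) and the $\beta^*$-closure genuinely lands on the translation of the reduct. Confluence of the administrative transitions and the congruence properties of $\hcp{\bis_{\ta}}$ are what make this argument go through.
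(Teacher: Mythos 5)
Your proposal is correct and follows essentially the same route as the paper's proof: Part~1 by induction on the reduction, discharging each axiom by unfolding the translation, firing the forced $\ta$/$\tb$-transitions, and closing with the Substitution Lemma, with \rulename{E-Lift} handled by lifting the inductive run through the surrounding process context via the structural LTS rules; Part~2 by case analysis on the structure of $\tm{M}$, observing that variable and returned-value cases admit no $\tb$-transition and that at each genuine redex the observable $\tb$-transition is unique, so the simulation diagram from Part~1 can be replayed. The only cosmetic difference is that the paper resolves your Part~2 ``bookkeeping'' concern simply by noting that typing forces a unique first $\tb$-transition in each case, rather than by a separate confluence argument for administrative steps.
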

\begin{proof}
  \hspace*{1ex}\\\vspace*{-1\baselineskip}\begin{enumerate}
  \item By induction on the reduction $\hgv{\tm{M}\tred\tm{M'}}$.
    \begin{case}{\LabTirName{E-Lam}}
      \begin{mathpar}
        \begin{tikzcd}[cramped, column sep=huge]
          \hgv{\tm{(\lambda x.M)\;V}}
          \arrow[r, "\hgv{\tred}"]
          \arrow[d, "\gvcpcom{\cdot}{r}"]
          &
          \hgv{\tm{\subst{M}{V}{x}}}
          \arrow[ddd, "\gvcpcom{\cdot}{r}"]
          \\
          \hcp{\tm{\res{x}{x'}{\res{y}{y'}{(\ppar
                  {\usend{y}{x}{{\link{r}{y}}}}
                  {\ppar
                    {\recv{y'}{x}{\gvcpcom{M}{y'}}}
                    {\gvcpval{V}{x'}}
                  }
                  )}}}}
          \arrow[d, "\hcp{\bto\ato}"]
          \\
          \hcp{\tm{\res{x}{x'}{\res{y}{y'}{(\ppar
                  {{\link{r}{y}}}
                  {\ppar
                    {\gvcpcom{M}{y'}}
                    {\gvcpval{V}{x'}}
                  }
                  )}}}}
          \arrow[d, "\hcp{\ato}"]
          \\
          \hcp{\tm{{{\res{x}{x'}{(\ppar{\gvcpcom{M}{r}}{\gvcpval{V}{x'}})}}}}}
          \arrow[r, dash, "\hcp{\bis_\alpha}\;\text{(by Lemma~\ref{cor:hgv-to-hcp-substitution})}"]
          &
          \hgv{\tm{\gvcpcom{\subst{M}{V}{x}}{r}}}
      \end{tikzcd}
    \end{mathpar}
    \end{case}
    \begin{case}{\LabTirName{E-Unit}}
      \begin{mathpar}
        \begin{tikzcd}[cramped, column sep=10em]
          \hgv{\tm{\letunit{\unit}{M}}}
          \arrow[r, "\hgv{\tred}"]
          \arrow[d, "\gvcp{\cdot}{r}"]
          &
          \hgv{\tm{M}}
          \arrow[dd, "\gvcp{\cdot}{r}"]
          \\
          \hcp{\tm{\res{x}{x'}{(\ppar
                {{\wait{x}{\gvcpcom{M}{r}}}}
                {\close{x'}{\halt}}
                )}}}
          \arrow[d, "\hcp{\bto}"]
          \\
          \hcp{\tm{{\ppar{{{\gvcp{M}{r}}}}{{{\halt}}}}}}
          \arrow[r, dash, "\hcp{\sim}"]
          &
          \hgv{\tm{\gvcpcom{M}{r}}}
        \end{tikzcd}
      \end{mathpar}
    \end{case}
    \begin{case}{\LabTirName{E-Pair}}
      \begin{mathpar}
        \begin{tikzcd}[cramped, column sep=huge]
          \hgv{\tm{\letpair{x}{y}{\pair{V}{W}}{M}}}
          \arrow[r, "\hgv{\tred}"]
          \arrow[d, "\gvcp{\cdot}{r}"]
          &
          \hgv{\tm{\subst{\subst{M}{V}{x}}{W}{y}}}
          \arrow[dd, "\gvcp{\cdot}{r}"]
          \\
          \hcp{\tm{\res{y}{y'}{(\ppar
                {{\recv{y}{x}{\gvcpcom{M}{r}}}}
                {\send{y'}{x'}{(\ppar{\gvcpval{V}{x'}}{\gvcpval{W}{y'}})}}
                )}}}
          \arrow[d, "\hcp{\bto}"]
          \\
          \hcp{\tm{\res{y}{y'}{\res{x}{x'}{(\ppar
                  {{{\gvcp{M}{r}}}}
                  {\ppar{{\gvcpval{V}{x'}}}{{{\gvcpval{W}{y'}}}}})}}}}
          \arrow[r, dash, "\hcp{\bis_\alpha\;\text{(by Lemma~\ref{cor:hgv-to-hcp-substitution})}}"]
          &
          \hgv{\tm{\gvcpcom{\subst{\subst{M}{V}{x}}{W}{y}}{r}}}
        \end{tikzcd}
      \end{mathpar}
    \end{case}
    \begin{case}{\LabTirName{E-Inl}}
      \begin{mathpar}
        \begin{tikzcd}[cramped, column sep=huge]
          \hgv{\tm{\casesum{\inl{V}}{x}{M}{y}{N}}}
          \arrow[r, "\hgv{\tred}"]
          \arrow[d, "\gvcp{\cdot}{r}"]
          &
          \hgv{\tm{\subst{M}{V}{x}}}
          \arrow[dd, "\gvcp{\cdot}{r}"]
          \\
          \hcp{\tm{\res{x}{x'}{(\ppar
                {{\offer{x}{\gvcpcom{M}{r}}{\gvcpcom{\subst{N}{x}{y}}{r}}}}
                {\inl{x'}{\gvcpval{V}{x'}}}
                )}}}
          \arrow[d, "\hcp{\bto}"]
          \\
          \hcp{\tm{\res{x}{x'}{(\ppar{\gvcp{M}{r}}{{\gvcpval{V}{x'}}})}}}
          \arrow[r, dash, "\hcp{\bis_\alpha\;\text{(by Lemma~\ref{cor:hgv-to-hcp-substitution})}}"]
          &
          \hgv{\tm{\gvcpcom{\subst{M}{V}{x}}{r}}}
        \end{tikzcd}
      \end{mathpar}
    \end{case}
    \begin{case}{\LabTirName{E-Inr}}
      As \LabTirName{E-Inl}.
    \end{case}
    \begin{case}{\LabTirName{E-Let}}
      \begin{mathpar}
        \begin{tikzcd}[cramped, column sep=huge]
          \hgv{\tm{\letbind{x}{\ret{V}}{M}}}
          \arrow[r, "\hgv{\tred}"]
          \arrow[d, "\gvcp{\cdot}{r}"]
          &
          \hgv{\tm{\subst{M}{V}{x}}}
          \arrow[dd, "\gvcp{\cdot}{r}"]
          \\
          \hcp{\tm{\res{x}{x'}{(\ppar
                {\pong{x}{\gvcpcom{M}{r}}}
                {\ping{x'}{\gvcpval{V}{x'}}}
                )}}}
          \arrow[d, "\hcp{\bto\bto}"]
          \\
          \hcp{\tm{\res{x}{x'}{(\ppar{\gvcp{M}{r}}{{\gvcpval{V}{x'}}})}}}
          \arrow[r, dash, "\hcp{\bis_\alpha\;\text{(by Lemma~\ref{cor:hgv-to-hcp-substitution})}}"]
          &
          \hgv{\tm{\gvcpcom{\subst{M}{V}{x}}{r}}}
        \end{tikzcd}
      \end{mathpar}
    \end{case}
    \begin{case}{\LabTirName{E-Lift}}
      The induction hypothesis yields the reasoning steps depicted by
      the first diagram, which we use, together with HGV's
      \LabTirName{E-Lift} and HCP's \LabTirName{Str-Res} and
      \LabTirName{Str-Par2}, to justify the second diagram:
      \begin{mathpar}
        \begin{tikzcd}[cramped, column sep=huge]
          \hgv{\tm{M}}
          \arrow[r, "\hgv{\tred}"]
          \arrow[d, "\gvcpcom{\cdot}{r}"]
          &
          \hgv{\tm{M'}}
          \arrow[d, "\gvcpcom{\cdot}{r}"]
          \\
          \hcp{\tm{\gvcpcom{M}{r}}}
          \arrow[r, "\hcp{\slto[\alpha]{\beta^+}\bis_\alpha}"]
          &
          \hcp{\tm{\gvcpcom{M'}{r}}}
        \end{tikzcd}

        \begin{tikzcd}[cramped, column sep=huge]
          \hgv{\tm{\letbind{x}{\plug{E}{M}}{N}}}
          \arrow[r, "\hgv{\tred}"]
          \arrow[d, "\gvcp{\cdot}{r}"]
          &
          \hgv{\tm{\letbind{x}{\plug{E}{M'}}{N}}}
          \arrow[d, "\gvcp{\cdot}{r}"]
          \\
          \hcp{\tm{\res{x}{x'}{(\ppar
                {\pong{x}{\gvcpcom{N}{r}}}
                {\gvcpcom{M}{x'}}
                )}}}
          \arrow[r, dash, "\hcp{\slto[\alpha]{\beta^+}\bis_\alpha}"]
          &
          \hcp{\tm{\res{x}{x'}{(\ppar
                {\pong{x}{\gvcpcom{N}{r}}}
                {\gvcpcom{M'}{x'}}
                )}}}
        \end{tikzcd}
      \end{mathpar}
    \end{case}
  \item By induction on $\hgv{\tm{M}}$.
\begin{case}{$\hgv{\tm{U\;V}}$}
      There are two well-typed cases for $\tm{U}$: either $\hgv{\tm{U}=\tm{z}}$ for some $\hgv{\tm{z}}$; or $\hgv{\tm{U}=\tm{\lambda{x}.M}}$ for some $\hgv{\tm{x}}$ and $\hgv{\tm{M}}$. If $\hgv{\tm{U}=\tm{z}}$, we have ${\hcp{\tm{\res{x}{x'}{\res{y}{y'}{(\ppar{\usend{y}{x}{\link{r}{y}}}{\ppar{\link{z}{y'}}{\gvcpval{V}{x'}}})}}}\centernot{\bto}}}$, which contradicts our premise. Therefore, $\hgv{\tm{U}=\tm{\lambda{x}.M}}$. The only possible $\beta$-transition is the one in the following diagram:
      \begin{mathpar}
        \begin{tikzcd}
          \hgv{\tm{(\lambda x.M)\;V}}
          \arrow[r, "\hgv{\tred}"]
          \arrow[d, "\gvcpcom{\cdot}{r}"]
          &
          \hgv{\tm{\subst{M}{V}{x}}}
          \arrow[ddd, "\gvcpcom{\cdot}{r}"]
          \\
          \hcp{\tm{\res{x}{x'}{\res{y}{y'}{(\ppar
                  {\usend{y}{x}{{\link{r}{y}}}}
                  {\ppar
                    {\recv{y'}{x}{\gvcpcom{M}{y'}}}
                    {\gvcpval{V}{x'}}
                  }
                  )}}}}
          \arrow[d, "\hcp{\bto\ato}"]
          \\
          \hcp{\tm{\res{x}{x'}{\res{y}{y'}{(\ppar
                  {{\link{r}{y}}}
                  {\ppar
                    {\gvcpcom{M}{y'}}
                    {\gvcpval{V}{x'}}
                  }
                  )}}}}
          \arrow[d, "\hcp{\ato}"]
          \\
          \hcp{\tm{{{\res{x}{x'}{(\ppar{\gvcpcom{M}{r}}{\gvcpval{V}{x'}})}}}}}
          \arrow[r, dash, "\hcp{\bis_\alpha}\;\text{(by Lemma~\ref{cor:hgv-to-hcp-substitution})}"]
          &
          \hgv{\tm{\gvcpcom{\subst{M}{V}{x}}{r}}}
        \end{tikzcd}
      \end{mathpar}
      Hence, $\hgv{\tm{M'}=\tm{\subst{M}{V}{x}}}$.
    \end{case}

\begin{case}{$\hgv{\tm{\letunit{U}{M}}}$}
      There are two well-typed cases for $\tm{U}$: either $\hgv{\tm{U}=\tm{z}}$ for some $\hgv{\tm{z}}$; or $\hgv{\tm{U}=\tm{\unit}}$. If $\hgv{\tm{U}=\tm{z}}$, we have $\hcp{\tm{\res{x}{x'}{(\ppar{\wait{x}{\gvcpcom{M}{r}}}{\link{x'}{z}})}}\centernot{\bto}}$, which contradicts our premise. Therefore, $\hgv{\tm{U}=\tm{\unit}}$. The only possible $\beta$-transition is the one in the following diagram:
      \begin{mathpar}
        \begin{tikzcd}[cramped, column sep=10em]
          \hgv{\tm{\letunit{\unit}{M}}}
          \arrow[r, "\hgv{\tred}"]
          \arrow[d, "\gvcp{\cdot}{r}"]
          &
          \hgv{\tm{M}}
          \arrow[dd, "\gvcp{\cdot}{r}"]
          \\
          \hcp{\tm{\res{x}{x'}{(\ppar
                {{\wait{x}{\gvcpcom{M}{r}}}}
                {\close{x'}{\halt}}
                )}}}
          \arrow[d, "\hcp{\bto}"]
          \\
          \hcp{\tm{{\ppar{{{\gvcp{M}{r}}}}{{{\halt}}}}}}
          \arrow[r, dash, "\hcp{\sim}"]
          &
          \hgv{\tm{\gvcpcom{M}{r}}}
        \end{tikzcd}
      \end{mathpar}
      Hence, $\hgv{\tm{M'}=\tm{M}}$.
    \end{case}

\begin{case}{$\hgv{\tm{\letpair{x}{y}{U}{M}}}$}
      There are two well-typed cases for $\tm{U}$: either $\hgv{\tm{U}=\tm{z}}$ for some $\hgv{\tm{z}}$, or $\hgv{\tm{U}=\tm{\pair{V}{W}}}$. If $\hgv{\tm{U}=\tm{z}}$, we have ${\hcp{\tm{\res{y}{y'}{(\ppar{{\recv{y}{x}{\gvcpcom{M}{r}}}}{\link{y'}{z}})}}\centernot{\bto}}}$, which contradicts our premise. Therefore, $\hgv{\tm{U}=\tm{\pair{V}{W}}}$. The only possible $\beta$-transition is the one in the following diagram:
      \begin{mathpar}
        \begin{tikzcd}[cramped, column sep=huge]
          \hgv{\tm{\letpair{x}{y}{\pair{V}{W}}{M}}}
          \arrow[r, "\hgv{\tred}"]
          \arrow[d, "\gvcp{\cdot}{r}"]
          &
          \hgv{\tm{\subst{\subst{M}{V}{x}}{W}{y}}}
          \arrow[dd, "\gvcp{\cdot}{r}"]
          \\
          \hcp{\tm{\res{y}{y'}{(\ppar
                {{\recv{y}{x}{\gvcpcom{M}{r}}}}
                {\send{y'}{x'}{(\ppar{\gvcpval{V}{x'}}{\gvcpval{W}{y'}})}}
                )}}}
          \arrow[d, "\hcp{\bto}"]
          \\
          \hcp{\tm{\res{y}{y'}{\res{x}{x'}{(\ppar
                  {{{\gvcp{M}{r}}}}
                  {\ppar{{\gvcpval{V}{x'}}}{{{\gvcpval{W}{y'}}}}})}}}}
          \arrow[r, dash, "\hcp{\bis_\alpha\;\text{(by Lemma~\ref{cor:hgv-to-hcp-substitution})}}"]
          &
          \hgv{\tm{\gvcpcom{\subst{\subst{M}{V}{x}}{W}{y}}{r}}}
        \end{tikzcd}
      \end{mathpar}
    \end{case}

\begin{case}{$\hgv{\tm{\casesum{U}{x}{M}{x}{N}}}$}
      There are two well-typed cases\linebreak[4]for $\tm{U}$: either $\hgv{\tm{U}=\tm{z}}$ for some $\hgv{\tm{z}}$; or $\hgv{\tm{U}=\tm{\inl{V}}}$. If $\hgv{\tm{U}=\tm{z}}$, we have\linebreak[4]$\hcp{\tm{\res{x}{x'}{(\ppar{{\offer{x}{\gvcpcom{M}{r}}{\gvcpcom{\subst{N}{x}{y}}{r}}}}{\link{x'}{z}})}}\centernot{\bto}}$, which contradicts our premise. Therefore, $\hgv{\tm{U}=\tm{\inl{V}}}$. The only possible $\beta$-transition is the one in the following diagram:
      \begin{mathpar}
        \begin{tikzcd}[cramped, column sep=huge]
          \hgv{\tm{\casesum{\inl{V}}{x}{M}{y}{N}}}
          \arrow[r, "\hgv{\tred}"]
          \arrow[d, "\gvcp{\cdot}{r}"]
          &
          \hgv{\tm{\subst{M}{V}{x}}}
          \arrow[dd, "\gvcp{\cdot}{r}"]
          \\
          \hcp{\tm{\res{x}{x'}{(\ppar
                {{\offer{x}{\gvcpcom{M}{r}}{\gvcpcom{\subst{N}{x}{y}}{r}}}}
                {\inl{x'}{\gvcpval{V}{x'}}}
                )}}}
          \arrow[d, "\hcp{\bto}"]
          \\
          \hcp{\tm{\res{x}{x'}{(\ppar{\gvcp{M}{r}}{{\gvcpval{V}{x'}}})}}}
          \arrow[r, dash, "\hcp{\bis_\alpha\;\text{(by Lemma~\ref{cor:hgv-to-hcp-substitution})}}"]
          &
          \hgv{\tm{\gvcpcom{\subst{M}{V}{x}}{r}}}
        \end{tikzcd}
      \end{mathpar}
    \end{case}
    \begin{case}{$\hgv{\tm{\absurd{U}}}$}
      There is only one well-typed case for $\tm{U}$: $\hgv{\tm{U}=\tm{z}}$ for some $\hgv{\tm{z}}$. However, $\hcp{\tm{\res{x}{x'}{(\ppar{{\absurd{x}}}{\link{x'}{z}})}}\centernot{\bto}}$, which contradicts our premise.
    \end{case}
    \begin{case}{$\hgv{\tm{\letbind{x}{M}{N}}}$}
      There are two possible cases: either $\hgv{\tm{M}=\tm{V}}$; or $\hcp{\tm{\gvcpcom{M}{x'}}\bto\tm{P}}$ for some $\hcp{\tm{P}}$. If $\hgv{\tm{M}}$ is a value, the only possible $\beta$-transition is the one in the following diagram:
      \begin{mathpar}
        \begin{tikzcd}[cramped, column sep=huge]
          \hgv{\tm{\letbind{x}{\ret{V}}{M}}}
          \arrow[r, "\hgv{\tred}"]
          \arrow[d, "\gvcp{\cdot}{r}"]
          &
          \hgv{\tm{\subst{M}{V}{x}}}
          \arrow[dd, "\gvcp{\cdot}{r}"]
          \\
          \hcp{\tm{\res{x}{x'}{(\ppar
                {\pong{x}{\gvcpcom{M}{r}}}
                {\ping{x'}{\gvcpval{V}{x'}}}
                )}}}
          \arrow[d, "\hcp{\bto\bto}"]
          \\
          \hcp{\tm{\res{x}{x'}{(\ppar{\gvcp{M}{r}}{{\gvcpval{V}{x'}}})}}}
          \arrow[r, dash, "\hcp{\bis_\alpha\;\text{(by Lemma~\ref{cor:hgv-to-hcp-substitution})}}"]
          &
          \hgv{\tm{\gvcpcom{\subst{M}{V}{x}}{r}}}
        \end{tikzcd}
      \end{mathpar}
      Otherwise, if $\hcp{\tm{\gvcpcom{M}{x'}}\bto\tm{P}}$ for some $\hcp{\tm{P}}$, the induction hypothesis gives us an $\hgv{\tm{M'}}$ such that $\hgv{\tm{M}\tred\tm{M'}}$ and $\hcp{\tm{P}\bis\gvcpcom{\tm{M'}}{r}}$. We apply HGV's \LabTirName{E-Lift} and HCP's \LabTirName{Str-Res} and \LabTirName{Str-Par2}.
    \end{case}
    \begin{case}{$\hgv{\tm{\ret{V}}}$}
      We have $\hcp{\tm{\ping{r}{\gvcpval{V}{r}}}\centernot{\bto}}$, which contradicts our premise. \qedhere
    \end{case}
  \end{enumerate}
\end{proof}

\thmhgvtohcpocconfs*
\begin{proof}
  \hspace*{1ex}\\\vspace*{-1\baselineskip}\begin{enumerate}
  \item By induction on the reduction
    $\hgv{\tm{\conf{C}}\cred\tm{\conf{C}'}}$.
We implicitly make use of
    Lemma~\ref{lem:hgv-to-hcp-evaluation-context} throughout the proof
    in order to recast the translation of a plugged evaluation context
    $\hcp{\tm{\gvcpcnf{\plug{F}{M}}{r}}}$ into the plugging of the translated
    evaluation context with the translation of the plugged term
    $\hcp{\tm{\plug{\gvcpevc{F}{r}}{\gvcpcom{M}{v}}}}$ where $v = \hr(F, r)$.

    \begin{case}{\rulename{E-Reify-Fork}}
{\scriptsize
      \begin{mathpar}
        \begin{tikzcd}[cramped, column sep=huge]
          \hgv{\tm{\plug{F}{\fork\;V}}}
          \arrow[r, "\hgv{\cred}"]
          \arrow[d, "\gvcpcnf{\cdot}{r}"]
          &
          \hgv{\tm{\res{x}{x'}{(\ppar{\plug{F}{x}}{\child\;V\;x'})}}}
          \arrow[dd, "\gvcpcnf{\cdot}{r}"]
          \\
          \hcp{\tm{\plug{\gvcpevc{F}{r}}{
                      \res{z}{z'}{\res{y}{y'}{\left(
                      \begin{array}{@{}l@{}}
                      {
                        \usend{y}{z}{\link{v}{y}}
                        \parallel
                      }\\
                      {
                        \res{x}{x'}{\left(
                                    \begin{array}{@{}l@{}}
                                    \recv{y'}{w}{\usend{x}{w}{\ping{y'}{\link{y'}{x}}}}
                                    \parallel \\
                                    \recv{x'}{w}{\usend{w}{x'}{\pong{w}{\close{w}{\halt}}}}
                                    \end{array}\right)}
                        \parallel
                      }\\
                      {
                        \gvcpval{V}{z'}
                      }\\
                      \end{array}
                    \right)}}}
              }}
          \mkern-250mu
          \arrow[d, "\hcp{\bto\bto\ato}"]
          \\
          \hcp{\tm{
                  \plug{\gvcpevc{F}{r}}
                       {\res{z}{z'}{\left(
                         \begin{array}{@{}l@{}}
                         \res{x}{x'}{\left(
                           \begin{array}{@{}l@{}}
                           {\ping{v}{\link{v}{x}}}
                           \parallel \\
                           {\usend{z}{x'}{\pong{z}{\close{z}{\halt}}}} \\
                           \end{array}
                         \right)}
                         \parallel \\
                         {\gvcpval{V}{z'}}
                         \end{array}
              \right)}}}}
          \arrow[r, dash, "\hcp{\bis_\ta}"]
          &
          \hcp{\tm{\res{x}{x'}{\left(
                \begin{array}{@{}l@{}}
                {\plug{\gvcpevc{F}{r}}{\ping{v}{\link{v}{x}}}}
                \parallel \\
                \res{y}{y'}{\left(
                  \begin{array}{@{}l@{}}
                    \res{w}{w'}{\res{z}{z'}{\left(
                    \begin{array}{@{}l@{}}
                    {\usend{z}{w}{\link{y}{z}}}
                    \parallel \\
                    {\gvcpval{V}{z'}}
                    \parallel \\
                    {\link{w'}{x'}} \\
                    \end{array}
                    \right)}}
                  \parallel \\
                  {\pong{y'}{\close{y'}{\halt}}} \\
                  \end{array}
                  \right)} \\
                \end{array}
                \right)}}}
          \\
        \end{tikzcd}
      \end{mathpar}
}

The endpoint $\tm{v} = \hr(F, r)$. The final two terms are bisimilar
by Lemma~\ref{lem:hgv-to-hcp-substitution}.

\end{case}

    \begin{case}{\rulename{E-Reify-Link}}
{\scriptsize
      \begin{mathpar}
        \begin{tikzcd}[cramped, column sep=tiny]
          \hgv{\tm{\child\;\plug{E}{\link\;\pair{x}{y}}}}
          \arrow[r, "\hgv{\cred}"]
          \arrow[d, "\gvcpcnf{\cdot}{r}"]
          &
          \hgv{\tm{\res{z}{z'}{(\ppar{\linkconfig{z}{x}{y}}{\child\;\plug{E}{z'}})}}}
          \arrow[d, dash, shorten >=3mm]
          \\
          \hcp{\tm{
              \res{a}{a'}{(
              \plug{\gvcpcom{E}{r}}{
                \res{z}{z'}{\res{w}{w'}{(
                    {\usend{w}{z}{{\link{v}{w}}}}
                    \parallel
                    {\recv{w'}{t}{\recv{t}{s}{\ping{w'}{\wait{w'}{\link{s}{t}}}}}}
                    \parallel
                    {\usend{z'}{x}{\link{y}{z'}}}
                    \parallel
                    \ping{a'}{\close{a'}{\halt}}
                    )}}
              )}}}}
          \mkern-300mu
          \arrow[dd, "\hcp{\bto\bto\ato\ato}"]
          &
          ~
          \arrow[dd, "\gvcpcnf{\cdot}{r}"]
          \\
          \\
          \hcp{\tm{\res{a}{a'}{(
                \plug{\gvcpcom{E}{r}}{\ping{v}{\wait{v}{\link{x}{y}}}}
                \parallel
                \ping{a'}{\close{a'}{\halt}}
                )}}}
          \arrow[r, dash, "\hcp{\bis_\ta}"]
          &
          \hcp{\tm{\res{z}{z'}{(
                \ping{z}{\wait{z}{\link{x}{y}}}
                \parallel
                \res{a}{a'}{(
                  \gvcpcom{\plug{E}{\link{v}{z'}}}{a}
                  \parallel
                  \ping{a'}{\close{a'}{\halt}}
                  )}
              )}}}
        \end{tikzcd}
      \end{mathpar}
}
The endpoint $\tm{v} = \hr(F, r)$.
    \end{case}

    \begin{case}{\rulename{E-Comm-Link}}
{\scriptsize
     \begin{mathpar}
        \begin{tikzcd}
          \hgv{\tm{\res{z}{z'}{\res{x}{x'}{(\ppar{\ppar{\linkconfig{z}{x}{y}}{\child\; z'}}{\phi\;M})}}}}
          \arrow[r, "\hgv{\cred}"]
          \arrow[d, "\gvcpcnf{\cdot}{r}"]
          &
          \hgv{\tm{\phi\;(M \{ y / x' \})}}
          \arrow[dd, "\gvcpcnf{\cdot}{r}"]
          \\
          \hcp{\tm{\res{z}{z'}{\res{x}{x'}{(
                    {\ping{z}{\wait{z}{\link{x}{y}}}}
                    \parallel
                    \res{w}{w'}{(
                      {\link{z'}{w}}
                      \parallel
                      {\pong{w'}{\close{w'}{\halt}}}
                      )}
                    \parallel
                    \gvcpcnf{\phi\;M}{r}
                    )}}}}
          \mkern-50mu
          \arrow[d, "\hcp{\ato, \bto \times 3}"]
          \\
          \hcp{\tm{\subst{\gvcpcnf{\phi\;M}{r}}{y}{x'}}}
          \arrow[r, "\hcp{\bis_\ta}"]
          &
          \hcp{\tm{\gvcpcnf{\phi\;\subst{M}{y}{x'}}{r}}}
        \end{tikzcd}
      \end{mathpar}
}
    \end{case}

    \begin{case}{\rulename{E-Comm-Send}}
{\scriptsize
      \begin{mathpar}
        \begin{tikzcd}
          \hgv{\tm{\res{x}{x'}{(\ppar{\plug{F}{\send\;{\pair{V}{x}}}}{\plug{F'}{\recv\;{x'}}})}}}
          \arrow[r, "\hgv{\cred}"]
          \arrow[d, "\gvcpcnf{\cdot}{r}"]
          &
          \hgv{\tm{\res{x}{x'}{(\ppar{\plug{F}{x}}{\plug{F'}{\pair{V}{x'}}})}}}
          \arrow[d, dash, shorten >=5mm]
          \\
          \hcp{\tm{\res{x}{x'}{\left(
                  \begin{array}{l}
                    {\plug{\gvcpevc{F}{r}}{
                    \res{y}{y'}{\res{z}{z'}{(
                    {\usend{z}{y}{{\link{u}{z}}}}
                    \parallel
                    {\recv{z'}{t}{\recv{t}{s}{\usend{t}{s}{\ping{z'}{\link{z'}{t}}}}}}
                    \parallel
                    {\send{y'}{w}{(\ppar{\gvcpval{V}{w}}{\link{x}{y'}})}}
                    )}}
                    }}
                    \parallel
                    \\
                    {\plug{\gvcpcnf{F'}{r}}{
                    \res{y}{y'}{\res{z}{z'}{(
                    {\usend{z}{y}{{\link{v}{z}}}}
                    \parallel
                    {\recv{z'}{s}{\recv{s}{t}{\ping{z'}{\usend{z'}{t}{\link{z'}{s}}}}}}
                    \parallel
                    {\link{x'}{y'}}
                    )}}
                    }}
                  \end{array}
                \right)}}}
          \mkern-300mu
          \arrow[d, "\hcp{\bto \times 5, \ato \times 2}"]
          &
          ~
          \arrow[dd, "\gvcpcnf{\cdot}{r}"]
          \\
          \hcp{\tm{\res{x}{x'}{(
                {\plug{\gvcpevc{F}{r}}{
                    {(
                      {\usend{x}{w}{\ping{u}{\link{x}{u}}}}
                      \parallel
                      {\gvcpval{V}{w}}
                      )}}}
                \parallel
                {\plug{\gvcpcnf{F'}{r}}{
                    {\recv{x'}{t}{\ping{v}{\usend{v}{t}{\link{v}{x'}}}}}
                  }}
                )}}}
          \mkern-100mu
          \arrow[dd, "\hcp{\beta}"]
          &
          ~
          \\
          ~
          &
          \mkern-150mu
          \hcp{\tm{\res{x}{x'}{(
                {\plug{\gvcpevc{F}{r}}{\ping{u}{\link{x}{u}}}}
                \parallel
                {\plug{\gvcpcnf{F'}{r}}{
                    \ping{v}{\send{v}{w}{(\ppar{\gvcpval{V}{w}}{\link{v}{x'}}})}
                  }}
                )}}}
          \arrow[d, dash]
          \\
          \hcp{\tm{\res{x}{x'}{(
                {\plug{\gvcpevc{F}{r}}{
                    {(
                      {{\ping{u}{\link{u}{x}}}}
                      \parallel
                      {\gvcpval{V}{w}}
                      )}}}
                \parallel
                {\plug{\gvcpcnf{F'}{r}}{
                    {{\ping{v}{\usend{v}{w}{\link{v}{x'}}}}}
                  }}
                )}}}
          \arrow[r, dash, "\hcp{\bis_\ta}"]
          &
          ~
        \end{tikzcd}
      \end{mathpar}
}
The endpoint $\tm{u} = \hr(F, r)$ and the endpoint $\tm{v} = \hr(F', r)$.
    \end{case}

    \begin{case}{\rulename{E-Comm-Close}}
{\scriptsize
     \begin{mathpar}
        \begin{tikzcd}
          \hgv{\tm{\res{x}{x}{(\ppar{\child\;x}{\plug{F}{\wait\;x'}})}}}
          \arrow[r, "\hgv{\cred}"]
          \arrow[d, "\gvcpcnf{\cdot}{r}"]
          &
          \hgv{\tm{{\plug{F}{\unit}}}}
          \arrow[dd, "\gvcpcnf{\cdot}{r}"]
          \\
          \hcp{\tm{\res{x}{x}{\left(
                  \begin{array}{l}
                    {\res{y}{y'}{(\ppar{\ping{y}{\link{x}{y}}}{\pong{y'}{\close{y'}{\halt}}})}}
                    \parallel
                    \\
                    {\plug{\gvcpevc{F}{r}}{
                    \res{z}{z'}{\res{w}{w'}{(
                    {\usend{w}{z}{{\link{v}{w}}}}
                    \parallel
                    \recv{w'}{s}{\wait{s}{\ping{w'}{\close{w'}{\halt}}}}
                    \parallel
                    {\link{x'}{z'}}
                    )}}
                    }}
                  \end{array}
                \right)}}}
          \mkern-50mu
          \arrow[d, "\hcp{\bto \times 3, \ato \times 3, \bto}"]
          \\
          \hcp{\tm{{\plug{\gvcpevc{F}{r}}{{\ping{v}{\close{v}{\halt}}}}}}}
          \arrow[r, "="]
          &
          \hcp{\tm{{\plug{\gvcpevc{F}{r}}{{\ping{v}{\close{v}{\halt}}}}}}}
        \end{tikzcd}
      \end{mathpar}
}
The endpoint $\tm{v} = \hr(F, r)$.
    \end{case}

    \begin{case}{\rulename{E-Res}}
      \begin{mathpar}
        \begin{tikzcd}[cramped, column sep=huge]
          \hgv{\tm{(\nu x y) \config{C}}}
          \arrow[r, "\hgv{\cred}"]
          \arrow[d, "\gvcpcnf{\cdot}{r}"]
          &
          \hgv{\tm{(\nu x y) \config{C}'}}
          \arrow[d, "\gvcpcnf{\cdot}{r}"]
          \\
          \hcp{\tm{(\nu x y)\gvcpcnf{\config{C}}{r}}}
          \arrow[r, "\hcp{\slto[\alpha]{\beta^+}\bis_\ta} (\text{IH})"]
          &
          \hcp{\tm{(\nu x y)\gvcpcnf{\config{C}'}{r}}}
        \end{tikzcd}
      \end{mathpar}
    \end{case}

    \begin{case}{\rulename{E-Par}}
      \begin{mathpar}
        \begin{tikzcd}[cramped, column sep=huge]
          \hgv{\tm{\config{C} \parallel \config{D}}}
          \arrow[r, "\hgv{\cred}"]
          \arrow[d, "\gvcpcnf{\cdot}{r}"]
          &
          \hgv{\tm{\config{C}' \parallel \config{D}}}
          \arrow[dd, "\gvcpcnf{\cdot}{r}"]
          \\
          \hcp{\tm{\gvcpcnf{\config{C}}{r} \parallel \gvcpcnf{\config{D}}{r}}}
          \arrow[d, "\hcp{\slto[\alpha]{\beta^+}\bis_\ta} (\text{IH})"]
          \\
          \hcp{\tm{\gvcpcnf{\config{C}'}{r} \parallel \gvcpcnf{\config{D}}{r}}  }
          \arrow[r, "="]
          &
          \hcp{\tm{\gvcpcnf{\config{C}' \parallel \config{D}}{r}}}
        \end{tikzcd}
      \end{mathpar}
    \end{case}

    \begin{case}{\rulename{E-Equiv}}
      \begin{mathpar}
        \begin{tikzcd}[cramped, column sep=huge]
            \tm{\config{C}}
            \arrow[r, "\equiv"]
            \arrow[d, "\gvcpcnf{\cdot}{r}"]
            &
            \tm{\config{C}'}
            \arrow[r, "\hgv\cred"]
            &
            \tm{\config{D}'}
            \arrow[r, "\equiv"]
            &
            \tm{\config{E}}
            \arrow[ddd, "\gvcpcnf{\cdot}{r}"]
            \\
            \tm{\gvcpcnf{\config{C}}{r}}
            \arrow[d, "\hcp{\bis_\ta} (\text{Lemma~\ref{cor:hcp-sbis-to-bis}})"]
            \\
            \tm{\gvcpcnf{\config{C'}}{r}}
            \arrow[d, "\hcp{\slto[\alpha]{\beta^+}\bis_\ta} (\text{IH})"]
            \\
            \tm{\gvcpcnf{\config{D'}}{r}}
            \arrow[rrr, "\hcp{\bis_\ta} (\text{Lemma~\ref{cor:hcp-sbis-to-bis}})"]
            &
            &
            &
            \tm{\gvcpcnf{\config{E}}{r}}
        \end{tikzcd}
      \end{mathpar}
    \end{case}

    \begin{case}{\rulename{E-Lift-M}}
      The cases for $\hgv{\phi = \main}$ and $\hgv{\phi} = \hgv{\child}$ are similar; here we show the case for $\hgv{\main}$.
      \begin{mathpar}
        \begin{tikzcd}[cramped, column sep=8em]
          \hgv{\tm{\main M}}
          \arrow[r, "\hgv{\cred}"]
          \arrow[d, "\gvcpcnf{\cdot}{r}"]
          &
          \hgv{\tm{\main N}}
          \arrow[d, "\gvcpcnf{\cdot}{r}"]
          \\
          \hcp{\tm{\gvcpcom{M}{r}}}
          \arrow[r, "\hcp{\slto[\alpha]{\beta^+}\bis_\ta} (\text{Lemma~\ref{thm:hgv-to-hcp-oc-terms}})"]
          &
          \hcp{\tm{\gvcpcom{N}{r}}}
        \end{tikzcd}
      \end{mathpar}
    \end{case}

\item
    Reflection of $\alpha$-transitions is trivial as
    $\alpha$-transition is included in $\alpha$-bisimulation.
Reflection of $\beta$-transitions is by induction on
    $\hgv{\tm{C}}$; as with Lemma~\ref{thm:hgv-to-hcp-oc-terms}, the
    only well-typed $\beta$-transitions that can occur for each case
    are those specified in the simulation case. \qedhere

\end{enumerate}
\end{proof}

 }
}
\end{document}